\documentclass[sigconf,nonacm]{acmart} 

\usepackage[utf8]{inputenc}

\usepackage{mathtools}
\usepackage{calc}
\usepackage{stmaryrd}
\usepackage{physics}

\usepackage{multicol}
\usepackage{array}
\usepackage{enumerate}
\usepackage[shortlabels]{enumitem}
\usepackage{adjustbox}
\usepackage{changepage}

\usepackage{crossreftools}
\usepackage{hyperref}
\usepackage{cleveref}

\usepackage{csquotes}

\usepackage{tikzit}
\usepackage{pgfplots}
\usepackage{tikz-cd}
\usepackage{nicematrix}

\usepackage{etoolbox}
\usepackage{tikz}  

\usepackage{todonotes}

\usepackage{lineno}

\usepackage{wrapfig}

\definecolor{zx_red}{RGB}{232, 165, 165}
\definecolor{zx_green}{RGB}{216, 248, 216}

\definecolor{nice_blue}{rgb}{0.36, 0.54, 0.66}
\definecolor{nice_yellow}{RGB}{253, 253, 150}

\usetikzlibrary{
    decorations.markings,
    decorations.pathmorphing,
    decorations.text,
    positioning,
    backgrounds,
    folding,
    shapes.geometric,
    shapes.misc,
    patterns,
    patterns.meta,
    arrows.meta
}

\pgfdeclarelayer{edgelayer}
\pgfdeclarelayer{nodelayer}
\pgfsetlayers{background,edgelayer,nodelayer,main}

\tikzstyle{every picture}=[baseline=-0.25em,scale=0.5]

\pgfkeys{/tikz/tikzit fill/.initial=0}
\pgfkeys{/tikz/tikzit draw/.initial=0}
\pgfkeys{/tikz/tikzit shape/.initial=0}
\pgfkeys{/tikz/tikzit category/.initial=0}

\tikzset{every path/.style={draw=black!80, line width=0.7pt}}
\tikzstyle{none}=[inner sep=0mm]

\tikzset{
    tickedge/.style={
        decoration={ markings,
            mark=at position .5 with {\draw (0,2pt) -- (0,-2pt);}
        },
        postaction={decorate}
    },
}

\tikzset{font=\normalsize}

\newif\ifsensitivetikz
\sensitivetikztrue

\newcommand{\tikzfigdir}{figures}
\newcommand{\tikzpdfdir}{figures/figures-pdf}

\renewcommand{\tikzfig}[1]{%
  \ifsensitivetikz
    \IfFileExists{\tikzfigdir/#1.tikz}{%
      {%
        \tikzstyle{every picture}=[tikzfig]%
        \input{\tikzfigdir/#1.tikz}%
      }%
    }{%
      \PackageError{tikzfig}%
        {Missing sensitive TikZ figure `#1'}%
        {Expected file: \tikzfigdir/#1.tikz}%
    }%
  \else
    \IfFileExists{\tikzpdfdir/#1.pdf}{%
      \includegraphics[scale=1,align=c]{\tikzpdfdir/#1.pdf}%
    }{%
    \IfFileExists{\tikzfigdir/#1.tikz}{%
      {%
        \tikzstyle{every picture}=[tikzfig]%
        \input{\tikzfigdir/#1.tikz}%
      }%
    }{%
      
      \PackageError{tikzfig}%
        {Missing precompiled figure `#1'}%
        {Expected file: \tikzpdfdir/#1.pdf}%
        }%
    }
  \fi
}

\newcommand{\scalefig}{0.7}

\usetikzlibrary{calc,fit}

\newcommand{\axiomboxscale}{0.78}

\newcommand{\axiomboxpic}[1]{%
    \scalebox{\axiomboxscale}{%
        \scalebox{\scalefig}{\input{./figures/#1.tikz}}%
    }%
}

\newcommand{\axiomboxrel}[1]{%
    \ensuremath{\eq\limits^{\scriptscriptstyle #1}}%
}

\newcommand{\axiomboxeq}[3]{%
    \begin{tabular}{@{}c@{\hspace{0.35em}}c@{\hspace{0.35em}}c@{}}
    \phantomsection\label{eq:#1}
        \axiomboxpic{#2}
        &
        \axiomboxrel{#1}
        &
        \axiomboxpic{#3}
    \end{tabular}%
}

\newcommand{\axiomboxeqthree}[4]{%
    \begin{tabular}{
        @{}c
        @{\hspace{0.25em}}c
        @{\hspace{0.25em}}c
        @{\hspace{0.25em}}c
        @{\hspace{0.25em}}c@{}
    }
        \phantomsection\label{eq:#1}
        \axiomboxpic{#2}
        &
        \axiomboxrel{#1}
        &
        \axiomboxpic{#3}
        &
        \axiomboxrel{#1}
        &
        \axiomboxpic{#4}
    \end{tabular}%
}

\newcommand{\generatorboxscale}{1}

\newcommand{\generatorboxpic}[1]{%
    \scalebox{\generatorboxscale}{%
        \scalebox{\scalefig}{\input{./figures/#1.tikz}}%
    }%
}

\newcommand{\generatorboxsem}[2]{%
    \begin{tabular}{@{}c@{\hspace{0.35em}}c@{\hspace{0.35em}}c@{}}
        $\interp{\generatorboxpic{#1}}_{\LFM}$
        &
        $\eq$
        &
        {\small $#2$}
    \end{tabular}%
}

\newcommand{\defaultboxscale}{1}

\newcommand{\boxpic}[1]{%
    \scalebox{\defaultboxscale}{%
        \scalebox{\scalefig}{\input{./figures/#1.tikz}}%
    }%
}

\newcommand{\boxrel}[1]{%
    \ensuremath{\eq\limits^{\scriptscriptstyle #1}}%
}

\newcommand{\boxeq}[3]{%
    \begin{tabular}{@{}c@{\hspace{0.35em}}c@{\hspace{0.35em}}c@{}}
        \boxpic{#2}
        &
        \boxrel{#1}
        &
        \boxpic{#3}
    \end{tabular}%
}

\newenvironment{derivationchain}{%
    \begin{center}%
    \setlength{\tabcolsep}{3pt}%
    \renewcommand{\arraystretch}{1.20}%
    \begin{adjustbox}{max width=\textwidth}%
    \begin{tabular}{ccccccc}%
}{%
    \end{tabular}%
    \end{adjustbox}%
    \end{center}%
}

\newcommand{\EaWGeneratorBlock}{%
\begin{minipage}[t]{7.8cm}
\centering
\setlength{\tabcolsep}{2pt}

\begin{tabular}{cc}

\generatorboxsem
    {EaW-gen/ket-0}
    {\ket{0}}
&
\generatorboxsem
    {EaW-gen/bra-0}
    {\bra{0}}
\\[1em]

\multicolumn{2}{c}{
    \generatorboxsem
        {EaW-gen/lw}
        {
            \ketbra{00}{0}
            +
            \ketbra{01}{1}
            +
            \ketbra{10}{1}
        }
}
\\[1em]

\multicolumn{2}{c}{
    \generatorboxsem
        {EaW-gen/rw}
        {
            \ketbra{0}{00}
            +
            \ketbra{1}{01}
            +
            \ketbra{1}{10}
        }
}
\\[1em]

\multicolumn{2}{c}{
    \generatorboxsem
        {EaW-gen/2-Z}
        {
            \ketbra{0}{0}
            +
            x\ketbra{1}{1}
        }
}

\end{tabular}
\end{minipage}%
}

\newcommand{\EpWGeneratorBlock}{%
\begin{minipage}[t]{7.8cm}
\centering
\setlength{\tabcolsep}{2pt}

\begin{tabular}{c}

\generatorboxsem
    {EpW-gen/cap}
    {
        \ket{00}+\ket{11}
    }
\\[1em]

\generatorboxsem
    {EpW-gen/cup}
    {
        \bra{00}+\bra{11}
    }

\end{tabular}
\end{minipage}%
}

\newcommand{\EZWGeneratorBlock}{%
\begin{minipage}[t]{8.8cm}
\centering
\setlength{\tabcolsep}{2pt}

\begin{tabular}{c}

\generatorboxsem
    {EZWbis-gen/4-Z}
    {
        \ketbra{00}{00}
        +
        x\ketbra{11}{11}
    }

\end{tabular}
\end{minipage}%
}

\newcommand{\aWGeneratorBlock}{%
\begin{minipage}[t]{7.8cm}
\centering
\setlength{\tabcolsep}{2pt}

\begin{tabular}{cc}

\generatorboxsem
    {aW-gen/1-w}
    {\ket{1}}
&
\generatorboxsem
    {aW-gen/co1-w}
    {\bra{1}}

\end{tabular}

\par\vspace*{0.7em}
{\small\itshape non-natural generators}

\end{minipage}%
}

\newcommand{\FpWGeneratorBlock}{%
\begin{minipage}[t]{7.8cm}
\centering
\setlength{\tabcolsep}{2pt}

\begin{tabular}{c}

\generatorboxsem
    {FpW-gen/2-w}
    {
        \ketbra{0}{1}
        +
        \ketbra{1}{0}
    }

\end{tabular}

\par\vspace*{0.7em}
{\small\itshape non-natural generator}

\end{minipage}%
}

\newcommand{\FZWGeneratorBlock}{%
\begin{minipage}[t]{8.8cm}
\centering

{\large\itshape No additional generator}

\end{minipage}%
}

\newcommand{\EaWAxiomBlock}{%
\begin{minipage}[t]{8.8cm}
\centering
\renewcommand{\arraystretch}{1.55}
\setlength{\tabcolsep}{2pt}

\begin{tabular}{ccc}

\axiomboxeq{A1}
    {EaW-rules/assoc_1}
    {EaW-rules/assoc_2}
&
\axiomboxeq{A2}
    {EaW-rules/bialg_1}
    {EaW-rules/bialg_2}
&
\axiomboxeq{A3}
    {EaW-rules/commut_1}
    {EaW-rules/commut_2}
\\[1em]

\axiomboxeq{A4}
    {EaW-rules/copy_1}
    {EaW-rules/copy_2}
&
\axiomboxeq{A5}
    {EaW-rules/erase_1}
    {EaW-rules/erase_2}
&
\axiomboxeq{A6}
    {EaW-rules/neutral_1}
    {EaW-rules/neutral_2}
\\[1em]

\axiomboxeq{A7}
    {EaW-rules/z-add_1}
    {EaW-rules/z-add_2}
&
\axiomboxeq{A8}
    {EaW-rules/z-copy_1}
    {EaW-rules/z-copy_2}
&
\axiomboxeq{A9}
    {EaW-rules/z-mult_1}
    {EaW-rules/z-mult_2}
\\[1em]

\axiomboxeq{A10}
    {EaW-rules/z-zero_1}
    {EaW-rules/z-zero_2}
&
\axiomboxeq{A11}
    {EaW-rules/empty_1}
    {EaW-rules/empty_2}
&
\axiomboxeq{A12}
    {EaW-rules/z-one_1}
    {EaW-rules/z-one_2}

\end{tabular}
\end{minipage}%
}

\newcommand{\EpWAxiomBlock}{%
\begin{minipage}[t]{10.8cm}
\centering
\renewcommand{\arraystretch}{1.55}
\setlength{\tabcolsep}{2pt}

\begin{tabular}{ccc}

\axiomboxeq{A13}
    {EpW-rules/loop_1}
    {EpW-rules/loop_2}
&
\axiomboxeq{A14}
    {EpW-rules/zero-float_1}
    {EpW-rules/zero-float_2}
&
\axiomboxeq{A15}
    {EpW-rules/z-cup-cap_1}
    {EpW-rules/z-cup-cap_2}
\\[1em]

\axiomboxeq{A16}
    {EpW-rules/cup-cap-triangle_1}
    {EpW-rules/cup-cap-triangle_2}
&
\axiomboxeq{A17}
    {EpW-rules/empty_1}
    {EpW-rules/empty_2}
&
\axiomboxeqthree{A18}
    {EpW-rules/snake_1}
    {EpW-rules/snake_2}
    {EpW-rules/snake_3}
\\[1em]

\axiomboxeq{A19}
    {EpW-rules/state-cup-cap_1}
    {EpW-rules/state-cup-cap_2}
&
\axiomboxeq{A20}
    {EpW-rules/state-triangle_1}
    {EpW-rules/state-triangle_2}
&
\axiomboxeq{A21}
    {EpW-rules/anticom_1}
    {EpW-rules/anticom_2}
\\[1em]

\axiomboxeq{A22}
    {EpW-rules/float_1}
    {EpW-rules/float_2}
&
\axiomboxeq{A23}
    {EpW-rules/op_1}
    {EpW-rules/op_2}
&
\axiomboxeq{A24}
    {EpW-rules/triangle-gadget_1}
    {EpW-rules/triangle-gadget_2}

\end{tabular}
\end{minipage}%
}

\newcommand{\aWAxiomBlock}{%
\begin{minipage}[t]{8.8cm}
\centering
\renewcommand{\arraystretch}{1.60}
\setlength{\tabcolsep}{2pt}

\begin{tabular}{ccc}

\axiomboxeq{A25}
    {aW-rules/dot-snake_1}
    {aW-rules/dot-snake_2}
&
\axiomboxeq{A26}
    {aW-rules/float_1}
    {aW-rules/float_2}
&
\axiomboxeq{A27}
    {aW-rules/zero-float_1}
    {aW-rules/zero-float_2}
\\[1em]

\axiomboxeq{A28}
    {aW-rules/op_1}
    {aW-rules/op_2}
&
\axiomboxeq{A29}
    {aW-rules/antinat_1}
    {aW-rules/antinat_2}
&
\axiomboxeq{A30}
    {aW-rules/zero-op_1}
    {aW-rules/zero-op_2}

\end{tabular}
\end{minipage}%
}

\newcommand{\FpWAxiomBlock}{%
\begin{minipage}[t]{10.8cm}
\centering
\renewcommand{\arraystretch}{1.60}
\setlength{\tabcolsep}{2pt}

\begin{tabular}{ccc}

\axiomboxeq{A31}
    {FpW-rules/loop-zero_1}
    {FpW-rules/loop-zero_2}
&
\axiomboxeq{A32}
    {FpW-rules/dot-cup-cap_1}
    {FpW-rules/dot-cup-cap_2}
&
\axiomboxeq{A33}
    {FpW-rules/float_1}
    {FpW-rules/float_2}
\\[1em]

\axiomboxeq{A34}
    {FpW-rules/flex_1}
    {FpW-rules/flex_2}
&
\axiomboxeq{A35}
    {FpW-rules/invol_1}
    {FpW-rules/invol_2}
&
\axiomboxeq{A36}
    {FpW-rules/1w-2w_1}
    {FpW-rules/1w-2w_2}
\\[1em]

\axiomboxeq{A37}
    {FpW-rules/state-cup_1}
    {FpW-rules/state-cup_2}

\end{tabular}
\end{minipage}%
}

\newcommand{\SEaWDerivedBlock}{%
\begin{minipage}[t]{8.8cm}
\centering
\renewcommand{\arraystretch}{1.55}
\setlength{\tabcolsep}{2pt}
\begin{tabular}{ccc}
\axiomboxeq{L1}
    {SEaW-lemma/add-mat_1}
    {SEaW-lemma/add-mat_2}
&
\axiomboxeq{L2}
    {SEaW-lemma/copy-mat_1}
    {SEaW-lemma/copy-mat_2}
&
\axiomboxeq{L3}
    {SEaW-lemma/mult-mat_1}
    {SEaW-lemma/mult-mat_2}
\\[1em]
\axiomboxeq{L4}
    {SEaW-lemma/erase_1}
    {SEaW-lemma/erase_2}
&
\axiomboxeq{L5}
    {SEaW-lemma/zero-mat_1}
    {SEaW-lemma/zero-mat_2}
&
\axiomboxeq{L6}
    {SEaW-lemma/direct-sum-mat_1}
    {SEaW-lemma/direct-sum-mat_2}
\\[1em]
\axiomboxeq{L7}
    {SEaW-lemma/lambda-mat_1}
    {SEaW-lemma/lambda-mat_2}
&
\axiomboxeq{L8}
    {SEaW-lemma/id-mat_1}
    {SEaW-lemma/id-mat_2}
&
\axiomboxeq{L9}
    {SEaW-lemma/rank-dec-mat_1}
    {SEaW-lemma/rank-dec-mat_2}
\end{tabular}
\end{minipage}%
}

\newcommand{\SEaWDerivedBlockNoLabel}{%
\begin{minipage}[t]{8.8cm}
\centering
\renewcommand{\arraystretch}{1.55}
\setlength{\tabcolsep}{2pt}
\begin{tabular}{ccc}
\boxeq{L1}
    {SEaW-lemma/add-mat_1}
    {SEaW-lemma/add-mat_2}
&
\boxeq{L2}
    {SEaW-lemma/copy-mat_1}
    {SEaW-lemma/copy-mat_2}
&
\boxeq{L3}
    {SEaW-lemma/mult-mat_1}
    {SEaW-lemma/mult-mat_2}
\\[1em]
\boxeq{L4}
    {SEaW-lemma/erase_1}
    {SEaW-lemma/erase_2}
&
\boxeq{L5}
    {SEaW-lemma/zero-mat_1}
    {SEaW-lemma/zero-mat_2}
&
\boxeq{L6}
    {SEaW-lemma/direct-sum-mat_1}
    {SEaW-lemma/direct-sum-mat_2}
\\[1em]
\boxeq{L7}
    {SEaW-lemma/lambda-mat_1}
    {SEaW-lemma/lambda-mat_2}
&
\boxeq{L8}
    {SEaW-lemma/id-mat_1}
    {SEaW-lemma/id-mat_2}
&
\boxeq{L9}
    {SEaW-lemma/rank-dec-mat_1}
    {SEaW-lemma/rank-dec-mat_2}
\end{tabular}
\end{minipage}%
}

\newcommand{\SEpWDerivedBlock}{%
\begin{minipage}[t]{10.8cm}
\centering
\renewcommand{\arraystretch}{1.55}
\setlength{\tabcolsep}{2pt}
\begin{tabular}{ccc}
\axiomboxeq{L10}
    {SEpW-lemma/gnode-add_1}
    {SEpW-lemma/gnode-add_2}
&
\axiomboxeq{L11}
    {SEpW-lemma/gnode-erase_1}
    {SEpW-lemma/gnode-erase_2}
&
\axiomboxeq{L12}
    {SEpW-lemma/mat-cup_1}
    {SEpW-lemma/mat-cup_2}
\\[1em]
\axiomboxeq{L13}
    {SEpW-lemma/gnode-dec_1}
    {SEpW-lemma/gnode-dec_2}
&
\axiomboxeq{L14}
    {SEpW-lemma/gnode-mat_1}
    {SEpW-lemma/gnode-mat_2}
&
\axiomboxeq{L15}
    {SEpW-lemma/mat-to-gnode_1}
    {SEpW-lemma/mat-to-gnode_2}
\\[1em]
\axiomboxeq{L16}
    {SEpW-lemma/mult-float_1}
    {SEpW-lemma/mult-float_2}
&
\axiomboxeq{L17}
    {SEpW-lemma/gen-anticom_2}
    {SEpW-lemma/gen-anticom_1}
&
\axiomboxeq{L18}
    {SEpW-lemma/float_1}
    {SEpW-lemma/float_2}
\\[1em]
\axiomboxeq{L19}
    {SEpW-lemma/op-gen_1}
    {SEpW-lemma/op-gen_2}
&
\axiomboxeq{L20}
    {SEpW-lemma/gnode-zero_1}
    {SEpW-lemma/gnode-zero_2}
&
\axiomboxeq{L21}
    {SEpW-lemma/invert-triangle_1}
    {SEpW-lemma/invert-triangle_2}
\\[1em]
\axiomboxeq{L22}
    {SEpW-lemma/triangle-cup_1}
    {SEpW-lemma/triangle-cup_2}
&
\axiomboxeq{L23}
    {SEpW-lemma/triangle-simp_1}
    {SEpW-lemma/triangle-simp_2}
&
{}
\end{tabular}
\end{minipage}%
}

\newcommand{\SaWDerivedBlock}{%
\begin{minipage}[t]{8.8cm}
\centering
\renewcommand{\arraystretch}{1.60}
\setlength{\tabcolsep}{2pt}
\begin{tabular}{ccc}
\axiomboxeq{L24}
    {SaW-lemma/det-mat_1}
    {SaW-lemma/det-mat_2}
&
\axiomboxeq{L25}
    {SaW-lemma/scaled-antinat_1}
    {SaW-lemma/scaled-antinat_2}
&
\axiomboxeq{L26}
    {SaW-lemma/dot-cup_1}
    {SaW-lemma/dot-cup_2}
\\[1em]
\axiomboxeq{L27}
    {SaW-lemma/op_1}
    {SaW-lemma/op_2}
&
\axiomboxeq{L28}
    {SaW-lemma/zero-float_1}
    {SaW-lemma/zero-float_2}
&
{}
\end{tabular}
\end{minipage}%
}

\newcommand{\SFpWDerivedBlock}{%
\begin{minipage}[t]{10.8cm}
\centering
\renewcommand{\arraystretch}{1.60}
\setlength{\tabcolsep}{2pt}
\begin{tabular}{ccc}
\axiomboxeq{L29}
    {SFpW-lemma/antiflex_1}
    {SFpW-lemma/antiflex_2}
&
\axiomboxeq{L30}
    {SFpW-lemma/unconnect_1}
    {SFpW-lemma/unconnect_2}
&
\axiomboxeq{L31}
    {SFpW-lemma/w-2-to-1_1}
    {SFpW-lemma/w-2-to-1_2}
\\[1em]
\axiomboxeq{L32}
    {SFpW-lemma/mat-x-interact_1}
    {SFpW-lemma/mat-x-interact_2}
&
\axiomboxeq{L33}
    {SFpW-lemma/scaled-antinat_1}
    {SFpW-lemma/scaled-antinat_2}
&
\axiomboxeq{L34}
    {SFpW-lemma/co-antiflex_1}
    {SFpW-lemma/co-antiflex_2}
\\[1em]
\axiomboxeq{L35}
    {SFpW-lemma/gnode-invert_1}
    {SFpW-lemma/gnode-invert_2}
&
\axiomboxeq{L36}
    {SFpW-lemma/dot-cup_1}
    {SFpW-lemma/dot-cup_2}
&
{}
\end{tabular}
\end{minipage}%
}

\DeclareRobustCommand{\svdots}{
  \vbox{%
    \baselineskip=0.33333\normalbaselineskip
    \lineskiplimit=0pt
    \hbox{.}\hbox{.}\hbox{.}%
    \kern-0.2\baselineskip
  }%
}

\newcommand{\eq}{=}
\renewcommand{\eq}{\mathop{=}}
\newcommand{\eqann}[2]{%
  \eq\limits_{#1}^{{\color{gray}\scriptstyle #2}}%
}

\newcommand{\ruleref}[2]{\hyperref[#1-rule:#2]{#2}}
\newcommand{\lemmaref}[1]{\hyperref[lem:#1]{#1}}
\newcommand{\eqhref}[1]{\hyperref[eq:#1]{#1}}

\newcommand{\N}{\ensuremath{\mathbb{N}}}
\newcommand{\Z}{\ensuremath{\mathbb{Z}}}

\newcommand{\C}{\ensuremath{\mathbb{C}}}
\newcommand{\A}{\ensuremath{\mathcal{A}}}
\newcommand{\G}{\ensuremath{\mathcal{G}}}
\newcommand{\U}{\ensuremath{\mathcal{U}}}
\newcommand{\Obj}{\ensuremath{\mathrm{Obj}}}

\newcommand{\T}{\ensuremath{\mathcal{T}}}
\renewcommand{\P}{\ensuremath{\mathcal{P}}}
\renewcommand{\S}{\ensuremath{\mathcal{S}}}

\newcommand{\Pf}{\mathrm{pf}}

\newcommand{\LFM}{\textbf{LFM}~}
\newcommand{\Qubit}{\textbf{Qubit}~}
\newcommand{\Qudit}{\textbf{Qudit}~}

\newcommand{\CAR}{\textbf{CAR}~}
\newcommand{\ECAR}{\textbf{EvenCAR}~}
\newcommand{\HCAR}{\textbf{HomCAR}~}
\newcommand{\ELFM}{\textbf{EvenLFM}~}

\newcommand{\EZW}{\textbf{EZW}~}

\newcommand{\FZW}{\textbf{FZW}}

\newcommand{\pW}{\textbf{pW}~}

\newcommand{\FpW}{\textbf{FW}}
\newcommand{\EpW}{\textbf{EW}~}
\newcommand{\aW}{\textbf{FoW}~}
\newcommand{\EaW}{\textbf{EoW}~}

\renewcommand{\to}{\mathop{\rightarrow}}
\newcommand{\rewrites}{\rightsquigarrow}

\DeclareMathSymbol{\shortminus}{\mathbin}{AMSa}{"39}

\makeatletter
\newcommand*{\bigcdot}{\mathpalette\bigcdot@{.5}}
\newcommand*\bigcdot@[2]{\mathbin{\vcenter{\hbox{\scalebox{#2}{$\m@th#1\bullet$}}}}}
\makeatother

\newcommand{\interp}[1]{\left\llbracket #1 \right\rrbracket}

\newcommand{\pmm}[1]{\begin{bsmallmatrix} #1 \end{bsmallmatrix}}

\newcommand{\interN}[2]{\ensuremath{\left\{ #1, \dots, #2\right\}}}

\newenvironment{smallarray}[1]
 {\null\,\vcenter\bgroup\scriptsize
  \renewcommand{\arraystretch}{0.7}%
  \arraycolsep=.13885em
  \hbox\bgroup$\array{@{}#1@{}}} 
 {\endarray$\egroup\egroup\,\null}

\renewcommand{\bra}[1]{\ensuremath{\left\langle #1 \right|}}
\renewcommand{\ket}[1]{\ensuremath{\left|  #1 \right\rangle}}
\renewcommand{\ketbra}[2]{\ket{#1}\!\!\bra{#2}}
\renewcommand{\braket}[2]{\ensuremath{\left\langle #1 | #2 \right\rangle}}

\crefname{notations}{Notations}{Notations}

\newtheorem{definition}{Definition}[subsection]
\crefname{definition}{Definition}{Definitions}

\crefname{history}{Historical note}{Historical notes}

\newtheorem{theorem}{Theorem}[subsection]
\crefname{theorem}{Theorem}{Theorems}

\newtheorem{proposition}{Proposition}[subsection]
\crefname{proposition}{Proposition}{Propositions}

\crefname{conj}{Conjecture}{conjecture}

\newtheorem{lemma}{Lemma}[subsection]
\crefname{lemma}{Lemma}{Lemmas}

\newtheorem{corollary}{Corollary}[subsection]
\crefname{corollary}{Corollary}{Corollaries}

\crefname{remark}{Remark}{Remarks}

\newtheorem*{example}{Example}
\crefname{example}{Example}{Examples}

\crefname{tn}{Illustration}{Illustrations}

\crefname{figure}{Figure}{Figures}

\newcolumntype{L}{>$l<$}

\input{preamble/href}

\tikzstyle{white phase dot}=[minimum size=4mm, font={\footnotesize\boldmath}, shape=rectangle, rounded corners=1.5mm, inner sep=0.8mm, outer sep=-2mm, scale=0.8, draw=black, fill=white]
\tikzstyle{white dot}=[inner sep=0.5mm, minimum width=0pt, minimum height=0pt, fill=white, draw=black, shape=circle]
\tikzstyle{dot}=[inner sep=0.5mm, minimum width=0pt, minimum height=0pt, fill=black, draw=black, shape=circle]
\tikzstyle{fswap}=[inner sep=0.7mm, minimum width=0pt, minimum height=0pt, draw=purple, shape=circle]
\tikzstyle{rmat}=[draw=purple, line width=0.8pt, signal, fill=white, signal to=east, signal from=west, inner sep=1pt, minimum height=6pt]
\tikzstyle{modenb}=[font={\tiny}]
\tikzstyle{presym}=[inner sep=0.7mm, minimum width=0pt, minimum height=0pt, draw=purple, shape=circle]
\tikzstyle{empty_box}=[fill=white, fill opacity=0, draw=black, dashed, shape=rectangle, minimum width=0.5cm, minimum height=0.5cm]
\tikzstyle{1_arrow_box}=[fill=white, draw=black, shape=rectangle, minimum width=0.5cm, minimum height=0.5cm]
\tikzstyle{2_arrow_box}=[fill=white, draw=black, shape=rectangle, minimum width=1cm, minimum height=1cm]
\tikzstyle{rw}=[shape=isosceles triangle, isosceles triangle stretches=true, fill=black, draw=black, minimum width=0.2cm, minimum height=1.5mm, inner sep=1pt, shape border rotate=0, outer sep=-0.1em]
\tikzstyle{lw}=[shape=isosceles triangle, isosceles triangle stretches=true, fill=black, draw=black, minimum width=0.2cm, minimum height=1.5mm, inner sep=1pt, shape border rotate=0, shape border rotate=180, outer sep=-0.1em]

\tikzstyle{thick}=[-, draw=black, line width=1.25pt, tikzit draw={rgb,255: red,122; green,122; blue,122}]
\tikzstyle{double}=[-, draw, double]
\tikzstyle{bg}=[-, draw=none,pattern={Dots[angle = 45, distance = 1.5pt]} , pattern color = gray!40]

\begin{document}

\title{Premonoidal Semantics and Scalable Diagrammatics of Fermionic Quantum Computing}
\author{Thomas Perez}
\orcid{0009-0006-7174-4535}
\affiliation{
  \institution{Inria}
  \institution{LIX, École Polytechnique}
  \institution{CPHT, CNRS}
  \institution{Institut Polytechnique de Paris}
  \city{Palaiseau}
  \country{France}
}
\email{thomas.perez@inria.fr}

\author{Titouan Carette}
\affiliation{
  \institution{LIX, École Polytechnique}
  \institution{CPHT, CNRS}
  \institution{Institut Polytechnique de Paris}
  \city{Palaiseau}
  \country{France}
}
\email{titouan.carette@polytechnique.edu}

\begin{abstract}
    Local fermionic mode (LFM) based quantum computation has pure state spaces that are isomorphic, via the Jordan–Wigner representation, to qubit state spaces, but its compositional structure is subtly different. Indeed, the algebraic formalism specifying how to embed fermionic systems underlies a notion of parallel composition, for which, in general, the usual interchange law of monoidal categories fails. We give a categorical account of this phenomenon. Starting from the CAR-algebraic semantics of fermionic gate application, we show that LFM processes form a symmetric premonoidal category whose centre is precisely the even, parity-preserving subcategory. The same processes can alternatively be organized with the ordinary tensor product and a fermionic presymmetry that is natural exactly on even maps. To reason diagrammatically in this latter presentation, we introduce pronaps, a relaxation of the notion of prop, in which permutations are not-necessarily-natural,  and use them to organize a hierarchy of fragments of the fermionic ZW calculus relevant to the study of  fermionic circuits on physically motivated gate-sets. We then extend scalable diagrammatic notation to pronaps. In this setting, syntactic sugars inspired by SZX and GSA are defined, with the specificity that matrix arrows encode minors and determinants, while graph triangles encode pfaffians of principal submatrices. These constructions yield elegant normal forms and completeness proofs for our presentations of the EoW, FoW, EW, and FW fragments. In particular, the FW presentation gives a new normal form for the matchgate fragment, distinct from the rWGS-X normal form of the planar-W (pW) presentation of the same category. The resulting framework connects fermionic circuit semantics, diagrammatic rewriting with scalable notations, and the algebra of determinants and pfaffians.
\end{abstract}

\maketitle
\thispagestyle{empty}

\section{Introduction}
        \paragraph{Motivation.}
        Quantum computation turns the physical principles of quantum mechanics into
        computational models on which algorithms can be specified, composed, and 
        compared to their classical counterparts. 
        The quantum circuit model achieves this by decomposing quantum computations
        into elementary gates acting on finite registers of qubits. Gates acting on
        subregisters can be assembled into larger circuits by sequential and parallel
        composition. Each gate is interpreted as a unitary map, and this compositional
        structure is naturally captured by symmetric monoidal categories. This symmetric 
        monoidal structure can be represented by string diagrams, with
        sequential and parallel composition encoded by graphical composition. Calculi
        such as the ZX-calculus~\cite{CoeckeDuncan2011} and the ZW-calculus~\cite{CoeckeKissinger2010, Hadzihasanovic2015} 
        axiomatize such categories via complete equational theories~\cite{JeandelPerdrixVilmart2020,Hadzihasanovic2015}, allowing semantics equalities to be derived by 
        diagrammatic rewriting.

        Fermionic quantum computing provides a particularly instructive test of this
        compositional picture. In the local-fermionic-mode (LFM) model of Bravyi and
        Kitaev \cite{BravyiKitaev2002}, information is encoded in the occupation of
        fermionic modes and transformations are naturally described in the CAR algebra
        generated by fermionic creation and annihilation operators. After choosing an ordering of
        the modes, the Jordan--Wigner representation identifies the state space of $m$
        fermionic modes with that of $m$ qubits. We argue that this vector-space coincidence, however,
        does not make the two compositional theories identical: fermionic exchange
        statistics remain visible when operations are applied to disjoint modes.
        The obstruction already appears for the one-mode flip gate $F\{i\}=f_i+f_i^\dagger$. For $i\neq j$, then:
        \[
        F\{i\}F\{j\}=-F\{j\}F\{i\}.
        \]
        More generally, the algebraic formalism specifying how fermionic systems are embedded
        into larger registers induces a notion of parallel composition for which the
        ordinary interchange law fails in general. This failure matches the
        super-interchange law of monoidal supercategories
        \cite{BrundanEllis2017}. We nevertheless describe it as a premonoidal structure,
        which is conceptually simpler and invites an operational interpretation of the gates 
        responsible for this failure through comparison with computational effects \cite{PowerRobinson1997}.

        Diagrammatic approaches to fermionic computation already exist. In particular, de
        Felice, Hadzihasanovic and Ng introduced a fermionic fragment of the ZW-calculus (\FZW) and proved it
        complete for homogeneous linear maps organized with the ordinary tensor product 
        \cite{DeFeliceHadzihasanovicNg2019}. Its even fragment (\EZW) was also given a complete axiomatization
        in \cite{Ng2018}. Our aim is complementary: we first identify
        the premonoidal structure induced directly by the semantics underlying Bravyi and
        Kitaev's LFM model~\cite{BravyiKitaev2002}, and then relate it precisely to the monoidal 
        presentation used for diagrammatic rewriting.
    \paragraph{Contributions.}
        Our first contribution is a categorical semantics for LFM circuits derived from
        their CAR-algebraic gate application. We show that CAR operators
        form a symmetric premonoidal category and that its centre is exactly the even
        subcategory. Via the Jordan--Wigner representation, this structure extends to
        homogeneous linear maps between registers of different sizes, including
        preparations and effects. On the same category of LFM maps, we identify two
        complementary presentations: with the ordinary tensor product the structure is
        monoidal but the fermionic permutation is natural only for even maps; with the
        CAR-induced premonoidal product the fermionic permutation is natural, 
        while two odd maps on disjoint systems fail monoidal interchange by a
        sign. This makes explicit the relation between the operator-algebraic and
        diagrammatic viewpoints on fermionic locality.

        Our second contribution is a diagrammatic framework adapted to the monoidal
        presentation with a non-natural fermionic permutation. We call the resulting categorical
        structures \emph{pronaps}, for PROducts and Non-nAtural Permutations, which are a relaxation of props. 
        Using the fermionic ZW-calculus as a starting point, we organize a hierarchy of fragments tailored to subsets of the
        Bravyi--Kitaev gate set. The even fragments \EaW~ and \EpW~ of \EZW~, are props; adding the
        relevant odd generators yields the pronap fragments \aW~ and \FpW~ of \FZW. The fragments \EpW~ and \aW~
        are new, while \FpW~ is a new pronap presentation of the category previously
        axiomatized by the planar W calculus (\pW)\cite{CaretteEtAl2023}. These fragments
        all are of interest to the study of fermionic circuits on a hierarchy of physically motivated gate-sets.

        Our third contribution is to extend scalable graphical notation \cite{
        CaretteHorsmanPerdrix2019} from props to pronaps. The extension makes explicit
        which equations can be scaled in the presence of non-natural generators and
        provides compact syntactic constructions for the algebraic structures occurring
        in the completeness proofs. In particular, matrix arrows represent weighted
        bipartite graphs and encode all minors of their label matrices, so that
        determinant identities such as Cauchy--Binet, the cofactor formula, Sylvester’s
        identity, and Schur-complement identities become diagrammatic rewrites. Additionally,
        graph triangles labelled by antisymmetric matrices encode principal pfaffians
        and satisfy the corresponding pfaffian identities.
        Finally, these scalable constructions give canonical normal forms and completeness
        proofs for the four fragments \EaW~, \aW~, \EpW~, and \FpW~. The \EaW~ normal form is a
        single matrix arrow. The \aW~ normal form is controlled by a full-rank matrix in
        reduced column-echelon form. The \EpW~ and \FpW~ normal forms combine such a matrix
        with a graph-triangle component, labelled by an antisymmetric matrix. For \FpW~, this yields a new
        normal form for the matchgate fragment, reminiscent of the AP normal form 
        for Clifford ZX-calculus~\cite{CaretteHorsmanPerdrix2019}, distinct from rWGS-X 
        normal-form derived in the planar-W (\pW) presentation~\cite{CaretteEtAl2023}. 
        Together, these results connect fermionic circuit semantics, complete scalable diagrammatic rewriting, and the
        determinant/pfaffian algebra underlying fermionic Gaussian and matchgate computations.
    \paragraph{Structure of the paper.}
        Section~\ref{sec:2} recalls the local-fermionic-mode model, introduces the CAR
        algebra and the Jordan--Wigner representation, and fixes the fermionic gate-set and the 
        fermionic swap used throughout the paper. 
        Section~\ref{sec:3} develops the categorical semantics: we derive the premonoidal structure from CAR
        gate application and establish the trade-off between monoidality and naturality
        of the fermionic permutation. Section~\ref{sec:4} introduces the diagrammatic
        languages, relates the existing fermionic ZW-calculus to our pronap presentation,
        and defines the \EaW~, \aW~, \EpW~, and \FpW~ fragments. Section~\ref{sec:5} extends
        scalable notation to pronaps, develops matrix arrows and graph triangles, and
        uses them to obtain the normal forms and completeness results. The appendices
        contain the proofs of the categorical statements, derived equations, and
        normal-form theorems.

\section{Local Fermionic Mode based quantum computing}\label{sec:2}
    Local Fermionic Mode (LFM) based quantum computing can be traced back, at least in explicit form, to a talk delivered by Seth Lloyd in 1996 at the Santa Barbara workshop on Physics of Information. The idea is perhaps best described in Lloyd's own words:
\textquote[S. Lloyd, Unconventional Quantum Computing Devices,~\cite{Lloyd2000}]{
Perhaps the most obvious basis for an unconventional quantum computer is the use
of particles with non-Boltzmann statistics in a regime where these statistics play a key role
in the dynamics of the device. For example, [I have] proposed the use of fermions
as the fundamental carriers of quantum information, so that a site or state occupied by a
fermion represents a 1 and an unoccupied site or state represents a 0.}

This idea was developed more systematically by Sergei Bravyi and Alexei Kitaev in their 2002
article \emph{Fermionic Quantum Computation}~\cite{BravyiKitaev2002}, which gives a precise
gate-based formulation of quantum computation with local fermionic modes, closely resembling
the usual quantum circuit model. Although their presentation is not phrased in these terms,
Bravyi and Kitaev essentially equip this computational model with a denotational semantics
expressed in the algebraic language of second quantization, using fermionic creation and
annihilation operators.
In \cref{subsec:2-1}, we first give a high-level conceptual discussion of what LFM-based quantum computation is,
and of possible experimental realizations, with the aim of building operational intuition for what the execution
 of an LFM circuit amounts to.
In \cref{subsec:2-2}, we then introduce the algebra of fermionic operators needed to describe
local fermionic modes mathematically. 
Finally, in \cref{subsec:2-3}, we use this formalism to give a precise definition of the Bravyi--Kitaev computational 
model through its denotational semantics.

\subsection{A Fermionic shell game}\label{subsec:2-1}

The computational degrees of freedom in Local Fermionic Mode (LFM) quantum
computation are not individual fermions, but the \emph{modes} they may occupy.
This distinction is important because identical fermions cannot be individually
labelled: they are fundamentally indistinguishable. Their fermionic statistics instead appear
through the antisymmetry of exchange of the many-particle state. Such exchange effects can
already be observed with a few particles, for instance with ultracold fermions
in optical tweezers \cite{BecherEtAl2020} or in electronic Hong--Ou--Mandel
experiments \cite{BocquillonEtAl2013}.
\begin{wrapfigure}{r}{0.55\columnwidth}
    \vspace{-0.8em}
    \centering
    \includegraphics[width=\linewidth]{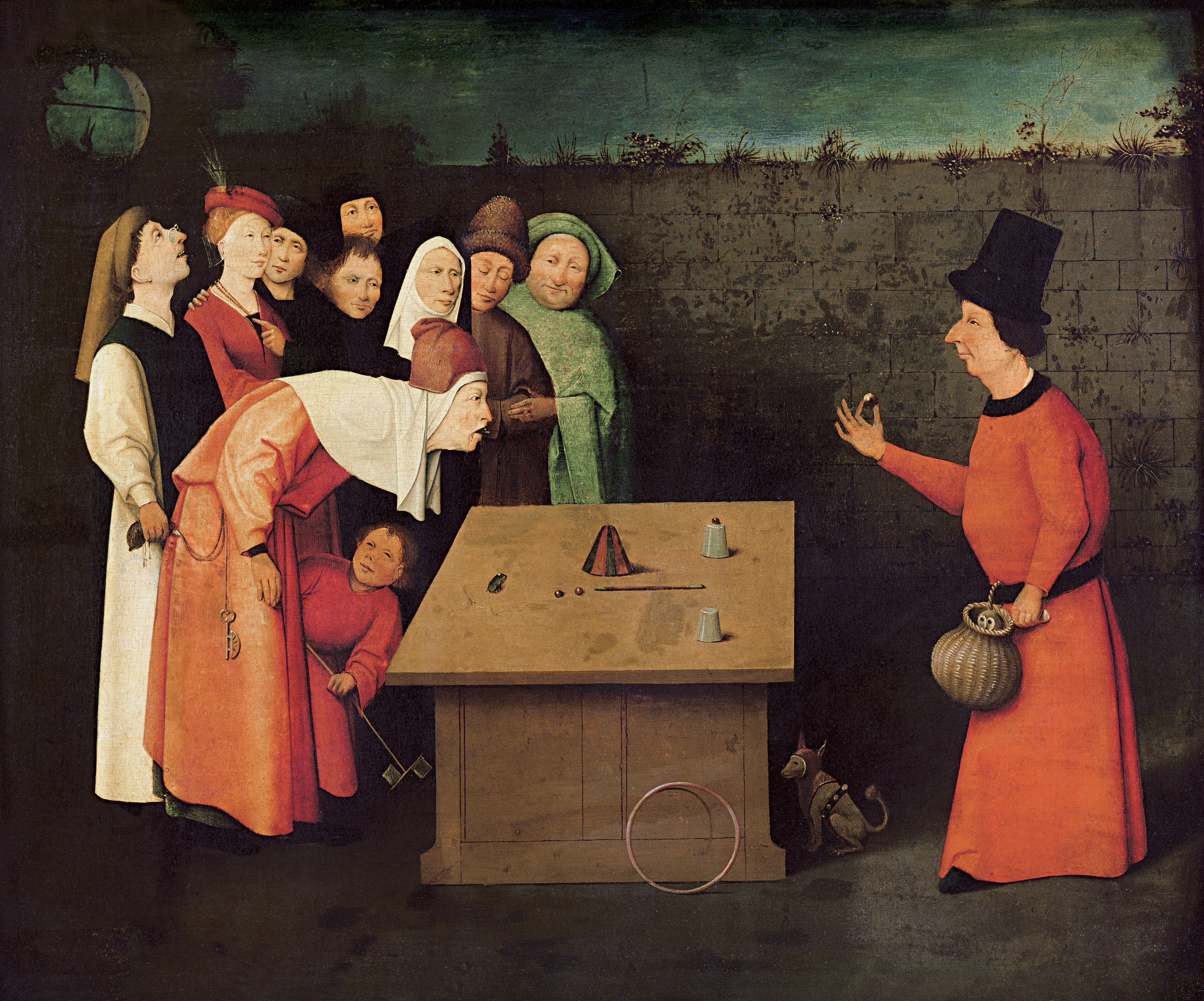}

    \vspace{0.1em}
    {\scriptsize\itshape
        An illustration of the shell-game analogy:
        \emph{The Conjurer}, workshop of Hieronymus Bosch,
        c.~1502. [Public domain; image source: The Yorck Project,
        Wikimedia Commons.]
    \par}
    \vspace{-4em}
\end{wrapfigure}
\\[-0.25em]
A useful operational picture for Local Fermionic Mode computation is that of
a shell game. The shells are the fermionic modes: they are the individually
addressable objects that can be moved, coupled, or measured. The balls
underneath them are indistinguishable fermions. What is recorded
computationally is therefore only whether each mode is occupied or empty,
with at most one fermion per mode by the Pauli exclusion principle.
An LFM register is thus described by a binary occupation pattern,
possibly in a quantum superposition of several such patterns : 
this is where the analogy with a classical shell game stops. 
Coherent transformations of the modes retain information about fermionic statistics:
in particular, exchanging occupied modes may introduce a sign in the
corresponding amplitude, which can become observable through interference.
Thus the modes provide an addressable computational interface, while the
fermionic nature of the particles remains visible in the transformations
acting on their occupation states.

Optical-tweezer platforms give a particularly concrete realization of this
picture~\cite{SerwaneEtAl2011,MurmannEtAl2015,SparEtAl2022,YanEtAl2022}. 
In this setting, a localized single-particle state of a tweezer acts as a fermionic mode: 
changing its phase, allowing tunnelling between modes, or coupling the occupations of several modes gives
an operational interpretation to the elementary gates of an LFM circuit
\cite{GonzalezCuadraEtAl2023,OttEtAl2025}. 
Such platforms normally conserve the total number of physical
atoms, whereas the general LFM model only imposes the weaker parity
superselection rule: fermionic systems do not admit coherent superpositions
between states of even and odd total fermion number
\cite{WickWightmanWigner1952,Friis2016,VidalEtAl2021}. Consequently, some
operations allowed by the abstract LFM formalism may require additional
physical resources on a number-conserving platform. We return to this point
when introducing our working fermionic gate-sets.

The following subsection introduces the algebraic formalism used to make these
ideas precise and to give a denotational semantics to LFM computation.

\subsection{Fermionic creation and annihilation operators}\label{subsec:2-2}

    A system of $m\in\N$ fermionic modes will be called a fermionic register of size $m$,
    where $m$ counts modes rather than particles.
    We now introduce the \emph{$*$-algebra}ic formalism used to describe such registers. \\[-1.5em]
        \begin{definition}[$*$-Algebra]
        A $*$-algebra $(\A,*)$ over $\C$ is an associative complex algebra
        $\A$ equipped with a conjugate-linear involution $*: \A \to \A$
        satisfying
            \[
                (ab)^* = b^*a^*.
            \]
        \end{definition}

    Fermionic registers are modeled using the abstract operator $*$-algebra with involution denoted $(-)^\dagger$
    generated by creation ($f^\dagger_{(-)}$) and annihilation ($f_{(-)}$) operators, which respectively create and annihilate particles in a given mode.
    These operators are subject to Canonical Anticommutation Relations (CAR), hence the name of the operator algebra.
    
    \begin{definition}[CAR algebra]
       The CAR algebra for $m$ modes $\A_m$, with involution denoted $(-)^\dagger$,
       is the unital $*$-algebra freely generated by operators $(f_1, \ldots, f_m)$, and subject to:
        \[
        f_i f_j + f_j f_i = 0 \qquad
        f_i^\dagger f_j^\dagger + f_j^\dagger f_i^\dagger = 0 \qquad
        f_i^\dagger f_j + f_j f_i^\dagger = \delta_{ij} 1.
        \]
    \end{definition}
       
    We denote elements of the algebra $a \in \A_m$ as non-commutative polynomials in the creation and annihilation operators. 
   This non-commutative polynomial is not unique. Indeed, using the CAR we can reorder the variables to rewrite uniquely 
   any element of $\A_m$ as:
    \[ 
    a = \sum_{\substack{I,J \subseteq \interN{1}{m}}} a_{I,J}f_{i_1}^\dagger \cdots f_{i_k}^\dagger f_{j_1} \cdots f_{j_l},
    \]
    where $I = \{i_1 < \cdots < i_k\}$ and $J = \{j_1 < \cdots < j_l\}$ are subsets of $\interN{1}{m}$, and $a_{I,J} \in \mathbb{C}$.

    The CAR preserve the parity of the degree of the monomials, which justifies the following definitions:

    \begin{definition}[Even operator, Odd operator, Homogeneous operator]
        An element $a \in \A_m$ is said:
        \begin{itemize}[label=\emph{|}]
            \item Even if it is a linear combination of monomials of even degrees,
            \item Odd if it is a linear combination of monomials of odd degrees,
            \item Homogeneous if it is either even or odd.
        \end{itemize}
    \end{definition}

    The preceding decomposition is precisely the canonical
    $\mathbb Z_2$-grading of the CAR algebra:
    \[
        \A_m = (\A_m)_{\bar 0} \oplus (\A_m)_{\bar 1},
        \qquad
        (\A_m)_{\bar p}(\A_m)_{\bar q}
        \subseteq
        (\A_m)_{\bar p+\bar q},
    \]
    where the degree is taken modulo $2$. In the usual terminology, 
    $\A_m$ is therefore a superalgebra \cite{Varadarajan2004}.
    It is important, however, that $\A_m$ is not supercommutative in
    general: homogeneous elements $a,b\in\A_m$ do not necessarily satisfy
    \[
        ab = (-1)^{|a||b|}ba.
    \]
    Such a graded commutation relation will hold below under additional
    conditions, in particular for homogeneous operators supported on
    disjoint sets of fermionic modes.

    Beware that the creation and annihilation operators do not correspond to physical transformations that are 
    expected to be implementable on a Fermionic computer. They should be understood as an intuitive way 
    to decompose the actual physical processes. One of the advantages of this decomposition is that the 
    homogeneous operators correspond to processes complying with the \emph{Parity Superselection Rule}. 
    Indeed, as an odd number of creation or annihilation of particles changes the parity of a state, while an even 
    number preserves the parity, non-homogeneous operators would introduce quantum superposition between even and odd
    states, which is forbidden.\\
    Operators of the CAR algebra can thus be thought of as acting on states, which in the case of 
    \emph{pure quantum mechanics} are described as vectors of a Hilbert space~\cite{NielsenChuang2010}. 
    To make this more formal, we will make use of the customary Dirac notations to denote vectors and matrices:


        \begin{definition}[Dirac notations]
            We denote the standard basis of the Hilbert space $\C^2$ as \textbf{kets}:
            \[
            \ket{0} =\pmm{1 \\ 0}, \quad \ket{1} = \pmm{0 \\ 1}.
            \]
                
            For binary words $x, y \in \{0,1\}^*$, the concatenated \textbf{ket} $\ket{xy}$ denotes the Kronecker product of the corresponding vectors:
            \[
            \ket{xy} = \ket{x} \otimes \ket{y}.
            \]

            The \textbf{bra} $\bra{x}$ is the conjugate transpose of the ket $\ket{x}$:
            \[
            \bra{x} = \ket{x}^\dagger.
            \]

            The inner product between a bra and a ket is given by:
            \[
            \braket{i}{j} = \bra{i}\ket{j} = \delta_{ij},
            \]
            where $\delta_{ij}$ is the Kronecker delta.
        \end{definition}

        The CAR algebra $\A_m$ being a finite dimensional $*$-algebra, of dimension $2^{2m} = 2^m \times 2^m$, there 
        is a straightforward way to provide a representation for it in the concrete $*$-algebra of endomorphisms of the 
        Hilbert space $\C^{2^m}$.

        \begin{definition}[Jordan-Wigner representation]
        Let $m \in \mathbb{N}$. The Jordan-Wigner representation 
        \[JW_m: \A_m \to \C^{2^m \times 2^m}\]
        is defined through its action on the computational basis of $\mathbb{C}^{2^m}$, where the binary string $(b_1 \cdots b_m)$ in
        \(
        \ket{b_1 \cdots b_m}
        \)
        is interpreted as the occupation-numbers of the state: mode $i$ is occupied if and only if $b_i=1$.
    
        $JW_m$ is the unique $*$-algebra morphism defined
        by:
        \[
        JW_m(f_k) \ket{x1y}
        =
        (-1)^{|x|}\ket{x0y},
        \qquad
        JW_m(f_k) \ket{x0y}
        =
        0,
        \]
        and
        \[
        JW_m(f_k^\dagger) \ket{x0y}
        =
        (-1)^{|x|}\ket{x1y},
        \qquad
        JW_m(f_k^\dagger) \ket{x1y}
        =
        0,
        \]
        where $x \in \{0,1\}^{k-1}$, $y \in \{0,1\}^{m-k}$, and $|x|$ denotes the Hamming weight of $x$.
        \end{definition}

    The Jordan-Wigner representation is often thought of as a way to encode fermionic systems into qubits. 
    In this work, we merely see it as a way to recover a Hilbert space formalism for pure fermionic systems, in which a happy 
    coincidence on dimension makes the state space of a register of $m$ fermionic modes isomorphic to the state 
    space of $m$ qubits.
    
    Above definitions of odd, even and homogeneous operators are easily extended to any linear maps, including state vectors.
    \begin{definition}[Even, Odd, Homogeneous linear map]
        Let $f = \sum\limits_{xy \in \{0,1\}^{n+m}} f_{xy} \ketbra{y}{x} \in \mathcal{L}(\mathbb{C}^{2^n}, \mathbb{C}^{2^m})$. We say that $f$ is:
        \begin{itemize}[label=\emph{|}]
            \item Even when $\forall xy \in \{0,1\}^{n+m}, f_{xy} \neq 0 \Rightarrow |x| + |y| \equiv 0 \mod 2$.
            \item Odd when $\forall xy \in \{0,1\}^{n+m}, f_{xy} \neq 0 \Rightarrow |x| + |y| \equiv 1 \mod 2$.
            \item Homogeneous when it is either even or odd.
        \end{itemize}
        
    \end{definition}

    This is precisely the usual notion of even and odd morphisms between
    super vector spaces~\cite{BrundanEllis2017}.
    One can check that this definition is consistent with the Jordan-Wigner representation,
    in the sense that $JW_m$ maps even operators to even linear maps, odd to odd, 
    and homogeneous to homogeneous. The representation of a physical operator should 
    therefore be an homogeneous linear map.

    Given its Jordan-Wigner representation, the CAR algebra formalism may seem redundant. Indeed, one could
    try to define a LFM circuit model through a denotational semantics in the more usual framework of Hilbert 
    spaces and linear maps.
    We argue that CAR algebras are central to Bravyi and Kitaev's proposal of such a model~\cite{BravyiKitaev2002}, 
    and that ignoring it would lead to missing important subtleties in the compositionality of Fermionic circuits,
    which we expand on in \cref{sec:3}. As a consequence, \cref{subsec:2-3} defines Fermionic circuits through an 
    operator algebraic denotational semantics. 


\subsection{Bravyi--Kitaev's computational model}\label{subsec:2-3}
As stated before, Bravyi and Kitaev's LFM computational model is analogous to the
quantum circuit model: computations are described by circuits whose gates are
applied to finite tuples of modes. Defining the model essentially amounts to
specifying what constitutes a fermionic gate and how such a gate is applied to
a tuple of modes.

In this work, we restrict ourselves to \emph{pure} Fermionic computations, whose
evolution is described by unitary transformations, as opposed to general mixed
quantum processes~\cite{NielsenChuang2010}. The Parity Superselection Rule
further constrains the closed evolution of a Fermionic system: physical
Hamiltonians, and therefore the unitaries they generate, preserve Fermionic
parity~\cite{BravyiKitaev2002}. We will nevertheless use a slightly broader notion of
Fermionic gate, allowing effective parity-reversing transformations whose
operational meaning will be discussed below.

\begin{definition}[Fermionic gates]
    An element $\U\in\A_m$ is \emph{unitary} when
    \[
        \U\U^\dagger=\U^\dagger\U=1_{\A_m}.
    \]
    A Fermionic gate is a homogeneous unitary $\G\in\A_m$. It is
    even, or parity-preserving, when $\G$ is even, and
    odd, or parity-reversing, when $\G$ is odd. 
\end{definition}

Originally, Bravyi and Kitaev only allow even gates. In particular, they provide a finite
set of local Fermionic gates which is approximately universal for quantum
computation~\cite{BravyiKitaev2002}.  Since $\A_m$ is
finite-dimensional, we use the usual exponential
$\exp(a)=\sum\limits_{k\geq 0}a^k/k!$.

\begin{definition}[Bravyi--Kitaev Fermionic gate set]
    We consider the following gates:
    \[
    \begin{aligned}
        \G_1
        &:= \exp\!\left(i\tfrac{\pi}{4}f_1^\dagger f_1\right),\\
        \G_2
        &:= \exp\!\left(
            i\tfrac{\pi}{4}
            (f_1^\dagger f_2+f_2^\dagger f_1)
        \right),\\
        \G_3
        &:= \exp\!\left(
            i\tfrac{\pi}{4}
            (f_2f_1+f_1^\dagger f_2^\dagger)
        \right),\\
        \G_4
        &:= \exp\!\left(i\pi f_1^\dagger f_1f_2^\dagger f_2\right).
    \end{aligned}
    \]
\end{definition}

All four gates are even, that is to say parity preserving. 
Under the Jordan--Wigner representation, the four Bravyi--Kitaev gates take the following matrix forms:
\[
JW(G_1)=
\begin{bmatrix}
1&0\\
0&e^{i\pi/4}
\end{bmatrix},
\qquad
JW(G_2)=
\begin{bmatrix}
1&0&0&0\\
0&\frac1{\sqrt2}&\frac{i}{\sqrt2}&0\\
0&\frac{i}{\sqrt2}&\frac1{\sqrt2}&0\\
0&0&0&1
\end{bmatrix},
\]
\[
JW(G_3)=
\begin{bmatrix}
\frac1{\sqrt2}&0&0&\frac{i}{\sqrt2}\\
0&1&0&0\\
0&0&1&0\\
\frac{i}{\sqrt2}&0&0&\frac1{\sqrt2}
\end{bmatrix},
\qquad
JW(G_4)=
\begin{bmatrix}
1&0&0&0\\
0&1&0&0\\
0&0&1&0\\
0&0&0&-1
\end{bmatrix}.
\]
\medskip

The gates $\G_1$, $\G_2$ and $\G_4$
also preserve the total particle number: they respectively describe an
occupation-dependent phase, coherent tunnelling between two modes, and an
interaction depending on their joint occupation. By contrast, $\G_3$ coherently
couples states whose particle numbers differ by two. This distinction has a
direct hardware interpretation. Number-conserving neutral-atom processors
naturally provide operations of the $\G_1$, $\G_2$ and $\G_4$ types, while
pairing operations of the $\G_3$ type require an additional resource. Bravyi
and Kitaev discuss such pairing terms in connection with an interaction with a
superconductor~\cite{BravyiKitaev2002}.

Among the gates considered in this work there is supplementary a simple odd operation,
which exchanges the empty and occupied states of a single fermionic mode.

\begin{definition}[Mode-flip gate]
    The \emph{mode-flip gate} is
    \[
        F:=f_1+f_1^\dagger\in\A_1.
    \]
\end{definition}

By the CAR,
\[
    F^\dagger=F,
    \qquad
    F^2
    =f_1f_1^\dagger+f_1^\dagger f_1
    =1,
\]
so $F$ is an odd Hermitian unitary. Under the Jordan--Wigner
representation,
\[
    JW_1(F)= X = \pmm{0 & 1 \\ 1 & 0} .
\]
It is therefore the fermionic analogue of a Pauli-$X$ gate on a single mode.

Although $F$ reverses fermionic parity, it does not create a coherent
superposition of the even and odd sectors. Indeed, $F$ is homogeneous: it
maps every even state to an odd state and every odd state to an even state.
In this sense, the mode-flip is compatible with the parity superselection
rule. This should be distinguished from \emph{parity conservation}, which
would forbid such a transformation on an isolated system.

We are not aware of a current fermionic platform in which $F$ is directly
available as a native operation. Its homogeneity nevertheless leaves open the
possibility of an effective implementation in a setting where the addressed
mode can exchange fermionic parity with additional physical degrees of
freedom. From the viewpoint of the mode alone, such an interaction could
realize $F$, while the complete physical system still preserves its total
fermionic parity.

On platforms whose native dynamics preserve parity, additional resources are
therefore required to realize the mode-flip. This is a fortiori the case for
platforms that conserve the total particle number, such as fermionic
neutral-atom tweezer arrays. A concrete example is provided by
Ott \emph{et al.}~\cite{OttEtAl2025}. They introduce an ancillary fermionic
reference and define a referenced mode for which
\[
    D(\theta)
    :=
    \exp\!\left(i\theta(f_1^\dagger+f_1)\right)
\]
can be implemented using number-conserving operations on the physical atoms.
At $\theta=\pi/2$,
\[
    D(\pi/2)=i(f_1^\dagger+f_1),
\]
which is, up to a global phase, precisely a mode-flip on the referenced mode.
The reference supplies the complementary change in particle number, so that
the effective referenced mode may be flipped while the total number of
physical atoms remains fixed.

Thus $F$ is a legitimate homogeneous gate of the LFM formalism, while
its physical realization depends on which conservation laws are imposed by
the underlying platform.

\paragraph{Application of a gate to Fermionic modes.}
The main difference with the usual quantum circuit model lies in the
application of a gate to a chosen tuple of modes. There, gate application is
defined using tensor products of Hilbert spaces. For Fermionic modes, it is
instead naturally expressed directly in the CAR algebras.

\begin{definition}[Tuple of modes and application morphism]
    Let
    \[
        t:\interN{1}{m}\hookrightarrow\interN{1}{m'}
    \]
    be an injection, which we equivalently regard as a tuple of $m$ pairwise
    distinct modes of an $m'$-mode register. We denote by
    \[
        \iota_t:\A_m\longrightarrow\A_{m'}
    \]
    the unique unital $*$-algebra morphism satisfying
    \[
        \iota_t(f_i)=f_{t(i)},
        \qquad
        \iota_t(f_i^\dagger)=f_{t(i)}^\dagger.
    \]
\end{definition}

\begin{lemma}\label{lem:application-morphism}
    The morphism $\iota_t$ is well defined and injective. Moreover, it preserves
    parity and maps unitary elements to unitary elements.
\end{lemma}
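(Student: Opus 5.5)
Well-definedness follows from the presentation of $\A_m$ as the unital $*$-algebra freely generated by $f_1,\dots,f_m$ modulo the CAR. By the universal property, a unital $*$-algebra morphism out of $\A_m$ is determined by the images of the $f_i$, provided those images satisfy the defining relations. We therefore set $f_i\mapsto f_{t(i)}$ (so $f_i^\dagger\mapsto f_{t(i)}^\dagger$ is forced by $*$-compatibility). We then check in $\A_{m'}$ that
\[
f_{t(i)}f_{t(j)}+f_{t(j)}f_{t(i)}=0,\qquad f_{t(i)}^\dagger f_{t(j)}+f_{t(j)}f_{t(i)}^\dagger=\delta_{t(i)t(j)}1.
\]
This holds because these are instances of the CAR in $\A_{m'}$, together with injectivity of $t$, which gives $\delta_{t(i)t(j)}=\delta_{ij}$. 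Uniqueness is immediate since the $f_i$ generate $\A_m$.

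Parity and unitarity are then quick. $\iota_t$ sends a monomial of degree $d$ in the generators to a monomial of degree $d$, so by linearity it maps $(\A_m)_{\bar p}$ into $(\A_{m'})_{\bar p}$. For a unitary $\U$, applying the unital $*$-morphism $\iota_t$ to $\U\U^\dagger=\U^\dagger\U=1$ gives $\iota_t(\U)\iota_t(\U)^\dagger=\iota_t(\U)^\dagger\iota_t(\U)=1_{\A_{m'}}$.

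Injectivity is the main step. I would use the normal-form basis recalled earlier: the ordered monomials $f_{i_1}^\dagger\cdots f_{i_k}^\dagger f_{j_1}\cdots f_{j_l}$, indexed by pairs $(I,J)$ of subsets, form a basis of $\A_m$ (consistent with $\dim\A_m=2^{2m}$). Under $\iota_t$ such a monomial becomes $f_{t(i_1)}^\dagger\cdots f_{t(i_k)}^\dagger f_{t(j_1)}\cdots f_{t(j_l)}$. Since $t$ need not be increasing, we reorder the factors using the anticommutation relations among distinct creation operators, and separately among distinct annihilation operators. No $\delta$ terms arise, since creations are only swapped with creations and annihilations with annihilations. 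The result is $\pm$ the ordered basis monomial of $\A_{m'}$ indexed by $(t(I),t(J))$.

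Because $(I,J)\mapsto(t(I),t(J))$ is injective, $\iota_t$ sends a basis to a family of signed distinct basis vectors, and is therefore injective. The one point needing care is that the normal forms genuinely form a basis, not merely a spanning set. The paper asserts this uniqueness, but if it must be justified, I would do so via the Jordan--Wigner representation. $JW_m$ is surjective onto $\C^{2^m\times 2^m}$, because products $f^\dagger_I f_J$-type operators combined with projectors $f_if_i^\dagger$ and $f_i^\dagger f_i$ realize every matrix unit $\ketbra{y}{x}$ up to sign. Comparing dimensions, the at most $4^m$ spanning monomials must then be linearly independent and $JW_m$ must be an isomorphism.
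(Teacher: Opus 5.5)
Your proof is correct and follows the paper's argument essentially step for step: well-definedness from the universal property of the CAR presentation together with injectivity of $t$; parity and unitarity from degree preservation and the unital $*$-morphism property; and injectivity because normally ordered monomials are sent to signed, pairwise distinct normally ordered monomials of $\A_{m'}$. The only addition is your justification, via surjectivity of $JW_m$ and a dimension count, that the normally ordered monomials are linearly independent, which the paper simply recalls as a known fact.
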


\begin{definition}[Fermionic gate application~\cite{BravyiKitaev2002}]
    Let $\G\in\A_m$ be a Fermionic gate and let
    $t:\interN{1}{m}\hookrightarrow\interN{1}{m'}$ be a tuple of modes.
    The application of $\G$ to $t$ is
    \[
        \G\{t\}:=\iota_t(\G)\in\A_{m'}.
    \]
\end{definition}

Thus gate application amounts algebraically to relabelling the creation and
annihilation operators occurring in the gate. This definition differs from
ordinary qubit gate application, where the ambient tensor-product structure
provides the embedding automatically.

\paragraph{Fermionic permutations.}
Bravyi and Kitaev introduce a Fermionic analogue of the swap gate, which
implements permutations of modes while respecting their exchange statistics.

\begin{definition}[Fermionic swap]
    The \emph{Fermionic swap} on two modes is
    \[
        \pi
        :=
        1-n_1-n_2+f_2^\dagger f_1+f_1^\dagger f_2
        \in\A_2.
    \]
    For distinct $i,j\in\interN{1}{m}$, we write
    $\pi\{i,j\}$ for its application to the tuple $(i,j)$.
\end{definition}

Under the Jordan--Wigner representation, the Fermionic swap is the usual
two-qubit swap with an additional phase on the doubly occupied state:
\[
    JW_2(\pi)
    =
    \ketbra{00}{00}
    +\ketbra{01}{10}
    +\ketbra{10}{01}
    -\ketbra{11}{11}.
\]
Thus $\pi$ exchanges the two occupation modes while accounting for the
minus sign arising from the exchange of two Fermions.

\begin{lemma}\label{lem:fermionic-swap-generators}
    The Fermionic swap is an even Hermitian unitary. Moreover, denoting 
    $\tau_{i,j} \in \mathfrak{S}_m$ the transposition of
    $i\neq j$, we have for every $k\in\interN{1}{m}$,
    \[
        \pi\{i,j\}f_k\pi\{i,j\}^{-1}
        =
        f_{\tau_{i,j}(k)},
        \qquad
        \pi\{i,j\}f_k^\dagger\pi\{i,j\}^{-1}
        =
        f_{\tau_{i,j}(k)}^\dagger.
    \]
\end{lemma}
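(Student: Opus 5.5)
The plan is to do every computation inside $\A_2$ and then transport it to $\A_m$ with \cref{lem:application-morphism}. Write $n_i:=f_i^\dagger f_i$, so that $\pi=1-n_1-n_2+f_2^\dagger f_1+f_1^\dagger f_2$.

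\textbf{Evenness and hermiticity.} Every monomial of $\pi$ has degree $0$ or $2$, so $\pi$ is even. For hermiticity, note that $n_i^\dagger=n_i$ and $(f_2^\dagger f_1)^\dagger=f_1^\dagger f_2$, so the two hopping terms are exchanged by the involution and $\pi^\dagger=\pi$.

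\textbf{Unitarity.} Since $\pi$ is Hermitian, unitarity is equivalent to $\pi^2=1$. I would obtain this as a by-product of the conjugation identities rather than expanding $\pi^2$ directly. The key relation in $\A_2$ is
\[
\pi f_1 = f_2\pi .
\]
I would verify it using only the CAR together with the elementary consequences $f_if_i=0$, $n_if_i=0$, $f_in_i=(1-n_i)f_i=f_i$, and the fact that $n_1$ commutes with $f_2$ (an even element commutes with an odd one on a disjoint mode). Both sides reduce to $f_1-n_2f_1-n_1f_2$. The relation is symmetric under $1\leftrightarrow 2$ and $\pi$ is symmetric under this exchange, so $\pi f_2=f_1\pi$ also holds. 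Taking adjoints and using $\pi^\dagger=\pi$ gives $f_1^\dagger\pi=\pi f_2^\dagger$ and $f_2^\dagger\pi=\pi f_1^\dagger$.

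From these relations, $\pi^2$ commutes with every generator and hence lies in the centre of $\A_2$. To conclude $\pi^2=1$ I see two options. The first is to compute $\pi^2$ directly, which is a routine finite expansion using $n_i^2=n_i$, $f_2^\dagger f_1f_2^\dagger f_1=0$, and $f_2^\dagger f_1f_1^\dagger f_2=n_2(1-n_1)$. The second is to argue via a faithful Jordan--Wigner representation and check that the $JW_2$ matrix displayed in the paper squares to the identity. I would do the direct expansion, since it stays within the abstract algebra; this is the only mildly tedious step. Once $\pi^{-1}=\pi$, the two-mode identities read $\pi f_k\pi^{-1}=f_{\tau(k)}$ and $\pi f_k^\dagger\pi^{-1}=f_{\tau(k)}^\dagger$.

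\textbf{Transport to $\A_m$.} Apply the unital $*$-morphism $\iota_t$ with $t=(i,j)$. It preserves products, adjoints, the unit and inverses, so $\pi\{i,j\}$ is an even Hermitian unitary with inverse itself, and the conjugation formulas hold for $k\in\{i,j\}$.

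\textbf{Spectator modes.} For $k\notin\{i,j\}$, the element $f_k$ anticommutes with each of $f_i,f_j,f_i^\dagger,f_j^\dagger$ by the CAR, since $\delta_{ik}=\delta_{jk}=0$. It therefore commutes with every degree-$2$ monomial in these generators, hence with $\pi\{i,j\}$, which gives $\pi\{i,j\}f_k\pi\{i,j\}^{-1}=f_k=f_{\tau_{i,j}(k)}$. The same argument applies to $f_k^\dagger$.
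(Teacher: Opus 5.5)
Your proposal is correct and follows the paper's proof: the paper also works in $\A_2$, gets $\pi^2=1$ by direct expansion (writing $\pi=a+h$ with $a=1-n_1-n_2$ and $h=f_2^\dagger f_1+f_1^\dagger f_2$, then checking $ah=ha=0$ and $a^2+h^2=1$), and derives $\pi f_1=f_1-n_2f_1-n_1f_2=f_2\pi$. It then transports everything along the application morphism and handles spectator modes via the CAR, exactly as you do. Your remark that $\pi^2$ is central is harmless but redundant, since you compute $\pi^2$ explicitly anyway.
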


The Fermionic swaps therefore implement ordinary permutations of the
\emph{mode labels} by conjugation.

\begin{lemma}[Representation of the symmetric group]\label{lem:representation-sym-group}
    For $r\in\interN{1}{m-1}$, let $\tau_r \in \mathfrak{S}_m$ be the transposition exchanging $r$ and $r+1$.
    Let us denote :
    \[
        \pi_r :=\pi\{r,r+1\}.
    \]
    The operators $\pi_r$ satisfy the Coxeter relations
    \[
        \pi_r^2=1,\qquad
        \pi_r\pi_s=\pi_s\pi_r\quad\text{if }|r-s|>1,
        \qquad
        \pi_r\pi_{r+1}\pi_r
        =
        \pi_{r+1}\pi_r\pi_{r+1}.
    \]
    Consequently there is a unique group homomorphism
    \[
        \mathfrak S_m\longrightarrow\A_m^\times,
        \qquad
        \sigma\longmapsto\pi_\sigma,
    \]
    sending $\tau_r$ to $\pi_r$.
\end{lemma}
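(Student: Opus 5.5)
The plan is to derive the Coxeter relations from the conjugation property in \cref{lem:fermionic-swap-generators}, together with the fact that the only elements of $\A_m$ commuting with every generator are scalars. The presentation of $\mathfrak S_m$ then gives the homomorphism.

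\textbf{Centre of $\A_m$.} Under $JW_m$, $\A_m$ has dimension $2^{2m}$ and acts irreducibly on $\C^{2^m}$, so $JW_m$ is an isomorphism onto the full matrix algebra. Hence any $c\in\A_m$ commuting with all $f_k$ and $f_k^\dagger$ is a scalar multiple of $1$. Directly from the normal ordered form, $c$ must commute with each $n_k=f_k^\dagger f_k$, which forces it to be diagonal in occupations; commuting with $f_k$ then makes all diagonal entries equal.

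\textbf{Each relation $w=1$.} Take $w$ to be one of $\pi_r^2$, $\pi_r\pi_s\pi_r^{-1}\pi_s^{-1}$ with $|r-s|>1$, or $(\pi_r\pi_{r+1})^3$. Conjugation by a product is the composite of the conjugations. By \cref{lem:fermionic-swap-generators}, conjugation by $w$ therefore acts on generators by the corresponding word in the $\tau$'s. That word is the identity in $\mathfrak S_m$, so $w$ is central and hence $w=\lambda 1$.

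\textbf{Fixing the scalar.} To show $\lambda=1$, I evaluate on the vacuum. Via $JW_m$, each $\pi_r$ sends $\ket{0\cdots0}$ to itself, since the $1-n_r-n_{r+1}$ term acts as $1$ there and the hopping terms vanish. So $w\ket{0\cdots0}=\ket{0\cdots0}$, giving $\lambda=1$. For $\pi_r^2=1$ one can alternatively use that $\pi$ is a Hermitian unitary. The braid relation then follows from $(\pi_r\pi_{r+1})^3=1$ together with $\pi_r^2=1$.

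\textbf{Conclusion.} Since $\mathfrak S_m$ has the Coxeter presentation on the $\tau_r$, the assignment $\tau_r\mapsto\pi_r$ extends to a homomorphism into $\A_m^\times$. Uniqueness holds because the $\tau_r$ generate $\mathfrak S_m$.

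The main obstacle is the centre argument. The cleanest route is to go through $JW_m$ being an isomorphism, which holds by dimension count plus irreducibility of $\C^{2^m}$.
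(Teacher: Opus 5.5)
Your proposal is correct and follows essentially the paper's route. You show that the relevant word in the $\pi_r$ induces the trivial permutation of the CAR generators, so it is central and hence scalar because $JW_m$ identifies $\A_m$ with a full matrix algebra; you then fix the scalar to $1$ on the vacuum and conclude with the Coxeter presentation of $\mathfrak S_m$. The only difference is that you apply this centrality-plus-vacuum argument uniformly to all three relations, whereas the paper uses it only for the braid relation, getting $\pi_r^2=1$ from \cref{lem:fermionic-swap-generators} and far commutativity from evenness and disjoint supports.
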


For an arbitrary transposition $\tau_{i,j}$, the image $\pi_{\tau_{i,j}}$ obtained
from this representation agrees with the element $\pi\{i,j\}$ defined
above: both implement the same conjugation on all CAR generators and both
fix the vacuum.

\begin{definition}[Permutation action on the CAR algebra]
    The preceding representation induces an action
    \[
        \star:\mathfrak S_m\times\A_m\longrightarrow\A_m,
        \qquad
        \sigma\star a
        :=
         \pi_\sigma\,a\, \pi_\sigma^{-1}.
    \]
    Since $ \pi_\sigma^{-1}=\pi_{\sigma^{-1}}$, equivalently
    \[
        \sigma\star a
        =
         \pi_\sigma\,a\,\pi_{\sigma^{-1}}.
    \]
\end{definition}

\begin{lemma}\label{lem:G-equivariant}
    Let
    \[
        t:\interN{1}{m}\hookrightarrow\interN{1}{m'}
    \]
    be a tuple of modes and let $\sigma\in\mathfrak S_{m'}$. Then, for
    every $\G\in\A_m$,
    \[
        \sigma\star\G\{t\}
        =
        \G\{\sigma\circ t\}.
    \]
\end{lemma}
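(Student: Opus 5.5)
The plan is to reduce the identity to the CAR generators and then extend it by multiplicativity. Both sides of $\sigma\star\G\{t\}=\G\{\sigma\circ t\}$ depend on $\G$ through unital $*$-algebra morphisms $\A_m\to\A_{m'}$:
\begin{itemize}
\item the right-hand side is $\iota_{\sigma\circ t}(\G)$;
\item the left-hand side is $(\sigma\star-)\circ\iota_t$ applied to $\G$.
\end{itemize}
It therefore suffices to show that these two morphisms coincide. Since $\A_m$ is generated as a unital $*$-algebra by $f_1,\dots,f_m$, it is enough to check agreement on each $f_i$. As a precaution, and to stay safe when only the algebra generators are used, I would also check each $f_i^\dagger$.

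The first step is to establish that $\sigma\star-$ is itself a unital $*$-algebra morphism. Conjugation $a\mapsto\pi_\sigma a\pi_\sigma^{-1}$ is always a unital algebra automorphism. For compatibility with the involution, I need $\pi_\sigma^{-1}=\pi_\sigma^\dagger$, i.e.\ that $\pi_\sigma$ is unitary. By \cref{lem:fermionic-swap-generators} each $\pi_r$ is a Hermitian unitary, and products of unitaries are unitary. Since $\sigma\mapsto\pi_\sigma$ is a homomorphism by \cref{lem:representation-sym-group}, $\pi_\sigma$ is a product of $\pi_r$'s and hence unitary. This gives
\[
(\pi_\sigma a\pi_\sigma^{-1})^\dagger=\pi_\sigma a^\dagger\pi_\sigma^{-1}.
\]
The composite $(\sigma\star-)\circ\iota_t$ is then a unital $*$-morphism, with $\iota_t$ well defined by \cref{lem:application-morphism}.

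The second step, which is the main point, is to compute on generators. We have $(\sigma\star-)(\iota_t(f_i))=\pi_\sigma f_{t(i)}\pi_\sigma^{-1}$, and the goal is to show this equals $f_{\sigma(t(i))}$. I would prove the auxiliary claim
\[
\pi_\sigma f_k\pi_\sigma^{-1}=f_{\sigma(k)}\quad\text{for all }\sigma\in\mathfrak S_{m'},\ k,
\]
and the analogous claim for $f_k^\dagger$, by induction on the length of $\sigma$ as a word in the adjacent transpositions $\tau_r$. The base case $\sigma=\tau_r$ is exactly \cref{lem:fermionic-swap-generators} applied with $\pi_r=\pi\{r,r+1\}$. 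For the inductive step, take $\sigma=\tau_r\sigma'$. The homomorphism property gives $\pi_\sigma=\pi_r\pi_{\sigma'}$, so
\[
\pi_\sigma f_k\pi_\sigma^{-1}=\pi_r f_{\sigma'(k)}\pi_r^{-1}=f_{\tau_r\sigma'(k)}.
\]
This step has to be well defined independently of the chosen word; that holds because $\pi_\sigma$ depends only on $\sigma$, by \cref{lem:representation-sym-group}.

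The last step combines these. On generators, both morphisms send $f_i\mapsto f_{\sigma(t(i))}$ and $f_i^\dagger\mapsto f_{\sigma(t(i))}^\dagger$, which are exactly the defining values of $\iota_{\sigma\circ t}$. Uniqueness of the morphism determined on generators then yields the equality for every $\G\in\A_m$; neither unitarity nor homogeneity of $\G$ is used.

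I expect the only delicate point to be the bookkeeping of left versus right action in the inductive step. I need $\pi_{\tau_r\sigma'}=\pi_r\pi_{\sigma'}$, not $\pi_{\sigma'}\pi_r$, so that the index transforms as $k\mapsto\sigma(k)$ rather than $k\mapsto\sigma^{-1}(k)$. This follows because the map is a homomorphism with the standard composition convention on $\mathfrak S_{m'}$, so it should go through.
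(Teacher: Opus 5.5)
Your proposal is correct and follows the paper's own proof. Both arguments show that the unital $*$-algebra morphisms $(\sigma\star-)\circ\iota_t$ and $\iota_{\sigma\circ t}$ agree on the generators $f_i, f_i^\dagger$, using $\sigma\star f_k=f_{\sigma(k)}$, which comes from \cref{lem:fermionic-swap-generators} and the representation of \cref{lem:representation-sym-group}; you additionally spell out the details the paper leaves implicit, namely the unitarity of $\pi_\sigma$ and the induction on word length.
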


The distinction between even and odd Fermionic gates introduced above will be
crucial in the following section. Even operators supported on disjoint sets of
modes commute, as expected of independent physical operations. Odd operators
do not: for $i\neq j$,
\[
    F\{i\}F\{j\}
    =
    -F\{j\}F\{i\}.
\]
Consequently, once effective odd transformations are included, application on
disjoint modes does not satisfy the ordinary interchange law of a monoidal
category. The following section studies the categorical structure induced by
this obstruction.

\section{Categorical denotational semantics}\label{sec:3}
    In the previous section, we described Fermionic computations algebraically:
a computation on a fixed register is interpreted as an element of a CAR
algebra, while the application of a gate to a chosen collection of modes is
expressed through embeddings between CAR algebras. We now organize these
ingredients into a categorical semantics.

Two features will guide the construction. First, although gates supported on
disjoint sets of modes can be applied independently, odd Fermionic operations
anticommute rather than commute. Their coexistence therefore prevents the
usual tensorial interchange law and naturally leads to a premonoidal
structure. Second, CAR algebras describe endomorphisms of fixed-size
registers, but do not by themselves contain preparations, effects, or
processes between registers of different sizes. After identifying the
premonoidal structure already present in the CAR semantics, we will therefore
use the Jordan--Wigner representation to extend it to a category of general
pure Fermionic processes.

\subsection{The CAR category}
\label{subsec:3-1}

The CAR algebra $\A_m$ may be regarded as a one-object category whose
endomorphisms are the elements of $\A_m$, with composition given by the
algebra product. Collecting these categories for all register sizes gives a
first categorical semantics for Fermionic computations.

\begin{definition}[The $\CAR$ category]
    The category $\CAR$ has objects $\mathbb N$ and hom-sets
    \[
        \CAR[m,n]
        :=
        \begin{cases}
            \A_m & \text{if }m=n,\\
            \varnothing & \text{otherwise}.
        \end{cases}
    \]
    Composition in $\CAR[m,m]$ is multiplication in $\A_m$, and the
    identity on $m$ is $1_{\A_m}$. In particular,
    $\A_0\simeq\C$, so that $\CAR[0,0]\simeq\C$ contains the scalar
    morphisms.

    We denote by $\HCAR$ the wide subcategory whose morphisms are
    homogeneous operators, and by $\ECAR$ the wide subcategory whose
    morphisms are even operators.
\end{definition}

To compare this semantics with the usual Hilbert-space semantics, it is
convenient to use the usual semantics category of qubit registers.

\begin{definition}[The $\Qubit$ category]
    The category $\Qubit$ has objects $\mathbb N$ and
    \[
        \Qubit[m,n]
        :=
        \mathcal L\!\left(\C^{2^m},\C^{2^n}\right).
    \]
\end{definition}


The Jordan-Wigner $*$-algebra isomorphisms naturally extend to a functor 
from $\CAR$ to $\Qubit$.

\begin{definition}[Jordan--Wigner functor]
    The Jordan--Wigner representation defines a functor
    \[
        JW:\CAR\longrightarrow\Qubit
    \]
    by
    \[
        JW(m)=m,
        \qquad
        JW(P)=JW_m(P)
        \quad\text{for }P\in\A_m.
    \]
\end{definition}

Functoriality follows from the fact that each $JW_m$ is a unital
$*$-algebra morphism. Moreover, $JW$ is faithful and, for every
$m\in\mathbb N$, induces an isomorphism
\[
    \CAR[m,m]\simeq\Qubit[m,m].
\]
Thus every square linear map on an $m$-mode occupation space has a unique
description as an element of the CAR algebra. The functor is not full,
however, since $\CAR[m,n]=\varnothing$ whenever $m\neq n$. In particular,
$\CAR$ does not contain state preparations or effects. We return to this
limitation in \cref{subsec:3-3}.

\subsection{Premonoidality and Non-natural symmetry}
\label{subsec:3-2}

Bravyi and Kitaev's notion of applying a gate to a tuple of modes provides
more structure than the category $\CAR$ alone. In particular, a register can
be enlarged by idle modes on either side of an operation. These two
extensions are functorial separately, but, in the presence of odd Fermionic
operations, they do not jointly define a bifunctor. This is naturally captured
the notion of \emph{strict premonoidal category}.

\begin{definition}[Strict premonoidal category]
    A \emph{strict premonoidal category} $(\mathcal C,\owedge,I)$ is a
    category whose objects form a monoid
    $(\Obj(\mathcal C),\owedge,I)$, together with, for every object $A$,
    two functors
    \[
        A\owedge(-),
        \qquad
        (-)\owedge A,
    \]
    whose action on objects is given by $\owedge$, and satisfying
    \[
    \begin{aligned}
        I\owedge f &= f = f\owedge I,\\
        A\owedge(B\owedge f) &= (A\owedge B)\owedge f,\\
        A\owedge(f\owedge B) &= (A\owedge f)\owedge B,\\
        (f\owedge A)\owedge B &= f\owedge(A\owedge B).
    \end{aligned}
    \]
\end{definition}

    Strict premonoidal categories admit a string-diagram notation similar to
    monoidal categories. Morphisms are represented by boxes and objects by
    wires. Consider a morphism $f: B \to C$. Its string diagram presentation is
    $\scalebox{\scalefig}{\input{./figures/premonoidal-cat/f.tikz}}$.
    The composition $f\circ g$ for $g : A \to B$ si pictured as 
    $\scalebox{\scalefig}{\input{./figures/premonoidal-cat/gf.tikz}}$.
    The functors $A\owedge(-)$ and $(-)\owedge A$ correspond respectively
    to adding an idle wire above and below its diagram:
    \[
        \scalebox{\scalefig}{\input{./figures/premonoidal-cat/Awedgef_1.tikz}}
        \quad := \quad
        \scalebox{\scalefig}{\input{./figures/premonoidal-cat/Awedgef_2.tikz}}
    \]
    \[
        \scalebox{\scalefig}{\input{./figures/premonoidal-cat/fwedgeA_1.tikz}}
        \quad := \quad
        \scalebox{\scalefig}{\input{./figures/premonoidal-cat/fwedgeA_2.tikz}}.
    \]
    Contrary to the monoidal case, however, two boxes on disjoint wires
    cannot in general be moved past one another: their relative order may
    carry information.

We use the convention encoded by these diagrams that the modes of a register
are numbered from top to bottom. Adding modes underneath an operation therefore
does not change its mode indices, whereas adding modes above it shifts them.

\begin{definition}[Premonoidal product on $\CAR$]
    On objects, define
    \[
        m\owedge n:=m+n.
    \]
    For $a\in\A_m$ and $m'\in\mathbb N$, define
    \[
        a\owedge m'
        :=
        a\{1,\ldots,m\},
        \qquad
        m'\owedge a
        :=
        a\{m'+1,\ldots,m'+m\},
    \]
    as elements of $\A_{m+m'}$.
\end{definition}

\begin{proposition}\label{lem:car-strict-premonoidal}
    $(\CAR,\owedge,0)$ is a strict premonoidal category.
\end{proposition}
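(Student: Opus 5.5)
The plan is to reduce everything to the functoriality and composition behaviour of the application morphisms $\iota_t$, using \cref{lem:application-morphism}. First, the objects $(\mathbb N,+,0)$ form a strict monoid, so only the morphism-level conditions need checking. For fixed $m'$, the assignment $a\mapsto a\owedge m'=\iota_{t}(a)$, with $t:\interN{1}{m}\hookrightarrow\interN{1}{m+m'}$ the inclusion $i\mapsto i$, is a unital algebra morphism $\A_m\to\A_{m+m'}$. Viewed as a map between one-object categories, it is therefore a functor: it preserves products, hence composition, and sends $1_{\A_m}$ to $1_{\A_{m+m'}}$. In the same way, $m'\owedge(-)=\iota_{s}$ with $s(i)=m'+i$ is a functor. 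Well-definedness of these maps (compatibility with the CAR) is exactly the content of \cref{lem:application-morphism}.

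Next, I will establish the composition law $\iota_{u}\circ\iota_{t}=\iota_{u\circ t}$ for injections $t:\interN{1}{m}\hookrightarrow\interN{1}{m'}$ and $u:\interN{1}{m'}\hookrightarrow\interN{1}{m''}$, together with $\iota_{\mathrm{id}}=\mathrm{id}$. Both sides are unital $*$-algebra morphisms. They agree on the generators, since each sends $f_i$ to $f_{u(t(i))}$. Because $\A_m$ is generated as a unital $*$-algebra by the $f_i$, the uniqueness clause in the definition of $\iota$ gives equality.

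With this in hand, each strictness axiom becomes an identity between composites of injections:
\begin{itemize}
\item $I\owedge f=f=f\owedge I$: both $0\owedge(-)$ and $(-)\owedge 0$ are $\iota_{\mathrm{id}}$.
\item $A\owedge(B\owedge f)=(A\owedge B)\owedge f$: shifting by $B$ and then by $A$ equals shifting by $A+B$.
\item $A\owedge(f\owedge B)=(A\owedge f)\owedge B$: both sides are induced by $i\mapsto A+i$ into $\interN{1}{A+m+B}$.
\item $(f\owedge A)\owedge B=f\owedge(A\owedge B)$: both sides are the plain inclusion into $\interN{1}{m+A+B}$.
\end{itemize}
Each of these is a one-line check on integers, after which the composition law turns it into the corresponding equality of algebra maps.

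The only step requiring care is justifying the uniqueness and generation argument, i.e.\ that a $*$-algebra morphism out of $\A_m$ is determined by its values on the $f_i$. This holds because $\A_m$ is presented as freely generated by the $f_i$ modulo the CAR, and the $f_i^\dagger$ are their images under the involution. I therefore expect no genuine obstacle. No interchange law is claimed here, so the anticommutation of odd operators plays no role in this proof.
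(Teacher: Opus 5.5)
Your proposal is correct and follows essentially the same route as the paper: $k\owedge(-)$ and $(-)\owedge k$ are functors because they are unital algebra morphisms induced by mode embeddings, and the unit and associativity axioms reduce to identities between composites of index embeddings. You go slightly further than the paper by proving explicitly that $\iota_u\circ\iota_t=\iota_{u\circ t}$ (the two sides agree on generators), a fact the paper's proof uses without stating it.
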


The failure of the interchange law
preventing $\owedge$ from being a monoidal cannot be witnessed
unless considering non-even morphisms, as even elements of the CAR algebras commute
with every other element so long as they are supported on disjoint sets of fermionic modes.

\begin{definition}[Commuting morphisms, central morphism, centre]
    Let $(\mathcal C,\owedge)$ be a strict premonoidal category.
    Two morphisms $f:A\to B$ and $g:C\to D$ \emph{commute} when
    \[
        \scalebox{\scalefig}{\input{./figures/premonoidal-cat/commute_1.tikz}}
        \quad = \quad
        \scalebox{\scalefig}{\input{./figures/premonoidal-cat/commute_2.tikz}}
    \]
    and
    \[
        \scalebox{\scalefig}{\input{./figures/premonoidal-cat/commute_3.tikz}}
        \quad = \quad
        \scalebox{\scalefig}{\input{./figures/premonoidal-cat/commute_4.tikz}}.
    \]
    A morphism is \emph{central} when it commutes with every morphism.
    The wide subcategory consisting of central morphisms is called the
    \emph{centre} of $\mathcal C$.
\end{definition}

\begin{proposition}
    \label{prop:center-CAR}
    The centre of $(\CAR,\owedge)$ is $\ECAR$.
\end{proposition}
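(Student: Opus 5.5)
We must show two inclusions: every even operator is central, and every central morphism is even. Concretely, the commuting condition for $a\in\A_m$ and $b\in\A_n$ says
\[
(a\owedge n)(m\owedge b)=(m\owedge b)(a\owedge m)\quad\text{in }\A_{m+n},
\]
together with the symmetric condition where $b$ sits on the top modes and $a$ on the bottom ones. In both cases we compare $a\{1,\dots,m\}$ and $b\{m+1,\dots,m+n\}$, which are supported on disjoint sets of modes.

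\emph{Even implies central.} By linearity of $\iota_t$ and bilinearity of the product, it suffices to treat monomials. For a monomial $u$ in generators of the modes $\{1,\dots,m\}$ and a monomial $v$ in generators of the modes $\{m+1,\dots,m+n\}$, the CAR relations give $xy=-yx$ for any generators $x,y$ of distinct modes, since $\delta_{ij}=0$. Swapping $u$ past $v$ therefore yields
\[
uv=(-1)^{\deg u\,\deg v}vu.
\]
We extend this to homogeneous elements: for $a$ and $b$ homogeneous, $\iota(a)\iota(b)=(-1)^{|a||b|}\iota(b)\iota(a)$. If $a$ is even, the sign is $+1$ for every monomial of $b$, so $a$ commutes with arbitrary $b$ by linearity, in both orderings. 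Hence $\ECAR$ is contained in the centre.

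\emph{Central implies even.} Let $a\in\A_m$ be central and write $a=a_0+a_1$ with $a_0$ even and $a_1$ odd. Test centrality against the mode-flip $F=f_1+f_1^\dagger\in\A_1$ placed below $a$, on mode $m+1$. By the first part, $a_0\{1..m\}$ commutes with $F\{m+1\}$, and by the graded relation $a_1\{1..m\}$ anticommutes with it. Centrality then forces
\[
2\,a_1\{1..m\}\,F\{m+1\}=0.
\]
Since $F$ is unitary, so is $F\{m+1\}$ by \cref{lem:application-morphism}. Multiplying on the right by $F\{m+1\}$ gives $a_1\{1..m\}=0$, and injectivity of $\iota_t$ from \cref{lem:application-morphism} yields $a_1=0$. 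So $a$ is even.

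The main obstacle is the graded commutation in the first part. It must be justified on all of $\A_m$, not only on the normal-ordered basis, which is handled by the decomposition into monomials above. We also need that $\iota_t$ preserves degree parity, which is given by \cref{lem:application-morphism}. A further subtlety is that the parity decomposition $a=a_0+a_1$ is well defined; this follows from the grading discussion. Note also that the centre as defined also requires commuting with morphisms $b$ that are not homogeneous. This is why the first part is argued by linearity over arbitrary $b$, rather than only for homogeneous $b$.
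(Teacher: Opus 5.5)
Your proof is correct and follows the paper's argument essentially step for step. Even operators are central because of the graded interchange, which you reprove on monomials where the paper simply cites \cref{lem:graded-interchange-CAR}, extended to arbitrary $b$ by linearity; for the converse you test against the mode-flip $F$ on mode $m+1$, using invertibility of $F\{m+1\}$ and injectivity of the application morphism. The only differences are that you conclude $a_1=0$ directly where the paper argues by contradiction, and you have a harmless typo, $(a\owedge m)$ for $(a\owedge n)$, in your statement of the commuting condition.
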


In the homogeneous subcategory of \CAR, the obstruction to monoidality is precisely the Fermionic sign.

\begin{lemma}[Graded interchange]
    \label{lem:graded-interchange-CAR}
    Let $a\in\A_m$ and $a'\in\A_n$ be homogeneous, with parities
    $|a|,|a'|\in\mathbb Z_2$. Then
    \[
        (a\owedge n)(m\owedge a')
        =
        (-1)^{|a||a'|}
        (m\owedge a')(a\owedge n).
    \]
\end{lemma}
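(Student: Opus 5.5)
Both sides of the equation are elements of $\A_{m+n}$. By definition, $a\owedge n=\iota_{t}(a)$ with $t=(1,\ldots,m)$, and $m\owedge a'=\iota_{t'}(a')$ with $t'=(m+1,\ldots,m+n)$. Since $\iota_t$ and $\iota_{t'}$ are algebra morphisms that preserve parity (\cref{lem:application-morphism}), the proof reduces to a statement about the images. The key claim I would prove is the following: if $b\in\A_{m+n}$ is a homogeneous polynomial in the generators $f_i,f_i^\dagger$ with $i\le m$, and $b'$ is a homogeneous polynomial in the generators with indices $>m$, then $bb'=(-1)^{|b||b'|}b'b$.

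To prove the claim, I would use the normal-form decomposition of elements of $\A_m$ recalled in \cref{subsec:2-2}. It lets us write $a$ as a linear combination of normally ordered monomials $f_{i_1}^\dagger\cdots f_{i_k}^\dagger f_{j_1}\cdots f_{j_l}$. Since $a$ is homogeneous, every monomial appearing with nonzero coefficient has degree $k+l\equiv|a|\pmod 2$; the same holds for $a'$. Applying $\iota_t$ and $\iota_{t'}$ preserves this monomial structure. Because the product is bilinear, it then suffices to treat a single monomial $u$ of degree $d$ in modes $\{1,\ldots,m\}$ and a single monomial $v$ of degree $d'$ in modes $\{m+1,\ldots,m+n\}$.

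For distinct indices $i\neq j$, the CAR give exactly anticommutation for every pair chosen from $\{f_i,f_i^\dagger\}\times\{f_j,f_j^\dagger\}$. Indeed, $f_if_j=-f_jf_i$ and $f_i^\dagger f_j^\dagger=-f_j^\dagger f_i^\dagger$ hold directly, and $f_i^\dagger f_j=-f_jf_i^\dagger$ holds because $\delta_{ij}=0$. The remaining case $f_if_j^\dagger=-f_j^\dagger f_i$ follows by applying the involution to the relation $f_j^\dagger f_i+f_if_j^\dagger=0$. Moving $v$ past $u$ one generator at a time therefore requires $dd'$ transpositions of generators with disjoint indices, so $uv=(-1)^{dd'}vu$. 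Since $d\equiv|a|$ and $d'\equiv|a'|$ modulo $2$ for every monomial, the sign is uniform, and summing over monomials gives the lemma.

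The only delicate point is this uniformity of the sign across monomials, which is exactly where homogeneity is used. Without it, even and odd parts would pick up different signs and no single scalar factor would exist. A secondary point is to make sure the relabelled generators really have disjoint index sets, which holds because the images of $t$ and $t'$ are $\{1,\ldots,m\}$ and $\{m+1,\ldots,m+n\}$. No further obstacle is expected.
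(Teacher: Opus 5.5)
Your proof is correct and takes the same route as the paper's. Both arguments reduce by bilinearity to monomials supported on the disjoint blocks $\{1,\ldots,m\}$ and $\{m+1,\ldots,m+n\}$, pick up a factor $-1$ from the CAR for each of the $dd'$ generator exchanges, and use homogeneity to make that sign uniform. (Minor point: $f_if_j^\dagger=-f_j^\dagger f_i$ is literally the third CAR with $i$ and $j$ swapped, so you do not need the involution.)
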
The graded interchange law above is the usual super-interchange law
associated with the canonical $\mathbb Z_2$-grading of the CAR algebra.
In fact, our premonoidal structure satisfies the same signed interchange law as
homogeneous morphisms in a monoidal supercategory in the sense of
Brundan and Ellis~\cite{BrundanEllis2017}.

We nevertheless retain the premonoidal formulation for two reasons.
First, it arises directly from the notion of gate application introduced
in Section~\ref{sec:2}: an operation on a fermionic register can be
embedded into a larger register by adding idle modes either before or
after the modes on which it acts. Premonoidality therefore captures directly the
compositional structure induced by the CAR-algebraic definition of gate
application, without building the fermionic sign into a primitive
notion of parallel composition.

More importantly, premonoidality has a direct
operational interpretation. In a premonoidal category, two
operations acting on separate components cannot be exchanged, the sequential order 
in which they are applied matter. This was one of the
motivations for introducing premonoidal categories in the semantics of
programming languages, notably for computations with side effects and
in connection with models of concurrency~\cite{PowerRobinson1997}.
String-diagrammatic formalisms for premonoidal categories have also
been studied in this computational setting~\cite{RomanSobocinski2025}.

The same operational distinction appears concretely for fermionic
circuits. Two odd operations acting on disjoint sets of modes cannot be
exchanged freely: reversing their order introduces the sign of
\cref{lem:graded-interchange-CAR}. Even operations, on the other hand,
are central and may be exchanged with arbitrary operations. The
premonoidal viewpoint therefore makes explicit, at the level of the
circuit semantics, which parts of the computation seem to involve
``non-local'' effects on the system wavefunction.

\paragraph{Fermionic permutations.}

The Fermionic swaps introduced in \cref{subsec:2-3} provide a symmetry for
this premonoidal structure. Since we wish to distinguish the algebraic
permutation equations from naturality, we introduce the following
terminology.

\begin{definition}[Presymmetric premonoidal category]
    A strict premonoidal category $(\mathcal C,\owedge)$ is
    \emph{presymmetric}~ when it is equipped with central isomorphisms
    \[
        \pi_{A,B}:A\owedge B\longrightarrow B\owedge A,
    \]
    denoted
    $\scalebox{\scalefig}{\input{./figures/premonoidal-cat/presym.tikz}}$,
    satisfying the usual permutation coherence equations. In the
    one-generated strict setting considered here, these are generated by
    the involution and Yang--Baxter equations
    \[
        \scalebox{\scalefig}{\input{./figures/premonoidal-cat/invol_1.tikz}}
        \quad = \quad
        \scalebox{\scalefig}{\input{./figures/premonoidal-cat/invol_2.tikz}}, \quad \text{and}
    \]
    \[
        \scalebox{\scalefig}{\input{./figures/premonoidal-cat/yb_1.tikz}}
        \quad = \quad
        \scalebox{\scalefig}{\input{./figures/premonoidal-cat/yb_2.tikz}}.
    \]
\end{definition}

We call the non-natural symmetry involved in the above definition \emph{presymmetry}. A presymmetry
is a symmetry when it defines a natural transformation. In the following we call \emph{natural} with respect 
to a presymmetry the morphisms that are not obstacles to the naturality of the presymmetry. 

\begin{definition}[Natural morphism and Nature]
    Let $(\mathcal C,\owedge,\pi)$ be presymmetric. A morphism
    $f:A\to B$ is \emph{natural} with respect to $\pi$ when, for every
    object $C$,
    \[
        \scalebox{\scalefig}{\input{./figures/premonoidal-cat/nat_1.tikz}}
        \eq 
        \scalebox{\scalefig}{\input{./figures/premonoidal-cat/nat_2.tikz}}
        \quad \text{and} \quad
        \scalebox{\scalefig}{\input{./figures/premonoidal-cat/nat_3.tikz}}
        \eq
        \scalebox{\scalefig}{\input{./figures/premonoidal-cat/nat_4.tikz}}.
    \]
    The wide subcategory of natural morphisms is called the
    \emph{Nature} of the presymmetry. We call a presymmetric
    premonoidal category \emph{symmetric} when every morphism belongs to
    its Nature.
\end{definition}

For $m,n\in\mathbb N$, let
$\tau_{m,n}\in\mathfrak S_{m+n}$ be the block permutation exchanging the
first $m$ modes with the following $n$ modes while preserving the order
inside each block, and define
\[
    \pi_{m,n}:=\pi_{\tau_{m,n}}\in\A_{m+n}.
\]

\begin{proposition}
    \label{prop:CAR-symmetric-premonoidal}
    The family $(\pi_{m,n})$ makes
    $(\CAR,\owedge,\pi)$ into a strict symmetric premonoidal category.
\end{proposition}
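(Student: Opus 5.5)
We will rely on \cref{lem:car-strict-premonoidal}, so $(\CAR,\owedge,0)$ is already a strict premonoidal category. Three things remain to be checked for the family $\pi_{m,n}$:
\begin{itemize}
\item each $\pi_{m,n}$ is a central isomorphism;
\item the family satisfies the permutation coherence equations;
\item every morphism of $\CAR$ is natural with respect to it.
\end{itemize}

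\textbf{Isomorphism and centrality.} By \cref{lem:representation-sym-group}, $\sigma\mapsto\pi_\sigma$ is a group homomorphism into $\A_{m+n}^\times$. Hence $\pi_{m,n}$ is invertible, with inverse $\pi_{\tau_{m,n}^{-1}}=\pi_{\tau_{n,m}}=\pi_{n,m}$. Each $\pi_r$ is an application of the even operator $\pi$ (\cref{lem:fermionic-swap-generators}), and application preserves parity by \cref{lem:application-morphism}. So $\pi_{m,n}$ is a product of even elements, hence even. Centrality then follows from \cref{prop:center-CAR}.

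\textbf{Coherence.} Write $\tau_{m,n}$ and the whiskered permutations $k\owedge\tau$, $\tau\owedge k$ as elements of $\mathfrak S$. The key observation is compatibility of whiskering with the representation. Since $\iota_t$ sends $\pi\{i,j\}$ to $\pi\{t(i),t(j)\}$, it sends $\pi_r$ to $\pi\{r+k,r+k+1\}$ under the shift $t$ defining $k\owedge(-)$. This gives
\[
k\owedge \pi_\sigma=\pi_{\mathrm{id}_k\oplus\sigma},
\qquad
\pi_\sigma\owedge k=\pi_{\sigma\oplus \mathrm{id}_k}.
\]
Every coherence equation is an equality between composites of whiskered $\pi_{m,n}$'s. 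The corresponding underlying permutations are equal in $\mathfrak S_{m+n+\cdots}$, because the block-permutation identities (involution, Yang--Baxter, hexagon-type decompositions $\tau_{m,n+p}=(n\owedge\tau_{m,p})(\tau_{m,n}\owedge p)$) hold there. Applying the homomorphism transports these identities to $\CAR$.

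\textbf{Naturality.} This is the main step. Let $a\in\A_m$ and $p\in\mathbb N$. We must show
\[
\pi_{m,p}\,(a\owedge p)\,\pi_{m,p}^{-1}=p\owedge a,
\]
together with its mirror. The left-hand side is $\tau_{m,p}\star(a\{1,\dots,m\})$. By \cref{lem:G-equivariant} this equals $a\{\tau_{m,p}\circ t\}$ with $t=(1,\dots,m)$. Since $\tau_{m,p}$ sends $i\le m$ to $p+i$, this is $a\{p+1,\dots,p+m\}=p\owedge a$. The mirror equation, $\pi_{p,m}(p\owedge a)\pi_{p,m}^{-1}=a\owedge p$, is proved identically.

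The only delicate points are, first, fixing the convention that $\tau_{m,n}$ moves the first block onto the last positions, consistent with the premonoidal definition; and second, the whiskering identity for $\iota_t(\pi_\sigma)$. For the latter, both sides are elements implementing the same conjugation on all generators, so they agree up to a scalar. The scalar is then fixed by comparing their action on the vacuum, which both elements fix.

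Notably, naturality holds for odd $a$ as well. This is why the structure is symmetric, even though only even morphisms are central.
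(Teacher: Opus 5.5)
Your proposal is correct and takes essentially the same route as the paper. Evenness of $\pi_{m,n}$ gives centrality via \cref{prop:center-CAR}, the coherence equations are transported from block-permutation identities through the homomorphism $\sigma\mapsto\pi_\sigma$, and naturality is exactly \cref{lem:G-equivariant} applied to $\tau_{m,p}$. You also make explicit the whiskering identity $k\owedge\pi_\sigma=\pi_{\mathrm{id}_k\oplus\sigma}$, which the paper uses tacitly; it follows directly from $\iota_t\circ\iota_s=\iota_{t\circ s}$ (both sides are unital $*$-morphisms agreeing on generators), so your scalar-and-vacuum argument is not needed.
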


The symmetric premonoidal category $(\CAR,\owedge,\pi)$ therefore captures
the endomorphism semantics of Fermionic circuits. It remains disconnected,
however: it contains no morphisms between registers of different sizes and
cannot directly express state preparations, effects, ancillae or
postselection. We now enlarge the semantics while preserving the structure
identified above, recovering a semantics in the category of Hilbert spaces 
and linear maps through the Jordan-Wigner functor.

\subsection{Categorical aspects of the Hilbert space semantics}
\label{subsec:3-3}

Let
\[
    \Pi_{m,n}
    :=
    JW_{m+n}(\pi_{m,n}).
\]

\begin{proposition}[Jordan--Wigner image of Fermionic permutations]
    \label{prop:graded-swap}
    For $x\in\{0,1\}^m$ and $y\in\{0,1\}^n$,
    \[
        \Pi_{m,n}
        \bigl(\ket{x}\otimes\ket{y}\bigr)
        =
        (-1)^{|x||y|}
        \ket{y}\otimes\ket{x}.
    \]
    In particular,
    \[
        \Pi_{1,1}
        =
        \ketbra{00}{00}
        +\ketbra{01}{10}
        +\ketbra{10}{01}
        -\ketbra{11}{11}.
    \]
\end{proposition}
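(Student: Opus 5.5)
The plan is to compute the Jordan--Wigner image of a single fermionic swap $\pi\{r,r+1\}$ on basis states, then assemble $\pi_{m,n}$ from adjacent transpositions using \cref{lem:representation-sym-group}, and finally track the signs.

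\textbf{Step 1: one adjacent swap.} For $m'=m+n$ and $r\in\interN{1}{m'-1}$, we show that
\[
JW_{m'}(\pi_r)\ket{u\,ab\,v}=(-1)^{ab}\ket{u\,ba\,v}.
\]
To do this, expand $\pi_r = 1-n_r-n_{r+1}+f_{r+1}^\dagger f_r+f_r^\dagger f_{r+1}$ with the definition of $JW_{m'}$. The number operators $n_k=f_k^\dagger f_k$ act diagonally as $b_k$, since the two Jordan--Wigner strings cancel. For the hopping terms, the string signs $(-1)^{|u|}$ produced by $f_r$ and by $f_{r+1}^\dagger$ agree, because position $r$ is empty when $f_{r+1}^\dagger$ acts. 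Hence $f_{r+1}^\dagger f_r\ket{u10v}=\ket{u01v}$, and symmetrically for the other term. Checking the four cases $ab\in\{00,01,10,11\}$ gives $1$, a swap, a swap, and $1-1-1=-1$. The case $m=n=1$ is exactly the displayed formula for $\Pi_{1,1}$, which also matches the earlier matrix for $JW_2(\pi)$.

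\textbf{Step 2: decompose the block permutation.} Write $\tau_{m,n}$ as a product of $mn$ adjacent transpositions. We move the last letter of $x$ past all of $y$, then the next-to-last, and so on. Since $\sigma\mapsto\pi_\sigma$ is a homomorphism (\cref{lem:representation-sym-group}) and $JW_{m+n}$ is an algebra morphism, $\Pi_{m,n}$ is the corresponding product of the operators $JW(\pi_r)$. By Step 1, each factor swaps two adjacent letters and contributes the sign $(-1)^{ab}$, where $a$ is a letter of $x$ and $b$ a letter of $y$. Every pair (letter of $x$, letter of $y$) is swapped exactly once. The total sign is therefore $(-1)^{\sum_{i,j}x_iy_j}=(-1)^{|x||y|}$, and the resulting word is $yx$.

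\textbf{Main obstacle.} The main difficulty is the sign bookkeeping in Step 1, namely checking that the two Jordan--Wigner strings in the hopping terms cancel. Two further points need care. First, one must choose a reduced word for $\tau_{m,n}$ so that $\pi_{\tau_{m,n}}$ is computed unambiguously; the homomorphism guarantees independence of the choice. Second, the order of applying operators must match the convention for composing permutations. This order does not affect the sign count, since each cross pair is exchanged once either way.
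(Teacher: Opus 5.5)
Your proof is correct and follows the paper's argument. Both decompose $\tau_{m,n}$ into $mn$ adjacent fermionic swaps, observe that each swap contributes $-1$ exactly when both exchanged modes are occupied, and count the cross pairs to obtain $(-1)^{|x||y|}$. Your Step 1 adds one thing the paper leaves implicit: you check that the Jordan--Wigner strings cancel in the hopping terms, which establishes the one-swap sign rule that the paper simply asserts.
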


Thus $\Pi$ is the graded swap associated with the parity
grading of the occupation basis.

\begin{definition}[$\LFM$ and $\ELFM$]
    The category $\LFM$ has objects $\mathbb N$ and homogeneous linear maps
    \[
        \LFM[m,n]
        \subseteq
        \mathcal L(\C^{2^m},\C^{2^n})
    \]
    as morphisms. Its wide subcategory of even linear maps is denoted
    $\ELFM$.
\end{definition}

By the results of \cref{subsec:2-2}, the Jordan--Wigner representation maps
homogeneous CAR operators to homogeneous linear maps and even CAR operators
to even linear maps. Thus it restricts to functors
\[
    JW:\HCAR\longrightarrow\LFM,
    \qquad
    JW:\ECAR\longrightarrow\ELFM.
\]
For each $m$, these restrictions are isomorphisms on the corresponding
endomorphism sets.

Unlike $\HCAR$, however, $\LFM$ also contains rectangular processes. In
particular, morphisms $0\to m$ are homogeneous state vectors and morphisms
$m\to0$ are homogeneous effects. It can therefore describe ancilla
preparation and postselection.

\paragraph{The monoidal/symmetric trade-off.}

The ordinary tensor product restricts to a monoidal product on $\LFM$.
However, the Fermionic permutation is not natural with respect to every
homogeneous linear map.

\begin{proposition}
    \label{prop:nature-LFM}
    The family~ $(\Pi_{m,n})$ defines a presymmetry on
    $(\Qubit,\otimes)$. Its Nature is exactly $\ELFM$.
\end{proposition}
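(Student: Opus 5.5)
The proof splits into two independent parts: first that $(\Pi_{m,n})$ is a presymmetry on $(\Qubit,\otimes)$, then that its Nature is exactly $\ELFM$.

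\textbf{Presymmetry.} Since $(\Qubit,\otimes)$ is a strict monoidal category, every morphism is central for the premonoidal structure. Centrality of the $\Pi_{m,n}$ is therefore automatic. Each $\Pi_{m,n}$ is the image under $JW_{m+n}$ of the unitary $\pi_{m,n}$, so it is invertible. The coherence equations (involution, Yang--Baxter, and more generally block compatibility $\Pi_{m,n+p}=(1_n\otimes\Pi_{m,p})(\Pi_{m,n}\otimes 1_p)$) can be checked directly on basis vectors with \cref{prop:graded-swap}. On $\ket{x}\otimes\ket{y}\otimes\ket{z}$, both sides of block compatibility permute the words in the same way. The signs agree because $|x|(|y|+|z|)=|x||y|+|x||z|$. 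Involution holds since $(-1)^{2|x||y|}=1$. Alternatively, these identities can be transported from \cref{lem:representation-sym-group} and \cref{prop:CAR-symmetric-premonoidal} through the faithful functor $JW$. To do so, one first checks that $JW_{m+n}(a\owedge n)=JW_m(a)\otimes 1$ for even $a$, which follows from the explicit $JW$ formulas since the prefix sign only involves earlier modes. I expect the direct basis computation to be cleaner.

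\textbf{Even maps are natural.} Take $f=\sum f_{xy}\ketbra{y}{x}:m\to m'$ and an object $p$. Apply both sides of the naturality square $\Pi_{m',p}(f\otimes 1_p)=(1_p\otimes f)\Pi_{m,p}$ to $\ket{x}\otimes\ket{z}$.
\begin{itemize}
\item The left side gives $\sum_y f_{xy}(-1)^{|y||z|}\ket{z}\ket{y}$.
\item The right side gives $\sum_y f_{xy}(-1)^{|x||z|}\ket{z}\ket{y}$.
\end{itemize}
These agree whenever $f_{xy}\neq 0\Rightarrow |x|\equiv|y| \pmod 2$, that is, for even $f$. The mirror square $\Pi_{p,m'}(1_p\otimes f)=(f\otimes 1_p)\Pi_{p,m}$ is handled identically. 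So $\ELFM$ is contained in the Nature.

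\textbf{Natural maps are even.} Conversely, suppose $f$ is natural. Take $p=1$ and $z=1$. The computation above forces $f_{xy}\bigl((-1)^{|y|}-(-1)^{|x|}\bigr)=0$ for all $x,y$. Hence $f_{xy}\neq0$ implies $|x|+|y|$ even, so $f$ is even. This argument needs no homogeneity assumption, so it also shows that odd and non-homogeneous maps fail naturality. Consequently the Nature, taken inside $\Qubit$, is exactly $\ELFM$.

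The main difficulty is bookkeeping rather than substance. One must fix precisely which equations count as "usual permutation coherence" for non-one-generated objects $m,n$, and verify them. The block-compatibility identity above is what ties the $\Pi_{m,n}$ to generation by $\Pi_{1,1}$, so I will verify it explicitly.
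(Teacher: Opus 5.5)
Your proposal is correct and follows essentially the same route as the paper: it reads the coherence equations off the graded-swap formula of \cref{prop:graded-swap} on basis vectors, shows even maps are natural by the computation on $\ket{x}\otimes\ket{z}$, and gets the converse by taking $p=1$, $z=1$ and using linear independence of the computational basis. One cosmetic nit: the zero map is odd yet natural, so ``odd maps fail naturality'' should read ``non-zero odd maps fail naturality'', though this does not affect your conclusion that the Nature is exactly $\ELFM$.
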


Thus there is a trade-off: the ordinary tensor product is genuinely
monoidal, but the Fermionic symmetry is natural only for even processes.
The same data may instead be organized so as to make the Fermionic symmetry
natural at the price of weakening monoidality.

\begin{lemma}[Monoidal vs.\ symmetric trade-off]
    \label{lem:mon-sym-tradeoff}
    Let $(\mathcal C,\otimes,I,\pi)$ be a strict monoidal category equipped
    with a presymmetry. Define a product $\owedge$ on objects by
    \[
        A\owedge B:=A\otimes B,
    \]
    and define the functors $A\owedge (-)$ and $(-) \owedge A$ diagrammatically by
    \[
        \scalebox{\scalefig}{\input{./figures/premonoidal-cat/mon-sym-tradeoff_1.tikz}}
         := 
        \scalebox{\scalefig}{\input{./figures/premonoidal-cat/mon-sym-tradeoff_2.tikz}}
        \quad \text{and} \quad
        \scalebox{\scalefig}{\input{./figures/premonoidal-cat/mon-sym-tradeoff_3.tikz}}
         := 
        \scalebox{\scalefig}{\input{./figures/premonoidal-cat/mon-sym-tradeoff_4.tikz}}.
    \]
    Where the white background depicts the $\otimes$ structure while the gray background depicts 
    the $\owedge$ structure.\\
    Equivalently, for $f:B\to C$ and an object $A$,
    \[
        f\owedge A
        :=
        f\otimes A,
    \]
    and
    \[
        A\owedge f
        :=
        \pi_{C,A}
        \circ
        (f\otimes id_A)
        \circ
        \pi_{A,B}.
    \]
    Then $(\mathcal C,\owedge,\pi)$ is a strict symmetric premonoidal
    category. Moreover, the Nature of $(\mathcal C,\otimes,\pi)$ is
    contained in the centre of $(\mathcal C,\owedge)$.
\end{lemma}

Applying \cref{lem:mon-sym-tradeoff} to
$(\Qubit,\otimes,\Pi)$ gives a premonoidal product, also denoted
$\owedge$. Its action on homogeneous maps admits a particularly simple
description. Define the parity operator
\[
    \mathsf P_k
    :=
    \sum_{z\in\{0,1\}^k}
    (-1)^{|z|}\ketbra{z}{z}.
\]

\begin{lemma}
    \label{lem:LFM-whiskering}
    Let $f:m\to n$ be homogeneous of parity
    $|f|\in\mathbb Z_2$. Then
    \[
        f\owedge k
        =
        f\otimes id_k,
        \qquad
        k\owedge f
        =
        \mathsf P_k^{\,|f|}\otimes f.
    \]
\end{lemma}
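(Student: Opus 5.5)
The first equation, $f\owedge k=f\otimes id_k$, holds by definition: \cref{lem:mon-sym-tradeoff} sets $f\owedge A:=f\otimes A$. All the work is in the second. There I unfold $k\owedge f=\Pi_{n,k}\circ(f\otimes id_k)\circ\Pi_{k,m}$ and evaluate it on computational basis states using \cref{prop:graded-swap}. Since both sides are linear, it suffices to check them on the basis vectors $\ket{z}\otimes\ket{x}$ with $z\in\{0,1\}^k$ and $x\in\{0,1\}^m$.

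Write $f\ket{x}=\sum_{y} f_{xy}\ket{y}$, where the sum runs over $y\in\{0,1\}^n$. Homogeneity of $f$ gives $f_{xy}\neq 0\Rightarrow |x|+|y|\equiv |f| \pmod 2$. By \cref{prop:graded-swap}, $\Pi_{k,m}(\ket{z}\otimes\ket{x})=(-1)^{|z||x|}\ket{x}\otimes\ket{z}$. Applying $f\otimes id_k$ then gives $(-1)^{|z||x|}\sum_y f_{xy}\ket{y}\otimes\ket{z}$. Applying $\Pi_{n,k}$ turns each term into $(-1)^{|y||z|}\ket{z}\otimes\ket{y}$. The total sign on the $y$-term is therefore $(-1)^{|z|(|x|+|y|)}$.

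For every $y$ with $f_{xy}\neq0$, homogeneity gives $(-1)^{|z|(|x|+|y|)}=(-1)^{|z||f|}$. This sign is independent of $y$, so the result equals $(-1)^{|z||f|}\ket{z}\otimes f\ket{x}$. By definition of $\mathsf P_k$, this is exactly $(\mathsf P_k^{|f|}\otimes f)(\ket{z}\otimes\ket{x})$. This proves the second equation.

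The only delicate point is being careful about the indices of the two presymmetries: the one on the input side is $\pi_{A,B}$ with $A=k$, $B=m$, and the one on the output side is $\pi_{C,A}$ with $C=n$. Along the way one checks that $f\owedge k$ and $k\owedge f$ are again homogeneous of parity $|f|$, as they should be. The case where $f$ is zero is trivial, and in that case either parity convention gives the same answer.
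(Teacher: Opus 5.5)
Your proof is correct and follows the paper's argument essentially verbatim. You unfold $k\owedge f=\Pi_{n,k}(f\otimes id_k)\Pi_{k,m}$ from \cref{lem:mon-sym-tradeoff}, evaluate on basis states $\ket{zx}$ using \cref{prop:graded-swap}, and use homogeneity to collapse the accumulated sign $(-1)^{|z|(|x|+|y|)}$ to $(-1)^{|z||f|}$.
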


In particular, $\owedge$ restricts to both $\LFM$ and $\ELFM$. The graphical
interchange equations introduced above now acquire their characteristic
Fermionic sign.

\begin{proposition}[Graded interchange in $\LFM$]
    \label{prop:graded-interchange-LFM}
    Let $f:m\to m'$ and $g:n\to n'$ be homogeneous. Then
    \[
        (f\owedge n')\circ(m\owedge g)
        =
        (-1)^{|f||g|}
        (m'\owedge g)\circ(f\owedge n).
    \]
    Consequently, the centre of $(\LFM,\owedge)$ is $\ELFM$.
\end{proposition}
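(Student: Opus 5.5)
We use \cref{lem:LFM-whiskering} to turn both sides of the identity into ordinary tensor products. Set $|f|=p$ and $|g|=q$. The whiskering formulas give
\[
    f\owedge n' = f\otimes id_{n'},
    \qquad
    m\owedge g = \mathsf P_m^{\,q}\otimes g,
\]
\[
    m'\owedge g = \mathsf P_{m'}^{\,q}\otimes g,
    \qquad
    f\owedge n = f\otimes id_n .
\]
Bifunctoriality of $\otimes$ in $\Qubit$ then gives
\[
    (f\owedge n')\circ(m\owedge g)=(f\,\mathsf P_m^{\,q})\otimes g,
    \qquad
    (m'\owedge g)\circ(f\owedge n)=(\mathsf P_{m'}^{\,q}f)\otimes g .
\]
So the statement reduces to the single-register identity
\[
    f\,\mathsf P_m = (-1)^{p}\,\mathsf P_{m'}\,f
\]
for homogeneous $f:m\to m'$ of parity $p$. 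This is the only real computation. Write $f=\sum f_{xy}\ketbra{y}{x}$. Each nonzero term satisfies $|x|+|y|\equiv p$, so $(-1)^{|x|}=(-1)^{p}(-1)^{|y|}$. Right multiplication by $\mathsf P_m$ multiplies each term by $(-1)^{|x|}$, and left multiplication by $\mathsf P_{m'}$ multiplies it by $(-1)^{|y|}$. Raising to the power $q$ gives the sign $(-1)^{pq}$, which proves the interchange formula. No axioms of the CAR algebra are needed; this mirrors \cref{lem:graded-interchange-CAR} and follows from it on square maps through $JW$.

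For the centre, we need both inclusions, and we also need the interchange in the other ordering from the definition of commuting morphisms. That ordering is the same computation with the roles of $f$ and $g$ exchanged. Alternatively, it follows by conjugating with the presymmetry using \cref{lem:mon-sym-tradeoff}, though the direct computation is simpler.

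\emph{Even maps are central.} If $p=0$, the sign is $+1$ for every homogeneous $g$, so even maps are central. One could instead use \cref{prop:nature-LFM} together with \cref{lem:mon-sym-tradeoff}, but that only gives centrality relative to $\Qubit$. Since $\LFM$ consists only of homogeneous maps, the direct computation is cleaner.

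\emph{Odd maps are not central.} If $f\neq 0$ is odd, take $g$ to be the odd map $X:1\to 1$. The interchange then gives $(f\owedge 1)\circ(m\owedge X)=-(m'\owedge X)\circ(f\owedge 1)$. Both sides are nonzero: they equal $\pm (f\,\mathsf P_m)\otimes X$, and $f\,\mathsf P_m\neq 0$ because $\mathsf P_m$ is invertible. A nonzero map cannot equal its own negative, so $f$ is not central.

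The zero map counts as both even and odd, but it is trivially central and lies in $\ELFM$, so it causes no problem. The main point to be careful about is the bookkeeping of which register size, $m$ or $m'$, the parity operator acts on in each whiskering. That bookkeeping is exactly where the sign $(-1)^{pq}$ comes from.
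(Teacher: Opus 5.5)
Your proof is correct and follows essentially the same route as the paper: you rewrite both composites via \cref{lem:LFM-whiskering} as $(f\,\mathsf P_m^{\,|g|})\otimes g$ and $(\mathsf P_{m'}^{\,|g|}f)\otimes g$, apply $\mathsf P_{m'}f=(-1)^{|f|}f\,\mathsf P_m$, and use the mode flip to show odd maps are not central. You go slightly beyond the paper in spelling out the second interchange ordering and in justifying that the composites are non-zero; one aside is off, though, since centrality in $(\Qubit,\owedge)$ already implies centrality in the subcategory $(\LFM,\owedge)$, so that alternative would have worked, but this does not affect your argument.
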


We can now summarize the relation between the algebraic and Hilbert-space
semantics.

\begin{theorem}[Premonoidal Jordan--Wigner semantics]
    \label{thm:JW-premonoidal}
    The category $(\Qubit,\owedge,\Pi)$ is a strict symmetric
    premonoidal category, and its subcategories
    $(\LFM,\owedge,\Pi)$ and $(\ELFM,\owedge,\Pi)$ inherit this structure.
    Moreover,
    \[
        Z(\LFM,\owedge)=\ELFM,
    \]
    and the Jordan--Wigner functor
    \[
        JW:(\CAR,\owedge,\pi)
        \longrightarrow
        (\Qubit,\owedge,\Pi)
    \]
    is a strict symmetric premonoidal functor. It restricts to
    \[
        JW:\HCAR\longrightarrow\LFM,
        \qquad
        JW:\ECAR\longrightarrow\ELFM.
    \]
\end{theorem}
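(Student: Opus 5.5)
The plan is to assemble the theorem almost entirely from the results already established, so that the only genuinely new computation is the compatibility of $JW$ with the whiskering functors. First, \cref{prop:nature-LFM} states that $(\Pi_{m,n})$ is a presymmetry on the strict monoidal category $(\Qubit,\otimes)$. Applying \cref{lem:mon-sym-tradeoff} to this presymmetry shows directly that $(\Qubit,\owedge,\Pi)$ is a strict symmetric premonoidal category.

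For the subcategories, \cref{lem:LFM-whiskering} gives $f\owedge k=f\otimes id_k$ and $k\owedge f=\mathsf P_k^{|f|}\otimes f$. Both are homogeneous of the same parity as $f$, since $\mathsf P_k$ and identities are even and tensor products add parities. Moreover $\Pi_{m,n}$ is even by \cref{prop:graded-swap}. Hence $\LFM$ and $\ELFM$ are closed under whiskering and contain the presymmetry. The presymmetry axioms and naturality are equations that already hold in $\Qubit$, so both subcategories inherit the symmetric premonoidal structure. The equality $Z(\LFM,\owedge)=\ELFM$ is exactly the second part of \cref{prop:graded-interchange-LFM}.

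For the functor, $JW$ is the identity on objects, so it preserves $\owedge$ on objects and the unit $0$. It also sends $\pi_{m,n}$ to $\Pi_{m,n}$ by the definition of $\Pi$. What remains is to check $JW(a\owedge n)=JW(a)\owedge n$ and $JW(m\owedge a)=m\owedge JW(a)$ for $a\in\A_k$. By linearity and multiplicativity of $JW$ and of $\iota_t$, it suffices to check these identities on the generators $f_i$ and $f_i^\dagger$, then extend to monomials. The following computations come from the Jordan--Wigner definition.
\begin{itemize}
\item \emph{Right whiskering.} $JW_{k+n}(f_i)=JW_k(f_i)\otimes id_n$, since the sign only counts occupied modes to the left of $i$. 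So $JW(a\owedge n)=JW(a)\otimes id_n=JW(a)\owedge n$.
\item \emph{Left whiskering.} $JW_{m+k}(f_{m+i})=\mathsf P_m\otimes JW_k(f_i)$, since the extra sign is the parity of the first $m$ modes.
\end{itemize}

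I expect the left-whiskering case to be the main obstacle, because multiplicativity has to be tracked carefully. For a monomial of degree $d$ one obtains $\mathsf P_m^{d}\otimes JW_k(\text{monomial})$, using $\mathsf P_m^2=1$ and the tensor interchange law. Consequently, for a homogeneous $a$ this gives $\mathsf P_m^{|a|}\otimes JW(a)=m\owedge JW(a)$ by \cref{lem:LFM-whiskering}.

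For an inhomogeneous $a$, split $a=a_0+a_1$ into even and odd parts and handle each part separately. It then has to be checked that $m\owedge(-)$ on $\Qubit$, defined by conjugation with $\Pi$ in \cref{lem:mon-sym-tradeoff}, is linear. It is, so it agrees with $\mathsf P_m^0\otimes JW(a_0)+\mathsf P_m\otimes JW(a_1)$. That formula should be verified directly from \cref{prop:graded-swap}, extending \cref{lem:LFM-whiskering} linearly.

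Finally, the restrictions $JW:\HCAR\to\LFM$ and $JW:\ECAR\to\ELFM$ follow because $JW$ maps even and odd operators to even and odd maps, as noted in \cref{subsec:2-2}.
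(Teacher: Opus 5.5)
Your proposal is correct and follows the paper for most steps. The structure on \textbf{Qubit} comes from \cref{lem:mon-sym-tradeoff} applied to the presymmetry of \cref{prop:nature-LFM}, the centre from \cref{prop:graded-interchange-LFM}, right whiskering from the fact that the Jordan--Wigner sign only counts modes to the left, and $JW(\pi_{m,n})=\Pi_{m,n}$ holds by definition. Your parity check that \textbf{LFM} and \textbf{EvenLFM} are closed under whiskering and contain $\Pi$ only makes explicit what the paper asserts after \cref{lem:LFM-whiskering}.

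The genuinely different step is left whiskering. For $a\in\mathcal A_k$ the paper never computes $JW(f_{m+i})$. It uses \cref{lem:G-equivariant} to write $m\owedge a=\pi_{k,m}(a\owedge m)\pi_{m,k}$ in the CAR algebra, then applies $JW$. Using $JW(\pi)=\Pi$ and the right-whiskering case, it lands exactly on the definition $m\owedge JW(a)=\Pi_{k,m}(JW(a)\otimes id_m)\Pi_{m,k}$ from \cref{lem:mon-sym-tradeoff}. That argument is uniform in $a$ and needs no parity splitting. It also shows that left-whiskering compatibility follows formally from right-whiskering compatibility plus preservation of the symmetry, since on both sides left whiskering is conjugated right whiskering.

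Your route instead computes the Jordan--Wigner string, $JW_{m+k}(f_{m+i})=\mathsf P_m\otimes JW_k(f_i)$. This gives the explicit formula $JW(m\owedge a)=\mathsf P_m^{|a|}\otimes JW(a)$, an independent check of \cref{lem:LFM-whiskering} on the image of $JW$. The cost is that you rely on \cref{lem:LFM-whiskering}, hence on \cref{prop:graded-swap}, rather than on equivariance, and you need monomial and parity bookkeeping.

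You can drop that bookkeeping entirely. The map $a\mapsto m\owedge JW(a)$ is itself a unital algebra morphism: it is linear, and it is multiplicative because $m\owedge(-)$ is a functor. So is $a\mapsto JW(m\owedge a)$. It therefore suffices that they agree on $f_i$ and $f_i^\dagger$. There \cref{lem:LFM-whiskering} gives $m\owedge JW(f_i)=\mathsf P_m\otimes JW(f_i)$ because $JW(f_i)$ is odd. No separate treatment of monomials or of inhomogeneous $a$ is needed.
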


The same category $\LFM$ therefore carries two useful structures:
\[
    (\LFM,\otimes,\Pi)
    \qquad\text{and}\qquad
    (\LFM,\owedge,\Pi).
\]
With the ordinary tensor product, composition on independent systems is
strictly monoidal, but the Fermionic symmetry is natural only for even
maps. With the premonoidal product, the Fermionic symmetry becomes natural
for all homogeneous processes, while two odd processes on disjoint systems
fail interchange by a sign.

The second structure is the one directly induced by the CAR semantics and
therefore provides our denotational semantics for Fermionic circuits. The
first, however, is often more convenient for presenting and rewriting
linear maps. In the following section, we develop a local syntax for these
maps and study when its rewrites faithfully represent Fermionic circuits;
the distinction between $\otimes$ and $\owedge$ will allow us to keep track
of precisely where Fermionic signs come into play.

\section{Diagrammatic syntax of LFM-based quantum computing}\label{sec:4}
    In \cref{sec:2}, we presented the Bravyi and Kitaev LFM circuit model, and in \cref{sec:3}, we studied the categorical aspects 
of its two semantics, either in the symmetric premonoidal category $(\LFM, \owedge, \Pi)$ or in the presymmetric monoidal category 
$(\LFM,\otimes,\Pi)$.

In order to be executed on a fermionic machine, it is necessary to know how to rewrite a logical LFM circuit
$C$ in an equivalent circuit $C'$ whose gates are native on the target fermionic platform. Here, equivalent means 
that their denotations are the same.
\[
C\rewrites^* C' \qquad \text{with} \qquad \interp{C} = \interp{C'}
\]

To that extent, the monoidal semantics is better suited, for the rewriting can be encapsulated 
by a sequence of local rewrites which constitute an equational theory of the underlying monoidal
category. Such equational theories of circuits have a long history, from the
algebraic treatment of Boolean circuits by Lafont~\cite{Lafont2003}
to recent complete equational theories for quantum
circuits~\cite{ClementDelormePerdrix2024,ClementHeurtelMansfieldPerdrixValiron2023,
ClementDelormePerdrixVilmart2024}.
In categorical terms, they can be formalized as \emph{prop theories}.

\subsection{Props and their theories}

    When manipulating usual quantum circuits, like when manipulating fermionic circuits, the monoidal (or premonoidal) categories considered have more features: their monoid of objects is freely generated by a single object denoted $1$.

    \begin{example}
        This single object corresponds to:
        \begin{itemize}
            \item $1 = \mathbb{C}^2$ in $\Qubit$,
            \item $1 = \mathbb{C}^d$ in $\Qudit$    ,
            \item $1 = \mathbb{C}^2$ in $\LFM$,
            \item $1$ as the abstract object for one fermionic mode in \CAR.
        \end{itemize}
    \end{example}

    Given a set of usual quantum gates, the category of formal quantum circuits one can build with them is a symmetric 
    monoidal category whose monoid of objects is freely spanned by a single object: a one-qubit system, represented 
    with a wire. Such categories are called props.

    \begin{definition}[prop]\cite{Lack2004}
        A prop is a (strict) symmetric monoidal category whose monoid of objects is freely generated by a single object $1$.

        The $n$-fold monoidal product is denoted $1 + \dots + 1 = n$ for $n \in \mathbb{N}$. 
        The monoidal product on morphisms is denoted $\otimes$. 
        The generating symmetry is generally denoted $\sigma: 2 \to 2$.
    \end{definition}

        Fermionic circuits on Bravyi-Kitaev's gate set form a prop: 
        indeed, the semantics of all the gates is either in $(\ECAR, \owedge, \pi)$ or $(\ELFM, \otimes, \Pi)$, 
        which are props (arrows are all central and natural in both cases). 
        As a consequence, Bravyi and Kitaev's circuits can be manipulated as usual quantum circuits. This is the case 
        for any circuit on an even gate set.
    
    Props can be represented diagrammatically with a single type of wire, 
    corresponding to the core object $1$. 
    \[\scalebox{\scalefig}{\begin{tikzpicture}
	\begin{pgfonlayer}{nodelayer}
		\node [style=none] (0) at (-0.75, 0) {};
		\node [style=none] (1) at (0.75, 0) {};
	\end{pgfonlayer}
	\begin{pgfonlayer}{edgelayer}
		\draw (0.center) to (1.center);
	\end{pgfonlayer}
\end{tikzpicture}
}\]
    Composite objects $n = 1 + \cdots + 1$ are represented by $n$ parallel wires.
    \[\scalebox{\scalefig}{\begin{tikzpicture}
	\begin{pgfonlayer}{nodelayer}
		\node [style=none] (0) at (-0.75, -0.5) {};
		\node [style=none] (1) at (0.75, -0.5) {};
		\node [style=none] (2) at (-0.75, 0.5) {};
		\node [style=none] (3) at (0.75, 0.5) {};
		\node [style=none] (4) at (0, 0.25) {$\vdots$};
		\node [style=modenb] (5) at (0.5, 0) {$(n)$};
	\end{pgfonlayer}
	\begin{pgfonlayer}{edgelayer}
		\draw (0.center) to (1.center);
		\draw (2.center) to (3.center);
	\end{pgfonlayer}
\end{tikzpicture}
}\]
    
    \medskip

    In general, when considering the prop of quantum (resp. fermionic) circuits on a gate set (resp. even gate set), 
    its denotational semantics is equivalent to the category of formal circuits quotiented by all equations involving 
    the gates that are valid in the semantics. 
    This idea can be generalized to provide axiomatizations of props given by a formal category of diagrams over a 
    set of generators quotiented by a set of equations. 
    Such an axiomatization is called a prop theory.

    \begin{definition}[prop theory]
    A prop theory is a pair $\T = (\Sigma, E)$ consisting of a set of generators $\Sigma$ and a set of equations $E$:
    \begin{itemize}
        \item $\Sigma$ contains typed generators $g: n \to m$ with $n, m \in \mathbb{N}$,
        \item Formal diagrams $D, D', \dots$ are defined inductively as either:
        \begin{itemize}[label=\emph{|}]
            \item $g \in \Sigma$, a generator.
            \item $\mathrm{id}_1: 1 \to 1$, the identity on the core object.
            \item $\sigma: 2 \to 2$, the generating symmetry.
            \item $D \circ D' : n \to m$, the composition of two formal diagrams $D: k \to m$, $D': n \to k$,
            \item $D \otimes D' : n + n' \to m + m'$, the monoidal product of two formal diagrams 
            $D: n \to m$, $D': n' \to m'$.
        \end{itemize}
        \item $E$ contains pairs of formal diagrams $(D, D')$ of the same type $n \to m$.
    \end{itemize}
    \end{definition}  
        Given a prop theory $\T = (\Sigma, E)$, there is a corresponding prop generated by $\Sigma$ and quotiented 
        by $E$ and the axioms of symmetric monoidal categories. In practice, we denote both the theory and the corresponding 
        prop $\T$ and rely on the context to disambiguate.

    \begin{definition}[Semantic functor]
        Let $\T$ be a prop theory, $\P$ a prop. A symmetric monoidal functor $\interp{\cdot}^\T_\P: \T \to \P$ 
        is called a semantic functor of $\T$ in the prop $\P$. 
        We say that $\T$ is universal for $\P$ when $\interp{\cdot}^\T_\P$ is full, and that $\T$ is complete for $\P$ 
        when $\interp{\cdot}^\T_\P$ is faithful.

        When the context is clear, we simply write $\interp{\cdot}_\P$ for $\interp{\cdot}^\T_\P$.
    \end{definition}

        Let $\T_{\mathsf{BK}, \emptyset} = (\{\G_1, \G_2, \G_3, \G_4\}, \emptyset)$ be the theory of formal circuits on Bravyi and Kitaev's gate set. 
        We have defined in previous sections a semantic functor $\interp{\cdot}_{\LFM}: \T \to (\ELFM, \owedge, \Pi)$ through 
        $\interp{\cdot}_{\LFM} = J \circ \interp{\cdot}_{\CAR}$. This theory is not complete at all: 
        $E = \emptyset$ and thus different formal circuits implementing the same unitary are not equal in $\T_{\mathsf{BK},\emptyset}$.
        A complete theory for Bravyi-Kitaev's circuits $\T_{\mathsf{BK}} = (\{\G_1, \ldots, \G_4\}, E)$ would be such that for two circuits 
        $C_1, C_2: m \to m$ with $\interp{C_1}_{\LFM} = \interp{C_2}    _{\LFM}$, there is a way to transform $C_1$ into 
        $C_2$ by performing local rewrites $C \leftarrow C'$ for $(C, C') \in E$, where either $C$ or $C'$ is a subcircuit of either $C_1$ or $C_2$.\\

        Note that on \ELFM, $\owedge$ and $\otimes$ coincide.
        In order to deduce sound equations on circuits over Bravyi and Kitaev's gate set, it maybe useful to translate them
        to another prop theory interpreted in $(\ELFM, \otimes, \Pi)$. This method is similar to ZX-calulus for usual quantum circuits:
        translate quantum circuits to a more expressive prop theory (ZX-calculus), in which one can represent any linear maps, and which is 
        already proven complete for the target semantics. If a quantum circuit $C$ translates to a ZX diagram $D$, and that we know 
        how to rewrite $D \rewrites^* D'$ with $C'$ a circuit such that $\interp{C}=\interp{D}$ and $\interp{C'}=\interp{D'}$, then 
        $(C,C')$ is a sound equation.\\

        Based on the work of De Felice, Hadžihasanović, Ng\cite{DeFeliceHadzihasanovicNg2019, Ng2018}~, we propose to use the complete prop theory $\EZW$, whose
        generators are given together with their semantics in \cref{fig:fragment-generators}.

%
%
%


\newsavebox{\GenGridEaWSavebox}
\newsavebox{\GenGridEpWSavebox}
\newsavebox{\GenGridEZWSavebox}
\newsavebox{\GenGridaWSavebox}
\newsavebox{\GenGridFpWSavebox}
\newsavebox{\GenGridFZWSavebox}

\sbox{\GenGridEaWSavebox}{\EaWGeneratorBlock}
\sbox{\GenGridEpWSavebox}{\EpWGeneratorBlock}
\sbox{\GenGridEZWSavebox}{\EZWGeneratorBlock}
\sbox{\GenGridaWSavebox}{\aWGeneratorBlock}
\sbox{\GenGridFpWSavebox}{\FpWGeneratorBlock}
\sbox{\GenGridFZWSavebox}{\FZWGeneratorBlock}

\begin{figure*}[t]
\centering

\begingroup
\sensitivetikztrue

\resizebox{\textwidth}{!}{%
\begin{tikzpicture}[
    scale=2,
    generator content/.style={inner sep=0pt,outer sep=0pt,draw=none},
    grid cell/.style={inner sep=0pt,outer sep=0pt,draw=none,minimum width=9.15cm},
    fragment region/.style={rounded corners=9pt,dash pattern=on 4pt off 2.5pt,line width=0.75pt},
    fragment title/.style={font=\Large\bfseries,fill=white,fill opacity=.96,text opacity=1,inner xsep=4pt,inner ysep=1pt}
]

%
%

\node[grid cell,minimum height=5.20cm] (cell-eaw) at (0,0) {};
\node[grid cell,minimum height=5.20cm] (cell-epw) at (9.55cm,0) {};
\node[grid cell,minimum height=5.20cm] (cell-ezw) at (19.10cm,0) {};

\node[grid cell,minimum height=3.15cm] (cell-aw)  at (0,-4.75cm) {};
\node[grid cell,minimum height=3.15cm] (cell-fpw) at (9.55cm,-4.75cm) {};
\node[grid cell,minimum height=3.15cm] (cell-fzw) at (19.10cm,-4.75cm) {};


\node[generator content] at (cell-eaw.center)
    {\usebox{\GenGridEaWSavebox}};

\node[generator content] at (cell-epw.center)
    {\usebox{\GenGridEpWSavebox}};

\node[generator content] at (cell-ezw.center)
    {\usebox{\GenGridEZWSavebox}};

\node[generator content] at (cell-aw.center)
    {\usebox{\GenGridaWSavebox}};

\node[generator content] at (cell-fpw.center)
    {\usebox{\GenGridFpWSavebox}};

\node[generator content] at (cell-fzw.center)
    {\usebox{\GenGridFZWSavebox}};

%
%

\begin{pgfonlayer}{background}

\node[
    fragment region,
     draw=orange!78!black,
    fill=orange!3,
    fit=(cell-eaw)(cell-epw)(cell-ezw)(cell-aw)(cell-fpw)(cell-fzw),
    inner sep=8mm
] (gen-grid-fzw-region) {};

\node[
    fragment region,
    draw=violet!68!black,
    fill=violet!3,
    fit=(cell-eaw)(cell-epw)(cell-ezw),
    inner sep=5.5mm
] (gen-grid-ezw-region) {};

\node[
    fragment region,
    draw=black!58,
    fill=black!2,
    fill opacity=.55,
    fit=(cell-eaw)(cell-epw)(cell-aw)(cell-fpw),
    inner sep=4.5mm
] (gen-grid-fpw-region) {};

\node[
    fragment region,
    draw=red!68!black,
    fill=red!3,
    fit=(cell-eaw)(cell-epw),
    inner sep=3.0mm
] (gen-grid-epw-region) {};

\node[
    fragment region,
    draw=blue!68!black,
    fill=blue!3,
    fill opacity=.55,
    fit=(cell-eaw)(cell-aw),
    inner sep=3.0mm
] (gen-grid-aw-region) {};

\node[
    fragment region,
    draw=green!48!black,
    fill=green!5,
    fit=(cell-eaw),
    inner sep=1.8mm
] (gen-grid-eaw-region) {};

\end{pgfonlayer}

%

\node[
    fragment title,
    anchor=west,
    text=orange!78!black
] at ([xshift=-20mm]gen-grid-fzw-region.south east)
    {$\FZW$};

\node[
    fragment title,
    anchor=east,
    text=violet!68!black
] at ([xshift=-5mm]gen-grid-ezw-region.north east)
    {$\EZW$};

\node[
    fragment title,
    anchor=east,
    text=black!75
] at ([xshift=-5mm]gen-grid-fpw-region.south east)
    {$\FpW$};

\node[
    fragment title,
    anchor=east,
    text=red!68!black
] at ([xshift=-5mm]gen-grid-epw-region.north east)
    {$\EpW$};

\node[
    fragment title,
    anchor=west,
    text=blue!68!black
] at ([xshift=5mm]gen-grid-aw-region.south west)
    {$\aW$};

\node[
    fragment title,
    anchor=west,
    text=green!45!black
] at ([xshift=4mm]gen-grid-eaw-region.north west)
    {$\EaW$};

\end{tikzpicture}%
}

\endgroup

\caption{
Generators of the six graphical theories and their semantics in $\LFM$.
Each generator is displayed in the smallest fragment in which it occurs.
The parameters $x$ range over $\mathbb{C}$.
The fermionic swap is the paradigmatic (pre)symmetry and is therefore
not displayed as a generator. 
}

\label{fig:fragment-generators}

\Description{
}

\end{figure*}

%
%

\begin{figure*}[t]
\centering

\begingroup
\sensitivetikztrue

\newsavebox{\EaWAxiomSavebox}
\newsavebox{\EpWAxiomSavebox}
\newsavebox{\aWAxiomSavebox}
\newsavebox{\FpWAxiomSavebox}

\sbox{\EaWAxiomSavebox}{\EaWAxiomBlock}
\sbox{\EpWAxiomSavebox}{\EpWAxiomBlock}
\sbox{\aWAxiomSavebox}{\aWAxiomBlock}
\sbox{\FpWAxiomSavebox}{\FpWAxiomBlock}


\resizebox{\textwidth}{!}{%
\begin{tikzpicture}[
    reset cm,
    axiom block/.style={
        inner sep=0pt,
        outer sep=0pt,
        draw=none
    },
    fragment region/.style={
        rounded corners=10pt,
        dashed,
        line width=0.9pt
    },
    fragment title/.style={
        font=\Large\bfseries,
        fill=white,
        inner xsep=4pt,
        inner ysep=1.5pt,
        draw=none
    }
]


\node[
    axiom block,
    anchor=north west
] (eaw) at (0,0) {\usebox{\EaWAxiomSavebox}};

\node[
    axiom block,
    anchor=north west
] (epw) at ([xshift=8mm]eaw.north east)
    {\usebox{\EpWAxiomSavebox}};

\node[
    axiom block,
    anchor=north west
] (aw) at ([yshift=-15mm]eaw.south west)
    {\usebox{\aWAxiomSavebox}};

\node[
    axiom block,
    anchor=north west
] (fpw) at ([yshift=-8mm]epw.south west)
    {\usebox{\FpWAxiomSavebox}};


\begin{pgfonlayer}{background}

\node[
    fragment region,
    draw=black!65,
    fill=black!4,
    fit=(eaw)(epw)(aw)(fpw),
    inner sep=11mm
] (fpw-region) {};

\node[
    fragment region,
    draw=blue!65!black,
    fill=blue!8,
    fill opacity=.55,
    fit=(eaw)(aw),
    inner sep=6mm
] (aw-region) {};

\node[
    fragment region,
    draw=red!65!black,
    fill=red!8,
    fill opacity=.55,
    fit=(eaw)(epw),
    inner sep=6mm
] (epw-region) {};

\node[
    fragment region,
    draw=green!50!black,
    fill=green!10,
    fill opacity=.75,
    fit=(eaw),
    inner sep=3mm
] (eaw-region) {};

\end{pgfonlayer}


\node[
    fragment title,
    anchor=west
] at ([xshift=5mm]fpw-region.north west)
    {$\FpW$};

\node[
    fragment title,
    anchor=west,
    text=blue!65!black
] at ([xshift=5mm]aw-region.south west)
    {$\aW$};

\node[
    fragment title,
    anchor=west,
    text=red!65!black
] at ([xshift=-20mm]epw-region.north east)
    {$\EpW$};

\node[
    fragment title,
    anchor=west,
    text=green!45!black
] at ([xshift=3mm]eaw-region.north west)
    {$\EaW$};

\end{tikzpicture}%
}

\endgroup

\caption{
Axioms of the four equational theories.
The overlapping regions represent inclusions between the corresponding
fragments; each axiom is displayed in the smallest fragment in which it
occurs. By convention, equations of \EaW~ and \aW~ are also valid when mirrored
with respect to the vertical axis.
}
\label{fig:equational-theories}

\Description{
The axioms are arranged in four overlapping regions corresponding to
EaW, aW, EpW, and FpW. EaW is contained in both aW and EpW, while FpW
contains all three regions.
}

\end{figure*}

    \begin{theorem}[De Felice, Hadžihasanović, Ng~\cite{Ng2018}]
            \raggedright
        There is a complete prop theory denoted $\EZW$ of $(\ELFM, \otimes, \Pi)$ with generators :\\[0.75em]
        {\centering\begin{tabular}{c c c c c c c c}
        $\scalebox{\scalefig}{{\begin{tikzpicture}
	\begin{pgfonlayer}{nodelayer}
		\node [style=none] (2) at (1, 0) {};
		\node [style=rw] (3) at (0, 0) {};
	\end{pgfonlayer}
	\begin{pgfonlayer}{edgelayer}
		\draw (2.center) to (3);
	\end{pgfonlayer}
\end{tikzpicture}
}}$
        & $\scalebox{\scalefig}{{\begin{tikzpicture}
	\begin{pgfonlayer}{nodelayer}
		\node [style=none] (2) at (-1.25, 0) {};
		\node [style=lw] (3) at (-0.25, 0) {};
	\end{pgfonlayer}
	\begin{pgfonlayer}{edgelayer}
		\draw (2.center) to (3);
	\end{pgfonlayer}
\end{tikzpicture}
}}$
        & $\scalebox{\scalefig}{{\begin{tikzpicture}
	\begin{pgfonlayer}{nodelayer}
		\node [style=none] (1) at (1, 0.5) {};
		\node [style=none] (2) at (-1.25, 0) {};
		\node [style=lw] (3) at (-0.25, 0) {};
		\node [style=none] (4) at (1, -0.5) {};
	\end{pgfonlayer}
	\begin{pgfonlayer}{edgelayer}
		\draw [in=-45, out=180] (4.center) to (3);
		\draw (2.center) to (3);
		\draw [in=45, out=180] (1.center) to (3);
	\end{pgfonlayer}
\end{tikzpicture}
}}$
        & $\scalebox{\scalefig}{{\begin{tikzpicture}
	\begin{pgfonlayer}{nodelayer}
		\node [style=none] (1) at (-1.25, 0.5) {};
		\node [style=none] (2) at (1, 0) {};
		\node [style=rw] (3) at (0, 0) {};
		\node [style=none] (4) at (-1.25, -0.5) {};
	\end{pgfonlayer}
	\begin{pgfonlayer}{edgelayer}
		\draw [in=-135, out=0] (4.center) to (3);
		\draw (2.center) to (3);
		\draw [in=135, out=0] (1.center) to (3);
	\end{pgfonlayer}
\end{tikzpicture}
}}$
        & $\scalebox{\scalefig}{{\begin{tikzpicture}
	\begin{pgfonlayer}{nodelayer}
		\node [style=none] (0) at (-0.25, 0.5) {};
		\node [style=none] (1) at (0.25, 0.5) {};
		\node [style=none] (2) at (0.25, -0.5) {};
		\node [style=none] (3) at (-0.25, -0.5) {};
	\end{pgfonlayer}
	\begin{pgfonlayer}{edgelayer}
		\draw (3.center) to (2.center);
		\draw (0.center) to (1.center);
		\draw [bend right=270, looseness=2.50] (1.center) to (2.center);
	\end{pgfonlayer}
\end{tikzpicture}
}}$
        &  $\scalebox{\scalefig}{{\begin{tikzpicture}
	\begin{pgfonlayer}{nodelayer}
		\node [style=none] (0) at (0.5, 0.5) {};
		\node [style=none] (1) at (0, 0.5) {};
		\node [style=none] (2) at (0, -0.5) {};
		\node [style=none] (3) at (0.5, -0.5) {};
	\end{pgfonlayer}
	\begin{pgfonlayer}{edgelayer}
		\draw (3.center) to (2.center);
		\draw (0.center) to (1.center);
		\draw [bend left=270, looseness=2.50] (1.center) to (2.center);
	\end{pgfonlayer}
\end{tikzpicture}
}}$ 
        &  $\scalebox{\scalefig}{{\begin{tikzpicture}
	\begin{pgfonlayer}{nodelayer}
		\node [style=none] (14) at (-1, 0.5) {};
		\node [style=none] (15) at (-1, -0.5) {};
		\node [style=none] (17) at (1, 0.5) {};
		\node [style=none] (18) at (1, -0.5) {};
		\node [style=white phase dot] (19) at (0, 0) {$x$};
	\end{pgfonlayer}
	\begin{pgfonlayer}{edgelayer}
		\draw [in=45, out=180] (17.center) to (19);
		\draw [in=-45, out=180] (18.center) to (19);
		\draw [in=135, out=0] (14.center) to (19);
		\draw [in=-135, out=0] (15.center) to (19);
	\end{pgfonlayer}
\end{tikzpicture}
}}$ 
        &  $\scalebox{\scalefig}{{\begin{tikzpicture}
	\begin{pgfonlayer}{nodelayer}
		\node [style=white phase dot] (0) at (0, 0) {$x$};
		\node [style=none] (1) at (-1, 0) {};
		\node [style=none] (2) at (1, 0) {};
	\end{pgfonlayer}
	\begin{pgfonlayer}{edgelayer}
		\draw (1.center) to (2.center);
	\end{pgfonlayer}
\end{tikzpicture}
}}$ \\
        \end{tabular}\\[0.75em]}
        \raggedright
    \end{theorem}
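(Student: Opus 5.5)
The statement is a cited result of De Felice, Hadžihasanović and Ng, so the plan is to reconstruct the standard shape of their argument rather than invent a new one. Completeness of a prop theory $\T$ for a prop $\P$ is shown by a normal-form argument. First, we check that the semantic functor is well defined, i.e. that every axiom of $\EZW$ is sound in $(\ELFM,\otimes,\Pi)$. Second, we show that every diagram rewrites to a unique normal form determined by its interpretation.

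\textbf{Soundness.} We verify each axiom on basis states, using the semantics of \cref{fig:fragment-generators} and the graded swap of \cref{prop:graded-swap}. All generators are even, and by \cref{prop:nature-LFM} even maps lie in the Nature of $\Pi$. Hence $\Pi$ is a genuine symmetry on $\ELFM$, and $(\ELFM,\otimes,\Pi)$ is a prop. This is the only place the fermionic sign enters: symmetric monoidal coherence is then automatically sound.

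\textbf{Reduction to states.} Using the cup and cap together with the snake equations, map--state duality reduces the problem to diagrams $0\to n$. Bending wires commutes with $\Pi$ up to the graded signs, and these are themselves even, so this step is harmless. An even state on $n$ modes is a vector $\sum_{|x|\text{ even}} c_x\ket{x}$.

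\textbf{Normal form.} For each even subset $x$, we build a gadget: a W-tree fanning out to the occupied outputs of $x$, weighted by a $Z$-phase $c_x$. These gadgets are then summed with a big W-node, using the additive behaviour of W and $Z$, with ordered wiring to absorb signs. Every even state is then the interpretation of exactly one such diagram. It remains to show that every diagram rewrites into normal form. By induction on composition, it suffices that the tensor of two normal forms, and the contraction of a generator (or a cup) onto a normal form, can be rewritten back to normal form. This mirrors Hadzihasanovic's completeness proof for the ZW-calculus.

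\textbf{Main obstacle.} The crossing bookkeeping is the expected difficulty. When two gadgets are merged or a wire is contracted, the W-trees must be reordered through $\Pi$, which produces sign factors $(-1)^{|x||y|}$ and $-1$ on doubly occupied crossings. These must be absorbed into the $Z$-phases by the axioms. We would first derive lemmas showing that a crossing passing through a W-node or a $Z$-node either acts naturally or contributes a $-1$ phase, then apply these lemmas uniformly inside the induction.
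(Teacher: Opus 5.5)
Your proposal is a plan rather than a proof, and what it leaves open is exactly the substance.

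\textbf{No axioms are fixed.} ``Checking each axiom'' and ``rewriting into normal form'' refer to rules that are never named. If all sound equations were allowed, faithfulness would be trivial and none of this machinery would be needed.

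\textbf{The key lemma is only asserted.} The claim that tensoring two normal forms, or plugging a generator or a cup into one, rewrites back into normal form is said to ``mirror'' Hadzihasanovic, but it is not proved.

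\textbf{The normal form, as described, does not exist in the even fragment.} A W-tree sends $\ket{1}$ to a sum of \emph{single} occupations, so occupying all modes of $x$ at once needs a $Z$-spider. The usual device of linking every monomial spider to one common W-node (to kill the cross terms $c_xc_y\ket{x\cup y}$ between disjoint monomials) gives each spider $|x|+1$ legs. For even $x$ this is odd arity, and $\ket{0\cdots0}+c\ket{1\cdots1}$ on an odd number of wires is not homogeneous, hence not an \textbf{EZW} diagram.

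\textbf{Your ``main obstacle'' is mislocated.} Every generator is even, so $\Pi$ passes through W- and $Z$-nodes by naturality alone. The genuine sign source is that the cap is not $\Pi$-symmetric, since $\Pi(\ket{00}+\ket{11})=\ket{00}-\ket{11}$.

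\textbf{The statement is not literally the theorem of \cite{Ng2018}.} The paper changes the presentation. It makes the fermionic swap $\Pi$ the structural symmetry in place of the ordinary swap. It also promotes the 4-ary $Z$-node from syntactic sugar to a generator, so that the ordinary swap need not be one.

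The paper's argument is exactly the interdefinability of the two presentations. The ordinary swap is $\Pi$ precomposed with a gadget of W-nodes around the 4-ary $Z$-node with phase $-2$, which realises $\mathrm{diag}(1,1,1,-1)$. Conversely, the 4-ary $Z$-node is a composite of the original generators. Ng's complete rule set is then transported along these translations. This needs round-trip equations for the new generators and for $\Pi$, and the structural axioms of the old swap become explicit equations.

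So even a faithful reconstruction of ``their argument'' would prove completeness of a different presentation. This translation step is the missing idea. A from-scratch route would instead require an explicit axiom list and a parity-compatible normal form, and you provide neither.
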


    Note that our set generators slightly differs from the one in \cite{Ng2018}. The reason is that we choose 
    to consider the fermionic swap
    \[\interp{\scalebox{\scalefig}{{\begin{tikzpicture}
	\begin{pgfonlayer}{nodelayer}
		\node [style=none] (1) at (-1, 0.5) {};
		\node [style=none] (2) at (1, 0.5) {};
		\node [style=none] (3) at (-1, -0.5) {};
		\node [style=none] (4) at (1, -0.5) {};
		\node [style=none] (9) at (0.75, -0.5) {};
		\node [style=none] (10) at (0.75, 0.5) {};
		\node [style=none] (11) at (-0.75, 0.5) {};
		\node [style=none] (12) at (-0.75, -0.5) {};
		\node [style=fswap] (13) at (0, 0) {};
	\end{pgfonlayer}
	\begin{pgfonlayer}{edgelayer}
		\draw [in=180, out=0] (11.center) to (9.center);
		\draw (1.center) to (11.center);
		\draw (9.center) to (4.center);
		\draw (2.center) to (10.center);
		\draw [in=0, out=-180] (10.center) to (12.center);
		\draw (12.center) to (3.center);
	\end{pgfonlayer}
\end{tikzpicture}
}}}_\LFM \quad = \quad \Pi_{1,1} \quad = \quad  \ketbra{00}{00} + \ketbra{10}{01} + \ketbra{01}{10} - \ketbra{11}{11}
    \]
    as paradigmatic symmetry instead of the usual swap 
    \[
    \interp{\scalebox{\scalefig}{{\begin{tikzpicture}
	\begin{pgfonlayer}{nodelayer}
		\node [style=none] (1) at (-1, 0.5) {};
		\node [style=none] (2) at (1, 0.5) {};
		\node [style=none] (3) at (-1, -0.5) {};
		\node [style=none] (4) at (1, -0.5) {};
		\node [style=none] (9) at (0.75, -0.5) {};
		\node [style=none] (10) at (0.75, 0.5) {};
		\node [style=none] (11) at (-0.75, 0.5) {};
		\node [style=none] (12) at (-0.75, -0.5) {};
	\end{pgfonlayer}
	\begin{pgfonlayer}{edgelayer}
		\draw [in=180, out=0] (11.center) to (9.center);
		\draw (1.center) to (11.center);
		\draw (9.center) to (4.center);
		\draw (2.center) to (10.center);
		\draw [in=0, out=-180] (10.center) to (12.center);
		\draw (12.center) to (3.center);
	\end{pgfonlayer}
\end{tikzpicture}
}}}_\LFM \quad = \quad \ketbra{00}{00} + \ketbra{10}{01} + \ketbra{01}{10} + \ketbra{11}{11}.
    \]
    Additionally, we choose to consider $\scalebox{\scalefig}{{\begin{tikzpicture}
	\begin{pgfonlayer}{nodelayer}
		\node [style=none] (14) at (-1, 0.5) {};
		\node [style=none] (15) at (-1, -0.5) {};
		\node [style=none] (17) at (1, 0.5) {};
		\node [style=none] (18) at (1, -0.5) {};
		\node [style=white phase dot] (19) at (0, 0) {$x$};
	\end{pgfonlayer}
	\begin{pgfonlayer}{edgelayer}
		\draw [in=45, out=180] (17.center) to (19);
		\draw [in=-45, out=180] (18.center) to (19);
		\draw [in=135, out=0] (14.center) to (19);
		\draw [in=-135, out=0] (15.center) to (19);
	\end{pgfonlayer}
\end{tikzpicture}
}}$ as a true generator instead of a mere syntactic sugar,
    so as to replace the classical swap  $\scalebox{\scalefig}{{}}$ as a generator by $\scalebox{\scalefig}{{}}$,
     without losing or gaining expressivity as:
        \[
         \scalebox{\scalefig}{{}} = \scalebox{\scalefig}{{\begin{tikzpicture}
	\begin{pgfonlayer}{nodelayer}
		\node [style=none] (1) at (-1.5, 0.5) {};
		\node [style=none] (3) at (-1.5, -0.5) {};
		\node [style=none] (13) at (-0.75, 0.5) {};
		\node [style=none] (14) at (-0.75, -0.5) {};
		\node [style=none] (15) at (0.75, 0.5) {};
		\node [style=none] (16) at (0.75, -0.5) {};
		\node [style=white phase dot] (17) at (0, 0) {$-2$};
		\node [style=rw] (18) at (0.75, 0.5) {};
		\node [style=rw] (19) at (0.75, -0.5) {};
		\node [style=lw] (20) at (-0.75, 0.5) {};
		\node [style=lw] (21) at (-0.75, -0.5) {};
		\node [style=none] (23) at (3.25, 0.5) {};
		\node [style=none] (25) at (3.25, -0.5) {};
		\node [style=none] (26) at (2.75, -0.5) {};
		\node [style=none] (27) at (2.75, 0.5) {};
		\node [style=none] (28) at (1.25, 0.5) {};
		\node [style=none] (29) at (1.25, -0.5) {};
		\node [style=fswap] (30) at (2, 0) {};
	\end{pgfonlayer}
	\begin{pgfonlayer}{edgelayer}
		\draw (15.center) to (17);
		\draw (16.center) to (17);
		\draw (13.center) to (17);
		\draw (14.center) to (17);
		\draw (1.center) to (13.center);
		\draw (14.center) to (3.center);
		\draw [bend right] (18) to (20);
		\draw [bend left] (19) to (21);
		\draw [in=180, out=0] (28.center) to (26.center);
		\draw (26.center) to (25.center);
		\draw (23.center) to (27.center);
		\draw [in=0, out=-180] (27.center) to (29.center);
		\draw (16.center) to (29.center);
		\draw (28.center) to (15.center);
	\end{pgfonlayer}
\end{tikzpicture}
}}
        \quad \text{and} \quad
        \scalebox{\scalefig}{{}} = \scalebox{\scalefig}{{\begin{tikzpicture}
	\begin{pgfonlayer}{nodelayer}
		\node [style=none] (55) at (-1.75, -2) {};
		\node [style=none] (108) at (-0.75, -1.25) {};
		\node [style=white phase dot] (110) at (2, 2) {$x$};
		\node [style=none] (111) at (1.75, -2) {};
		\node [style=none] (112) at (2.25, -2) {};
		\node [style=none] (113) at (0.75, -1.25) {};
		\node [style=none] (114) at (-1.75, 2) {};
		\node [style=none] (115) at (-3, 1) {};
		\node [style=none] (116) at (-0.75, 1.25) {};
		\node [style=none] (117) at (1.75, 2) {};
		\node [style=none] (118) at (2.25, 2) {};
		\node [style=none] (119) at (0.75, 1.25) {};
		\node [style=none] (122) at (-1.75, 1.25) {};
		\node [style=none] (123) at (-0.75, 2) {};
		\node [style=none] (127) at (1.75, 1.25) {};
		\node [style=none] (128) at (0.75, 2) {};
		\node [style=rw] (129) at (-1.25, 0) {};
		\node [style=none] (130) at (-1.75, -1.25) {};
		\node [style=none] (131) at (-0.75, -2) {};
		\node [style=lw] (134) at (1.25, 0) {};
		\node [style=none] (135) at (1.75, -1.25) {};
		\node [style=none] (136) at (0.75, -2) {};
		\node [style=fswap] (137) at (-1.25, 1.65) {};
		\node [style=fswap] (138) at (-1.25, -1.65) {};
		\node [style=white phase dot] (139) at (0.5, 0) {$\frac{1}{2}$};
		\node [style=none] (141) at (-0.25, 0.75) {};
		\node [style=none] (142) at (-0.75, -0.25) {};
		\node [style=none] (143) at (-0.25, -0.25) {};
		\node [style=none] (144) at (-0.75, 0.75) {};
		\node [style=none] (145) at (-0.75, -0.75) {};
		\node [style=none] (146) at (-0.25, -0.75) {};
		\node [style=fswap] (147) at (-0.5, -0.5) {};
		\node [style=fswap] (148) at (-0.75, 0) {};
		\node [style=none] (149) at (2.25, 2) {};
		\node [style=none] (150) at (3.25, 1) {};
		\node [style=none] (151) at (-1.75, -2) {};
		\node [style=none] (152) at (-3, -1) {};
		\node [style=none] (153) at (3.25, -1) {};
		\node [style=none] (154) at (2.25, -2) {};
	\end{pgfonlayer}
	\begin{pgfonlayer}{edgelayer}
		\draw [in=-180, out=0] (55.center) to (108.center);
		\draw (112.center) to (111.center);
		\draw [in=0, out=-180] (111.center) to (113.center);
		\draw (108.center) to (113.center);
		\draw [in=0, out=180] (114.center) to (115.center);
		\draw [in=180, out=0] (114.center) to (116.center);
		\draw (118.center) to (117.center);
		\draw [in=0, out=180] (117.center) to (119.center);
		\draw (116.center) to (119.center);
		\draw [in=180, out=15] (122.center) to (123.center);
		\draw [in=0, out=165] (127.center) to (128.center);
		\draw [in=165, out=-165, looseness=1.75] (129) to (130.center);
		\draw [in=-180, out=-15] (130.center) to (131.center);
		\draw [in=-165, out=165, looseness=1.75] (129) to (122.center);
		\draw [in=15, out=-15, looseness=1.75] (134) to (135.center);
		\draw [in=0, out=-165] (135.center) to (136.center);
		\draw (131.center) to (136.center);
		\draw (123.center) to (128.center);
		\draw [in=-15, out=15, looseness=1.75] (134) to (127.center);
		\draw (129) to (134);
		\draw [bend left=90] (144.center) to (141.center);
		\draw (144.center) to (142.center);
		\draw (141.center) to (143.center);
		\draw [in=90, out=-90, looseness=0.75] (142.center) to (146.center);
		\draw [in=75, out=-90] (143.center) to (145.center);
		\draw [bend right=90, looseness=1.50] (145.center) to (146.center);
		\draw [in=180, out=0] (149.center) to (150.center);
		\draw [in=0, out=-180] (151.center) to (152.center);
		\draw [in=0, out=180] (153.center) to (154.center);
	\end{pgfonlayer}
\end{tikzpicture}
}}
        \]
    The complete set of rewrite rules for the original theory (see~\cite{Ng2018}) directly translates to a complete 
    set of equations for this version.\\
    \raggedright
    In $\EZW$, the Bravyi-Kitaev gate set is represented as:\\[0.75em]
    
    {
        \centering
        \begin{tabular}{c c c}
         \scalebox{\scalefig}{{\input{./figures/gate-set/G1.tikz}}} & $\eq$ &  \scalebox{\scalefig}{{\begin{tikzpicture}
	\begin{pgfonlayer}{nodelayer}
		\node [style=white phase dot] (0) at (0, 0) {$e^{i\pi/4}$};
		\node [style=none] (1) at (-1.5, 0) {};
		\node [style=none] (2) at (1.5, 0) {};
	\end{pgfonlayer}
	\begin{pgfonlayer}{edgelayer}
		\draw (1.center) to (2.center);
	\end{pgfonlayer}
\end{tikzpicture}
}}\\[0.75em]
         \scalebox{\scalefig}{{\input{./figures/gate-set/G2.tikz}}} & $\eq$ &  \scalebox{\scalefig}{{\input{./figures/gate-set/G2-EZW.tikz}}}\\[0.75em]
         \scalebox{\scalefig}{{\input{./figures/gate-set/G3.tikz}}} & $\eq$ &  \scalebox{\scalefig}{{\input{./figures/gate-set/G3-EZW.tikz}}}\\[0.75em]
         \scalebox{\scalefig}{{\input{./figures/gate-set/G4.tikz}}} & $\eq$ &  \scalebox{\scalefig}{{\input{./figures/gate-set/G4-EZW.tikz}}}\\[0.75em]
        \end{tabular}\\[1em]}
        \raggedright

        The gray background stands for ambient $\owedge$ structure, while the $\otimes$ structure is pictured with white background.
        Recall that the $\owedge$ and $\otimes$ structure coincide on \ELFM.
        \raggedright
        \begin{example}[An even LFM circuit translated in $\EZW$]

        Consider the following LFM circuit, which can be translated in $\EZW$ as follows
        because the $\owedge$ and $\otimes$ structure coincide on \ELFM.
        \raggedright
        \medskip

        {\centering    
        \scalebox{\scalefig}{{\input{./figures/circuit-examples/EZW_1.tikz}}}\\[1em]
        $\downarrow$\\[1em]
        \scalebox{\scalefig}{{\input{./figures/circuit-examples/EZW_2.tikz}}}\\[1em]
        $\downarrow$\\[1em]
        \scalebox{\scalefig}{{\input{./figures/circuit-examples/EZW_3.tikz}}}\\
        }
        \raggedright
        \medskip
        \end{example}

        Note that $\scalebox{\scalefig}{{}}$ is only involved in the expression for $G_4$. As a consequence, 
        studying the fragment generated by the other generators of $\EZW$ is of interest to manipulate circuits of 
        gate set $\{\G_1, \G_2, \G_3\}$. We propose the following axiomatization of this fragment, denoted $\EpW$.

        \begin{definition}[\EpW]
            \raggedright
            \EpW is the following prop theory with canonical symmetry $\pi$:\\[0.75em]
            {\centering
            Generators of \EpW \\[1.5em]
            \begin{tabular}{c c c c c c c}
                $\scalebox{\scalefig}{{}}$
                & $\scalebox{\scalefig}{{}}$
                & $\scalebox{\scalefig}{{\begin{tikzpicture}
	\begin{pgfonlayer}{nodelayer}
		\node [style=none] (1) at (1, 0.5) {};
		\node [style=none] (2) at (-1.25, 0) {};
		\node [style=lw] (3) at (-0.25, 0) {};
		\node [style=none] (4) at (1, -0.5) {};
	\end{pgfonlayer}
	\begin{pgfonlayer}{edgelayer}
		\draw [in=-45, out=180] (4.center) to (3);
		\draw (2.center) to (3);
		\draw [in=45, out=180] (1.center) to (3);
	\end{pgfonlayer}
\end{tikzpicture}
}}$
                & $\scalebox{\scalefig}{{\begin{tikzpicture}
	\begin{pgfonlayer}{nodelayer}
		\node [style=none] (1) at (-1.25, 0.5) {};
		\node [style=none] (2) at (1, 0) {};
		\node [style=rw] (3) at (0, 0) {};
		\node [style=none] (4) at (-1.25, -0.5) {};
	\end{pgfonlayer}
	\begin{pgfonlayer}{edgelayer}
		\draw [in=-135, out=0] (4.center) to (3);
		\draw (2.center) to (3);
		\draw [in=135, out=0] (1.center) to (3);
	\end{pgfonlayer}
\end{tikzpicture}
}}$
                & $\scalebox{\scalefig}{{\begin{tikzpicture}
	\begin{pgfonlayer}{nodelayer}
		\node [style=none] (0) at (-0.25, 0.5) {};
		\node [style=none] (1) at (0.25, 0.5) {};
		\node [style=none] (2) at (0.25, -0.5) {};
		\node [style=none] (3) at (-0.25, -0.5) {};
	\end{pgfonlayer}
	\begin{pgfonlayer}{edgelayer}
		\draw (3.center) to (2.center);
		\draw (0.center) to (1.center);
		\draw [bend right=270, looseness=2.50] (1.center) to (2.center);
	\end{pgfonlayer}
\end{tikzpicture}
}}$
                & $\scalebox{\scalefig}{{\begin{tikzpicture}
	\begin{pgfonlayer}{nodelayer}
		\node [style=none] (0) at (0.5, 0.5) {};
		\node [style=none] (1) at (0, 0.5) {};
		\node [style=none] (2) at (0, -0.5) {};
		\node [style=none] (3) at (0.5, -0.5) {};
	\end{pgfonlayer}
	\begin{pgfonlayer}{edgelayer}
		\draw (3.center) to (2.center);
		\draw (0.center) to (1.center);
		\draw [bend left=270, looseness=2.50] (1.center) to (2.center);
	\end{pgfonlayer}
\end{tikzpicture}
}}$
                & $\scalebox{\scalefig}{{\begin{tikzpicture}
	\begin{pgfonlayer}{nodelayer}
		\node [style=white phase dot] (0) at (0, 0) {$x$};
		\node [style=none] (1) at (-1, 0) {};
		\node [style=none] (2) at (1, 0) {};
	\end{pgfonlayer}
	\begin{pgfonlayer}{edgelayer}
		\draw (1.center) to (2.center);
	\end{pgfonlayer}
\end{tikzpicture}
}}$
            \end{tabular}\\[2em]
            \centering
            Equations of \EpW~are pictured on \cref{fig:equational-theories}}
            \centering
            \\
        \end{definition}

        Another fragment of interest is obtained by removing cups and caps, which are
        only needed to express $G_3$. We call it $\EaW$, for \emph{Even oriented W}:
        in the absence of cups and caps, every diagram admits an unambiguous global
        input-to-output orientation, with no wire bending from an output back towards an input.

    \begin{definition}[\EaW]
    \raggedright
    \EaW is the following prop theory with canonical symmetry $\Pi$:\\[0.75em]
        {\centering
        Generators of \EaW\\[1em]
        \begin{tabular}{c c c c c c}
            $\scalebox{\scalefig}{{}}$
            & $\scalebox{\scalefig}{{}}$
            & $\scalebox{\scalefig}{{\begin{tikzpicture}
	\begin{pgfonlayer}{nodelayer}
		\node [style=none] (1) at (1, 0.5) {};
		\node [style=none] (2) at (-1.25, 0) {};
		\node [style=lw] (3) at (-0.25, 0) {};
		\node [style=none] (4) at (1, -0.5) {};
	\end{pgfonlayer}
	\begin{pgfonlayer}{edgelayer}
		\draw [in=-45, out=180] (4.center) to (3);
		\draw (2.center) to (3);
		\draw [in=45, out=180] (1.center) to (3);
	\end{pgfonlayer}
\end{tikzpicture}
}}$
            & $\scalebox{\scalefig}{{\begin{tikzpicture}
	\begin{pgfonlayer}{nodelayer}
		\node [style=none] (1) at (-1.25, 0.5) {};
		\node [style=none] (2) at (1, 0) {};
		\node [style=rw] (3) at (0, 0) {};
		\node [style=none] (4) at (-1.25, -0.5) {};
	\end{pgfonlayer}
	\begin{pgfonlayer}{edgelayer}
		\draw [in=-135, out=0] (4.center) to (3);
		\draw (2.center) to (3);
		\draw [in=135, out=0] (1.center) to (3);
	\end{pgfonlayer}
\end{tikzpicture}
}}$
            & $\scalebox{\scalefig}{{\begin{tikzpicture}
	\begin{pgfonlayer}{nodelayer}
		\node [style=white phase dot] (0) at (0, 0) {$x$};
		\node [style=none] (1) at (-1, 0) {};
		\node [style=none] (2) at (1, 0) {};
	\end{pgfonlayer}
	\begin{pgfonlayer}{edgelayer}
		\draw (1.center) to (2.center);
	\end{pgfonlayer}
\end{tikzpicture}
}}$
        \end{tabular}\\[2em]
        Equations of \EaW~are pictured on \cref{fig:equational-theories}}
        \centering\\
        \raggedright
    \end{definition}

    Although the structure of $\EaW$ is very simple, being the smallest fragment carrying the Hopf algebra formed by $
    \left(\scalebox{\scalefig}{{}},\scalebox{\scalefig}{{}}\right)$
    and 
    $\left(\scalebox{\scalefig}{{}},\scalebox{\scalefig}{{}}\right)$
    with antipode
    $\scalebox{\scalefig}{{\begin{tikzpicture}
	\begin{pgfonlayer}{nodelayer}
		\node [style=white phase dot] (0) at (0, 0) {$-1$};
		\node [style=none] (1) at (-1, 0) {};
		\node [style=none] (2) at (1, 0) {};
	\end{pgfonlayer}
	\begin{pgfonlayer}{edgelayer}
		\draw (1.center) to (2.center);
	\end{pgfonlayer}
\end{tikzpicture}
}}$
    \cite{Hadzihasanovic2015}, it is sufficient to manipulate circuits of gate set $\{G_1, G_2\}$.\\
    \smallskip

    We have introduced a hierarchy of prop theories:
    \[  
        \begin{tikzcd}
            \EaW \arrow[rr, hook] &  & \EpW \arrow[rr, hook] &  & \EZW
        \end{tikzcd}
    \]
    which can be used to deduce rewrites on LFM circuits on increasingly smaller subsets of the gate set :
    \[
    \left\{ \G_1, \G_2\right\} \quad\subset\quad \left\{ \G_1, \G_2, \G_3\right\} \quad\subset\quad \left\{ \G_1, \G_2, \G_3, \G_4 \right\}
    \]
    What if we want the same for LFM circuits containing odd gates, such as
    $\scalebox{\scalefig}{{\input{./figures/gate-set/F.tikz}}}$, introduced in \cref{subsec:2-3} ?
    Then, the $\owedge$ and $\otimes$ structures do not coincide anymore, leading to a different way
    of translating LFM circuits into diagrams.
    \begin{example}[An LFM circuit with odd $F$ gates, with $\owedge$ and $\otimes$]

        Consider the following LFM circuit. Due to \cref{lem:mon-sym-tradeoff} 
        on the relation between the $\owedge$ and $\otimes$ structures, we have:
        \raggedright
        \medskip

        {\centering    
        \scalebox{\scalefig}{{\input{./figures/circuit-examples/FZW_1.tikz}}}\\[1em]
        $\downarrow$\\[1em]
        \scalebox{\scalefig}{{\input{./figures/circuit-examples/FZW_2.tikz}}}\\[1em]
        }
        \raggedright
        \medskip
        \end{example}

        Furthermore, with respect to $\otimes$, $\Pi$ is only presymmetric,
        and therefore $(\LFM, \otimes, \Pi)$ is not a prop, which calls for a relaxation of the prop
        notion : namely, \emph{pronaps} and their theories.

\subsection{Pronaps and their theories}

    The categories generated by fermionic gates do not have to be a prop: they may be only presymmetric monoidal. 
    This leads to the definition of a relaxed version of props: the \textbf{pronap}s.

    \begin{definition}[pronap]
    A \textbf{pronap} (for PROduct and Non-nAtural Permutations) is a presymmetric monoidal category whose monoid of objects is freely generated by a single object $1$.

    The $n$-fold monoidal product of $1$ is denoted $n$, and the monoidal product on arrows is denoted $\otimes$. The generating presymmetry is denoted $\pi$.
    \end{definition}

    \begin{proposition}
    $(\LFM, \otimes, \Pi)$ is a pronap.
    \end{proposition}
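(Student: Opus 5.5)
We need to check three things: $(\LFM,\otimes)$ is a strict monoidal category whose monoid of objects is $(\mathbb N,+,0)$, freely generated by $1$; $\Pi$ restricts to $\LFM$; and the restricted family is a presymmetry in the sense of the definition.

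\textbf{Strict monoidal structure.} We take the strict monoidal structure of $(\Qubit,\otimes)$, with objects $\mathbb N$, $m\otimes n=m+n$, unit $0$, and $\C^{2^m}\otimes\C^{2^n}=\C^{2^{m+n}}$ via the Kronecker product. The object monoid $(\mathbb N,+,0)$ is freely generated by $1$, so it only remains to show that $\LFM$ is a monoidal subcategory.

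\textbf{Closure of $\LFM$.} Identities are even, since their only nonzero coefficients have $x=y$, hence $|x|+|y|$ is even. For composition, take homogeneous $f:m\to n$ and $g:n\to k$. A nonzero coefficient of $g\circ f$ at $\ketbra{z}{x}$ forces some $y$ with $f_{xy}\neq0$ and $g_{yz}\neq0$. Then $|x|+|z|\equiv(|x|+|y|)+(|y|+|z|)$, so $g\circ f$ is homogeneous of parity $|f|+|g|$. For the tensor product, a nonzero coefficient of $f\otimes g$ at $\ketbra{yy'}{xx'}$ has total weight $(|x|+|y|)+(|x'|+|y'|)$. So $f\otimes g$ is homogeneous of parity $|f|+|g|$. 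In both cases the only care needed is that a sum of homogeneous terms of the same parity stays homogeneous, which is immediate coefficientwise.

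\textbf{The presymmetry.} By \cref{prop:graded-swap}, $\Pi_{m,n}$ sends $\ket{x}\otimes\ket{y}$ to $\pm\ket{y}\otimes\ket{x}$, which has the same Hamming weight. Hence $\Pi_{m,n}$ is even and lies in $\LFM$. It is an isomorphism because $\Pi_{n,m}\Pi_{m,n}=\mathrm{id}$: the sign $(-1)^{|x||y|}$ appears twice. By \cref{prop:nature-LFM}, $(\Pi_{m,n})$ is already a presymmetry on $(\Qubit,\otimes)$. The involution and Yang--Baxter equations are equations between morphisms that all lie in $\LFM$, so they hold in $\LFM$. Each $\Pi_{m,n}$ is even, so by \cref{prop:nature-LFM} it lies in the Nature of $\Pi$. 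The presymmetry is therefore generated by the single map $\pi=\Pi_{1,1}$ in the one-generated strict sense.

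\textbf{Centrality requirement.} The definition of presymmetry asks for central isomorphisms. In a monoidal category every morphism is central, since interchange holds, so this condition is automatic. This is the only subtle point, because centrality is phrased for premonoidal categories. We view the strict monoidal category $(\LFM,\otimes)$ as the premonoidal category whose whiskerings are $A\otimes-$ and $-\otimes A$; bifunctoriality of $\otimes$ then gives commutation of any two morphisms.

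Together these show that $(\LFM,\otimes,\Pi)$ is a presymmetric monoidal category on $\mathbb N$ freely generated by $1$, that is, a pronap.
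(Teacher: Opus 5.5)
Your proof is correct and follows the paper's route. The paper gives no separate proof here: it treats the proposition as immediate from the remark that $\otimes$ restricts to \textbf{LFM} together with \cref{prop:nature-LFM}, and you spell out the two steps it leaves implicit, namely closure of homogeneous maps under $\circ$ and $\otimes$, and automatic centrality in the monoidal setting. One small correction: the fact that $\Pi_{m,n}$ is built from $\Pi_{1,1}$ does not follow from $\Pi_{m,n}$ lying in the Nature. It follows from the adjacent-swap decomposition in the proof of \cref{prop:graded-swap}, or directly from $(-1)^{|x||y|}=\prod_{i,j}(-1)^{x_iy_j}$.
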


    As with props, we want to axiomatize pronaps via sets of generators and equations: a \textbf{pronap theory}. So as 
    to avoid overloading the set of equations with a naturality equation for each natural generator, we differentiate 
    natural and non-natural generators in the formal definition of these theories.

    \begin{definition}[pronap theory]
    A pronap theory is a pair $\T = (\Sigma, E)$ consisting of a set of generators $\Sigma$ and a set of equations $E$:
    \begin{itemize}
        \item $\Sigma$ contains typed generators $g: n \to m$ partitioned into two sets $\Sigma_{nat}$ and $\overline{\Sigma}_{nat}$, containing respectively natural and non-natural generators.
        \item Formal diagrams are defined inductively as follows:
        \begin{itemize}[label=\emph{|}]
            \item a generator $g \in \Sigma$,
            \item the identity $\mathrm{id}_1: 1 \to 1$,
            \item the generating presymmetry $\pi: 2 \to 2$,
            \item $D \circ D' : n \to m$, the composition of two formal diagrams $D: k \to m$, $D': n \to k$,
            \item the monoidal product $D \otimes D': n + n' \to m + m'$ for $D: n \to m$, $D': n' \to m'$ formal diagrams.
        \end{itemize}
        \item $E$ contains pairs of formal diagrams $(D, D')$ of the same type $n \to m$.
    \end{itemize}
    \end{definition}

    Given a pronap theory $(\Sigma_{nat} \cup \overline{\Sigma}_{nat}, E)$, there is a corresponding pronap generated by $\Sigma_{nat} \cup \overline{\Sigma}_{nat}$ and quotiented by $E$ and the axioms of presymmetric monoidal categories, including the naturality equations for all generators $g \in \Sigma_{nat}$. In practice, we denote both the theory and the corresponding pronap $\T$ and rely on the context to disambiguate.

    \begin{definition}[Semantics of a pronap theory]
    Let $\T$ be a pronap theory, $P$ a pronap. A semantics of $\T$ in $P$ is a monoidal functor *
    $\interp{\cdot}_P: \T \to P$ such that $\interp{\pi_\T}_P = \pi_P$.

    We say that $\T$ is \textbf{universal} for $P$ when $\interp{\cdot}_P$ is full, and \textbf{complete} for $P$ when $\interp{\cdot}_P$ is faithful.
    \end{definition}

    \begin{theorem}[De Felice, Hadžihasanović, Ng, \cite{DeFeliceHadzihasanovicNg2019}]
            \raggedright
        There is a complete pronap theory denoted $\FZW$ of $(\LFM, \otimes, \Pi)$ with generators :\\[0.75em]
        {\centering\begin{tabular}{ c c c c c c c c c}
        
        & $\scalebox{\scalefig}{{}}$
        & $\scalebox{\scalefig}{{}}$
        & $\scalebox{\scalefig}{{}}$
        & $\scalebox{\scalefig}{{}}$\\[1em]
        $\Sigma_{\text{nat}}$ \qquad
        & $\scalebox{\scalefig}{{}}$
        &  $\scalebox{\scalefig}{{}}$ 
        &  $\scalebox{\scalefig}{{}}$ 
        &  $\scalebox{\scalefig}{{}}$ \\
        \end{tabular}\\[0.75em]
        \medskip
        \begin{tabular}{c c}
                $\overline{\Sigma}_{\text{nat}}$ \quad
                & $\scalebox{\scalefig}{{\begin{tikzpicture}
	\begin{pgfonlayer}{nodelayer}
		\node [style=none] (1) at (1, 0) {};
		\node [style=dot] (3) at (0, 0) {};
		\node [style=none] (4) at (-1, 0) {};
	\end{pgfonlayer}
	\begin{pgfonlayer}{edgelayer}
		\draw (3) to (4.center);
		\draw (1.center) to (3);
	\end{pgfonlayer}
\end{tikzpicture}
}}$
            \end{tabular}\\[2em]}
        \raggedright
    \end{theorem}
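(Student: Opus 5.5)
The plan is to obtain this statement by transporting the completeness result of De Felice, Hadžihasanović and Ng~\cite{DeFeliceHadzihasanovicNg2019} across a change of presentation. This mirrors what was done for $\EZW$ above. Their fermionic ZW-calculus is presented as an ordinary prop: the wire crossing is the plain swap, and the odd generator (the black dot, interpreted as an odd state, e.g.\ $\ket{1}$) is natural for that swap. Our pronap theory $\FZW$ takes instead the presymmetry $\pi$, interpreted as $\Pi_{1,1}$, as primitive. Naturality with respect to $\pi$ is imposed only for the even generators in $\Sigma_{nat}$, and the dot is placed in $\overline{\Sigma}_{nat}$. I would define two syntactic translations $T$ (original $\to$ pronap) and $S$ (pronap $\to$ original), both commuting with the respective semantic functors into $\LFM$. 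From these, completeness follows formally.

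\emph{Step 1 (translations).} $S$ sends every generator to itself, and sends $\pi$ to the original calculus's fermionic swap, i.e.\ the plain swap composed with a controlled $-1$ phase. In ZW terms this is the crossing decorated with the $4$-legged $x=-1$ node. In the other direction, $T$ sends every generator to itself and sends the plain swap to the diagram displayed earlier: $\pi$ composed with the W/phase-$(-2)$ gadget, whose interpretation is $\ketbra{11}{11}\mapsto -\ketbra{11}{11}$ on the fermionic swap. Soundness of both translations is a finite matrix check on generators, using \cref{prop:graded-swap}.

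\emph{Step 2 (well-definedness).} I must show that $S$ respects the pronap axioms and $T$ respects the original axioms.
\begin{itemize}
\item For $S$, the involution and Yang--Baxter equations for $\pi$ and the $\pi$-naturality of each even generator must be derived in the original calculus. Each of these only uses the Hopf/bialgebra rules together with the phase-node rules.
\item For $T$, each original axiom is translated and then derived in the pronap theory. The delicate cases are the symmetric monoidal axioms, in particular naturality of the plain swap for the dot. In the pronap syntax this becomes: the dot slides through $\pi$ at the cost of a parity phase $\mathsf P_1$ on the crossed wire, in accordance with \cref{lem:LFM-whiskering}. The phase gadget in $T(\text{swap})$ must absorb exactly this sign.
\end{itemize}

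\emph{Step 3 (round trip and conclusion).} I will show $S(T(D))=D$ in the original calculus and $T(S(D))=D$ in $\FZW$. It suffices to check this on $\pi$ and on the plain swap, since it is the identity on the other generators. This reduces to the identity that the $-1$ controlled phase composed twice is trivial, plus a small Hopf-law computation.

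Completeness then follows. Suppose $\interp{D_1}=\interp{D_2}$. Then $\interp{S D_1}=\interp{S D_2}$, so $SD_1=SD_2$ in the original calculus by \cite{DeFeliceHadzihasanovicNg2019}. Applying $T$ and Step~3 gives $D_1=TSD_1=TSD_2=D_2$ in $\FZW$.

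\emph{Main obstacle.} I expect the main obstacle to be Step~2 for $T$: deriving, without naturality of the dot, the swap-naturality of the odd generator in every configuration. This includes its interaction with cups and caps, where orientation conventions introduce further signs. My first attempt would be to prove one local lemma, stating that the dot passes through $\pi$ with a $-1$ phase on the other wire. All odd-naturality instances would then be reduced to that lemma plus even naturality.
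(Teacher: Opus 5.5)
Your strategy is the one the paper relies on: transport the completeness of the original fermionic ZW-calculus across a change of presentation, using sound translations $S$, $T$ and a round trip. However, you have carried out only half of that change of generators. Besides replacing the plain swap by $\pi$ and the $4$-legged node, the paper replaces the original odd generator. That generator is the black spider of \emph{arbitrary} arity, with $\interp{\cdot}=\sum_{|xy|=1}\ketbra{x}{y}$. It is replaced by the \emph{even} W nodes of \textbf{EZW} (e.g.\ $\ketbra{00}{0}+\ketbra{01}{1}+\ketbra{10}{1}$, with units $\ket{0}$, $\bra{0}$) plus a single odd $1\to1$ dot. That dot denotes $\ketbra{0}{1}+\ketbra{1}{0}$ (the mode-flip $F$), not a state $\ket{1}$.

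So ``every other generator is sent to itself'' is false in both directions. $T$ must send the $n\to m$ black spider to the syntactic sugar displayed in the paper: an even-W tree merging the inputs with a padding $\ket{0}$, then the dot, then an even-W tree splitting into the outputs and a $\bra{0}$. $S$ must send each even W node to an odd black node composed with the black $1\to1$ node; indeed $(\ketbra{01}{0}+\ketbra{10}{0}+\ketbra{00}{1})(\ketbra{0}{1}+\ketbra{1}{0})=\ketbra{00}{0}+\ketbra{01}{1}+\ketbra{10}{1}$. Consequently your Step~3 reduction (``it suffices to check $\pi$ and the plain swap'') fails. You must also establish $T(S(g))=g$ in \textbf{FZW} for the W nodes, and for any other \textbf{EZW}-style generator that is not primitive in the original calculus.

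Second, the statement only asserts that \emph{some} complete set of equations exists, and you never fix the equations of \textbf{FZW}. So ``derive it in the pronap theory'' in Step~2 has no content yet. The choice implicit in the paper (``the original rules translate'') makes everything hold by construction. Let $E$ consist of:
\begin{itemize}
\item the $T$-images of the original axioms;
\item the $T$-images of the original's structural prop axioms, in particular naturality of the plain swap for every generator, odd spiders included;
\item the round-trip equations $T(S(g))=g$ for every generator $g$ of \textbf{FZW}, together with $T(S(\pi))=\pi$.
\end{itemize}
All of these are sound because $S$ and $T$ are. With this $E$, your ``main obstacle'' disappears.

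The $S$-half of Step~2 is superfluous in any case. $S$ only needs to be a sound map on terms, since completeness of the original calculus already yields $SD_1=SD_2$. If you instead want a specific, smaller $E$, you must state it, and those derivations become the real work.

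A minor slip: the $4$-legged node with $x=-1$ denotes $\ketbra{00}{00}-\ketbra{11}{11}$, not the controlled $-1$ phase. The controlled $-1$ phase is the W gadget with the $-2$ node, which the paper uses to express the plain swap.
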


    Note that our set of generators is again different from the one in \cite{DeFeliceHadzihasanovicNg2019}. As for 
    $\EZW$, we get rid of the usual swap $\scalebox{\scalefig}{{\begin{tikzpicture}
	\begin{pgfonlayer}{nodelayer}
		\node [style=none] (1) at (-1, 0.5) {};
		\node [style=none] (2) at (1, 0.5) {};
		\node [style=none] (3) at (-1, -0.5) {};
		\node [style=none] (4) at (1, -0.5) {};
		\node [style=none] (9) at (0.75, -0.5) {};
		\node [style=none] (10) at (0.75, 0.5) {};
		\node [style=none] (11) at (-0.75, 0.5) {};
		\node [style=none] (12) at (-0.75, -0.5) {};
	\end{pgfonlayer}
	\begin{pgfonlayer}{edgelayer}
		\draw [in=180, out=0] (11.center) to (9.center);
		\draw (1.center) to (11.center);
		\draw (9.center) to (4.center);
		\draw (2.center) to (10.center);
		\draw [in=0, out=-180] (10.center) to (12.center);
		\draw (12.center) to (3.center);
	\end{pgfonlayer}
\end{tikzpicture}
}}$,
    both as a paradigmatic symmetry and as a generator, and replace it by:
    \[
    \scalebox{\scalefig}{{}}
    \qquad
    \text{and}
    \qquad
    \scalebox{\scalefig}{{\begin{tikzpicture}
	\begin{pgfonlayer}{nodelayer}
		\node [style=none] (14) at (-1, 0.5) {};
		\node [style=none] (15) at (-1, -0.5) {};
		\node [style=none] (17) at (1, 0.5) {};
		\node [style=none] (18) at (1, -0.5) {};
		\node [style=white phase dot] (19) at (0, 0) {$x$};
	\end{pgfonlayer}
	\begin{pgfonlayer}{edgelayer}
		\draw [in=45, out=180] (17.center) to (19);
		\draw [in=-45, out=180] (18.center) to (19);
		\draw [in=135, out=0] (14.center) to (19);
		\draw [in=-135, out=0] (15.center) to (19);
	\end{pgfonlayer}
\end{tikzpicture}
}}
    \]
    The original presentation of the calculus uses the generator
    \[\interp{\scalebox{\scalefig}{{\begin{tikzpicture}
	\begin{pgfonlayer}{nodelayer}
		\node [style=none] (1) at (-1, 0.75) {};
		\node [style=none] (2) at (-1, 0.25) {};
		\node [style=none] (3) at (-1, -0.75) {};
		\node [style=none] (4) at (-0.75, 0) {\vdots};
		\node [style=none] (6) at (1, 0.75) {};
		\node [style=none] (7) at (1, 0.25) {};
		\node [style=none] (8) at (1, -0.75) {};
		\node [style=none] (9) at (0.75, 0) {\vdots};
		\node [style=dot] (10) at (0, 0) {};
	\end{pgfonlayer}
	\begin{pgfonlayer}{edgelayer}
		\draw [bend right] (6.center) to (10);
		\draw [bend right=15] (7.center) to (10);
		\draw [bend left] (8.center) to (10);
		\draw [bend left] (1.center) to (10);
		\draw [bend left=15] (2.center) to (10);
		\draw [bend right] (3.center) to (10);
	\end{pgfonlayer}
\end{tikzpicture}
}}}_\LFM = \sum_{\substack{xy~ \in \{0,1\}^{n+m} \\ |xy| =1}}\ketbra{x}{y}\]
    We choose to stick with the generators of $\EZW$, only adding one non-natural new generator $\scalebox{\scalefig}{{\begin{tikzpicture}
	\begin{pgfonlayer}{nodelayer}
		\node [style=none] (1) at (1, 0) {};
		\node [style=dot] (3) at (0, 0) {};
		\node [style=none] (4) at (-1, 0) {};
	\end{pgfonlayer}
	\begin{pgfonlayer}{edgelayer}
		\draw (3) to (4.center);
		\draw (1.center) to (3);
	\end{pgfonlayer}
\end{tikzpicture}
}}$, 
    which allows us to recover the same expressivity, as $\scalebox{\scalefig}{{}}$ can be expressed as a syntactic sugar
    as follows:
    \[\scalebox{\scalefig}{{}} \quad := \quad \scalebox{\scalefig}{{\begin{tikzpicture}
	\begin{pgfonlayer}{nodelayer}
		\node [style=dot] (3) at (0, 0) {};
		\node [style=none] (6) at (3.75, 1) {};
		\node [style=lw] (7) at (0.75, 0) {};
		\node [style=none] (10) at (3.75, 0.5) {};
		\node [style=lw] (11) at (1.5, -0.5) {};
		\node [style=none] (12) at (2.25, -1.25) {};
		\node [style=none] (14) at (4.25, -1.5) {};
		\node [style=none] (15) at (3.25, -1.25) {};
		\node [style=lw] (16) at (3.75, -1.25) {};
		\node [style=none] (17) at (5, -0.75) {};
		\node [style=none] (18) at (5, 1) {};
		\node [style=none] (19) at (5, 0.5) {};
		\node [style=none] (20) at (5, 0) {$\vdots$};
		\node [style=none] (21) at (2.75, -1.25) {$\dots$};
		\node [style=none] (23) at (-3.75, 1) {};
		\node [style=none] (27) at (-3.75, 0.5) {};
		\node [style=none] (29) at (-2.25, -1.25) {};
		\node [style=none] (31) at (-4.25, -1.5) {};
		\node [style=none] (32) at (-3.25, -1.25) {};
		\node [style=none] (34) at (-5, -0.75) {};
		\node [style=none] (35) at (-5, 1) {};
		\node [style=none] (36) at (-5, 0.5) {};
		\node [style=none] (37) at (-5, 0) {$\vdots$};
		\node [style=none] (38) at (-2.75, -1.25) {$\dots$};
		\node [style=rw] (39) at (-0.75, 0) {};
		\node [style=rw] (40) at (-1.5, -0.5) {};
		\node [style=rw] (41) at (-3.75, -1.25) {};
		\node [style=rw] (42) at (-4.5, -1.5) {};
		\node [style=lw] (43) at (4.5, -1.5) {};
	\end{pgfonlayer}
	\begin{pgfonlayer}{edgelayer}
		\draw [in=-45, out=180] (12.center) to (11);
		\draw [in=45, out=-180] (17.center) to (16);
		\draw (19.center) to (10.center);
		\draw (18.center) to (6.center);
		\draw (36.center) to (27.center);
		\draw (35.center) to (23.center);
		\draw (7) to (3);
		\draw [in=45, out=-180] (6.center) to (7);
		\draw [in=-45, out=180, looseness=1.25] (11) to (7);
		\draw [in=45, out=-180] (10.center) to (11);
		\draw [in=-45, out=180] (14.center) to (16);
		\draw (15.center) to (16);
		\draw [in=-135, out=0] (29.center) to (40);
		\draw [in=135, out=0] (34.center) to (41);
		\draw (3) to (39);
		\draw [in=135, out=0] (23.center) to (39);
		\draw [in=-135, out=0, looseness=1.25] (40) to (39);
		\draw [in=135, out=0] (27.center) to (40);
		\draw [in=-135, out=0] (31.center) to (41);
		\draw (32.center) to (41);
		\draw (43) to (14.center);
		\draw (42) to (31.center);
	\end{pgfonlayer}
\end{tikzpicture}
}}\]
    This presentation has the merits to reduce to one the number of non-natural generator of the pronap theory,
    which makes diagrammatic reasoning on $\FZW$ diagrams arguably easier. 
    Furthermore, the new generator $\scalebox{\scalefig}{{}}$ happens to be the translation
    in $\FZW$ of the $F$ gate:
    \[
    \scalebox{\scalefig}{{\input{./figures/gate-set/F.tikz}}}\quad \eq \quad \scalebox{\scalefig}{{}}
    \]
    As a result, it is possible to translate fermionic circuits on the gate-set $(\G_1, \G_2, \G_3, \G_4, F)$
    to $\FZW$ diagrams.

    \begin{example}[An LFM circuit with odd $F$ gates translated to $\FZW$]

        Consider the following LFM circuit, which can be translated in $\FZW$ as follows 
        due to \cref{lem:mon-sym-tradeoff} on the relation between the $\owedge$ and $\otimes$ structures:
        \raggedright
        \medskip

        {\centering    
        \scalebox{\scalefig}{{\input{./figures/circuit-examples/FZW_1.tikz}}}\\[1em]
        $\downarrow$\\[1em]
        \scalebox{\scalefig}{{\input{./figures/circuit-examples/FZW_2.tikz}}}\\[1em]
        $\downarrow$\\[1em]
        \scalebox{\scalefig}{{\input{./figures/circuit-examples/FZW_3.tikz}}}\\[1em]
        $\downarrow$\\[1em]
        \scalebox{\scalefig}{{\input{./figures/circuit-examples/FZW_4.tikz}}}\\[1em]
        }
        \raggedright
        \medskip
        \end{example}

    As before, the fragment obtained by removing the generator 
    $\scalebox{\scalefig}{{\begin{tikzpicture}
	\begin{pgfonlayer}{nodelayer}
		\node [style=none] (14) at (-1, 0.5) {};
		\node [style=none] (15) at (-1, -0.5) {};
		\node [style=none] (17) at (1, 0.5) {};
		\node [style=none] (18) at (1, -0.5) {};
		\node [style=white phase dot] (19) at (0, 0) {$x$};
	\end{pgfonlayer}
	\begin{pgfonlayer}{edgelayer}
		\draw [in=45, out=180] (17.center) to (19);
		\draw [in=-45, out=180] (18.center) to (19);
		\draw [in=135, out=0] (14.center) to (19);
		\draw [in=-135, out=0] (15.center) to (19);
	\end{pgfonlayer}
\end{tikzpicture}
}}$ is sufficient to 
    translate LFM circuits on the gate-set $(\G_1, \G_2, \G_3, F)$. This 
    fragment is called \FpW.

    \begin{definition}[$\FpW$]
        \raggedright
        $\FpW$ is the following pronap theory with canonical presymmetry $\scalebox{\scalefig}{\begin{tikzpicture}
	\begin{pgfonlayer}{nodelayer}
		\node [style=none] (1) at (-1, 0.5) {};
		\node [style=none] (2) at (1, 0.5) {};
		\node [style=none] (3) at (-1, -0.5) {};
		\node [style=none] (4) at (1, -0.5) {};
		\node [style=none] (9) at (0.75, -0.5) {};
		\node [style=none] (10) at (0.75, 0.5) {};
		\node [style=none] (11) at (-0.75, 0.5) {};
		\node [style=none] (12) at (-0.75, -0.5) {};
		\node [style=fswap] (13) at (0, 0) {};
	\end{pgfonlayer}
	\begin{pgfonlayer}{edgelayer}
		\draw [in=180, out=0] (11.center) to (9.center);
		\draw (1.center) to (11.center);
		\draw (9.center) to (4.center);
		\draw (2.center) to (10.center);
		\draw [in=0, out=-180] (10.center) to (12.center);
		\draw (12.center) to (3.center);
	\end{pgfonlayer}
\end{tikzpicture}
}$:\\[0.75em]
        \begin{adjustwidth}{-0cm}{0cm}
              {\centering\begin{tabular}{ c c c c c c c c c}
          $\Sigma_{\text{nat}}$ \qquad
        & $\scalebox{\scalefig}{{}}$
        & $\scalebox{\scalefig}{{}}$
        & $\scalebox{\scalefig}{{}}$
        & $\scalebox{\scalefig}{{}}$\\[1em]
        & $\scalebox{\scalefig}{{}}$
        &  $\scalebox{\scalefig}{{}}$ 
        &  $\scalebox{\scalefig}{{}}$ \\
        \end{tabular}\\[0.75em]
        \medskip
        \begin{tabular}{c c}
                $\overline{\Sigma}_{\text{nat}}$ \quad
                & $\scalebox{\scalefig}{{}}$
            \end{tabular}\\[2em]}
           \end{adjustwidth}
           \centering {
           Equations of \FpW~are pictured on \cref{fig:equational-theories}
           }\\

        \raggedright
    \end{definition}
    
    The pronap theory \FpW~ is an axiomatisation of the exact same category that was studied by Carette et al 
    in \cite{CaretteEtAl2023} through a \emph{pro} theory called \pW~, that is to say an equational theory with 
    no paradigmatic symmetry or presymmetry whatsoever.

    \begin{theorem}[Carette, Moutot, Perez, Vilmart]
    There is a complete pro theory $\pW$ and generators:
    
    {\centering
   \begin{tabular}{c c c c}
    $\scalebox{\scalefig}{{\begin{tikzpicture}
	\begin{pgfonlayer}{nodelayer}
		\node [style=white phase dot] (0) at (0, 0) {$x$};
		\node [style=none] (1) at (-1, 0) {};
		\node [style=none] (2) at (1, 0) {};
	\end{pgfonlayer}
	\begin{pgfonlayer}{edgelayer}
		\draw (1.center) to (2.center);
	\end{pgfonlayer}
\end{tikzpicture}
}}$
    & $\scalebox{\scalefig}{{\begin{tikzpicture}
	\begin{pgfonlayer}{nodelayer}
		\node [style=none] (0) at (-0.25, 0.5) {};
		\node [style=none] (1) at (0.25, 0.5) {};
		\node [style=none] (2) at (0.25, -0.5) {};
		\node [style=none] (3) at (-0.25, -0.5) {};
	\end{pgfonlayer}
	\begin{pgfonlayer}{edgelayer}
		\draw (3.center) to (2.center);
		\draw (0.center) to (1.center);
		\draw [bend right=270, looseness=2.50] (1.center) to (2.center);
	\end{pgfonlayer}
\end{tikzpicture}
}}$
    & $\scalebox{\scalefig}{{\begin{tikzpicture}
	\begin{pgfonlayer}{nodelayer}
		\node [style=none] (0) at (0.5, 0.5) {};
		\node [style=none] (1) at (0, 0.5) {};
		\node [style=none] (2) at (0, -0.5) {};
		\node [style=none] (3) at (0.5, -0.5) {};
	\end{pgfonlayer}
	\begin{pgfonlayer}{edgelayer}
		\draw (3.center) to (2.center);
		\draw (0.center) to (1.center);
		\draw [bend left=270, looseness=2.50] (1.center) to (2.center);
	\end{pgfonlayer}
\end{tikzpicture}
}}$ 
    & $\scalebox{\scalefig}{{}}$
    \end{tabular}\\[0.75em]
   }
    \raggedright
    \end{theorem}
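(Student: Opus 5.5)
The statement is a citation of \cite{CaretteEtAl2023}, so the plan is to reconstruct the structure of that completeness argument in our setting rather than rely on \FpW. Write $\interp{\cdot}_\LFM$ for the interpretation of \pW~ in $\LFM$ obtained from the semantics of the generators in \cref{fig:fragment-generators}. Since \pW~ is a pro theory, there is no symmetry, and composition and $\otimes$ are interpreted literally.

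\emph{Soundness.} This is a finite check: each axiom of \pW~ is verified by computing both sides as matrices in the occupation basis. Axioms with a parameter $x$ are polynomial identities in $x$, so they can be checked symbolically.

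\emph{Reduction to states.} The cup and cap satisfy the snake equations, so every diagram $D:n\to m$ is provably equal to a diagram obtained by bending all inputs into outputs, and this bending is invertible. It is then enough to prove completeness for states $0\to k$. Concretely: if $\interp{D}=\interp{D'}$, then the bent states have equal semantics, hence are provably equal, and unbending yields $D=D'$. Some care is needed, because bending wires past one another in a planar setting introduces no crossing but does introduce parity signs; the snake equations handle this planarly.

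\emph{Normal form.} We define a family of normal-form states, the rWGS-X forms. A normal form is a planar "W-graph state" specified by an antisymmetric weight matrix on the $k$ outputs. It is decorated by a layer of $X$-type (odd) flips recording the support of a reference basis vector, together with a scalar. The semantics of such a form is a matchgate state: its amplitudes are, up to the global scalar and the reference shift, the principal pfaffians of the weight matrix. There are then two things to prove.
\begin{enumerate}[(i)]
\item Every matchgate state, i.e.\ every state in the image of $\interp{\cdot}_\LFM$, is represented by exactly one reduced normal form. The reference vector is taken as a minimal-weight nonzero amplitude, and the weight matrix is then recovered from the ratios of the $2\times2$ pfaffians. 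This gives injectivity of the semantics on normal forms, using the classical fact that a matchgate state with a given nonzero reference amplitude is determined by its degree-two amplitudes.
\item Every state diagram rewrites to a normal form. This is proved by induction on the number of generators. Composing a normal form with a single generator, or with the cap joining two adjacent outputs, yields a diagram that can be rewritten back to normal form. The empty diagram and the phase scalar serve as base cases.
\end{enumerate}

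\emph{Main obstacle.} The hard part is step (ii) for the cap, that is, contracting two adjacent outputs. Semantically this is a Schur-complement or pfaffian-expansion step on the weight matrix. It must also handle the degenerate case in which the pivot entry is zero. In that case the reference vector changes and the $X$-layer must be updated, with signs dictated by planarity. I would first treat the generic case, with an invertible pivot, using the local-complementation-like pivot rule derived from the W bialgebra and loop axioms. The degenerate case is then reduced to it by pre-applying an $X$-flip that moves the reference vector, which is legitimate since the flip is invertible. Combining soundness, the state reduction, and the uniqueness and existence of normal forms yields completeness.
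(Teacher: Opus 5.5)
The paper gives no proof of this statement. It is imported from \cite{CaretteEtAl2023} and serves only to situate \textbf{FW}. Completeness for the same category is then proved in \textbf{FW} with a different normal form: a full-rank RCEF matrix arrow combined with a graph triangle. That argument does not by itself give derivability from the \textbf{pW} axioms. Your outline can therefore only be measured against the cited strategy (map--state duality and rWGS-X normal forms), which it reconstructs in broad strokes. As a proof, however, it has a step that fails.

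\textbf{The degenerate cap.} Take a normal form $v$ with reference $a$, scalar $\lambda$ and weight matrix $Z$. Cap two adjacent outputs $i,i+1$ with $a_i=a_{i+1}$, and let $a'$ be $a$ with these two positions removed. Then $w_{a'}=v_a+v_{a\oplus\{i,i+1\}}=\lambda(1\pm Z_{i,i+1})$ is a property of the resulting state $w$ alone.
\begin{itemize}
\item $X$-flips on other outputs commute with the cap. They only toggle the $X$-layer and leave $Z$ (up to signs) and this amplitude unchanged.
\item Flipping $i$ or $i+1$ just trades the cap for the odd anti-cap and yields the same $w$.
\end{itemize}
So no flip can cure a vanishing pivot. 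The normal form of $w$ must sit at a different reference and carry a different weight matrix.

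Expanding the $4\times 4$ pfaffians shows that, when $1\pm Z_{i,i+1}=0$, the amplitude $w_{a'\oplus\{j,k\}}$ equals, up to sign, $\lambda(Z_{ij}Z_{i+1,k}-Z_{ik}Z_{i+1,j})$. Hence either all these minors vanish and $w=0$, or $w$ must be re-based at some $a'\oplus\{j,k\}$ (equivalently, $v$ re-based before capping). The new weights are then a principal pivot (a pfaffian Schur complement) of $Z$, with the cap absorbed into its $(i,i+1)$ entry, on the block $\{i,i+1,j,k\}$. This change-of-reference rewrite alters the graph part, not just the $X$-layer. It has to be derived from the actual \textbf{pW} axioms, which your outline never states or uses.

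\textbf{Uniqueness.} The same missing rule undermines this part. A nonzero amplitude of minimal Hamming weight is not canonical: the $W$ state has $n$ of them. Reduced forms therefore need an explicit tie-break, e.g.\ the lexicographically least nonzero amplitude, encoded as vanishing conditions on $Z$ relative to the $X$-layer. Weight-minimality alone already forces $Z_{ij}=0$ for all flipped $i,j$. Existence must then land in this reduced class, which again requires re-basing. The zero state also needs its own normal form.

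Note that completeness only needs injectivity on reduced forms, obtained by reading off $\lambda$ and $Z$ at the canonical reference. The matchgate-identity fact you cite is a universality statement, not needed here.

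The paper's \textbf{FW} normal form builds exactly this bookkeeping in. The reference is the set of pivot rows of a full-rank RCEF matrix $M$, and $Z$ vanishes on those rows and columns. Re-basing is done by the kernel decomposition (L23) and the triangle-inversion lemmas (L21), (L35).

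Finally, bending wires with nested caps in a pro creates no crossings and hence no parity signs; the snake equations suffice.
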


    In \pW~, the fermionic swap $\scalebox{\scalefig}{}$ is only a syntactic sugar which is decomposed as: 
    \[\scalebox{\scalefig}{} \quad := \quad \scalebox{\scalefig}{\begin{tikzpicture}
	\begin{pgfonlayer}{nodelayer}
		\node [style=none] (1) at (-2, 0.5) {};
		\node [style=none] (2) at (2, 0.5) {};
		\node [style=none] (3) at (-2, -0.5) {};
		\node [style=none] (4) at (2, -0.5) {};
		\node [style=none] (9) at (1, -0.5) {};
		\node [style=none] (10) at (1, 0.5) {};
		\node [style=none] (11) at (-1, 0.5) {};
		\node [style=none] (12) at (-1, -0.5) {};
		\node [style=dot] (13) at (-1, 0.5) {};
		\node [style=dot] (14) at (-1.5, 0.5) {};
		\node [style=dot] (15) at (-1.5, -0.5) {};
		\node [style=dot] (16) at (-1, -0.5) {};
		\node [style=dot] (17) at (-0.25, 0) {};
		\node [style=dot] (18) at (0.25, 0) {};
		\node [style=dot] (19) at (1, -0.5) {};
		\node [style=dot] (20) at (1, 0.5) {};
		\node [style=dot] (21) at (1.5, 0.5) {};
		\node [style=dot] (22) at (1.5, -0.5) {};
		\node [style=white phase dot] (23) at (0, 0.75) {$-1$};
		\node [style=white phase dot] (24) at (0, -0.75) {$-1$};
	\end{pgfonlayer}
	\begin{pgfonlayer}{edgelayer}
		\draw (1.center) to (11.center);
		\draw (9.center) to (4.center);
		\draw (2.center) to (10.center);
		\draw (12.center) to (3.center);
		\draw (17) to (18);
		\draw (13) to (17);
		\draw (17) to (16);
		\draw (18) to (19);
		\draw (18) to (20);
		\draw [bend left, looseness=0.75] (13) to (20);
		\draw [bend right, looseness=0.75] (16) to (19);
	\end{pgfonlayer}
\end{tikzpicture}
}\]
    As a consequence, \pW~ diagrams are essentially open planar graphs, hence the name \textbf{planar W} calculus.
    The \FpW~ presentation of the same category has, on the other hand, a paradigmatic presymmetry $\scalebox{\scalefig}{}$
    and limits the number of non-natural generators with respect to this presymmetry to one : $\scalebox{\scalefig}{{}}$.\\

    \medskip

    Another interesting pronap can be obtained by adding non-natural generators $\scalebox{\scalefig}{{\begin{tikzpicture}
	\begin{pgfonlayer}{nodelayer}
		\node [style=none] (0) at (0.5, 0) {};
		\node [style=none] (1) at (0, 0) {};
		\node [style=dot] (4) at (-0.5, 0) {};
	\end{pgfonlayer}
	\begin{pgfonlayer}{edgelayer}
		\draw (0.center) to (1.center);
		\draw (1.center) to (4);
	\end{pgfonlayer}
\end{tikzpicture}
}}$ and $\scalebox{\scalefig}{{\begin{tikzpicture}
	\begin{pgfonlayer}{nodelayer}
		\node [style=none] (0) at (-0.5, 0) {};
		\node [style=none] (1) at (0, 0) {};
		\node [style=dot] (4) at (0.5, 0) {};
	\end{pgfonlayer}
	\begin{pgfonlayer}{edgelayer}
		\draw (0.center) to (1.center);
		\draw (1.center) to (4);
	\end{pgfonlayer}
\end{tikzpicture}
}}$ to $\EaW$,
    which can be understood as allowing preparation and postselection of states and effects $\ket{1}, \bra{1}$, that is of filled fermionic modes,
    to circuits on the gate set $(\G_1, \G_2)$.

    \begin{definition}[$\aW$]
        \raggedright
        $\aW$ is the following pronap theory with canonical presymmetry $\scalebox{\scalefig}{}$:\\[0.75em]
            {\centering
            Generators of $\aW$ \\[1em]
            \begin{tabular}{c c c c c c c}
                $\Sigma_{\text{nat}}$ \quad
                & $\scalebox{\scalefig}{{\begin{tikzpicture}
	\begin{pgfonlayer}{nodelayer}
		\node [style=white phase dot] (0) at (0, 0) {$x$};
		\node [style=none] (1) at (-1, 0) {};
		\node [style=none] (2) at (1, 0) {};
	\end{pgfonlayer}
	\begin{pgfonlayer}{edgelayer}
		\draw (1.center) to (2.center);
	\end{pgfonlayer}
\end{tikzpicture}
}}$
                & $\scalebox{\scalefig}{{\begin{tikzpicture}
	\begin{pgfonlayer}{nodelayer}
		\node [style=none] (2) at (-1.25, 0) {};
		\node [style=lw] (3) at (-0.25, 0) {};
	\end{pgfonlayer}
	\begin{pgfonlayer}{edgelayer}
		\draw (2.center) to (3);
	\end{pgfonlayer}
\end{tikzpicture}
}}$
                & $\scalebox{\scalefig}{{\begin{tikzpicture}
	\begin{pgfonlayer}{nodelayer}
		\node [style=none] (2) at (1, 0) {};
		\node [style=rw] (3) at (0, 0) {};
	\end{pgfonlayer}
	\begin{pgfonlayer}{edgelayer}
		\draw (2.center) to (3);
	\end{pgfonlayer}
\end{tikzpicture}
}}$
                & $\scalebox{\scalefig}{{\begin{tikzpicture}
	\begin{pgfonlayer}{nodelayer}
		\node [style=none] (1) at (1, 0.5) {};
		\node [style=none] (2) at (-1.25, 0) {};
		\node [style=lw] (3) at (-0.25, 0) {};
		\node [style=none] (4) at (1, -0.5) {};
	\end{pgfonlayer}
	\begin{pgfonlayer}{edgelayer}
		\draw [in=-45, out=180] (4.center) to (3);
		\draw (2.center) to (3);
		\draw [in=45, out=180] (1.center) to (3);
	\end{pgfonlayer}
\end{tikzpicture}
}}$
                & $\scalebox{\scalefig}{{\begin{tikzpicture}
	\begin{pgfonlayer}{nodelayer}
		\node [style=none] (1) at (-1.25, 0.5) {};
		\node [style=none] (2) at (1, 0) {};
		\node [style=rw] (3) at (0, 0) {};
		\node [style=none] (4) at (-1.25, -0.5) {};
	\end{pgfonlayer}
	\begin{pgfonlayer}{edgelayer}
		\draw [in=-135, out=0] (4.center) to (3);
		\draw (2.center) to (3);
		\draw [in=135, out=0] (1.center) to (3);
	\end{pgfonlayer}
\end{tikzpicture}
}}$
            \end{tabular}\\[0.75em]
            \begin{tabular}{c c c}
                $\overline{\Sigma}_{\text{nat}}$ \quad
                & $\scalebox{\scalefig}{{}}$
                & $\scalebox{\scalefig}{{}}$
             \end{tabular}\\[2em]

            Equations of $\aW$ are pictured on \cref{fig:equational-theories}
            }
            \centering\\
        \raggedright
    \end{definition}

    To sum up, here is a picture of the hierarchy of fragments of \FZW ~:
    \[
        \begin{tikzcd}
    \aW \arrow[rr, hook]                   &  & \FpW \arrow[rr, hook]                  &  & \FZW                  \\
                                        &  &                                        &  &                       \\
    \EaW \arrow[rr, hook] \arrow[uu, hook] &  & \EpW \arrow[rr, hook] \arrow[uu, hook] &  & \EZW \arrow[uu, hook]
    \end{tikzcd}
    \]

    And there is the corresponding hierarchy of gate-sets from which each fragment can be used to write LFM circuits.
    \[
    \begin{tikzcd}
    {\substack{\{\G_1, \G_2\}\\ + \\ \ket{1}, \bra{1}}} \arrow[rr, "\subset" description]          &  & {\{\G_1, \G_2, \G_3, F\}} \arrow[rr, "\subset" description]                             &  & {\{\G_1, \G_2, \G_3, \G_4, F\}}                             \\
                                                                                    &  &                                                                                         &  &                                                             \\
    {\{\G_1, \G_2\}} \arrow[rr, "\subset" description] \arrow[uu, "\cup" description] &  & {\{\G_1, \G_2, \G_3\}} \arrow[rr, "\subset" description] \arrow[uu, "\cup" description] &  & {\{\G_1, \G_2, \G_3, \G_4\}} \arrow[uu, "\cup" description]
    \end{tikzcd}
    \]

    We already have completeness for $\FZW$, $\EZW$, and $\pW$, which is a theory for the same category as $\FpW$. 
    In \cref{sec:5}, we provide proofs of the completeness of $\EaW$, $\aW$, $\EpW$, and a new completeness proof for $\FpW$, 
    via the introduction of another formal tool: \emph{scalable notations for pronaps}.


\section{Scalable notations and completeness results}\label{sec:5}
        Scalable notations were introduced in the SZX-calculus to manipulate families of
    wires and generators compactly, and was later abstracted from ZX to arbitrary
    coloured props \cite{CaretteHorsmanPerdrix2019,CarettePerdrix2020}. This
    construction is closely related to graphical approaches to strictification and
    de-strictification, where the bookkeeping hidden by strict monoidal syntax is
    made explicit \cite{WilsonGhicaZanasi2024}. Scalable notations have since also
    appeared in Graphical Symplectic Algebra and, more recently, Graphical
    Algebraic Geometry \cite{BoothCaretteComfort2026,GaoShaikhKissinger2026},
    both of which connect to the broader Graphical Linear Algebra programme
    \cite{BonchiSobocinskiZanasi2017}. Here, we extend scalable notations from props
    to pronaps and use it to formulate the normal forms and completeness results
    of the fragments introduced above.

\subsection{Scalable notations}\label{subsec:scalable-notations}
    The pronap theories of subcategories of $(\LFM, \otimes, \pi)$ allow one to manipulate fermionic circuits through local rewrites. However, the diagrammatic reasoning involved may become tedious when considering large fermionic circuits whose underlying diagrams quickly become untractable.

    Scalable notations constitute a syntactical tool that solves this issue by giving a more compact description of certain large diagrams.

    The idea is to axiomatize a category $\S\P$ which is equivalent to $\P$, but whose monoid of objects is freely spanned by the set $\{\underline{m} \mid m \in \mathbb{N}\}$, where type $\underline{m}$ corresponds to a register of $m$ modes and is pictured as a thick wire $\underline{m}$.

    \begin{definition}[Scaled pronap]
    Given a pronap $\P$, $\S\P$ is the monoidal presymmetric category whose monoid of objects is freely spanned by $\{\underline{m} \mid m > 0\}$. The $k$-fold monoidal product of object $\underline{m}$ is denoted $\underline{m} + \dots + \underline{m} = k \cdot \underline{m}$. $\S\P$ is defined as:
    \[
    \S\P\left[\sum_{i=1}^k \underline{m_i}, \sum_{j=1}^{k'} \underline{m'_j}\right] := \P\left[\sum_{i=1}^k m_i, \sum_{j=1}^{k'} m'_j\right].
    \]
    \[
    \pi_{\underline{m_1} + \dots + \underline{m_k}, \hspace*{0.5em} \underline{m_1'} + \dots + \underline{m_{k'}'}} := \pi_{m_1 + \dots + m_k, \hspace*{0.5em} m_1' + \dots + m_{k'}'}.
    \]
    \end{definition}
    There is an immediate monoidal equivalence
    \[
    \begin{tikzcd}
    \P \arrow[rr, "\iota_\P", bend left] &  & \S\P \arrow[ll, "\text{unfold}_\P", bend left]
    \end{tikzcd}
    \]
    with
    \[
    \begin{cases}
    \iota_\P(m) = m \cdot \underline{1} \\
    \iota_\P(f) = f : m \cdot \underline{1} \to m' \cdot \underline{1}
    \end{cases}
    \]
    and
    \[
    \begin{cases}
    \text{unfold}_\P(\underline{m_1} + \dots + \underline{m_k}) = m_1 + \dots + m_k \\
    \text{unfold}_\P(f) = f
    \end{cases}
    \]
    Note that in $\S\P$, $\underline{m} + \underline{m'} \neq \underline{m + m'}$.

    These two distinct objects correspond to two different ways of grouping fermionic modes:

    \[\scalebox{\scalefig}{\input{./figures/scaled-gen/id_1.tikz}} \quad \neq \quad \scalebox{\scalefig}{\begin{tikzpicture}
	\begin{pgfonlayer}{nodelayer}
		\node [style=none] (2) at (-1, 0) {};
		\node [style=none] (3) at (1, 0) {};
		\node [style=modenb] (5) at (0, 0.25) {$n+m$};
	\end{pgfonlayer}
	\begin{pgfonlayer}{edgelayer}
		\draw [style=thick] (2.center) to (3.center);
	\end{pgfonlayer}
\end{tikzpicture}
} \]
    These two objects are however isomorphic, via
    \[
    \begin{cases}
    \mathrm{id}_{m + m'} \in \S\P[\underline{m} + \underline{m'}, \underline{m + m'}] \\
    \mathrm{id}_{m + m'} \in \S\P[\underline{m + m'}, \underline{m} + \underline{m'}]
    \end{cases}
    \]
    We denote these isomorphisms $\delta_{m,m'}$ and $\gamma_{m,m'}$, and call them the divider and the gatherer. Pictorially, they are represented as:

    \[\scalebox{\scalefig}{\input{./figures/scaled-gen/div.tikz}} \quad \text{and} \quad \scalebox{\scalefig}{\input{./figures/scaled-gen/gath.tikz}} \]
    
    and are subject to equations :

    \centering {
    \phantomsection\label{eq:Exp}\scalebox{\scalefig}{\input{./figures/scaled-rule/elim_1.tikz}} $\eq\limits_{Exp}$ \scalebox{\scalefig}{\input{./figures/scaled-rule/elim_2.tikz}} \quad 
    \phantomsection\label{eq:Elim}\scalebox{\scalefig}{\input{./figures/scaled-rule/exp_1.tikz}} $\eq\limits_{Elim}$ \scalebox{\scalefig}{\begin{tikzpicture}
	\begin{pgfonlayer}{nodelayer}
		\node [style=none] (0) at (0.75, 0) {};
		\node [style=none] (3) at (-0.75, 0) {};
		\node [style=modenb] (9) at (0, 0.25) {$n + n'$};
	\end{pgfonlayer}
	\begin{pgfonlayer}{edgelayer}
		\draw [style=thick] (3.center) to (0.center);
	\end{pgfonlayer}
\end{tikzpicture}
} \\
    \scalebox{\scalefig}{\input{./figures/scaled-rule/coherence_1.tikz}} $\eq\limits_{}$ \scalebox{\scalefig}{\input{./figures/scaled-rule/coherence_2.tikz}}\quad 
    \scalebox{\scalefig}{\input{./figures/scaled-rule/cocoherence_1.tikz}} $\eq\limits_{}$ \scalebox{\scalefig}{\input{./figures/scaled-rule/cocoherence_2.tikz}} 
    \centering
    }

    \raggedright
    Note that, modulo these equations, any two diagrams of the same type composed only of dividers and gatherers are equal. 
    It turns out that adding these families of generators and equations to a pronap theory for $\P$ results in an axiomatization of $\S\P$.

    \begin{definition}[Scaled pronap theory]
    Let $\T = (\Sigma_{nat} \cup \overline{\Sigma}_{nat}, E)$ be a pronap theory. The scaled theory $S\T$ is a pair $(S\Sigma, SE)$ consisting of a set of generators $S\Sigma$ and a set of equations $SE$.
    \begin{itemize}
        \item $S\Sigma$ contains typed generators:
        \begin{itemize}
            \item $g: m \cdot \underline{1} \to m' \cdot \underline{1}$ for all $g: m \to m'$ of $\Sigma$,
            \item together with $\begin{cases} \delta_{m,m'}: \underline{m} + \underline{m'} \to \underline{m + m'} \in S\Sigma_{nat} \\ \gamma_{m,m'}: \underline{m + m'} \to \underline{m} + \underline{m'} \in S\Sigma_{nat} \end{cases}$ for all $m, m' \in \mathbb{N}$.
        \end{itemize}
        \item A formal diagram is either:
        \begin{itemize}[label=\emph{|}]
            \item $g \in S\Sigma$,
            \item $\mathrm{id}_m: \underline{m} \to \underline{m}$,
            \item $\pi_{\underline{m},\underline{m'}}: \underline{m} + \underline{m'} \to \underline{m'} + \underline{m}$,
            \item $D \circ D': \sum\limits_{i=1}^k \underline{m_i} \to \sum\limits_{j=1}^{k'} \underline{m'_j}$ for $D': \sum\limits_{i=1}^k \underline{m_i} \to \sum\limits_{i=1}^{k''} \underline{m''_i}$ and $D: \sum\limits_{i=1}^{k''} \underline{m''_i} \to \sum\limits_{j=1}^{k'} \underline{m'_j}$,
            \item $D \otimes D': \sum\limits_{i=1}^l \underline{n_i} + \sum\limits_{j=1}^k \underline{m_j} \to \sum\limits_{i=1}^{l'} \underline{n'_i} + \sum\limits_{j=1}^{k'} \underline{m'_j}$\\ for $\begin{cases} D: \underline{n_1} + \dots + \underline{n_l} \to \underline{n'_1} + \dots + \underline{n'_{l'}}\\ D': \underline{m_1} + \dots + \underline{m_k} \to \underline{m'_1} + \dots + \underline{m'_{k'}}\end{cases}$.
        \end{itemize}
        \item $SE$ contains all pairs of formal diagrams $(D, D') \in E$ typed $m \cdot \underline{1} \to m' \cdot \underline{1}$ together with equations $\text{Exp}_{m,m'} = (\delta_{m,m'} \circ \gamma_{m,m'}, \mathrm{id}_{\underline{m}} \otimes \mathrm{id}_{\underline{m'}})$ and $\text{Elim}_{m,m'} = (\gamma_{m,m'} \circ \delta_{m,m'}, \mathrm{id}_{\underline{m + m'}})$ for $m, m' \in \mathbb{N}$.
    \end{itemize}
    \end{definition}
    Given a scaled pronap theory $S\T = (S\Sigma, SE)$ there is a corresponding presymmetric monoidal category generated by $S\Sigma$ and quotiented by $SE$ and the axioms of presymmetric monoidal categories, including the naturality equations for all generators $g \in \Sigma_{nat}$, which includes dividers and gatherers.

    Note that by construction, any semantics of $\T$ in pronap $\P$ $\interp{\cdot}_\P: \T \to \P$ induces a functor $\interp{\cdot}_{\S\P}: S\T \to \S\P$ related to $\interp{\cdot}_\P$ via 
    \[
        \begin{tikzcd}
        \T \arrow[dd, "\interp{\cdot}_{\S\P}"'] \arrow[rr, "\iota_\T", bend left] &  & \S\T \arrow[dd, "\interp{\cdot}_\P"] \arrow[ll, "\text{unfold}_\T", bend left] \\
                                                                                &  &                                                                                \\
        \P \arrow[rr, "\iota_\P", bend left]                                      &  & \S\P \arrow[ll, "\text{unfold}_\P", bend left]                                
        \end{tikzcd}
    \]
    Then, due to the category equivalences, for a pronap theory $\T$, we have:
    \[
    \begin{cases}
    \interp{\cdot}_\P: \T \to \P \text{ is faithful} \\
    \interp{\cdot}_{\S\P}: S\T \to \S\P \text{ is faithful}.
    \end{cases}
    \]
    In other words, proving completeness of $S\T$ for $\S\P$ is the same as proving completeness of $\T$ for $\P$.

    The interest of a scaled pronap theory lies in the possibility of providing compact syntactic sugars 
    for otherwise large diagrams. Such syntactic sugars are defined inductively. The most basic ones 
    are the scaled generators. 

    \begin{definition}[Scaled generator]
        Let $g \in \Sigma$ be a generator of a pronap theory $\T$. The $k$-scaled generator $\S_k(g)$ is the diagram of $S\T$ defined inductively on $k$ as:
        \begin{itemize}[label=\emph{|}]
                \item \scalebox{\scalefig}{\input{./figures/scaled-gen/S1g.tikz}} \quad := \quad \scalebox{\scalefig}{\input{./figures/scaled-gen/g.tikz}},
                \item \phantomsection\label{eq:Dist}\scalebox{\scalefig}{\input{./figures/scaled-gen/Skp1g_1.tikz}} \quad $\mathop{:=}\limits_{Dist}$ \quad \scalebox{\scalefig}{\input{./figures/scaled-gen/Skp1g_2.tikz}}
        \end{itemize}
    \end{definition}

    \begin{definition}[Scaled diagram]
    Let $\T = (\Sigma_{nat} \cup \overline{\Sigma}_{nat}, E)$ be a pronap theory, 
    $D$ a diagram of $\T$, and $k>0$.
    Define $\S_{k}(D)$ as the formal diagram of $S\T$ obtained by replacing generators $g$ in the expression of $D$ by $\S_{k}(g)$.
    \end{definition}
    When a diagram $D$ only involves natural generators in its expression, then \Dist~ also applies to $\S_{k+1}(D)$:

    \begin{lemma}\label{lemma:scaled-diag-nat}
    Let $\T$ be a pronap theory, $D$ a diagram of $\T$ which only involves natural generators $g \in \Sigma_{nat}$ in its expression.  Then, we have:
    \[\forall k > 0, \scalebox{\scalefig}{\begin{tikzpicture}
	\begin{pgfonlayer}{nodelayer}
		\node [style=none] (15) at (1.5, 0.25) {$\vdots$};
		\node [style=none] (16) at (2.25, 0.75) {};
		\node [style=none] (17) at (2.25, -0.75) {};
		\node [style={1_arrow_box}] (18) at (0, 0) {$\S_{k+1}D$};
		\node [style=none] (19) at (-1.5, 0.75) {};
		\node [style=none] (20) at (-1.5, -0.75) {};
		\node [style=none] (21) at (-1.5, 0.25) {$\vdots$};
		\node [style=none] (22) at (-2.25, 0.75) {};
		\node [style=none] (23) at (-2.25, -0.75) {};
		\node [style=modenb] (24) at (-1.75, 1) {$k+1$};
		\node [style=none] (29) at (1.5, 0.75) {};
		\node [style=none] (30) at (1.5, -0.75) {};
		\node [style=modenb] (31) at (1.75, 1) {$k+1$};
		\node [style=modenb] (32) at (1.75, -0.5) {$k+1$};
		\node [style=modenb] (33) at (-1.75, -0.5) {$k+1$};
	\end{pgfonlayer}
	\begin{pgfonlayer}{edgelayer}
		\draw [style=thick] (22.center) to (19.center);
		\draw [style=thick] (23.center) to (20.center);
		\draw [style=thick, in=165, out=0, looseness=1.25] (19.center) to (18);
		\draw [style=thick, in=-165, out=0, looseness=1.25] (20.center) to (18);
		\draw [style=thick] (16.center) to (29.center);
		\draw [style=thick] (17.center) to (30.center);
		\draw [style=thick, in=15, out=180, looseness=1.25] (29.center) to (18);
		\draw [style=thick, in=-15, out=180, looseness=1.25] (30.center) to (18);
	\end{pgfonlayer}
\end{tikzpicture}
} \eq\limits_{\S\T}^{Dist} \scalebox{\scalefig}{\begin{tikzpicture}
	\begin{pgfonlayer}{nodelayer}
		\node [style=none] (15) at (2, 0.25) {$\vdots$};
		\node [style=none] (16) at (2.5, 0.75) {};
		\node [style=none] (17) at (2.5, -0.75) {};
		\node [style={1_arrow_box}] (18) at (0, -0.5) {$\S_kD$};
		\node [style=none] (19) at (-2, 0.75) {};
		\node [style=none] (20) at (-2, -0.75) {};
		\node [style=none] (21) at (-2, 0.25) {$\vdots$};
		\node [style=none] (22) at (-2.5, 0.75) {};
		\node [style=none] (23) at (-2.5, -0.75) {};
		\node [style=modenb] (24) at (-2.25, 1) {$k$};
		\node [style=modenb] (25) at (2.25, 1) {$k$};
		\node [style=modenb] (26) at (2.25, -0.5) {$k$};
		\node [style=modenb] (27) at (-2.25, -0.5) {$k$};
		\node [style=none] (29) at (2, 0.75) {};
		\node [style=none] (30) at (2, -0.75) {};
		\node [style={1_arrow_box}] (31) at (0, 1) {$D$};
		\node [style=fswap] (32) at (1.25, 0.15) {};
		\node [style=fswap] (33) at (-1.25, 0.15) {};
	\end{pgfonlayer}
	\begin{pgfonlayer}{edgelayer}
		\draw [style=thick] (22.center) to (19.center);
		\draw [style=thick] (23.center) to (20.center);
		\draw [style=thick, in=165, out=0, looseness=1.25] (19.center) to (18);
		\draw [style=thick, in=-165, out=0, looseness=1.25] (20.center) to (18);
		\draw [in=0, out=180, looseness=0.75] (30.center) to (31);
		\draw [in=15, out=180] (29.center) to (31);
		\draw [in=-180, out=0, looseness=0.75] (20.center) to (31);
		\draw [in=165, out=0] (19.center) to (31);
		\draw [style=thick] (16.center) to (29.center);
		\draw [style=thick] (17.center) to (30.center);
		\draw [style=thick, in=15, out=180, looseness=1.25] (29.center) to (18);
		\draw [style=thick, in=-15, out=180, looseness=1.25] (30.center) to (18);
	\end{pgfonlayer}
\end{tikzpicture}
}\]
    \end{lemma}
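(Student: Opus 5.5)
We argue by structural induction on the formal expression of $D$, treating the cases of the inductive definition of formal diagrams one at a time: generator, identity, presymmetry, composition and monoidal product. The statement to establish is that $\S_{k+1}(D)$, pre- and post-composed with the dividers/gatherers that split each thick wire of size $k+1$ into a single wire and a wire of size $k$, equals $D$ placed above $\S_k(D)$, with the single wires of $D$ routed past the $k$-wires by the presymmetries $\pi$, as in the picture.

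The base cases are immediate. For a generator $g\in\Sigma_{nat}$, the equation is exactly the defining equation \hyperref[eq:Dist]{Dist} of $\S_{k+1}(g)$. For $\mathrm{id}_1$, $\S_{k+1}(\mathrm{id}_1)=\mathrm{id}_{\underline{k+1}}$, and the claim follows from \hyperref[eq:Exp]{Exp} together with the involution equation of $\pi$, which cancels the two presymmetries. For the presymmetry $\pi$ itself, its scaled version is $\pi_{\underline{k+1},\underline{k+1}}$, which by definition of the scaled pronap is the block presymmetry. After expanding by dividers, we rewrite it into elementary crossings using the coherence (involution and Yang--Baxter) equations. Since presymmetries are central isomorphisms that are natural with respect to one another, this rearrangement holds; this is just the standard coherence of permutations.

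For the inductive steps, consider first composition $D\circ D'$. We have $\S_{k+1}(D\circ D')=\S_{k+1}(D)\circ\S_{k+1}(D')$. Between the two pieces we insert $\delta\circ\gamma$ via \hyperref[eq:Elim]{Elim}/\hyperref[eq:Exp]{Exp} and apply the induction hypothesis to each. In the middle, the outgoing presymmetries of $D'$ and the incoming ones of $D$ meet and cancel by involution, which yields $D\circ D'$ above $\S_k(D\circ D')$. For the monoidal product $D\otimes D'$, applying the hypothesis to both factors gives $D$ over $\S_kD$ over $D'$ over $\S_kD'$. We then need to move $D'$ up past $\S_k(D)$ so as to regroup the pieces as $(D\otimes D')$ over $(\S_kD\otimes\S_kD')$.

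This regrouping is the main obstacle, and it is precisely where naturality is used. Sliding $D'$ through the crossing wires of $\S_k(D)$ requires the naturality equations of $\pi$ with respect to every generator occurring in $D'$. By hypothesis these are all in $\Sigma_{nat}$, so the naturality axioms hold in the pronap generated by $S\T$, and naturality extends to composites of natural generators by a routine induction. We also need $\S_k(D)$ to be natural: it is built from scaled natural generators, whose definition via Dist uses only natural generators, dividers and gatherers (natural by definition) and $\pi$. The remaining wire bookkeeping is then handled by the coherence equations for dividers and gatherers, under which any two divider/gatherer-only diagrams of the same type are equal.
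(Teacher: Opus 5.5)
Your proposal is correct and follows the paper's own proof. Both argue by structural induction on $D$: the generator case is exactly Dist, the identity and presymmetry cases go through the divider/gatherer equations and permutation coherence, composition works by cancelling the intermediate permutations, and the monoidal-product step is the only place where naturality of the generators (of $D'$ and of $\S_k(D)$) is needed.

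Two small precisions. For $\mathrm{id}_1$ there is a single wire, so no presymmetries appear and the elimination equation alone suffices. In the $\pi$ case, the syntactic justification of your expansion step is the naturality of dividers and gatherers with respect to $\pi$, as the paper states, not the semantic definition of the block presymmetry in $\S\P$.
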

    Note that this is not the case in general if $D$ involves non-natural generators. However, 
    when no parallel composition $\otimes$ is involved in $D$, then \Dist~ applies no matter 
    the naturality of the generators involved.

    \begin{lemma}{\label{lemma:scaled-diag-line}}
    Let $\T = (\Sigma, E)$ be a pronap theory, and $D$ a diagram such that $D$ does not involve monoidal product in its expression. Then,
    \[\forall k > 0, \scalebox{\scalefig}{} \eq\limits_{\S\T}^{Dist} \scalebox{\scalefig}{}\]
    \end{lemma}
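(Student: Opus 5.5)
We argue by structural induction on a $\otimes$-free expression of $D$. Such an expression is built only from generators $g\in\Sigma$ (natural or not), identities $\mathrm{id}_1$, the presymmetry $\pi$ and sequential composition $\circ$. The goal is to show that, for every $k>0$, $\S_{k+1}(D)$ equals the diagram in which a copy of $D$ acts on the first wire of each group (after the divider) and $\S_k(D)$ acts on the remaining $k$ wires. The two are joined by the presymmetries shuffling wires, exactly as in the statement of \cref{lemma:scaled-diag-nat}.

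\emph{Base cases.} For a generator $g$ this is literally the defining equation \Dist~ of $\S_{k+1}(g)$. For $\mathrm{id}_1$ the scaled identity is a thick identity. The claim then reduces to the fact that a diagram built only of dividers, gatherers and presymmetries implementing the identity permutation equals the identity. This uses the coherence equations of dividers and gatherers, the Coxeter-type relations (involution and Yang--Baxter) satisfied by the presymmetry, and the naturality of dividers and gatherers, which lies in $S\Sigma_{nat}$. For $\pi$, the scaled presymmetry is again a composite of presymmetries. Since $\pi$ is itself natural with respect to $\pi$ (it is central, and a permutation composite commutes with permutation composites by the coherence axioms), the distribution equation follows from the permutation coherence of $\pi$ alone.

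\emph{Inductive step.} If $D=D_2\circ D_1$, then $\S_{k+1}(D)=\S_{k+1}(D_2)\circ\S_{k+1}(D_1)$ by definition, since scaling acts on generators. We apply the induction hypothesis to each factor. Between the two factors there is a gatherer followed by the rearranging presymmetry, and then its inverse followed by a divider. These cancel by \Exp, \Elim~ and $\pi^2=\mathrm{id}$. After this cancellation we are left with $D_2\circ D_1$ on the single-wire strand, placed above (via presymmetries) $\S_k(D_2)\circ\S_k(D_1)=\S_k(D)$. The point is that in the resulting picture the copies of $D_1,D_2$ and $\S_k(D_1),\S_k(D_2)$ sit in the same vertical order as in the two separate \Dist~ pictures. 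So we never need to slide a non-natural generator through a presymmetry, and we never need an interchange step that would move one box past another on disjoint wires.

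The main obstacle is exactly this bookkeeping: making sure that no rewrite silently invokes naturality of $\pi$ for a generator in $\overline{\Sigma}_{nat}$. It is also where $\otimes$-freeness is essential. With a tensor $D_1\otimes D_2$, regrouping $\S_{k+1}(D_1)\otimes\S_{k+1}(D_2)$ into the \Dist~ shape would require passing the copy of $D_2$ across the wires of $\S_k(D_1)$, that is, naturality of the non-natural boxes. To handle this cleanly, we would first prove the statement for $D$ a single generator $g$ ``whiskered'' by presymmetries. We would then check carefully that each divider or gatherer movement only uses the naturality of dividers and gatherers themselves, and never that of $g$.
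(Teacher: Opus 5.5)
Your induction is the paper's own argument. The paper proves this lemma as the proof of \cref{lemma:scaled-diag-nat} with the $\otimes$ case dropped: the generator case is the definition \emph{Dist}, the identity case follows from \emph{Elim}, and the composition case cancels the inner gatherer/divider pair and the shuffling presymmetries between the two \emph{Dist} pictures.

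Two small corrections are needed.
\begin{itemize}
\item In the $\pi$ base case, $\pi_{\underline{k+1},\underline{k+1}}$ is a primitive of $\S\T$, not literally a composite of thin presymmetries. Relating it to them needs the naturality of dividers and gatherers, which is what the paper invokes; coherence of $\pi$ alone is not enough.
\item Your composition step does use the interchange law, to regroup $(D_2\otimes\S_k D_2)\circ(D_1\otimes\S_k D_1)$ as $(D_2\circ D_1)\otimes\S_k(D_2\circ D_1)$. This is harmless, because a pronap is genuinely monoidal; the only thing unavailable is naturality of $\pi$ with respect to non-natural generators.
\end{itemize}
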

    This is crucial to understand which equations of the pronap theory $\T$ can be directly applied 
    to scaled diagrams, or in other words, which equation is \emph{scalable}.

    \begin{definition}[Scalable equation]
    Let $\T = (\Sigma, E)$ be a pronap theory, and $(D, D') \in E$. 
    We say that $(D, D')$ is scalable when
    \[
    \forall k > 0, \quad \S_{k}(D) \eq\limits_{\S\T} \S_{k}(D').
    \]
    \end{definition}

    \begin{example}
        For instance, equation \A28~ is scalable. 
        \[
         \scalebox{\scalefig}{\input{./figures/SaW-lemma/op_1.tikz}} \quad \eq \quad \scalebox{\scalefig}{\input{./figures/SaW-lemma/op_2.tikz}}
        \]
        However, equation \A25~ is not scalable: we prove 
        that a phase depending on the width of the wire has to appear on the right side of the scaled equation 
        for it to be sound.
        \[
        \scalebox{\scalefig}{\input{./figures/SaW-lemma/dot-cup_1.tikz}} \quad \eq \quad \scalebox{\scalefig}{\begin{tikzpicture}
	\begin{pgfonlayer}{nodelayer}
		\node [style=none] (23) at (2.5, 0) {};
		\node [style=none] (24) at (-2.5, 0) {};
		\node [style=modenb] (25) at (-2, 0.25) {$n$};
		\node [style=white phase dot] (26) at (0, 0) {$(-1)^{n+1}$};
	\end{pgfonlayer}
	\begin{pgfonlayer}{edgelayer}
		\draw [style=thick] (24.center) to (23.center);
	\end{pgfonlayer}
\end{tikzpicture}
}
        \] 
    \end{example}
    The following theorem uses \cref{lemma:scaled-diag-nat} and \cref{lemma:scaled-diag-line} to show that
    equations involving either only natural generators or no monoidal product in their expression are always scalable.

    \begin{theorem}\label{thm:scalable-equations}
    Let $\T = (\Sigma_{nat} \cup \overline{\Sigma}_{nat}, E)$ be a pronap theory. Let $(D, D') \in E$ be an equation. Then, 
    \begin{enumerate}[label=(\roman*)]
        \item\label{item:scalable-eq-1} If all generators involved in the expressions of $D$ and $D'$ are natural, then $(D, D')$ is scalable.
        \item\label{item:scalable-eq-2} If both $D$ and $D'$ involve no monoidal product in their expressions, then $(D, D')$ is scalable.
    \end{enumerate}
    In particular, in a prop theory, all equations are scalable.
    \end{theorem}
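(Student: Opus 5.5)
The plan is to reduce both items to a single statement: \Dist\ holds for the whole scaled diagram, i.e.
\[
\S_{k+1}(D) \;=\; \text{(dividers/gatherers and presymmetries)} \circ \bigl(D \otimes \S_k(D)\bigr) \circ \text{(dividers/gatherers and presymmetries)} .
\]
This is exactly what \cref{lemma:scaled-diag-nat} gives in case \ref{item:scalable-eq-1} and \cref{lemma:scaled-diag-line} gives in case \ref{item:scalable-eq-2}. Once it is available for both sides of an equation, scalability follows by induction on $k$.

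The base case $k=1$ is immediate: $\S_1(D)=D$ and $\S_1(D')=D'$ by definition of scaled generators, and $(D,D')\in E\subseteq SE$.

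For the inductive step, assume $\S_k(D)\eq\limits_{\S\T}\S_k(D')$. Note that the hypotheses of either item hold for $D$ and $D'$ simultaneously. In case \ref{item:scalable-eq-1} both diagrams involve only natural generators; in case \ref{item:scalable-eq-2} both are free of monoidal products. So the appropriate lemma applies to each side, and
\[
\S_{k+1}(D) \eq\limits^{Dist} W\circ (D\otimes \S_k D)\circ V,
\qquad
\S_{k+1}(D') \eq\limits^{Dist} W'\circ (D'\otimes \S_k D')\circ V'.
\]
The wiring parts $V,W$ (and $V',W'$) consist only of dividers, gatherers and fermionic presymmetries. They depend only on the type $n\to m$ and on $k$, not on the internal structure of the diagram, so $V=V'$ and $W=W'$. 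Here I use the remark that any two divider/gatherer diagrams of the same type are equal, and I check that the presymmetry pattern appearing in the lemmas is the same for both sides. Rewriting $D$ to $D'$ with the axiom and $\S_kD$ to $\S_kD'$ with the induction hypothesis inside this common context closes the induction. These rewrites are legitimate because equalities in $\S\T$ are congruences for $\circ$ and $\otimes$.

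The main obstacle is this context-matching step. I must make sure that the presymmetries introduced by \Dist\ in the two lemmas are literally the same block presymmetries $\pi_{\underline{\cdot},\underline{\cdot}}$ for $D$ and $D'$, so that no sign or non-natural reordering distinguishes the two sides. I would handle it by reading the right-hand side of each lemma as a fixed interleaving isomorphism of the boundary wires, determined by $(n,k)$ on inputs and $(m,k)$ on outputs. An edge case to check is arities $0$, where the thick wires are empty.

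The final claim is a corollary: in a prop theory every generator is natural, so \ref{item:scalable-eq-1} applies to every equation.
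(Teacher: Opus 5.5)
Your proof is correct and follows the paper's argument: rewrite $\S_{k+1}(D)$ with \cref{lemma:scaled-diag-nat} (resp.\ \cref{lemma:scaled-diag-line}) as a fixed wiring around $D\otimes\S_k(D)$, replace $D$ by $D'$ using the axiom and $\S_k(D)$ by $\S_k(D')$ using the induction hypothesis, then fold back by applying the same lemma to $D'$. The context-matching step you call the main obstacle is automatic: in both lemmas the right-hand-side wiring is a single formal diagram determined only by the arities and $k$, so there is no presymmetry pattern to check separately, and you do not need the divider/gatherer coherence remark (which does not cover presymmetries anyway).
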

    In the following, whenever we write a diagrammatic derivation in which we use \cref{thm:scalable-equations} for some 
    equation A\#, we will write $\S A\#$ under the equality symbol. For scalable equations to which \cref{thm:scalable-equations}
    does not apply, we will need to prove it separately as an independent lemma. 

    \begin{example}
        Equations \A2, \A13, and \A24~ are all scalable due to \cref{thm:scalable-equations}.\ref{item:scalable-eq-1}.
        Equations \A26, \A31, and \A35~ are all scalable due to \cref{thm:scalable-equations}.\ref{item:scalable-eq-2}.
        Equation \A28~ is scalable, but involves both a non-natural generator and a monoidal product in its expression.
        To use its scaled version, we prove the lemma L27.
    \end{example}
    Scalable notations are not limited to scaled generators and scalable equations: more interesting syntactic sugars can be defined, with the 
    hope of depicting as compactly as possible the structure of a pronap theory. In \cref{subsec:matrix-arrows} and \cref{subsec:graph-triangles},
    we define and study two examples of such syntactic sugars, which we use in \cref{subsec:completeness} to prove completeness for all the fragments
    in our hierarchy.

\subsection{Matrix arrows}\label{subsec:matrix-arrows}
    The first use of scalable notations beyond scalable generators and equations was the introduction of matrix arrows in scalable ZX-calculus \cite{CaretteHorsmanPerdrix2019}. A similar construction can be carried in \EaW, and more generally in any prop theory carrying a bialgebra.

    \begin{definition}[Matrix arrow]
        In the scaled prop theory \S\EaW, we define matrix arrows inductively as follows:
        \begin{itemize}[label=\emph{|}]
            \item \phantomsection\label{eq:D1}$\scalebox{\scalefig}{{\input{./figures/mat-lemma/def_1.tikz}}} \mathop{:=}\limits^{D1} \scalebox{\scalefig}{{\input{./figures/mat-lemma/def_2.tikz}}}$
            \item \phantomsection\label{eq:D2}$\scalebox{\scalefig}{{\input{./figures/mat-lemma/def_3.tikz}}} \mathop{:=}\limits^{D2} \scalebox{\scalefig}{{\input{./figures/mat-lemma/def_4.tikz}}}$
            \item \phantomsection\label{eq:D3}$\scalebox{\scalefig}{{\input{./figures/mat-lemma/def_5.tikz}}} \mathop{:=}\limits^{D3} \scalebox{\scalefig}{{\input{./figures/mat-lemma/def_6.tikz}}}$
        \end{itemize}
    \end{definition}
    Intuitively, matrix arrows corresponds to weighted bipartite graphs, for example, the gate $\mathcal{G}_2 $ translates to the matrix arrow 
    corresponding to the matrix $\frac{1}{\sqrt{2}}\pmm{1 & i \\ i & 1}$:

    \[\scalebox{\scalefig}{{\input{./figures/gate-set/G2-EZW.tikz}}} \quad\eq\quad \scalebox{\scalefig}{{\input{./figures/gate-set/G2-scalable.tikz}}}\]
     \smallskip
    Using proofs by induction, on can derive the following equations:
    \[
    \SEaWDerivedBlockNoLabel
    \]
    These equations are analogue to the ZX case, with the difference that matrices and computation are over $\mathbb{C}$ and not $\mathbb{F}_2 $. However, if the interpretation in the ZX case only encodes the truth table of the associated linear map,
    the interpretation of our matrix arrows carries information about all the minors of the matrix:

    \[\interp{\scalebox{\scalefig}{{\input{./figures/SEaW-lemma/rank-dec-mat_1.tikz}}}}_{x,y} \eq \mathrm{det}_{x,y}(M)\]
    where $\det_{x,y}(M)$ is the determinant of the matrix obtained by only keeping the rows and columns of $M$ corresponding respectively to ones in $y$ and $x$, with the convention that the determinant of a non-square matrix is zero. This allows to read all the rewriting rules involving matrix arrows as statements on minors. For example, equations \L3~ is a direct encoding of the Cauchy-Binet formula: 

     \begin{derivationchain}
    $\det_{x,y}(MM')$ 
    & $\eq$ & 
    $\interp{\scalebox{\scalefig}{{\begin{tikzpicture}
	\begin{pgfonlayer}{nodelayer}
		\node [style=none] (0) at (-2, 0) {};
		\node [style=none] (1) at (2, 0) {};
		\node [style=rmat] (2) at (0, 0) {};
		\node [style=none] (3) at (0, 0.75) {$MM'$};
	\end{pgfonlayer}
	\begin{pgfonlayer}{edgelayer}
		\draw [style=thick] (0.center) to (1.center);
	\end{pgfonlayer}
\end{tikzpicture}
}}}_{x,y}$ 
    \\[1em]
    & $\eq$ & 
    $\interp{\scalebox{\scalefig}{{\input{./figures/mat-yoga/cauchy-binet_2.tikz}}}}_{x,y}$ 
    \\[1em]
    & $\eq$ & 
    $\left(\interp{\scalebox{\scalefig}{{\begin{tikzpicture}
	\begin{pgfonlayer}{nodelayer}
		\node [style=none] (0) at (-1, 0) {};
		\node [style=none] (1) at (1, 0) {};
		\node [style=rmat] (4) at (0, 0) {};
		\node [style=none] (5) at (0, 0.75) {$M$};
	\end{pgfonlayer}
	\begin{pgfonlayer}{edgelayer}
		\draw [style=thick] (0.center) to (1.center);
	\end{pgfonlayer}
\end{tikzpicture}
}}}\circ\interp{\scalebox{\scalefig}{{\begin{tikzpicture}
	\begin{pgfonlayer}{nodelayer}
		\node [style=none] (0) at (-1, 0) {};
		\node [style=none] (1) at (1, 0) {};
		\node [style=rmat] (2) at (0, 0) {};
		\node [style=none] (3) at (0, 0.75) {$M'$};
	\end{pgfonlayer}
	\begin{pgfonlayer}{edgelayer}
		\draw [style=thick] (0.center) to (1.center);
	\end{pgfonlayer}
\end{tikzpicture}
}}}\right)_{x,y}$ 
    \\[1em]
    & $\eq$ & 
    $\sum\limits_{z}\interp{\scalebox{\scalefig}{{}}}_{x,z}\interp{\scalebox{\scalefig}{{}}}_{z,y}$ 
    \\[1em]
    & $\eq$ & 
    $\sum\limits_{z}\det_{x,z}(M)\det_{z,y}(M')$
    \end{derivationchain}
    If matrix arrows are defined in \S\EaW they can be straightforwardly imported in its extensions \S\EpW, \S\aW and \S\FpW, notice that matrix arrows are even. For example in \S\EpW and \S\FpW, matrix arrows interact with cups and caps as follows:
    \[
    \scalebox{\scalefig}{{\input{./figures/SEpW-lemma/mat-cup_1.tikz}}}\quad\eq\quad\scalebox{\scalefig}{{\input{./figures/SEpW-lemma/mat-cup_2.tikz}}}
    \]
    This is to be compared with the ZX case where flipping a matrix arrows doesn't gives a matrix arrows, here we see how ZW matrix arrows are in as sense more symmetric. \\
    In \S\aW and \S\FpW, the interaction with odd scaled generators is particularly interesting, in particular we can derive the cornerstone rule of the interaction between matrix arrows and determinants:\\[1em]  
    \[
    \scalebox{\scalefig}{{\input{./figures/SaW-lemma/det-mat_1.tikz}}}\quad\eq\quad\scalebox{\scalefig}{{\input{./figures/SaW-lemma/det-mat_2.tikz}}}
    \]
    As a direct consequence, we have that $\interp{\scalebox{\scalefig}{{\input{./figures/mat-yoga/float-det.tikz}}}}=\det M$. 

    This should be linked to the combinatorial interpretation of planar ZW diagrams in \cite{CaretteEtAl2023}:
    instead of planar graphs, we rather have bipartite graphs whose crossing are born by the fermionic swap, and the diagram exactly counts 
    the weighted perfect matchings of this bipartite graphs, although it multiplies each perfect matching by $(-1)^s$ 
    where $s$ is the number of crossings involved in the particular matching. This is the reason why we get the determinant instead 
    of the permanent.
    This allows for diagrammatic analogue of well-known identities involving determinants. The derivable equation \L32~ encodes the cofactor matrix formula for the inverse.

    Diagrammatically:

    \begin{tabular}{ccc}
    \scalebox{\scalefig}{\input{./figures/mat-yoga/invert-comat_2.tikz}}
    &
    $\eq\limits_{\L8, \L3}$
    &
    \scalebox{\scalefig}{\input{./figures/mat-yoga/invert-comat_3.tikz}}
    \\[1em]
    &
    $\eq\limits_{\substack{\D2,\D3,\D1,\\\S\A5, \S\A35}}$
    &
    \scalebox{\scalefig}{\input{./figures/mat-yoga/invert-comat_4.tikz}}
    \\[1em]
    &
    $\eq\limits_{\substack{\L32, \A26}}$
    &
    \scalebox{\scalefig}{\input{./figures/mat-yoga/invert-comat_5.tikz}}
    \\[1em]
    &
    $\eq\limits_{\substack{\A26, \A9}}$
    &
    \scalebox{\scalefig}{\input{./figures/mat-yoga/invert-comat_6.tikz}}
    \end{tabular}\\
    Therefore:\\
    {\centering
    \begin{tabular}{c c l}
    $P^{-1}_{i,j}$ 
    &     
    $\eq\limits_{\substack{}}$
    & 
    $\interp{\scalebox{\scalefig}{{\begin{tikzpicture}
	\begin{pgfonlayer}{nodelayer}
		\node [style=none] (0) at (-1, 0) {};
		\node [style=none] (1) at (1, 0) {};
		\node [style=rmat] (4) at (0, 0) {};
		\node [style=none] (5) at (0.25, 0.75) {$P^{-1}$};
	\end{pgfonlayer}
	\begin{pgfonlayer}{edgelayer}
		\draw [style=thick] (0.center) to (1.center);
	\end{pgfonlayer}
\end{tikzpicture}
}}}_{2^i , 2^j }$ 
    \\[1.2em]
    &     
    $\eq\limits_{\substack{}}$
    & 
    $\frac{(-1)^{i+j}}{detP}\interp{\scalebox{\scalefig}{{\input{./figures/mat-yoga/invert-comat_7.tikz}}}}_{2^{n+1}-2^i , 2^{n+1}-2^j }$ 
    \\[1.2em]
     & 
    $\eq\limits_{\substack{}}$
    & 
    $\frac{(-1)^{i+j}}{\det P}\det P^{(j,i)}$
    \\[1.2em]
    \end{tabular}\\}
    \raggedright
    where \(P^{(j,i)}\) is obtained from \(P\) by deleting row \(j\) and column \(i\).\\
    The well-known diagrammatic version of trace cyclicity here corresponds to Sylvester's identity:

    \begin{tabular}{ccc}
    \scalebox{\scalefig}{\begin{tikzpicture}
	\begin{pgfonlayer}{nodelayer}
		\node [style=dot] (1) at (2, 0) {};
		\node [style=rmat] (7) at (0, 0) {};
		\node [style=none] (9) at (0, 0.75) {$I_n + MM'$};
		\node [style=modenb] (11) at (1.5, 0.25) {$n$};
		\node [style=dot] (12) at (-2, 0) {};
	\end{pgfonlayer}
	\begin{pgfonlayer}{edgelayer}
		\draw [style=thick] (12) to (1);
	\end{pgfonlayer}
\end{tikzpicture}
}
    &
    $\eq\limits_{\substack{\L1,\L3}}$
    &
    \scalebox{\scalefig}{\input{./figures/mat-yoga/sylvester_2.tikz}}
    \\[1em]
    &
    $\eq\limits_{\substack{\L8, \S\A36}}$
    &
    \scalebox{\scalefig}{\input{./figures/mat-yoga/sylvester_3.tikz}}
    \\[1em]
    &
    $\eq\limits_{\substack{\L34}}$
    &
    \scalebox{\scalefig}{\input{./figures/mat-yoga/sylvester_4.tikz}}
    \\[1em]
    &
    $\eq\limits_{\substack{\S\A6}}$
    &
    \scalebox{\scalefig}{\input{./figures/mat-yoga/sylvester_5.tikz}}
    \\[1em]
    &
    $\eq\limits_{\substack{\S\A35, \L12, \L3}}$
    &
    \scalebox{\scalefig}{\input{./figures/mat-yoga/sylvester_6.tikz}}
    \\[1em]
    &
    $\eq\limits_{\substack{\L12}}$
    &
    \scalebox{\scalefig}{\input{./figures/mat-yoga/sylvester_7.tikz}}
    \\[1em]
    &
    $\eq\limits_{\substack{\L12}}$
    &
    \scalebox{\scalefig}{\input{./figures/mat-yoga/sylvester_8.tikz}}
    \\[2em]
    &
    $\eq\limits_{\substack{\L3, \L12, \S\A35\\ \S\A6, \L34,\L8\\ \S\A36, \L3, \L1}}$
    &
    \scalebox{\scalefig}{\begin{tikzpicture}
	\begin{pgfonlayer}{nodelayer}
		\node [style=dot] (1) at (2, 0) {};
		\node [style=rmat] (7) at (0, 0) {};
		\node [style=none] (9) at (0, 0.75) {$I_m + M'M$};
		\node [style=modenb] (11) at (1.5, 0.25) {$m$};
		\node [style=dot] (12) at (-2, 0) {};
	\end{pgfonlayer}
	\begin{pgfonlayer}{edgelayer}
		\draw [style=thick] (12) to (1);
	\end{pgfonlayer}
\end{tikzpicture}
}
    \end{tabular}\\
    Which yields 
    \[\det(I_n + MM') = \det(I_m + M'M)\]
    Finally, one can quickly derive the diagrammatic version of the Schur complement formula while computing a block determinant.

    Let $M_1 \in \C^{k\times k}$, $M_2 \in \C^{k\times n}$, $M_3 \in \C^{n\times k}$, $M_4 \in \C^{n\times n}$ with $M_4$ invertible.

    Then:

    \begin{tabular}{ccc}
    \scalebox{\scalefig}{\input{./figures/mat-yoga/shur_1.tikz}}
    &
    $\eq\limits_{\substack{\L8,\L3,\L2}}$
    &
    \scalebox{\scalefig}{\input{./figures/mat-yoga/shur_2.tikz}}
    \\[1em]
    &
    $\eq\limits_{\substack{\L24}}$
    &
    \scalebox{\scalefig}{\input{./figures/mat-yoga/shur_3.tikz}}
    \\[1em]
    &
    $\eq\limits_{\substack{\L3,\S\A3}}$
    &
    \scalebox{\scalefig}{\input{./figures/mat-yoga/shur_4.tikz}}
    \\[1em]
    \end{tabular}

    \begin{tabular}{c c c}
    \hspace*{5em}
    &
    $\eq\limits_{\substack{\L25}}$
    &
    \scalebox{\scalefig}{\input{./figures/mat-yoga/shur_5.tikz}}
    \\[1em]
    &
    $\eq\limits_{\substack{\L26}}$
    &
    \scalebox{\scalefig}{\input{./figures/mat-yoga/shur_6.tikz}}
    \\[1em]
    &
    $\eq\limits_{\substack{L7, \L3}}$
    &
    \scalebox{\scalefig}{\input{./figures/mat-yoga/shur_7.tikz}}
    \\[1em]
    &
    $\eq\limits_{\substack{\L1}}$
    &
    \scalebox{\scalefig}{\input{./figures/mat-yoga/shur_8.tikz}}
    \end{tabular}
    \medskip

    This yields $\det \pmm{M_1 & M_2 \\ M_3 & M_4}  = \det M_4 \det(M_1 - M_2M_4^{-1}M_3) $.

    More derived equations involving matrix arrows can be found in \cref{fig:derived-equations}, and are used extensively in the completeness proofs of \cref{subsec:completeness}.

    \subsection{Graph triangles}\label{subsec:graph-triangles}
    The last scalable syntactic sugar we introduce already appeared in analogue form in ZX-calculus to represent graph states \cite{CaretteHorsmanPerdrix2019}, but mainly in symplectic graphical algebra \cite{BoothCaretteComfort2026}. 
    As with matrix arrows we take inspiration from the syntactic similarities while obtaining slightly different behaviors in the end.

    \begin{definition}[Graph triangle]
        Let $Z \in \C^{n \times n}$ be an antisymmetric matrix.
        We define the graph triangle as follows:
        \[\phantomsection\label{eq:D4}\scalebox{\scalefig}{\begin{tikzpicture}
	\begin{pgfonlayer}{nodelayer}
		\node [style=rw] (0) at (-0.5, 0) {};
		\node [style=none] (1) at (1, 0) {};
		\node [style=modenb] (2) at (0.75, 0.25) {$n$};
		\node [style=none] (3) at (-0.5, 0.75) {$Z$};
	\end{pgfonlayer}
	\begin{pgfonlayer}{edgelayer}
		\draw [style=thick] (1.center) to (0);
	\end{pgfonlayer}
\end{tikzpicture}
} \mathop{:=}\limits^{D4} \scalebox{\scalefig}{\begin{tikzpicture}
	\begin{pgfonlayer}{nodelayer}
		\node [style=none] (1) at (1, 0) {};
		\node [style=modenb] (2) at (0.75, 0.25) {$n$};
		\node [style=none] (3) at (-1.25, 1.25) {$\frac{Z}{2}$};
		\node [style=rw] (4) at (-0.5, 0) {};
		\node [style=none] (5) at (-1.25, 0.5) {};
		\node [style=none] (6) at (-1.25, -0.5) {};
		\node [style=none] (7) at (-1.75, 0.5) {};
		\node [style=none] (8) at (-1.75, -0.5) {};
		\node [style=rmat] (9) at (-1.5, 0.5) {};
	\end{pgfonlayer}
	\begin{pgfonlayer}{edgelayer}
		\draw [style=thick, in=135, out=0] (5.center) to (4);
		\draw [style=thick, in=-135, out=0] (6.center) to (4);
		\draw [style=thick] (1.center) to (4);
		\draw [style=thick] (6.center) to (8.center);
		\draw [style=thick] (5.center) to (7.center);
		\draw [style=thick, bend right=270, looseness=1.50] (8.center) to (7.center);
	\end{pgfonlayer}
\end{tikzpicture}
}\]
         \[\phantomsection\label{eq:D5}\scalebox{\scalefig}{\begin{tikzpicture}
	\begin{pgfonlayer}{nodelayer}
		\node [style=none] (1) at (-0.5, 0) {};
		\node [style=modenb] (2) at (-0.25, 0.25) {$n$};
		\node [style=none] (3) at (1, 0.75) {$Z$};
		\node [style=lw] (4) at (1, 0) {};
	\end{pgfonlayer}
	\begin{pgfonlayer}{edgelayer}
		\draw [style=thick] (1.center) to (4);
	\end{pgfonlayer}
\end{tikzpicture}
} \mathop{:=}\limits^{D5} \scalebox{\scalefig}{\begin{tikzpicture}
	\begin{pgfonlayer}{nodelayer}
		\node [style=none] (1) at (-1.75, 0) {};
		\node [style=modenb] (2) at (-1.5, 0.25) {$n$};
		\node [style=none] (3) at (0.75, 1.25) {$\frac{Z}{2}$};
		\node [style=none] (5) at (0.5, 0.5) {};
		\node [style=none] (6) at (0.5, -0.5) {};
		\node [style=none] (7) at (1, 0.5) {};
		\node [style=none] (8) at (1, -0.5) {};
		\node [style=rmat] (9) at (0.75, 0.5) {};
		\node [style=lw] (10) at (-0.25, 0) {};
	\end{pgfonlayer}
	\begin{pgfonlayer}{edgelayer}
		\draw [style=thick] (6.center) to (8.center);
		\draw [style=thick] (5.center) to (7.center);
		\draw [style=thick, bend left=270, looseness=1.50] (8.center) to (7.center);
		\draw [style=thick, in=45, out=180] (5.center) to (10);
		\draw [style=thick, in=-45, out=180] (6.center) to (10);
		\draw [style=thick] (1.center) to (10);
	\end{pgfonlayer}
\end{tikzpicture}
}\]
    \end{definition}
    From this definition we can obtain an alternative inductive one, revealing the underlying stucture:
    \[
    \scalebox{\scalefig}{{\begin{tikzpicture}
	\begin{pgfonlayer}{nodelayer}
		\node [style=rw] (0) at (-0.5, 0) {};
		\node [style=none] (2) at (1, 0) {};
		\node [style=none] (3) at (-0.5, 1) {$\pmm{Z & C \\ -C^T & Z'}$};
		\node [style=modenb] (4) at (0.5, 0.25) {$n+n'$};
	\end{pgfonlayer}
	\begin{pgfonlayer}{edgelayer}
		\draw [style=thick] (2.center) to (0);
	\end{pgfonlayer}
\end{tikzpicture}
}}\quad\eq\quad\scalebox{\scalefig}{{\input{./figures/SEpW-lemma/gnode-dec_2.tikz}}}
    \]
    We see from this definition that graph triangles can be seen as weighted graphs as in the following example:
    \[
        \scalebox{\scalefig}{{\input{./figures/graph-yoga/example_1.tikz}}}\quad\eq\quad\scalebox{\scalefig}{{\input{./figures/graph-yoga/example_2.tikz}}}
    \]
    Notice that all crossings involve fermionic swaps, this provides a striking connection not to the determinant but to the Pfaffian this time.
    Indeed:

    \[\interp{\scalebox{\scalefig}{}}_{x}= \Pf_{x} (Z)\]
    where $\Pf_{x}(Z)$ is the pfaffian of the antisymmetric matrix obtained by only keeping the rows and columns of $Z$ corresponding  to ones in  $x$, with the convention that the pfaffian of a matrix of odd size is zero. 
    We can derive the following properties:

        \begin{tabular}{ccc}
        \boxeq{L10}
            {SEpW-lemma/gnode-add_1}
            {SEpW-lemma/gnode-add_2}
        &
        \boxeq{L20}    
        {SEpW-lemma/gnode-zero_1}
        {SEpW-lemma/gnode-zero_2}
        &
        \boxeq{L11}
            {SEpW-lemma/gnode-erase_1}
            {SEpW-lemma/gnode-erase_2}
        \\[1em]
        \end{tabular}
    When interacting with cups and caps, we get:
    \[
    \boxeq{L22}
    {SEpW-lemma/triangle-cup_1}
    {SEpW-lemma/triangle-cup_2}
    \]
    Graph triangles interact with matrix arrows as:
    \[
    \boxeq{L14}
    {SEpW-lemma/gnode-mat_1}
    {SEpW-lemma/gnode-mat_2}
    \] 
    Which, for $M \leftarrow P\in \C^{n\times n}$ invertible yields the pfaffian identity: 

    \[\Pf(PZP^T)=\det (P) \Pf (Z)\]
    Again, graph triangles are defined in \S\EpW but also available in its extension \S\FpW. Graph triangles are even, and therefore natural.

    Surprisingly, for $A \in \C^{2n\times 2n}$ an invertible antisymmetric matrix, the interaction of the associated graph triangle 
    with the non-natural black node is pictured as:

    \[
    \boxeq{L35}
    {SFpW-lemma/gnode-invert_1}
    {SFpW-lemma/gnode-invert_2}
    \]

    Which encodes the formula for the pfaffian of the inverse. Finally, we can again obtain a Schur complement formula, but for the Pfaffian this time:

    \begin{tabular}{ccc}
    \scalebox{\scalefig}{\begin{tikzpicture}
	\begin{pgfonlayer}{nodelayer}
		\node [style=none] (0) at (2, 0) {};
		\node [style=rw] (1) at (-1, 0) {};
		\node [style=none] (2) at (-1.5, 1.5) {$\pmm{A & C \\ -C^T & Z }$};
		\node [style=dot] (3) at (2, 0) {};
	\end{pgfonlayer}
	\begin{pgfonlayer}{edgelayer}
		\draw [style=thick] (1) to (0.center);
	\end{pgfonlayer}
\end{tikzpicture}
}
    &
    $\eq\limits_{\substack{\L13}}$
    &
    \scalebox{\scalefig}{\input{./figures/graph-yoga/shur_2.tikz}}
    \\[1em]
    &
    $\eq\limits_{\substack{\S\A35}}$
    &
    \scalebox{\scalefig}{\input{./figures/graph-yoga/shur_3.tikz}}
    \\[1em]
    &
    $\eq\limits_{\substack{\L35}}$
    &
    \scalebox{\scalefig}{\input{./figures/graph-yoga/shur_4.tikz}}
    \\[1em]
    &
    $\eq\limits_{\substack{Dist, \S\A36, \L34}}$
    &
    \scalebox{\scalefig}{\input{./figures/graph-yoga/shur_5.tikz}}
    \\[1em]
    &
    $\eq\limits_{\substack{\S\A35, \S\A6, \S\A18, \L3}}$
    &
    \scalebox{\scalefig}{\input{./figures/graph-yoga/shur_6.tikz}}
    \\[1em]
    &
    $\eq\limits_{\substack{\L14, \L10}}$
    &
    \scalebox{\scalefig}{\input{./figures/graph-yoga/shur_7.tikz}}
    \\[1em]
    \end{tabular}\\
    which yields
    $\Pf\pmm{A & C \\ -C^T & Z} = \Pf(A)\Pf(C^TA^{-1}C + Z)$.
    More derived equations involving graph triangles can be found in \cref{fig:derived-equations}.
%
%

\begin{figure*}[t]
\centering

\begingroup
\sensitivetikztrue

\newsavebox{\SEaWDerivedSavebox}
\newsavebox{\SEpWDerivedSavebox}
\newsavebox{\SaWDerivedSavebox}
\newsavebox{\SFpWDerivedSavebox}

\sbox{\SEaWDerivedSavebox}{\SEaWDerivedBlock}
\sbox{\SEpWDerivedSavebox}{\SEpWDerivedBlock}
\sbox{\SaWDerivedSavebox}{\SaWDerivedBlock}
\sbox{\SFpWDerivedSavebox}{\SFpWDerivedBlock}


\resizebox{\textwidth}{!}{%
\begin{tikzpicture}[
    reset cm,
    axiom block/.style={
        inner sep=0pt,
        outer sep=0pt,
        draw=none
    },
    fragment region/.style={
        rounded corners=10pt,
        dashed,
        line width=0.9pt
    },
    fragment title/.style={
        font=\Large\bfseries,
        fill=white,
        inner xsep=4pt,
        inner ysep=1.5pt,
        draw=none
    }
]


\node[
    axiom block,
    anchor=north west
] (eaw) at (0,0) {\usebox{\SEaWDerivedSavebox}};

\node[
    axiom block,
    anchor=north west
] (epw) at ([xshift=8mm]eaw.north east)
    {\usebox{\SEpWDerivedSavebox}};

\node[
    axiom block,
    anchor=north west
] (aw) at ([yshift=-45mm]eaw.south west)
    {\usebox{\SaWDerivedSavebox}};

\node[
    axiom block,
    anchor=north west
] (fpw) at ([yshift=-9mm]epw.south west)
    {\usebox{\SFpWDerivedSavebox}};


\begin{pgfonlayer}{background}

\node[
    fragment region,
    draw=black!65,
    fill=black!4,
    fit=(eaw)(epw)(aw)(fpw),
    inner sep=11mm
] (fpw-region) {};

\node[
    fragment region,
    draw=blue!65!black,
    fill=blue!8,
    fill opacity=.55,
    fit=(eaw)(aw),
    inner sep=6mm
] (aw-region) {};

\node[
    fragment region,
    draw=red!65!black,
    fill=red!8,
    fill opacity=.55,
    fit=(eaw)(epw),
    inner sep=6mm
] (epw-region) {};

\node[
    fragment region,
    draw=green!50!black,
    fill=green!10,
    fill opacity=.75,
    fit=(eaw),
    inner sep=3mm
] (eaw-region) {};

\end{pgfonlayer}


\node[
    fragment title,
    anchor=west
] at ([xshift=5mm]fpw-region.north west)
    {$\FpW$};

\node[
    fragment title,
    anchor=west,
    text=blue!65!black
] at ([xshift=5mm]aw-region.south west)
    {$\aW$};

\node[
    fragment title,
    anchor=west,
    text=red!65!black
] at ([xshift=5mm]epw-region.north west)
    {$\EpW$};

\node[
    fragment title,
    anchor=west,
    text=green!45!black
] at ([xshift=3mm]eaw-region.north west)
    {$\EaW$};

\end{tikzpicture}%
}

\endgroup

\caption{
Derived equations used throughout the completeness proofs.
Each equation $L\#$ is displayed in the smallest scaled fragment
in which it is derived. By convention, equations of \S\EaW~ and \S\aW~ are also valid when mirrored
with respect to the vertical axis.
For readability, we use the convention that $M$ denotes an arbitrary matrix, 
$P$ and $Q$ invertible matrices, $Z$ a square antisymmetric matrix, 
and $A$ an invertible antisymmetric matrix.
}
\label{fig:derived-equations}

\Description{
The derived equations are partitioned into four regions:
those holding already in scaled EaW, those introduced in scaled EpW,
those introduced in scaled aW, and those introduced in scaled FpW.
}

\end{figure*}

    We now have all the necessary syntactic sugar and rules to properly state our completeness proofs.

\subsection{Completeness}\label{subsec:completeness}
The completeness results for the different fragments all follow the same general strategy: we identify \emph{normal-form diagrams} and show that every diagram can be rewritten into a unique such form. More precisely, for any two diagrams $D$ and $D'$ with the same type, we construct rewriting procedures
\[
D \rewrites^{*} N(D)
\qquad\text{and}\qquad
D' \rewrites^{*} N(D'),
\]

where $N(D)$ and $N(D')$ belong to a fixed syntactic class of normal forms. The crucial point is then that the interpretation of a normal form uniquely determines its syntax. Hence, whenever
\[
\interp{D}=\interp{D'},
\]

we have $N(D)=N(D')$, and therefore
\[
D = D'
\]

in the corresponding equational theory. 

Reduction to normal form thus gives both a canonical representative for each denotation and, as a consequence, completeness of the theory.

For $\S\EaW$, it is convenient to first describe normal forms for morphisms
\[
\underline{n}\longrightarrow\underline{m}.
\]

Using dividers and gatherers, any diagram
\[
D:\underline{n_1}+\cdots+\underline{n_k}\longrightarrow
\underline{m_1}+\cdots+\underline{m_l}
\]

can be rewritten as
\[
\delta_{\underline{m_1+\cdots+m_l},\underline{m_1}+\cdots+\underline{m_l}}\circ D' \circ \gamma_{\underline{n_1}+\cdots+\underline{n_k},\underline{n_1+\cdots+n_k}}
\]

where
\[
D':\underline{n_1+\cdots+n_k}\longrightarrow
\underline{m_1+\cdots+m_l}.
\]

Thus, a normal form for morphisms $\underline{n}\to\underline{m}$ transfers directly to arbitrary diagrams by first gathering the inputs, applying the normal form to the resulting diagram $D'$, and finally dividing the outputs.

\begin{theorem}[Completeness of $\S\EaW$]
For any diagram $D:\underline{n}\to\underline{m}$ of $\S\EaW$, there exists a unique matrix $M \in \C^{m \times n}$ such that $D$ can 
be rewritten as :
\[
\scalebox{\scalefig}{\input{./figures/nf-theorem/EaW.tikz}}
\]
Furthermore, two diagrams reduce to the same normal form if and only if they have the same semantics.
\end{theorem}

For the remaining fragments, it is more convenient to reduce the normal-form problem for arbitrary morphisms to the corresponding problem for states. 

This reduction is based on a form of \emph{map--state duality}, although the precise mechanism differs between the fragments. 

In the compact-closed setting, inputs can be bent into outputs, reducing the study of morphisms to that of states.

In $\S\aW$, this mechanism is not expressed by the usual compact-closed structure of cups $\scalebox{\scalefig}{}$ and 
caps $\scalebox{\scalefig}{}$, which are out of the scope of the fragment. 

Their role is instead played by: 
\[
\scalebox{\scalefig}{\begin{tikzpicture}
	\begin{pgfonlayer}{nodelayer}
		\node [style=dot] (0) at (-2, 0) {};
		\node [style=lw] (1) at (-1, 0) {};
		\node [style=none] (2) at (0.5, 1) {};
		\node [style=lw] (3) at (-1, 0) {};
		\node [style=none] (4) at (0.5, -1) {};
		\node [style=modenb] (5) at (-1.5, 0.25) {$n$};
	\end{pgfonlayer}
	\begin{pgfonlayer}{edgelayer}
		\draw [style=thick] (0) to (1);
		\draw [style=thick, in=45, out=-180] (2.center) to (1);
		\draw [style=thick, in=-45, out=180] (4.center) to (3);
	\end{pgfonlayer}
\end{tikzpicture}
} \qquad \scalebox{\scalefig}{\begin{tikzpicture}
	\begin{pgfonlayer}{nodelayer}
		\node [style=dot] (0) at (2.75, 0) {};
		\node [style=none] (2) at (0.25, 1) {};
		\node [style=none] (4) at (0.25, -1) {};
		\node [style=rw] (5) at (1.75, 0) {};
		\node [style=none] (6) at (-3, 1) {};
		\node [style=none] (7) at (-3, -1) {};
		\node [style=modenb] (8) at (2.25, 0.25) {$n$};
		\node [style=white phase dot] (9) at (-1.25, 1) {$(-1)^{n+1}$};
	\end{pgfonlayer}
	\begin{pgfonlayer}{edgelayer}
		\draw [style=thick] (0) to (5);
		\draw [style=thick, in=135, out=0] (2.center) to (5);
		\draw [style=thick, in=-135, out=0] (4.center) to (5);
		\draw [style=thick] (6.center) to (2.center);
		\draw [style=thick] (4.center) to (7.center);
	\end{pgfonlayer}
\end{tikzpicture}
} 
\]

Then, using \L26, we have:
\[
\scalebox{\scalefig}{\begin{tikzpicture}
	\begin{pgfonlayer}{nodelayer}
		\node [style=none] (0) at (-1.5, 0) {};
		\node [style=none] (1) at (1.5, 0) {};
		\node [style={1_arrow_box}] (2) at (0, 0) {$D$};
		\node [style=modenb] (3) at (-1, 0.25) {$n$};
		\node [style=modenb] (4) at (1, 0.25) {$m$};
	\end{pgfonlayer}
	\begin{pgfonlayer}{edgelayer}
		\draw [style=thick] (0.center) to (1.center);
	\end{pgfonlayer}
\end{tikzpicture}
} 
\eq\limits_{\L26, \Exp}
\scalebox{\scalefig}{\begin{tikzpicture}
	\begin{pgfonlayer}{nodelayer}
		\node [style=none] (0) at (-1.5, -2.75) {};
		\node [style=none] (1) at (1.5, -2.75) {};
		\node [style={1_arrow_box}] (2) at (0, -2.75) {$D$};
		\node [style=modenb] (3) at (-1, -2.5) {$n$};
		\node [style=modenb] (4) at (1, -2.5) {$m$};
		\node [style=dot] (5) at (-3.25, -1.75) {};
		\node [style=lw] (6) at (-2.25, -1.75) {};
		\node [style=none] (7) at (-1.5, -0.75) {};
		\node [style=lw] (8) at (-2.25, -1.75) {};
		\node [style=none] (9) at (-1.5, -2.75) {};
		\node [style=modenb] (10) at (-2.75, -1.5) {$n$};
		\node [style=none] (11) at (2.5, -1.75) {};
		\node [style=none] (12) at (1.5, -0.75) {};
		\node [style=none] (13) at (3, -1.75) {};
		\node [style=none] (14) at (4.25, -2.75) {};
		\node [style=none] (15) at (3.5, -1.75) {};
		\node [style=none] (17) at (3, -1.75) {};
		\node [style=dot] (18) at (6, 0.25) {};
		\node [style=none] (19) at (4.25, 1.25) {};
		\node [style=none] (20) at (4.25, -0.75) {};
		\node [style=rw] (21) at (5, 0.25) {};
		\node [style=none] (22) at (-0.5, 1.25) {};
		\node [style=none] (23) at (4.25, -0.75) {};
		\node [style=modenb] (24) at (5.5, 0.5) {$n$};
		\node [style=white phase dot] (25) at (2.75, 1.25) {$(-1)^{n+1}$};
		\node [style=none] (26) at (-2, 0) {};
		\node [style=none] (27) at (-3.25, 0) {};
		\node [style=modenb] (28) at (3, -1.5) {$n+m$};
		\node [style=none] (29) at (4.25, -2.75) {};
		\node [style=none] (30) at (6, -2.75) {};
		\node [style=modenb] (31) at (5.25, -2.5) {$m$};
	\end{pgfonlayer}
	\begin{pgfonlayer}{edgelayer}
		\draw [style=thick] (0.center) to (1.center);
		\draw [style=thick] (5) to (6);
		\draw [style=thick, in=45, out=-180] (7.center) to (6);
		\draw [style=thick, in=-45, out=180] (9.center) to (8);
		\draw [style=thick, in=-180, out=0] (1.center) to (11.center);
		\draw [style=thick] (12.center) to (7.center);
		\draw [style=thick, in=180, out=0] (12.center) to (11.center);
		\draw [style=thick] (13.center) to (11.center);
		\draw [style=thick, in=0, out=180] (14.center) to (15.center);
		\draw [style=thick] (17.center) to (15.center);
		\draw [style=thick] (18) to (21);
		\draw [style=thick, in=135, out=0] (19.center) to (21);
		\draw [style=thick, in=-135, out=0] (20.center) to (21);
		\draw [style=thick] (22.center) to (19.center);
		\draw [style=thick] (20.center) to (23.center);
		\draw [style=thick, in=0, out=180] (23.center) to (15.center);
		\draw [style=thick, in=-180, out=0] (26.center) to (22.center);
		\draw [style=thick] (27.center) to (26.center);
		\draw [style=thick] (30.center) to (29.center);
	\end{pgfonlayer}
\end{tikzpicture}
}
\eq\limits{}
\scalebox{\scalefig}{\begin{tikzpicture}
	\begin{pgfonlayer}{nodelayer}
		\node [style=none] (11) at (-3.5, -1.75) {};
		\node [style=none] (13) at (-1.5, -1.75) {};
		\node [style=none] (14) at (0, -2.75) {};
		\node [style=none] (15) at (-1, -1.75) {};
		\node [style=none] (17) at (-1.5, -1.75) {};
		\node [style=dot] (18) at (2.25, 0.25) {};
		\node [style=none] (19) at (0.5, 1.25) {};
		\node [style=none] (20) at (0.5, -0.75) {};
		\node [style=rw] (21) at (1.25, 0.25) {};
		\node [style=none] (22) at (-1.5, 1.25) {};
		\node [style=none] (23) at (0, -0.75) {};
		\node [style=modenb] (24) at (1.75, 0.5) {$n$};
		\node [style=white phase dot] (25) at (-0.5, 1.25) {$(-1)^{n+1}$};
		\node [style=none] (26) at (-3, 0) {};
		\node [style=none] (27) at (-4.25, 0) {};
		\node [style=modenb] (28) at (-1.75, -1.5) {$n+m$};
		\node [style={2_arrow_box}] (29) at (-3.5, -1.75) {$D'$};
		\node [style=none] (30) at (2.25, -2.75) {};
		\node [style=modenb] (31) at (1.5, -2.5) {$m$};
	\end{pgfonlayer}
	\begin{pgfonlayer}{edgelayer}
		\draw [style=thick] (13.center) to (11.center);
		\draw [style=thick, in=0, out=180] (14.center) to (15.center);
		\draw [style=thick] (17.center) to (15.center);
		\draw [style=thick] (18) to (21);
		\draw [style=thick, in=135, out=0] (19.center) to (21);
		\draw [style=thick, in=-135, out=0] (20.center) to (21);
		\draw [style=thick] (22.center) to (19.center);
		\draw [style=thick] (20.center) to (23.center);
		\draw [style=thick, in=0, out=180] (23.center) to (15.center);
		\draw [style=thick, in=-180, out=0] (26.center) to (22.center);
		\draw [style=thick] (27.center) to (26.center);
		\draw [style=thick] (30.center) to (14.center);
	\end{pgfonlayer}
\end{tikzpicture}
}
\]

a map can be transformed into a state, reduced to normal form there, and subsequently reconstructed as a map. 

This allows the normal-form problem for arbitrary morphisms to be reduced to the normal-form problem for diagrams of type
\[
0\longrightarrow\underline{n}.
\]

Recall the notion of reduced column echelon form matrix:

\begin{definition}[Reduced column echelon form]
    Let $M\in\C^{m\times n}$. We say that $M$ is in
    \emph{reduced column echelon form} (RCEF) if $M^T$ is in
    reduced row echelon form.
    Equivalently:
    \begin{enumerate}
        \item all zero columns of $M$, if any, occur after all non-zero
        columns;
        \item the first non-zero entry of each non-zero column is equal
        to $1$. We call this entry the \emph{pivot} of the column, and
        its row the corresponding \emph{pivot row};
        \item the pivot of each non-zero column lies strictly below the
        pivot of the preceding non-zero column;
        \item each pivot is the only non-zero entry in its row.
    \end{enumerate}
    Thus, if $M$ has rank $r$, its $r$ non-zero columns have pivot rows
    \[
        1\leq j_1<\cdots<j_r\leq m.
    \]
\end{definition}

    For example,
    \[
        M=
        \begin{bsmallmatrix}
            1 & 0 & 0 \\
            a & 0 & 0 \\
            0 & 1 & 0 \\
            b & c & 0 \\
            0 & 0 & 1
        \end{bsmallmatrix}
    \]
    is in RCEF. Its pivot rows are
    \[
        (j_1,j_2,j_3)=(1,3,5).
    \]
    Indeed,
    \[
        M^T=
        \begin{bsmallmatrix}
            1 & a & 0 & b & 0 \\
            0 & 0 & 1 & c & 0 \\
            0 & 0 & 0 & 0 & 1
        \end{bsmallmatrix}
    \]
    is in reduced row echelon form.\\
RCEF matrices are crucial to the normal for of state diagrams of $\S\aW$:

\begin{theorem}[Completeness of $\S\aW$]
For any diagram $D:0\to\underline{n}$ of $\S\aW$, there exists a unique triple $(r,x,M)$,
where $0\leq r \leq n$, $x\in\C$, and $M\in\C^{n\times r}$ is a full rank matrix in RCEF, such that $D$ can 
be rewritten as:
\[
\scalebox{\scalefig}{\input{./figures/nf-theorem/aW_1.tikz}}
\]
With convention that $(r,x,M) = (n,0,I_n)$ whenever $\interp{D} = 0$,
that for $r = 0$, $M$ is the empty matrix of type $\C^{n\times0}$,
and that in this case the empty matrix arrow is merely a syntactic sugar for:
\[
\scalebox{\scalefig}{\input{./figures/nf-theorem/aW-spec-case_1.tikz}} \quad := \quad \scalebox{\scalefig}{\begin{tikzpicture}
	\begin{pgfonlayer}{nodelayer}
		\node [style=none] (4) at (2, 0) {};
		\node [style=modenb] (8) at (1.75, 0.25) {$n$};
		\node [style=rw] (9) at (0.25, 0) {};
	\end{pgfonlayer}
	\begin{pgfonlayer}{edgelayer}
		\draw [style=thick] (9) to (4.center);
	\end{pgfonlayer}
\end{tikzpicture}
}
\]
Furthermore, two diagrams reduce to the same normal form if and only if they have the same semantics.
\end{theorem}

For the $\FpW$ and $\EpW$ fragments, the presence of cups and caps gives an ordinary map--state duality: inputs can be bent into outputs 
and conversely using \S\A18, 
so that a normal form for states immediately induces one for morphisms of arbitrary arity. 

The completeness proofs for $\S\FpW$ and $\S\EpW$ can therefore be organised by first reducing to normal form 
the associated state and then recovering the original input--output type by means of the compact structure.

\begin{theorem}[Completeness of $\S\FpW$]
For any  diagram $D:0\to\underline{n}$ of $\S\FpW$, there exists a unique quadruple $(r,x,Z,M)$, 
where $0\leq r \leq n$, $x\in\C$, $M\in\C^{n\times r}$ is a full rank matrix in RCEF, and $Z\in \C^{n\times n}$ 
is an antisymmetric matrix whose columns and rows corresponding to pivots of $M$ are zero, such that $D$ can 
be rewritten as:
\[
\scalebox{\scalefig}{\input{./figures/nf-theorem/FpW_1.tikz}}
\]
With the convention that $(r,x,Z,M) = (n,0,0,I_n)$ whenever $\interp{D} = 0$,
that for $r = 0$, $M$ is the empty matrix of type $\C^{n\times0}$,
and that in this case the empty matrix arrow is merely a syntactic sugar for:
\[
\scalebox{\scalefig}{\input{./figures/nf-theorem/aW-spec-case_1.tikz}} \quad := \quad \scalebox{\scalefig}{}
\]
Furthermore, two diagrams reduce to the same normal form if and only if they have the same semantics.
\end{theorem}

The normal form for $\S\EpW$ is very similar to the $\S\FpW$ normal form, the main difference being that the integer
$r =2k$ has to be even, and that using: 
\[
\boxeq{L30}
    {SFpW-lemma/unconnect_1}
    {SFpW-lemma/unconnect_2}
\]

the bottom-left corner of the normal form can be imported in $\S\EpW$.

The two completeness results are then very similar up to this trick.

\begin{theorem}[Completeness of $\S\EpW$]
    For any  diagram $D:0\to\underline{n}$ of $\S\EpW$, there exists a unique quadruple $(k,x,Z,M)$, 
    where $0\leq 2k \leq n$, $x\in\C$, $M\in\C^{n\times 2k}$ is a full rank matrix in RCEF, and $Z\in \C^{n\times n}$ 
    is an antisymmetric matrix whose columns and rows corresponding to pivots of $M$ are zero, such that $D$ can 
    be rewritten as:
    \[
    \scalebox{\scalefig}{\input{./figures/nf-theorem/EpW.tikz}}
    \]
    With the convention that $(k,x,Z,M) = (n/2,0,0,I_{n})$ if $n$ even, $(k,x,Z,M) = ((n-1)/2,0,0,\pmm{I_{n-1}\\0 \cdots 0 }\in \C^{n\times (n-1)})$ if it is odd  whenever $\interp{D} = 0$.
    For $k = 0$, $M$ is the empty matrix of type $\C^{n\times0}$,
    and in this case the empty matrix arrow is merely a syntactic sugar for:
    \[
    \scalebox{\scalefig}{\input{./figures/nf-theorem/EpW-spec-case_1.tikz}} \quad := \quad \scalebox{\scalefig}{}
    \]
    Furthermore, two diagrams reduce to the same normal form if and only if they have the same semantics.
\end{theorem}

As explained in \cref{subsec:scalable-notations}, the completeness  of a scaled pronap theory implies the 
completeness  of its unscaled counterpart, yielding the following corollary.

\begin{corollary}[Completeness of the unscaled fragments]
    \label{cor:unscaled-completeness}
    For each fragment
    \[
        \mathcal P \in \{\EaW,\aW,\EpW,\FpW\},
    \]
    and for any two diagrams $D_1,D_2:m\to n$ in $\mathcal P$,
    \[
        \llbracket D_1\rrbracket
        =
        \llbracket D_2\rrbracket
        \quad\Longrightarrow\quad
        D_1\eq\limits_{\mathcal{P}}D_2.
    \]
    Equivalently, the semantic interpretation functor of each of
    $\EaW$, $\aW$, $\EpW$, and $\FpW$ is faithful, and therefore 
    these pronap theories are complete.
\end{corollary}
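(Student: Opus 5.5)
The plan is to deduce the corollary from the four scaled completeness theorems, via the monoidal equivalence $\iota_\T \dashv \mathrm{unfold}_\T$ described in \cref{subsec:scalable-notations}. Let $\mathcal P$ be one of the four fragments and fix $D_1,D_2:m\to n$ in $\mathcal P$ with $\llbracket D_1\rrbracket=\llbracket D_2\rrbracket$.

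First, embed both diagrams into $S\mathcal P$ as $\iota(D_i): m\cdot\underline{1}\to n\cdot\underline{1}$. Composing with the canonical divider/gatherer diagrams turns each into a diagram $\underline{m}\to\underline{n}$. Since $\delta$ and $\gamma$ are natural and interpreted as identities, the two resulting diagrams still have equal semantics in $\S\LFM$.

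Next, in the case $\mathcal P=\EaW$, apply the completeness theorem of $\S\EaW$ directly. For the other three fragments, first pass to states: in $\S\EpW$ and $\S\FpW$, bend the inputs to outputs using $\S\A18$; in $\S\aW$, use the duality gadget built from \L26 and \Exp\ displayed before the $\S\aW$ theorem. Both bending operations are invertible by derivable equations and preserve semantic equality. The corresponding scaled completeness theorem then rewrites the two states to normal forms that are determined by their semantics, so the two normal forms coincide. Undoing the bending and the gathering gives $\iota(D_1)=\iota(D_2)$ in $S\mathcal P$.

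Finally, transport this equality back. Every equation of $SE$ is either an equation of $E$ (typed on $\underline 1$-wires) or one of \Exp/\Elim, and $\mathrm{unfold}$ sends \Exp/\Elim\ to trivial identities. So $\mathrm{unfold}$ is a well-defined functor $S\mathcal P\to\mathcal P$ with $\mathrm{unfold}\circ\iota=\mathrm{id}$, and applying it yields $D_1=D_2$ in $\mathcal P$. Faithfulness of the interpretation functor is just a restatement of this implication.

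The main obstacle is the well-definedness of $\mathrm{unfold}$. Two points need checking. First, the naturality equations added for dividers and gatherers must unfold to equations derivable in $\mathcal P$. Second, the $\S\aW$ bending gadget must be a genuine isomorphism up to the scalar $(-1)^{n+1}$, which is invertible. I would check the first point by observing that a presymmetry composed with an identity-interpreted generator unfolds to the block presymmetry $\pi_{m,m'}$, which the coherence (Yang--Baxter/involution) axioms of the presymmetric structure decompose into generating $\pi$'s.
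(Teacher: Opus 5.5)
Your proposal is correct and follows the paper's own route. The paper obtains the corollary from the scaled normal-form theorems, after gathering/dividing wires and bending maps into states via the scaled snake equation A18 or via L26. It then transports the result back through the equivalence given by $\iota$ and $\mathrm{unfold}$; you check the well-definedness of $\mathrm{unfold}$ more explicitly than the paper, which simply calls the equivalence immediate.

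One small imprecision: the $(-1)^{n+1}$ in L26 is a phase on the width-$n$ wire, not a global scalar. It is an involution, though, and the paper's cup-like gadget already absorbs it, so your invertibility argument is unaffected.
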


\section*{Conclusion}
%

In this work, we have developed a categorical and diagrammatic account of the circuits underlying the Bravyi--Kitaev model of fermionic computation. We first provided a categorical semantics for these circuits and observed that, in full generality, their compositional structure is not monoidal but only \emph{premonoidal}. This distinction reflects the genuinely fermionic character of the model and clarifies the algebraic structure required to reason compositionally about its circuits. Since premonoidal structure is a natural tool for modelling effectful programming languages, this observation would be coherent with  an operational interpretation in which the gates responsible for the failure of monoidality carry a computational effect.

We then introduced equational theories for four fragments of the FZW calculus: $\EaW$, $\EpW$, $\aW$, and $\FpW$. Although $\EaW$ is algebraically elementary, it plays a crucial structural role in the development of the other fragments. The calculi $\EpW$ and $\aW$ are new: $\EpW$ is expressive enough to represent all Gaussian gates appearing in the Bravyi--Kitaev gate set, while $\aW$ is the smallest theory in which the paradigmatic swap is non-natural, and is therefore formulated as a pronap. Finally, $\FpW$ captures the previously studied $\pW$ calculus, but with a different presentation, again formulated in terms of pronaps.

To establish completeness for these fragments, we extended the formalism of scalable notations to the setting of pronaps. This extension allows families of wires and generators to be manipulated directly at the diagrammatic level and provides uniform normal forms for all four fragments. For FW, this yields a new canonical normal-form presentation of the matchgate fragment, distinct from the reduced WGS-X normal form previously obtained for the planar-W calculus~\cite{CaretteEtAl2023}, and expressed uniformly in terms of the matrix-arrow and graph-triangle constructions used throughout our hierarchy. Beyond their role in the completeness arguments, the scalable syntactic constructions introduced along the way have an independent algebraic significance: they provide a diagrammatic representation of operations governing determinants and Pfaffians. As a consequence, several classical identities involving these polynomials admit short and purely diagrammatic proofs.

Altogether, these results provide a common categorical and rewriting framework connecting fermionic computation, scalable diagrammatic reasoning, and the algebra of determinants and Pfaffians. They also isolate a hierarchy of fragments in which increasingly expressive classes of fermionic circuits can be studied through complete graphical languages, providing a foundation for the further investigation of fermionic circuit compilation, optimisation, and classical simulation.

\newpage

\bibliographystyle{acm}
\bibliography{lics}

\newpage

\appendix
\clearpage
\onecolumn
\section{Proofs of \cref{sec:2}}

\begin{proof}[Proof of \cref{lem:application-morphism}]
    Since $t$ is injective, its images satisfy the same CAR:
    \[
        \{f_{t(i)},f_{t(j)}\}=0,\qquad
        \{f_{t(i)}^\dagger,f_{t(j)}^\dagger\}=0,\qquad
        \{f_{t(i)}^\dagger,f_{t(j)}\}=\delta_{ij}1.
    \]
    Hence the assignment on generators extends uniquely to a unital
    $*$-algebra morphism.

    For injectivity, recall that the normally ordered monomials
    \[
        f_{i_1}^\dagger\cdots f_{i_k}^\dagger
        f_{j_1}\cdots f_{j_\ell}
    \]
    form a basis of $\A_m$. Since $t$ is injective, their images are, up to
    the signs required to reorder the indices, distinct normally ordered
    monomials of $\A_{m'}$. They are therefore linearly independent, and
    $\iota_t$ is injective.

    Finally, $\iota_t$ preserves the degree of every monomial and hence its
    parity. If $\U$ is unitary, then
    \[
        \iota_t(\U)\iota_t(\U)^\dagger
        =
        \iota_t(\U\U^\dagger)
        =1,
    \]
    and similarly in the opposite order.
\end{proof}

\begin{proof}[Proof of \cref{lem:fermionic-swap-generators}]
    It suffices first to work in $\A_2$. Set
    \[
        a:=1-n_1-n_2,
        \qquad
        h:=f_2^\dagger f_1+f_1^\dagger f_2,
    \]
    so that $\pi=a+h$. Clearly $\pi^\dagger=\pi$, and the CAR give
    \[
        a^2=1-n_1-n_2+2n_1n_2,
        \qquad
        h^2=n_1+n_2-2n_1n_2,
        \qquad
        ah=ha=0.
    \]
    Hence $\pi^2=1$, so $\pi$ is a Hermitian unitary. Every term of $\pi$
    has even degree, so $\pi$ is even.

    Again using the CAR,
    \[
        \pi f_1
        =
        f_1-n_2f_1-n_1f_2
        =
        f_2\pi,
    \]
    and symmetrically $\pi f_2=f_1\pi$. Since $\pi^2=1$,
    \[
        \pi f_1\pi=f_2,
        \qquad
        \pi f_2\pi=f_1.
    \]
    Applying the morphism associated with $\tau_{i,j}$ gives the result for
    $k=i,j$. If $k\notin\{i,j\}$, $\pi\{i,j\}$ is even and supported only
    on modes $i,j$, hence it commutes with $f_k$ by the CAR. Taking adjoints
    gives the corresponding identities for the creation operators.
\end{proof}

\begin{proof}[Proof of \cref{lem:representation-sym-group}]
    The relation $\pi_r^2=1$ follows from
    Lemma~\ref{lem:fermionic-swap-generators}. If $|r-s|>1$, the two
    Fermionic swaps are even and have disjoint supports, so the CAR imply
    $\pi_r\pi_s=\pi_s\pi_r$.

    It remains to prove the braid relation. Let
    \[
        U:=\pi_r\pi_{r+1}\pi_r,
        \qquad
        V:=\pi_{r+1}\pi_r\pi_{r+1}.
    \]
    By Lemma~\ref{lem:fermionic-swap-generators}, conjugation by $U$ and
    $V$ induces the same permutation
    \[
        \tau_r\tau_{r+1}\tau_r=\tau_{r+1}\tau_r\tau_{r+1}
    \]
    on every generator $f_k$ and $f_k^\dagger$. Hence $V^{-1}U$ commutes
    with every generator of $\A_m$, and therefore belongs to the centre of
    $\A_m$. By the Jordan--Wigner isomorphism,
    $Z(\A_m)=\C 1$, so $U=\lambda V$ for some $\lambda\in\C$.

    Each Fermionic swap fixes the vacuum $\ket{0^m}$, and therefore so do
    $U$ and $V$. It follows that $\lambda=1$, proving the braid relation.
    The existence and uniqueness of the homomorphism then follow from the
    Coxeter presentation of $\mathfrak S_m$.
\end{proof}

\begin{proof}[Proof of \cref{lem:G-equivariant}]
    By Lemma~\ref{lem:fermionic-swap-generators} and the preceding
    representation,
    \[
        \sigma\star f_k=f_{\sigma(k)},
        \qquad
        \sigma\star f_k^\dagger=f_{\sigma(k)}^\dagger.
    \]
    Hence the two $*$-algebra morphisms
    \[
        (\sigma\star -)\circ\iota_t
        \qquad\text{and}\qquad
        \iota_{\sigma\circ t}
    \]
    agree on every generator of $\A_m$, and are therefore equal. Applying
    them to $\G$ gives the result.
\end{proof}
\section{Proofs of \cref{sec:3}}
\begin{proof}[Proof of \cref{lem:car-strict-premonoidal}]
    For every fixed $k$, the maps
    \[
        k\owedge(-)
        \qquad\text{and}\qquad
        (-)\owedge k
    \]
    are induced by the mode embeddings introduced in
    \cref{subsec:2-3}. Since these embeddings are unital algebra
    morphisms, they preserve identities and composition, and hence
    define functors.

    The unit equations follow from the identity embedding
    $\A_m\to\A_m$. The associativity equations follow from composition
    of the corresponding index embeddings. Explicitly, for
    $a\in\A_m$,
    \[
        k\owedge(\ell\owedge a)
        =
        (k+\ell)\owedge a,
        \qquad
        (a\owedge\ell)\owedge k
        =
        a\owedge(\ell+k),
    \]
    and
    \[
        (k\owedge a)\owedge\ell
        =
        k\owedge(a\owedge\ell).
    \]
    Thus $(\CAR,\owedge,0)$ is strict premonoidal.
\end{proof}

\begin{proof}[Proof of \cref{lem:graded-interchange-CAR}]
    Let $a\in\A_m$ and $a'\in\A_n$ be homogeneous, with parities
    $|a|,|a'|\in\mathbb Z_2$.
    The generators appearing in $a\owedge n$ have indices in
    $\{1,\ldots,m\}$, whereas those appearing in $m\owedge a'$
    have indices in
    $\{m+1,\ldots,m+n\}$.

    Since the two sets of modes are disjoint, the CAR imply that
    exchanging one generator from the first monomial with one generator
    from the second contributes a factor $-1$. The result follows by linearity: 
    \[
        (a\owedge n)(m\owedge a')
        =
        (-1)^{|a||a'|}
        (m\owedge a')(a\owedge n).
    \]
\end{proof}

\begin{proof}[Proof of \cref{prop:center-CAR}]
    By \cref{lem:graded-interchange-CAR}, every even operator commutes
    with every homogeneous operator supported on a disjoint block of
    modes, and therefore, by linearity, with every operator supported
    on such a block. Hence every even operator is central.

    Conversely, let $a\in\A_m$ and write uniquely
    \[
        a=a_{\mathrm{even}}+a_{\mathrm{odd}}.
    \]
    Suppose that $a_{\mathrm{odd}}\neq0$. Let
    \[
        F:=f_1+f_1^\dagger\in\A_1
    \]
    be the mode-flip introduced in \cref{subsec:2-3}. Since $F$ is odd,
    \cref{lem:graded-interchange-CAR} gives
    \[
    \begin{aligned}
        &(a\owedge1)(m\owedge F)
        -(m\owedge F)(a\owedge1)\\
        &\qquad=
        2(a_{\mathrm{odd}}\owedge1)(m\owedge F).
    \end{aligned}
    \]
    The right-hand side is non-zero: the application embedding is
    injective, so $a_{\mathrm{odd}}\owedge1\neq0$, while
    $m\owedge F$ is invertible since $F$ is unitary.
    Hence $a$ is not central.

    Therefore the central morphisms are exactly the even operators.
\end{proof}

\begin{proof}[Proof of \cref{prop:CAR-symmetric-premonoidal}]
    Every $\pi_{m,n}$ is even and therefore central by
    \cref{prop:center-CAR}. The involution and coherence equations
    follow from the corresponding identities between block
    permutations, since $\sigma\mapsto\pi_\sigma$ is a representation
    of the symmetric group.

    It remains to prove naturality. Let $a\in\A_m$. By the equivariance
    lemma of \cref{subsec:2-3}, conjugation by the block permutation
    transports the first block of $m$ modes to the second block:
    \[
        \pi_{m,n}(a\owedge n)\pi_{n,m}
        =
        n\owedge a.
    \]
    Equivalently,
    \[
        \pi_{m,n}(a\owedge n)
        =
        (n\owedge a)\pi_{m,n}.
    \]
    Similarly, for $a'\in\A_n$,
    \[
        \pi_{m,n}(m\owedge a')\pi_{n,m}
        =
        a'\owedge m,
    \]
    or equivalently
    \[
        \pi_{m,n}(m\owedge a')
        =
        (a'\owedge m)\pi_{m,n}.
    \]
    These are precisely the two naturality equations. Hence every
    morphism of $\CAR$ is natural with respect to $\pi$.
\end{proof}

\begin{proof}[Proof of \cref{prop:graded-swap}]
    Decompose the block permutation exchanging the first $m$ modes with
    the following $n$ modes into adjacent Fermionic swaps. Each mode of
    the first block crosses each mode of the second block exactly once.
    A Fermionic swap contributes a minus sign precisely when both
    crossed modes are occupied.

    Hence, on $\ket{x}\otimes\ket{y}$, the number of sign-producing
    crossings is $|x||y|$, and therefore
    \[
        \Pi_{m,n}
        (\ket{x}\otimes\ket{y})
        =
        (-1)^{|x||y|}
        \ket{y}\otimes\ket{x}.
    \]
\end{proof}

\begin{proof}[Proof of \cref{prop:nature-LFM}]
    The involutivity and coherence equations follow from
    \cref{prop:graded-swap}, equivalently from the corresponding block
    permutations.

    Let $f:m\to n$ be even and write
    \[
        f=\sum_{x,y}f_{y,x}\ketbra{y}{x}.
    \]
    Whenever $f_{y,x}\neq0$,
    \[
        |x|\equiv|y|\pmod2.
    \]
    Hence, for every $z\in\{0,1\}^k$,
    \[
    \begin{aligned}
        \Pi_{n,k}(f\otimes id_k)\ket{xz}
        &=
        \sum_y
        f_{y,x}(-1)^{|y||z|}\ket{zy}\\
        &=
        \sum_y
        f_{y,x}(-1)^{|x||z|}\ket{zy}\\
        &=
        (id_k\otimes f)\Pi_{m,k}\ket{xz}.
    \end{aligned}
    \]
    The second naturality equation is analogous. Thus every even map
    belongs to the Nature of $\Pi$.

    Conversely, suppose that $f$ is natural. Taking $k=1$ and applying
    the preceding naturality equation to
    $\ket{x}\otimes\ket{1}$ yields
    \[
        \sum_y f_{y,x}(-1)^{|y|}\ket{1y}
        =
        \sum_y f_{y,x}(-1)^{|x|}\ket{1y}.
    \]
    By linear independence of the computational basis, every non-zero
    coefficient $f_{y,x}$ therefore satisfies
    \[
        (-1)^{|y|}=(-1)^{|x|}.
    \]
    Hence $|x|+|y|$ is even whenever $f_{y,x}\neq0$, so $f$ is even.
    The Nature of $\Pi$ is therefore exactly $\ELFM$.
\end{proof}

\begin{proof}[Proof of \cref{lem:mon-sym-tradeoff}]
    For every object $A$, the functoriality of $(-)\owedge A$ follows
    directly from the bifunctoriality of $\otimes$, since
    \[
        f\owedge A=f\otimes id_A.
    \]

    For $A\owedge(-)$, let $f:B\to C$ and $g:C\to D$. Then
    \[
    \begin{aligned}
        (A\owedge g)\circ(A\owedge f)
        &=
        \pi_{D,A}(g\otimes id_A)
        \pi_{A,C}
        \pi_{C,A}(f\otimes id_A)
        \pi_{A,B}\\
        &=
        \pi_{D,A}
        ((g\circ f)\otimes id_A)
        \pi_{A,B}\\
        &=
        A\owedge(g\circ f),
    \end{aligned}
    \]
    using the involutivity
    $\pi_{A,C}\pi_{C,A}=id_{C\otimes A}$.
    Moreover,
    \[
        A\owedge id_B
        =
        \pi_{B,A}\pi_{A,B}
        =
        id_{A\otimes B}.
    \]
    Thus both left and right extension by an object are functorial.

    The unit and strict associativity equations follow by expanding the
    definitions and using the coherence equations of the presymmetry.
    Likewise, the same coherence equations imply that the morphisms
    $\pi_{A,B}$ are central for $\owedge$.

    By construction, $\pi$ is natural with respect to every morphism for
    the product $\owedge$. Indeed, for $f:B\to C$,
    \[
    \begin{aligned}
        (A\owedge f)\circ\pi_{B,A}
        &=
        \pi_{C,A}(f\otimes id_A)
        \pi_{A,B}\pi_{B,A}\\
        &=
        \pi_{C,A}(f\otimes id_A)\\
        &=
        \pi_{C,A}\circ(f\owedge A),
    \end{aligned}
    \]
    and similarly
    \[
    \begin{aligned}
        \pi_{A,C}\circ(A\owedge f)
        &=
        \pi_{A,C}\pi_{C,A}
        (f\otimes id_A)\pi_{A,B}\\
        &=
        (f\otimes id_A)\pi_{A,B}\\
        &=
        (f\owedge A)\circ\pi_{A,B}.
    \end{aligned}
    \]
    Hence $(\mathcal C,\owedge,\pi)$ is strict symmetric
    premonoidal.

    Finally, suppose that $f:B\to C$ belongs to the Nature of
    $(\mathcal C,\otimes,\pi)$. Naturality for $\otimes$ gives
    \[
        \pi_{C,A}(f\otimes id_A)
        =
        (id_A\otimes f)\pi_{B,A},
    \]
    and therefore
    \[
    \begin{aligned}
        A\owedge f
        &=
        \pi_{C,A}(f\otimes id_A)\pi_{A,B}\\
        &=
        id_A\otimes f.
    \end{aligned}
    \]
    On the other hand, by definition,
    \[
        f\owedge A=f\otimes id_A.
    \]
    The ordinary interchange law for $\otimes$ then implies that $f$
    commutes, with respect to $\owedge$, with every morphism. Thus $f$
    is central for $\owedge$.
\end{proof}

\begin{proof}[Proof of \cref{lem:LFM-whiskering}]
    The equality
    \[
        f\owedge k=f\otimes id_k
    \]
    is the definition.

    For the second equality, let
    $z\in\{0,1\}^k$ and $x\in\{0,1\}^m$. By definition,
    \[
        k\owedge f
        =
        \Pi_{n,k}(f\otimes id_k)\Pi_{k,m}.
    \]
    Hence
    \[
    \begin{aligned}
        (k\owedge f)\ket{zx}
        &=
        (-1)^{|z||x|}
        \Pi_{n,k}
        (f\otimes id_k)\ket{xz}\\
        &=
        \sum_y
        f_{y,x}
        (-1)^{|z||x|+|y||z|}
        \ket{zy}.
    \end{aligned}
    \]
    Whenever $f_{y,x}\neq0$, homogeneity gives
    \[
        |y|\equiv|x|+|f|\pmod2.
    \]
    Therefore
    \[
        |z||x|+|y||z|
        \equiv
        |f||z|
        \pmod2,
    \]
    and consequently
    \[
        (k\owedge f)\ket{zx}
        =
        (\mathsf P_k^{\,|f|}\otimes f)\ket{zx}.
    \]
    Thus
    \[
        k\owedge f
        =
        \mathsf P_k^{\,|f|}\otimes f.
    \]
\end{proof}

\begin{proof}[Proof of \cref{prop:graded-interchange-LFM}]
    Let $f:m\to m'$ and $g:n\to n'$ be homogeneous.
    By \cref{lem:LFM-whiskering},
    \[
    \begin{aligned}
        (f\owedge n')\circ(m\owedge g)
        &=
        (f\otimes id_{n'})
        \circ
        (\mathsf P_m^{\,|g|}\otimes g)\\
        &=
        (f\mathsf P_m^{\,|g|})\otimes g,
    \end{aligned}
    \]
    whereas
    \[
    \begin{aligned}
        (m'\owedge g)\circ(f\owedge n)
        &=
        (\mathsf P_{m'}^{\,|g|}\otimes g)
        \circ
        (f\otimes id_n)\\
        &=
        (\mathsf P_{m'}^{\,|g|}f)\otimes g.
    \end{aligned}
    \]
    Since $f$ has parity $|f|$,
    \[
        \mathsf P_{m'}f
        =
        (-1)^{|f|}
        f\mathsf P_m.
    \]
    Hence
    \[
        \mathsf P_{m'}^{\,|g|}f
        =
        (-1)^{|f||g|}
        f\mathsf P_m^{\,|g|},
    \]
    and therefore
    \[
        (f\owedge n')\circ(m\owedge g)
        =
        (-1)^{|f||g|}
        (m'\owedge g)\circ(f\owedge n).
    \]

    In particular, every even map is central. Conversely, if $f$ is a
    non-zero odd map, choose any non-zero odd one-mode map, for instance
    the mode-flip $F$. The two corresponding interchange composites are
    non-zero and differ by a minus sign, so $f$ is not central.
    Therefore the centre is exactly $\ELFM$.
\end{proof}

\begin{proof}[Proof of \cref{thm:JW-premonoidal}]
    The first statement follows from
    \cref{lem:mon-sym-tradeoff}, and the description of the centre of
    $\LFM$ follows from \cref{prop:graded-interchange-LFM}.

    It remains to show that $JW$ preserves the two extensions by an
    object. Let $a\in\A_m$.

    Since $a\owedge n$ acts on the first $m$ modes and the additional
    $n$ modes are appended afterwards,
    \[
        JW(a\owedge n)
        =
        JW(a)\otimes id_n
        =
        JW(a)\owedge n.
    \]

    For the other extension, equivariance of the CAR action under the
    Fermionic block permutation gives
    \[
        n\owedge a
        =
        \pi_{m,n}(a\owedge n)\pi_{n,m}.
    \]
    Applying $JW$ yields
    \[
    \begin{aligned}
        JW(n\owedge a)
        &=
        \Pi_{m,n}
        (JW(a)\otimes id_n)
        \Pi_{n,m}\\
        &=
        n\owedge JW(a),
    \end{aligned}
    \]
    by the definition of the premonoidal product on the Hilbert-space
    semantics.

    Finally,
    \[
        JW(\pi_{m,n})=\Pi_{m,n},
    \]
    so $JW$ preserves the presymmetry as well.
\end{proof}
\section{Proofs of \cref{sec:5}}
\begin{proof}[Proof of \cref{lemma:scaled-diag-nat}]
    By induction on $D$:
    \begin{itemize}[label=|]
    \item Base cases:
            \[\begin{cases}
                 g \to \text{definition of }\S_{k_1}(D)\text{, immediate.}\\
                 \mathrm{id}_1 \to \text{ follows from $\text{Elim}_{1,k}$.}\\
                 \pi \to\text{ follows from naturality of dividers and gatherers.}
            \end{cases}\]

    \item Case $D = D' \circ D''$:\smallskip
    \[
    \scalebox{\scalefig}{\input{./figures/proof-sec-4/circ_1.tikz}} \eq
    \scalebox{\scalefig}{\input{./figures/proof-sec-4/circ_2.tikz}} \eq
    \scalebox{\scalefig}{\input{./figures/proof-sec-4/circ_3.tikz}} 
    \]
    \item Case $D = D' \otimes D''$\smallskip
   \[
    \scalebox{\scalefig}{\input{./figures/proof-sec-4/otime_1.tikz}} \eq
    \scalebox{\scalefig}{\input{./figures/proof-sec-4/otime_2.tikz}} \eq
    \scalebox{\scalefig}{\input{./figures/proof-sec-4/otime_3.tikz}} 
    \]
    Note that this step is only true so long as all the generators in the expression of $D$ are natural. The other inductive steps are however valid for any diagram.
    \end{itemize}
    \end{proof}

    \begin{proof}[Proof of \cref{lemma:scaled-diag-line}]
    Same as above without the step for $\otimes$ in the induction.
    \end{proof}

    \begin{proof}[Proof of \cref{thm:scalable-equations}]
    Let $(D, D')$ be an equation such that $D$ and $D'$ only involve natural generators. Using \cref{lemma:scaled-diag-nat}, we get :
    \[
    \scalebox{\scalefig}{\input{./figures/proof-sec-4/thm_1.tikz}} \eq
    \scalebox{\scalefig}{\input{./figures/proof-sec-4/thm_2.tikz}} \eq
    \scalebox{\scalefig}{\input{./figures/proof-sec-4/thm_3.tikz}} \eq
    \scalebox{\scalefig}{\input{./figures/proof-sec-4/thm_4.tikz}}
    \]
    Similarly, using \cref{lemma:scaled-diag-line}, we get the second point.
    \end{proof}
\clearpage
\onecolumn
\sensitivetikzfalse

\subsection{Proofs of the derived equations}
\subsubsection{Derived equations in $\S\EaW$}
\begin{proof}[Proof of \L5]
            The base case comes from \EaW equation \A10. 
            \phantomsection\label{proof:SEaW-zero-mat}
                \begin{derivationchain}
    \scalebox{\scalefig}{\input{./figures/aW-proof/zero-mat_1.tikz}} &
    $\eq$ &
    \scalebox{\scalefig}{\input{./figures/aW-proof/zero-mat_2.tikz}} &
    $\eq\limits_{IH}$ &
    \scalebox{\scalefig}{\input{./figures/aW-proof/zero-mat_3.tikz}} &
    $\eq\limits_{\S \A6}$ &
    \scalebox{\scalefig}{\input{./figures/aW-proof/zero-mat_0.tikz}} \\[1em]
\end{derivationchain}
            \end{proof}
\begin{proof}[Proof of \L6]
            \phantomsection\label{proof:SEaW-direct-sum-mat}
                \begin{derivationchain}
    \scalebox{\scalefig}{\input{./figures/aW-proof/direct-sum-mat_1.tikz}} &
    $\eq\limits_{\S \A6}$ &
    \scalebox{\scalefig}{\input{./figures/aW-proof/direct-sum-mat_2.tikz}} &
    $\eq\limits_{\L5}$ &
    \scalebox{\scalefig}{\input{./figures/aW-proof/direct-sum-mat_3.tikz}} &
    $\eq$ &
    \scalebox{\scalefig}{\input{./figures/aW-proof/direct-sum-mat_4.tikz}} \\[1em]
    \scalebox{\scalefig}{\input{./figures/aW-proof/direct-sum-mat_4.tikz}} &
    $\eq\limits_{\L5}$ &
    \scalebox{\scalefig}{\input{./figures/aW-proof/direct-sum-mat_5.tikz}} &
    $\eq\limits_{D_2,D_3}$ &
    \scalebox{\scalefig}{\input{./figures/aW-proof/direct-sum-mat_0.tikz}} &
    {} &
    {} \\[1em]
\end{derivationchain}
            \end{proof}
\begin{proof}[Proof of \L7]
                \phantomsection\label{proof:SEaW-lambda-mat}
                    \begin{derivationchain}
    \scalebox{\scalefig}{\input{./figures/SEaW-lemma-proof/lambda-mat_1.tikz}} &
    $\eq\limits_{\L6}$ &
    \scalebox{\scalefig}{\input{./figures/SEaW-lemma-proof/lambda-mat_2.tikz}} &
    $\eq\limits_{\D1}$ &
    \scalebox{\scalefig}{\input{./figures/SEaW-lemma-proof/lambda-mat_3.tikz}} &
    $\eq\limits_{IH}$ &
    \scalebox{\scalefig}{\input{./figures/SEaW-lemma-proof/lambda-mat_4.tikz}} \\[1em]
    \scalebox{\scalefig}{\input{./figures/SEaW-lemma-proof/lambda-mat_4.tikz}} &
    $\eq\limits_{\Dist}$ &
    \scalebox{\scalefig}{\input{./figures/SEaW-lemma-proof/lambda-mat_5.tikz}} &
    {} &
    {} &
    {} &
    {} \\[1em]
\end{derivationchain}
            \end{proof}
\begin{proof}[Proof of \L8]
            \phantomsection\label{proof:SEaW-id-mat}
                \begin{derivationchain}
    \scalebox{\scalefig}{\input{./figures/SEaW-lemma-proof/id-mat_1.tikz}} &
    $\eq\limits_{\L7}$ &
    \scalebox{\scalefig}{\input{./figures/SEaW-lemma-proof/id-mat_2.tikz}} &
    $\eq\limits_{\S \A12}$ &
    \scalebox{\scalefig}{\begin{tikzpicture}
	\begin{pgfonlayer}{nodelayer}
		\node [style=modenb] (2) at (-1.25, 0.25) {$n$};
		\node [style=modenb] (3) at (1.25, 0.25) {$n$};
		\node [style=none] (7) at (1.5, 0) {};
		\node [style=none] (12) at (-1.5, 0) {};
	\end{pgfonlayer}
	\begin{pgfonlayer}{edgelayer}
		\draw [style=thick] (12.center) to (7.center);
	\end{pgfonlayer}
\end{tikzpicture}
} &
    {} &
    {} \\[1em]
\end{derivationchain}
            \end{proof}
\begin{proof}[Proof of \L1]
            \phantomsection\label{proof:SEaW-add-mat}
    After unfolding the scalar case of the matrix-arrow definition, the base case is exactly \A7.
                \begin{derivationchain}
    \scalebox{\scalefig}{\input{./figures/aW-proof/add-mat_1.tikz}} &
    $\eq\limits_{\D2}$ &
    \scalebox{\scalefig}{\input{./figures/aW-proof/add-mat_2.tikz}} &
    $\eq\limits_{\S \A1, \S \A3}$ &
    \scalebox{\scalefig}{\input{./figures/aW-proof/add-mat_3.tikz}} &
    $\eq\limits_{\Dist}$ &
    \scalebox{\scalefig}{\input{./figures/aW-proof/add-mat_4.tikz}} \\[1em]
    \scalebox{\scalefig}{\input{./figures/aW-proof/add-mat_4.tikz}} &
    $\eq$ &
    \scalebox{\scalefig}{\input{./figures/aW-proof/add-mat_5.tikz}} &
    $\eq\limits_{IH}$ &
    \scalebox{\scalefig}{\input{./figures/aW-proof/add-mat_6.tikz}} &
    $\eq\limits_{\D2}$ &
    \scalebox{\scalefig}{\input{./figures/aW-proof/add-mat_0.tikz}} \\[1em]
\end{derivationchain}
            \end{proof}
\begin{proof}[Proof of \L3]
            \phantomsection\label{proof:SEaW-mult-mat}
    After unfolding the scalar case of the matrix-arrow definition, the base case is exactly \A9.
                \begin{derivationchain}
    \scalebox{\scalefig}{\input{./figures/aW-proof/mult-mat_1.tikz}} &
    $\eq\limits_{\D2,\D3}$ &
    \scalebox{\scalefig}{\input{./figures/aW-proof/mult-mat_2.tikz}} &
    $\eq\limits_{\Exp}$ &
    \scalebox{\scalefig}{\input{./figures/aW-proof/mult-mat_3.tikz}} &
    $\eq\limits_{IH}$ &
    \scalebox{\scalefig}{\input{./figures/aW-proof/mult-mat_4.tikz}} \\[1em]
    \scalebox{\scalefig}{\input{./figures/aW-proof/mult-mat_4.tikz}} &
    $\eq\limits_{\L1}$ &
    \scalebox{\scalefig}{\input{./figures/aW-proof/mult-mat_0.tikz}} &
    {} &
    {} &
    {} &
    {} \\[1em]
\end{derivationchain}
            \end{proof}
\begin{proof}[Proof of \L4]
            \phantomsection\label{proof:SEaW-erase-row-col}
            The base case follows from \D1 and \A5.\\
            \[\scalebox{\scalefig}{\input{./figures/aW-proof/erase-row-col-aux_1.tikz}} = \scalebox{\scalefig}{\begin{tikzpicture}
	\begin{pgfonlayer}{nodelayer}
		\node [style=none] (2) at (-0.75, 0) {};
		\node [style=modenb] (4) at (-0.25, 0.25) {$n$};
		\node [style=lw] (5) at (1.25, 0) {};
	\end{pgfonlayer}
	\begin{pgfonlayer}{edgelayer}
		\draw [style=thick] (5) to (2.center);
	\end{pgfonlayer}
\end{tikzpicture}
}\]
                \begin{derivationchain}
\multicolumn{7}{l}{Case $\begin{bsmallmatrix}M & M'
                        \end{bsmallmatrix}$ :}\\[0.5em]
    \scalebox{\scalefig}{\input{./figures/aW-proof/erase-row-col-aux_3.tikz}} &
    $\eq\limits_{\D2}$ &
    \scalebox{\scalefig}{\input{./figures/aW-proof/erase-row-col-aux_4.tikz}} &
    $\eq\limits_{\S \A4}$ &
    \scalebox{\scalefig}{\input{./figures/aW-proof/erase-row-col-aux_5.tikz}} &
    $\eq\limits_{IH}$ &
    \scalebox{\scalefig}{\input{./figures/aW-proof/erase-row-col-aux_6.tikz}} \\[1em]
    \scalebox{\scalefig}{\input{./figures/aW-proof/erase-row-col-aux_6.tikz}} &
    $\eq\limits_{\Dist}$ &
    \scalebox{\scalefig}{\input{./figures/aW-proof/erase-row-col-aux_7.tikz}} &
    {} &
    {} &
    {} &
    {} \\[1em]
\end{derivationchain}
\begin{derivationchain}
\multicolumn{7}{l}{Case $\begin{bsmallmatrix} M \\ M'
                        \end{bsmallmatrix}$ :}\\[0.5em]
    \scalebox{\scalefig}{\input{./figures/aW-proof/erase-row-col-aux_8.tikz}} &
    $\eq\limits_{\D3}$ &
    \scalebox{\scalefig}{\input{./figures/aW-proof/erase-row-col-aux_9.tikz}} &
    $\eq\limits_{\Dist}$ &
    \scalebox{\scalefig}{\input{./figures/aW-proof/erase-row-col-aux_10.tikz}} &
    $\eq\limits_{IH}$ &
    \scalebox{\scalefig}{\input{./figures/aW-proof/erase-row-col-aux_11.tikz}} \\[1em]
    \scalebox{\scalefig}{\input{./figures/aW-proof/erase-row-col-aux_11.tikz}} &
    $\eq\limits_{\S \A6}$ &
    \scalebox{\scalefig}{\begin{tikzpicture}
	\begin{pgfonlayer}{nodelayer}
		\node [style=none] (8) at (-0.75, 0) {};
		\node [style=modenb] (10) at (-0.5, 0.25) {$n$};
		\node [style=lw] (14) at (0.25, 0) {};
	\end{pgfonlayer}
	\begin{pgfonlayer}{edgelayer}
		\draw [style=thick] (14) to (8.center);
	\end{pgfonlayer}
\end{tikzpicture}
} &
    {} &
    {} &
    {} &
    {} \\[1em]
\end{derivationchain}
            \end{proof}
\begin{proof}[Proof of \L2]
            \phantomsection\label{proof:SEaW-copy-mat}
    After unfolding the scalar case of the matrix-arrow definition, the base case is exactly \A8.
                We proceed by induction on matrices: \\
                \begin{derivationchain}
\multicolumn{7}{l}{Case $\begin{bsmallmatrix}M & M'
                        \end{bsmallmatrix}$ :}\\[0.5em]
    \scalebox{\scalefig}{\input{./figures/aW-proof/copy-mat_1.tikz}} &
    $\eq\limits_{\D2}$ &
    \scalebox{\scalefig}{\input{./figures/aW-proof/copy-mat_2.tikz}} &
    $\eq\limits_{\S \A2}$ &
    \scalebox{\scalefig}{\input{./figures/aW-proof/copy-mat_3.tikz}} &
    $\eq\limits_{IH}$ &
    \scalebox{\scalefig}{\input{./figures/aW-proof/copy-mat_4.tikz}} \\[1em]
    \scalebox{\scalefig}{\input{./figures/aW-proof/copy-mat_4.tikz}} &
    $\eq$ &
    \scalebox{\scalefig}{\input{./figures/aW-proof/copy-mat_5.tikz}} &
    $\eq\limits_{\Exp}$ &
    \scalebox{\scalefig}{\input{./figures/aW-proof/copy-mat_6.tikz}} &
    $\eq\limits_{\Dist}$ &
    \scalebox{\scalefig}{\input{./figures/aW-proof/copy-mat_7.tikz}} \\[1em]
    \scalebox{\scalefig}{\input{./figures/aW-proof/copy-mat_7.tikz}} &
    $\eq\limits_{\D2}$ &
    \scalebox{\scalefig}{\input{./figures/aW-proof/copy-mat_8.tikz}} &
    {} &
    {} &
    {} &
    {} \\[1em]
\end{derivationchain}
                \begin{derivationchain}
\multicolumn{7}{l}{Case $\begin{bsmallmatrix} M \\ M'
                        \end{bsmallmatrix}$ :}\\[0.5em]
    \scalebox{\scalefig}{\input{./figures/aW-proof/copy-mat_9.tikz}} &
    $\eq\limits_{\Dist}$ &
    \scalebox{\scalefig}{\input{./figures/aW-proof/copy-mat_10.tikz}} &
    $\eq\limits_{\D3}$ &
    \scalebox{\scalefig}{\input{./figures/aW-proof/copy-mat_11.tikz}} &
    $\eq\limits_{IH}$ &
    \scalebox{\scalefig}{\input{./figures/aW-proof/copy-mat_12.tikz}} \\[1em]
    \scalebox{\scalefig}{\input{./figures/aW-proof/copy-mat_12.tikz}} &
    $\eq\limits_{\S \A1, \S \A3}$ &
    \scalebox{\scalefig}{\input{./figures/aW-proof/copy-mat_13.tikz}} &
    $\eq\limits_{\D3}$ &
    \scalebox{\scalefig}{\input{./figures/aW-proof/copy-mat_14.tikz}} &
    {} &
    {} \\[1em]
\end{derivationchain}
            \end{proof}
\begin{proof}[Proof of \L9]
            \phantomsection\label{proof:SEaW-rank-dec-mat}
                Note that any matrix $M \in \C^{m\times n}$ of rank $r$ can be decomposed as $M = P'Q'$
                    with $P' \in \C^{m\times r}$ and $Q' \in \C^{r\times n}$ both of rank $r$. Then, one can complete $P'$ and $Q'$ 
                    into invertible matrices $P \in \mathcal{GL}_m(\C)$ and $Q \in \mathcal{GL}_n(\C)$ by adding columns and rows
                \begin{derivationchain}
    \scalebox{\scalefig}{\input{./figures/aW-proof/rank-dec-mat_1.tikz}} &
    $\eq\limits_{\L3}$ &
    \scalebox{\scalefig}{\input{./figures/aW-proof/rank-dec-mat_2.tikz}} &
    $\eq\limits_{\D2,\D3, \L4}$ &
    \scalebox{\scalefig}{\input{./figures/aW-proof/rank-dec-mat_0.tikz}} &
    {} &
    {} \\[1em]
\end{derivationchain}
            \end{proof}
\subsubsection{Derived equations in $\S\EpW$}
\begin{proof}[Proof of \L10]
                    \phantomsection\label{proof:SEpW-triangle-add}
                    \begin{derivationchain}
    \scalebox{\scalefig}{\input{./figures/SEpW-lemma-proof/triangle-add_1.tikz}} &
    $\eq$ &
    \scalebox{\scalefig}{\input{./figures/SEpW-lemma-proof/triangle-add_2.tikz}} &
    $\eq\limits_{\D4}$ &
    \scalebox{\scalefig}{\input{./figures/SEpW-lemma-proof/triangle-add_3.tikz}} &
    $\eq\limits_{\S \A1, \S \A3, \S \A16}$ &
    \scalebox{\scalefig}{\input{./figures/SEpW-lemma-proof/triangle-add_4.tikz}} \\[1em]
    \scalebox{\scalefig}{\input{./figures/SEpW-lemma-proof/triangle-add_4.tikz}} &
    $\eq\limits_{\L1}$ &
    \scalebox{\scalefig}{\input{./figures/SEpW-lemma-proof/triangle-add_5.tikz}} &
    $\eq\limits_{\D4}$ &
    \scalebox{\scalefig}{\input{./figures/SEpW-lemma-proof/triangle-add_6.tikz}} &
    {} &
    {} \\[1em]
\end{derivationchain}
                \end{proof}
\begin{proof}[Proof of \L11]
                    \phantomsection\label{proof:SEpW-triangle-erase}
                    \begin{derivationchain}
    \scalebox{\scalefig}{\input{./figures/SEpW-lemma-proof/triangle-erase_1.tikz}} &
    $\eq\limits_{\D4}$ &
    \scalebox{\scalefig}{\input{./figures/SEpW-lemma-proof/triangle-erase_2.tikz}} &
    $\eq\limits_{\S \A4}$ &
    \scalebox{\scalefig}{\input{./figures/SEpW-lemma-proof/triangle-erase_3.tikz}} &
    $\eq\limits_{\L4, \S \A16}$ &
    \scalebox{\scalefig}{\input{./figures/SEpW-lemma-proof/triangle-erase_4.tikz}} \\[1em]
    \scalebox{\scalefig}{\input{./figures/SEpW-lemma-proof/triangle-erase_4.tikz}} &
    $\eq\limits_{\S \A17}$ &
    \scalebox{\scalefig}{\begin{tikzpicture}
	\begin{pgfonlayer}{nodelayer}
		\node [style={empty_box}] (0) at (0, 0) {};
	\end{pgfonlayer}
\end{tikzpicture}
} &
    {} &
    {} &
    {} &
    {} \\[1em]
\end{derivationchain}
                \end{proof}
\begin{proof}[Proof of \L12]
                \phantomsection\label{proof:SEpW-mat-cup}
    After unfolding the scalar case of the matrix-arrow definition, the base case is exactly \A15.
                    \begin{derivationchain}
    \scalebox{\scalefig}{\input{./figures/SEpW-lemma-proof/mat-cup_1.tikz}} &
    $\eq\limits_{\D2}$ &
    \scalebox{\scalefig}{\input{./figures/SEpW-lemma-proof/mat-cup_2.tikz}} &
    $\eq\limits_{\Dist}$ &
    \scalebox{\scalefig}{\input{./figures/SEpW-lemma-proof/mat-cup_3.tikz}} &
    $\eq\limits_{IH}$ &
    \scalebox{\scalefig}{\input{./figures/SEpW-lemma-proof/mat-cup_4.tikz}} \\[1em]
    \scalebox{\scalefig}{\input{./figures/SEpW-lemma-proof/mat-cup_4.tikz}} &
    $\eq\limits_{\S \A16, \S \A18}$ &
    \scalebox{\scalefig}{\input{./figures/SEpW-lemma-proof/mat-cup_5.tikz}} &
    $\eq\limits_{\D3}$ &
    \scalebox{\scalefig}{\input{./figures/SEpW-lemma-proof/mat-cup_6.tikz}} &
    {} &
    {} \\[1em]
\end{derivationchain}
                    \begin{derivationchain}
    \scalebox{\scalefig}{\input{./figures/SEpW-lemma-proof/mat-cup_7.tikz}} &
    $\eq\limits_{\D3}$ &
    \scalebox{\scalefig}{\input{./figures/SEpW-lemma-proof/mat-cup_8.tikz}} &
    $\eq\limits_{\S \A16, \S \A18}$ &
    \scalebox{\scalefig}{\input{./figures/SEpW-lemma-proof/mat-cup_9.tikz}} &
    $\eq\limits_{IH}$ &
    \scalebox{\scalefig}{\input{./figures/SEpW-lemma-proof/mat-cup_10.tikz}} \\[1em]
    \scalebox{\scalefig}{\input{./figures/SEpW-lemma-proof/mat-cup_10.tikz}} &
    $\eq\limits_{\Dist}$ &
    \scalebox{\scalefig}{\input{./figures/SEpW-lemma-proof/mat-cup_11.tikz}} &
    $\eq\limits_{\D2}$ &
    \scalebox{\scalefig}{\input{./figures/SEpW-lemma-proof/mat-cup_12.tikz}} &
    {} &
    {} \\[1em]
\end{derivationchain}
                \end{proof}
\begin{proof}[Proof of \L13]
                    \phantomsection\label{proof:SEpW-triangle-dec}
                    \begin{derivationchain}
    \scalebox{\scalefig}{\begin{tikzpicture}
	\begin{pgfonlayer}{nodelayer}
		\node [style=rw] (0) at (-0.5, 0) {};
		\node [style=none] (2) at (1, 0) {};
		\node [style=none] (3) at (-0.5, 1) {$\pmm{Z & C \\ -C^T & Z'}$};
		\node [style=modenb] (4) at (0.5, 0.25) {$n+n'$};
	\end{pgfonlayer}
	\begin{pgfonlayer}{edgelayer}
		\draw [style=thick] (2.center) to (0);
	\end{pgfonlayer}
\end{tikzpicture}
} &
    $\eq\limits_{\D4}$ &
    \scalebox{\scalefig}{\input{./figures/SEpW-lemma-proof/triangle-dec_2.tikz}} &
    $\eq\limits_{\D2,\D3}$ &
    \scalebox{\scalefig}{\input{./figures/SEpW-lemma-proof/triangle-dec_3.tikz}} &
    $\eq\limits_{\Dist}$ &
    \scalebox{\scalefig}{\input{./figures/SEpW-lemma-proof/triangle-dec_4.tikz}} \\[1em]
    \scalebox{\scalefig}{\input{./figures/SEpW-lemma-proof/triangle-dec_4.tikz}} &
    $\eq\limits_{\Exp}$ &
    \scalebox{\scalefig}{\input{./figures/SEpW-lemma-proof/triangle-dec_5.tikz}} &
    $\eq\limits_{\S \A1}$ &
    \scalebox{\scalefig}{\input{./figures/SEpW-lemma-proof/triangle-dec_6.tikz}} &
    $\eq\limits_{\S \A13}$ &
    \scalebox{\scalefig}{\input{./figures/SEpW-lemma-proof/triangle-dec_7.tikz}} \\[1em]
    \scalebox{\scalefig}{\input{./figures/SEpW-lemma-proof/triangle-dec_7.tikz}} &
    $\eq\limits_{\S \A16, \S \A1}$ &
    \scalebox{\scalefig}{\input{./figures/SEpW-lemma-proof/triangle-dec_8.tikz}} &
    $\eq\limits_{\S \A8}$ &
    \scalebox{\scalefig}{\input{./figures/SEpW-lemma-proof/triangle-dec_9.tikz}} &
    $\eq\limits_{\L12, \L3}$ &
    \scalebox{\scalefig}{\input{./figures/SEpW-lemma-proof/triangle-dec_10.tikz}} \\[1em]
\end{derivationchain}
                    \begin{derivationchain}
    \scalebox{\scalefig}{\input{./figures/SEpW-lemma-proof/triangle-dec_10.tikz}} &
    $\eq\limits_{\S \A1}$ &
    \scalebox{\scalefig}{\input{./figures/SEpW-lemma-proof/triangle-dec_11.tikz}} &
    $\eq\limits_{\S \A3}$ &
    \scalebox{\scalefig}{\input{./figures/SEpW-lemma-proof/triangle-dec_12.tikz}} &
    $\eq\limits_{\S \A16, \L1}$ &
    \scalebox{\scalefig}{\input{./figures/SEpW-lemma-proof/triangle-dec_13.tikz}} \\[1em]
\end{derivationchain}
                \end{proof}
\begin{proof}[Proof of \L14]
                    \phantomsection\label{proof:SEpW-triangle-mat}
                    \begin{derivationchain}
    \scalebox{\scalefig}{\input{./figures/SEpW-lemma-proof/triangle-mat_1.tikz}} &
    $\eq\limits_{\D4}$ &
    \scalebox{\scalefig}{\input{./figures/SEpW-lemma-proof/triangle-mat_2.tikz}} &
    $\eq\limits_{\L2}$ &
    \scalebox{\scalefig}{\input{./figures/SEpW-lemma-proof/triangle-mat_3.tikz}} &
    $\eq\limits_{\L12}$ &
    \scalebox{\scalefig}{\input{./figures/SEpW-lemma-proof/triangle-mat_4.tikz}} \\[1em]
    \scalebox{\scalefig}{\input{./figures/SEpW-lemma-proof/triangle-mat_4.tikz}} &
    $\eq\limits_{\L3}$ &
    \scalebox{\scalefig}{\input{./figures/SEpW-lemma-proof/triangle-mat_5.tikz}} &
    $\eq\limits_{\D4}$ &
    \scalebox{\scalefig}{\begin{tikzpicture}
	\begin{pgfonlayer}{nodelayer}
		\node [style=rw] (13) at (-0.25, 0) {};
		\node [style=none] (21) at (1.25, 0) {};
		\node [style=none] (22) at (-0.25, 0.75) {$MZM^T$};
		\node [style=modenb] (23) at (1, 0.25) {$n$};
	\end{pgfonlayer}
	\begin{pgfonlayer}{edgelayer}
		\draw [style=thick] (13) to (21.center);
	\end{pgfonlayer}
\end{tikzpicture}
} &
    {} &
    {} \\[1em]
\end{derivationchain}
                \end{proof}
\begin{proof}[Proof of \L15]
                \phantomsection\label{proof:SEpW-mat-to-triangle}
                    \begin{derivationchain}
    \scalebox{\scalefig}{\begin{tikzpicture}
	\begin{pgfonlayer}{nodelayer}
		\node [style=none] (1) at (1, 0) {};
		\node [style=modenb] (2) at (0.75, 0.25) {$n$};
		\node [style=none] (3) at (-0.5, 0.75) {$M\shortminus M^T$};
		\node [style=rw] (4) at (-0.5, 0) {};
	\end{pgfonlayer}
	\begin{pgfonlayer}{edgelayer}
		\draw [style=thick] (1.center) to (4);
	\end{pgfonlayer}
\end{tikzpicture}
} &
    $\eq\limits_{\D4}$ &
    \scalebox{\scalefig}{\input{./figures/SEpW-lemma-proof/mat-to-triangle_2.tikz}} &
    $\eq\limits_{\L1}$ &
    \scalebox{\scalefig}{\input{./figures/SEpW-lemma-proof/mat-to-triangle_3.tikz}} &
    $\eq\limits_{\S \A16, \S \A18}$ &
    \scalebox{\scalefig}{\input{./figures/SEpW-lemma-proof/mat-to-triangle_4.tikz}} \\[1em]
    \scalebox{\scalefig}{\input{./figures/SEpW-lemma-proof/mat-to-triangle_4.tikz}} &
    $\eq\limits_{\S \A1, \S \A3}$ &
    \scalebox{\scalefig}{\input{./figures/SEpW-lemma-proof/mat-to-triangle_5.tikz}} &
    $\eq\limits_{\L12}$ &
    \scalebox{\scalefig}{\input{./figures/SEpW-lemma-proof/mat-to-triangle_6.tikz}} &
    $\eq\limits_{\S \A13}$ &
    \scalebox{\scalefig}{\input{./figures/SEpW-lemma-proof/mat-to-triangle_7.tikz}} \\[1em]
    \scalebox{\scalefig}{\input{./figures/SEpW-lemma-proof/mat-to-triangle_7.tikz}} &
    $\eq$ &
    \scalebox{\scalefig}{\input{./figures/SEpW-lemma-proof/mat-to-triangle_8.tikz}} &
    $\eq$ &
    \scalebox{\scalefig}{\input{./figures/SEpW-lemma-proof/mat-to-triangle_9.tikz}} &
    $\eq$ &
    \scalebox{\scalefig}{\input{./figures/SEpW-lemma-proof/mat-to-triangle_10.tikz}} \\[1em]
\end{derivationchain}
                \end{proof}
\begin{proof}[Proof of \L16]
                    \phantomsection\label{proof:SEpW-mult-float}
                        \begin{derivationchain}
    \scalebox{\scalefig}{\input{./figures/SEpW-lemma-proof/mult-float_1.tikz}} &
    $\eq\limits_{\S \A7}$ &
    \scalebox{\scalefig}{\input{./figures/SEpW-lemma-proof/mult-float_2.tikz}} &
    $\eq\limits_{\S \A16, \S \A3}$ &
    \scalebox{\scalefig}{\input{./figures/SEpW-lemma-proof/mult-float_3.tikz}} &
    $\eq\limits_{\S \A13}$ &
    \scalebox{\scalefig}{\input{./figures/SEpW-lemma-proof/mult-float_4.tikz}} \\[1em]
    \scalebox{\scalefig}{\input{./figures/SEpW-lemma-proof/mult-float_4.tikz}} &
    $\eq\limits_{\S \A22}$ &
    \scalebox{\scalefig}{\input{./figures/SEpW-lemma-proof/mult-float_5.tikz}} &
    $\eq\limits_{\S \A9}$ &
    \scalebox{\scalefig}{\input{./figures/SEpW-lemma-proof/mult-float_6.tikz}} &
    $\eq\limits_{\S \A13, \S \A3, \S \A16, \S \A7}$ &
    \scalebox{\scalefig}{\input{./figures/SEpW-lemma-proof/mult-float_7.tikz}} \\[1em]
\end{derivationchain}
                \end{proof}
\begin{proof}[Proof of \L17]
                \phantomsection\label{proof:SEpW-gen-anticom}
                To show \L17, we start by showing that for all $n \in \N$,
                \[ \scalebox{\scalefig}{\input{./figures/SEpW-lemma-proof/gen-anticom_-1.tikz}} \eq\limits_{AUX1}  \scalebox{\scalefig}{\input{./figures/SEpW-lemma-proof/gen-anticom_0.tikz}} \eq\limits_{AUX2}  \scalebox{\scalefig}{\input{./figures/SEpW-lemma-proof/gen-anticom_0bis.tikz}}\]
                We show that by induction on $n$, the base case being exactly equation \A21.
                    \begin{derivationchain}
    \scalebox{\scalefig}{\input{./figures/SEpW-lemma-proof/gen-anticom_1.tikz}} &
    $\eq\limits_{\Dist}$ &
    \scalebox{\scalefig}{\input{./figures/SEpW-lemma-proof/gen-anticom_2.tikz}} &
    $\eq$ &
    \scalebox{\scalefig}{\input{./figures/SEpW-lemma-proof/gen-anticom_3.tikz}} &
    $\eq\limits_{\S \A21}$ &
    \scalebox{\scalefig}{\input{./figures/SEpW-lemma-proof/gen-anticom_4.tikz}} \\[1em]
    \scalebox{\scalefig}{\input{./figures/SEpW-lemma-proof/gen-anticom_4.tikz}} &
    $\eq$ &
    \scalebox{\scalefig}{\input{./figures/SEpW-lemma-proof/gen-anticom_5.tikz}} &
    $\eq\limits_{\Dist}$ &
    \scalebox{\scalefig}{\input{./figures/SEpW-lemma-proof/gen-anticom_6.tikz}} &
    $\eq\limits_{IH}$ &
    \scalebox{\scalefig}{\input{./figures/SEpW-lemma-proof/gen-anticom_7.tikz}} \\[1em]
    \scalebox{\scalefig}{\input{./figures/SEpW-lemma-proof/gen-anticom_7.tikz}} &
    $\eq\limits_{\Exp, \Elim}$ &
    \scalebox{\scalefig}{\input{./figures/SEpW-lemma-proof/gen-anticom_8.tikz}} &
    $\eq\limits_{\Dist}$ &
    \scalebox{\scalefig}{\input{./figures/SEpW-lemma-proof/gen-anticom_9.tikz}} &
    $\eq\limits_{\L16}$ &
    \scalebox{\scalefig}{\input{./figures/SEpW-lemma-proof/gen-anticom_10.tikz}} \\[1em]
\end{derivationchain}
                The second equality is proven similarly, applying rewrites symmetrically with respect to the horizontal direction.\\
                Now that this auxiliary result is proven, let $n,m \in \N$, and $k \leq m$.
                \begin{derivationchain}
    \scalebox{\scalefig}{\input{./figures/SEpW-lemma-proof/gen-anticom_11.tikz}} &
    $\eq\limits_{AUX1}$ &
    \scalebox{\scalefig}{\input{./figures/SEpW-lemma-proof/gen-anticom_12.tikz}} &
    $\eq\limits_{AUX2}$ &
    \scalebox{\scalefig}{\input{./figures/SEpW-lemma-proof/gen-anticom_13.tikz}} &
    $\eq\limits_{\L16}$ &
    \scalebox{\scalefig}{\input{./figures/SEpW-lemma-proof/gen-anticom_14.tikz}} \\[1em]
\end{derivationchain}
                Repeating this operation $m$ times let go from the case $(n+m, 0)$ to the case $(n,m)$. Using \L16,
                the global phase obtained is 
                \[(-1)^{\left(\sum\limits_{k=1}^m n+m-1\right)} = (-1)^{m(n+m+1)} = (-1)^{nm}\]
                Which suffices to prove \L17.
            \end{proof}
\textbf{Auxiliary linear algebra for \L18 and \L24.}
We use the following standard facts about elementary matrices.
\begin{definition}[Elementary matrix]
A \emph{transvection matrix} is a matrix
\[
    T_{ij}(\lambda)=I_n+\lambda E_{ij},
    \qquad i\neq j,\quad \lambda\in\mathbb{C}.
\]
A \emph{dilation matrix} is a matrix
\[
    D_i(\alpha)=I_n+(\alpha-1)E_{ii},
    \qquad \alpha\in\mathbb{C}^{\ast}.
\]
An elementary matrix is either a transvection matrix or a dilation matrix.
\end{definition}
\begin{lemma}\label{lemma:elementary-small-blocks}
Every elementary matrix can be written using products, direct sums, and
permutation conjugations of the matrices
\[
    \begin{bmatrix}0&1\\[2pt]1&0\end{bmatrix},
    \qquad
    \begin{bmatrix}1&0\\[2pt]\lambda&1\end{bmatrix}
    \quad(\lambda\in\mathbb{C}),
    \qquad
    \begin{bmatrix}x\end{bmatrix}
    \quad(x\in\mathbb{C}^{\ast}).
\]
If the ambient dimension is even, the last family may equivalently be replaced by
\[
    \begin{bmatrix}x&0\\0&1\end{bmatrix}
    \qquad(x\in\mathbb{C}^{\ast}),
\]
so that all direct-sum factors may be chosen of even size.
\end{lemma}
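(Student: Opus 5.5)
We treat the two kinds of elementary matrices separately. Throughout, "permutation conjugation" means replacing $A$ by $S A S^{-1}$ with $S$ a permutation matrix. We first note that every permutation matrix is itself a product of direct sums $I_a\oplus\pmm{0&1\\1&0}\oplus I_b$, since adjacent transpositions generate $\mathfrak S_n$. Consequently, permutation conjugations can be expressed through the allowed operations, using identity blocks $I_k=[1]^{\oplus k}$ as filler.

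\emph{Dilations.} The dilation $D_i(\alpha)$ is conjugate, by the transposition exchanging $1$ and $i$, to $[\alpha]\oplus I_{n-1}$. This is a direct sum of the $1\times1$ block $[\alpha]$ with copies of $[1]$. In even ambient dimension $n=2k$ we instead write it as
\[
\pmm{\alpha&0\\0&1}\oplus\pmm{1&0\\0&1}^{\oplus(k-1)},
\]
where each identity block is the second family with $\lambda=0$. This gives the stated refinement for dilations.

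\emph{Transvections.} Take $T_{ij}(\lambda)$ with $i\neq j$. Choose a permutation $\sigma$ sending $i\mapsto 2$ and $j\mapsto 1$. Conjugating by its permutation matrix gives
\[
S\,T_{ij}(\lambda)\,S^{-1}=I_n+\lambda E_{21}=\pmm{1&0\\ \lambda&1}\oplus I_{n-2}.
\]
The block $I_{n-2}$ is a direct sum of $[1]$'s. When $n$ is even it is instead a direct sum of $\pmm{1&0\\0&1}$ blocks, so every factor has size $2$.

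\emph{Bookkeeping.} The only point needing care is the bookkeeping in even dimension. There we must check that the permutation matrices used for conjugation also decompose into even-size factors. This holds because $I_a\oplus\pmm{0&1\\1&0}\oplus I_b$ with $a+b$ even can be rewritten: for $a$ even it is a direct sum of $2\times2$ blocks, and for $a$ odd it is a product of such sums.

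If that last rewriting is awkward, the fallback is simpler. Realize an arbitrary permutation as a product of the transpositions $(2r-1,2r)$, which are direct sums of $2\times2$ swap and identity blocks. Alternate these with transpositions $(2r,2r+1)$, obtained as the conjugation-free product
\[
\left(I_1\oplus\pmm{0&1\\1&0}^{\oplus(k-1)}\oplus I_1\right).
\]
These also have direct-sum form, though with the two $1\times1$ identity blocks at the ends.

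The main obstacle I expect is exactly this parity-preserving decomposition of odd-position transpositions into even-size blocks. I would try to settle it by the observation that for $n=2k$ the two "brickwork" layers (even-offset and odd-offset) generate $\mathfrak S_n$. The odd-offset layer is then obtained as a conjugate of even-offset blocks by the cyclic shift. If the cyclic shift itself resists decomposition, it may be acceptable to allow permutation conjugation as a primitive operation, as the statement's phrasing literally permits.
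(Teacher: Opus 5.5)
Your reductions for dilations and transvections are exactly the paper's proof. You conjugate $D_i(\alpha)$ to $[\alpha]\oplus I_{n-1}$ and $T_{ij}(\lambda)$ to $\pmm{1&0\\ \lambda&1}\oplus I_{n-2}$, and in even dimension you pair the leftover identity coordinates into $2\times2$ identity blocks. The statement lists permutation conjugation among the permitted operations, so this already proves the lemma. The paper stops there and does not try to write the conjugating permutations with even-size factors.

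Your bookkeeping paragraph, however, asserts something false, not merely awkward. Every matrix built from $2\times2$ factors using only products and direct sums is block diagonal for the fixed pairing $\{1,2\},\{3,4\},\dots,\{2k-1,2k\}$. Direct sums of even-size pieces place each piece at an even offset, and products of matrices that are block diagonal for one pairing stay block diagonal for it.

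Consequences of this invariant:
\begin{itemize}
\item For $a$ odd, $I_a\oplus\pmm{0&1\\1&0}\oplus I_b$ exchanges $a+1$ and $a+2$, which lie in different pairs, so it is not a product of such sums.
\item The cyclic shift fails for the same reason, since it sends $\{1,2\}$ to $\{2,3\}$.
\item Even $T_{23}(\lambda)$ with $\lambda\neq0$ and $n=4$ is not reachable.
\end{itemize}
Your remaining fallback reintroduces $1\times1$ factors, as you note yourself.

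So in the even-size version permutation conjugation cannot be eliminated; it must be kept as a primitive operation. That is exactly how the paper reads the statement. Your opening claim that conjugations are derivable from the other operations holds only in the version where $[1]$ is an allowed block.

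Drop the bookkeeping and let your final sentence carry the argument. The worry behind it is legitimate, though. Any induction that uses the even-size version through products and direct sums alone, such as the one invoked for L18, must treat conjugation by a permutation as a separate case.
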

\begin{proof}
A dilation
\[
    D_i(\alpha)=\operatorname{diag}(1,\ldots,1,\alpha,1,\ldots,1)
\]
is obtained, up to a permutation of the coordinates, by taking the direct sum
of the one-dimensional blocks $\begin{bsmallmatrix}\alpha\end{bsmallmatrix}$
and $\begin{bsmallmatrix}1\end{bsmallmatrix}$.
Likewise, every transvection is permutation-conjugate to a direct sum of
\[
    \begin{bmatrix}1&0\\[2pt]\lambda&1\end{bmatrix}
\]
with identity blocks. Permutation matrices are generated by adjacent
transpositions, themselves obtained by taking direct sums of
$\begin{bsmallmatrix}0&1\\1&0\end{bsmallmatrix}$ with identities.
In even ambient dimension, a dilation is permutation-conjugate to a direct sum
containing one block
$\begin{bsmallmatrix}\alpha&0\\0&1\end{bsmallmatrix}$ and otherwise
$2\times2$ identity blocks. The same pairing of the remaining identity
coordinates gives the claimed even-size decomposition for transvections.
\end{proof}
\begin{lemma}\label{lemma:GL-generated-elementary}
The group $\mathcal{GL}_n(\mathbb{C})$ is generated by elementary matrices.
\end{lemma}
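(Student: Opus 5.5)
The plan is the classical Gaussian elimination argument: show that every invertible matrix can be brought to the identity by multiplication with elementary matrices. Then invert the elementary factors, which are again elementary, since
\[
T_{ij}(\lambda)^{-1}=T_{ij}(-\lambda)
\qquad\text{and}\qquad
D_i(\alpha)^{-1}=D_i(\alpha^{-1}).
\]

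First, record how elementary matrices act. Left multiplication by $T_{ij}(\lambda)$ adds $\lambda$ times row $j$ to row $i$. Left multiplication by $D_i(\alpha)$ rescales row $i$ by $\alpha\neq0$.

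Second, observe that a transposition of rows is available from transvections and a dilation. Concretely, in the $2\times2$ block on coordinates $i,j$,
\[
T_{ij}(1)\,T_{ji}(-1)\,T_{ij}(1)=\begin{bmatrix}0&1\\-1&0\end{bmatrix},
\]
so composing with $D_j(-1)$ gives the genuine swap of rows $i$ and $j$.

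Third, run the elimination by induction on columns. Let $P\in\mathcal{GL}_n(\C)$. Column $1$ of $P$ has a non-zero entry, since $P$ is invertible. Swap that entry's row into position $1$, rescale it to $1$ with a dilation, and clear every other entry of column $1$ with transvections $T_{i1}(-p_{i1})$.

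The resulting matrix has the block form
\[
\begin{bmatrix}1&*\\0&P'\end{bmatrix},
\]
and $P'$ is invertible because the determinant is unchanged up to a non-zero factor. Applying the induction hypothesis to $P'$, through elementary matrices embedded as $1\oplus E$ (which are again elementary), makes the matrix upper unitriangular. Finally, clear the entries above the diagonal with further transvections, column by column, giving $E_k\cdots E_1P=I_n$. Hence $P=E_1^{-1}\cdots E_k^{-1}$ is a product of elementary matrices.

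The only step needing care is the first one of each stage: the pivot may lie in another row, so a row swap is required. It is handled by the swap identity above. Everything else is routine bookkeeping of the induction.
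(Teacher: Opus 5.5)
Your proof is correct and follows the paper's argument: Gaussian elimination, a row swap built as $T_{ij}(1)\,T_{ji}(-1)\,T_{ij}(1)$ followed by a sign-fixing dilation, and closure of elementary matrices under inversion. The paper places $D_1(-1)$ on the right, while you must place $D_j(-1)$ on the left; both give the swap. You also spell out the column-by-column induction that the paper leaves implicit.
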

\begin{proof}
Gaussian elimination reduces any $M\in\mathcal{GL}_n(\mathbb{C})$ to the
identity using row additions, non-zero row scalings, and row swaps.
Row additions and non-zero row scalings are left multiplications by
transvections and dilations. A swap of two adjacent coordinates is also a
product of such matrices; on the corresponding two-dimensional block,
\[
    \begin{bmatrix}0&1\\1&0\end{bmatrix}
    =
    T_{12}(1)\,T_{21}(-1)\,T_{12}(1)\,D_1(-1).
\]
Hence there exist elementary matrices $E_1,\ldots,E_r$ such that
$E_r\cdots E_1M=I_n$. Since
$T_{ij}(\lambda)^{-1}=T_{ij}(-\lambda)$ and
$D_i(\alpha)^{-1}=D_i(\alpha^{-1})$, the inverses are again elementary,
and therefore $M$ is a product of elementary matrices.
\end{proof}
By \cref{lemma:elementary-small-blocks,lemma:GL-generated-elementary},
together with \L6 and \L3, it is enough in \L18 and \L24 to establish the
displayed small-matrix cases and to check stability under direct sums and
products.
\begin{proof}[Proof of \L18]
    \phantomsection\label{proof:SEpW-det-mat}
    \textbf{Base case:} $\pmm{0&1\\1&0}$\\
    \begin{derivationchain}
    \scalebox{\scalefig}{\input{./figures/SEpW-lemma-proof/det-mat_1.tikz}} &
    $\eq\limits_{\D2,\D3,\D1,\S \A12}$ &
    \scalebox{\scalefig}{\input{./figures/SEpW-lemma-proof/det-mat_2.tikz}} &
    $\eq\limits_{\S \A10}$ &
    \scalebox{\scalefig}{\input{./figures/SEpW-lemma-proof/det-mat_3.tikz}} &
    $\eq\limits_{\Exp,\S \A6}$ &
    \scalebox{\scalefig}{\input{./figures/SEpW-lemma-proof/det-mat_4.tikz}} \\[1em]
    \scalebox{\scalefig}{\input{./figures/SEpW-lemma-proof/det-mat_4.tikz}} &
    $\eq\limits_{\S \A13}$ &
    \scalebox{\scalefig}{\input{./figures/SEpW-lemma-proof/det-mat_5.tikz}} &
    $\eq\limits_{\S \A8}$ &
    \scalebox{\scalefig}{\input{./figures/SEpW-lemma-proof/det-mat_6.tikz}} &
    $\eq\limits_{\S \A13}$ &
    \scalebox{\scalefig}{\input{./figures/SEpW-lemma-proof/det-mat_7.tikz}} \\[1em]
    \scalebox{\scalefig}{\input{./figures/SEpW-lemma-proof/det-mat_7.tikz}} &
    $\eq\limits_{\S \A22   }$ &
    \scalebox{\scalefig}{\input{./figures/SEpW-lemma-proof/det-mat_8.tikz}} &
    {} &
    {} &
    {} &
    {} \\[1em]
\end{derivationchain}
    \textbf{Base case:} $\pmm{1&0\\\lambda&1}$\\
    \begin{derivationchain}
    \scalebox{\scalefig}{\input{./figures/SEpW-lemma-proof/det-mat_9.tikz}} &
    $\eq\limits_{\D2,\D3,\D1,\S \A12, \S \A10}$ &
    \scalebox{\scalefig}{\input{./figures/SEpW-lemma-proof/det-mat_10.tikz}} &
    $\eq\limits_{\S \A6, \Exp}$ &
    \scalebox{\scalefig}{\input{./figures/SEpW-lemma-proof/det-mat_11.tikz}} &
    $\eq\limits_{\S \A18}$ &
    \scalebox{\scalefig}{\input{./figures/SEpW-lemma-proof/det-mat_12.tikz}} \\[1em]
    \scalebox{\scalefig}{\input{./figures/SEpW-lemma-proof/det-mat_12.tikz}} &
    $\eq\limits_{\S \A3}$ &
    \scalebox{\scalefig}{\input{./figures/SEpW-lemma-proof/det-mat_13.tikz}} &
    $\eq\limits_{\S \A24}$ &
    \scalebox{\scalefig}{\input{./figures/SEpW-lemma-proof/det-mat_14.tikz}} &
    $\eq\limits_{\S \A5, \S \A18}$ &
    \scalebox{\scalefig}{\input{./figures/SEpW-lemma-proof/det-mat_15.tikz}} \\[1em]
    \scalebox{\scalefig}{\input{./figures/SEpW-lemma-proof/det-mat_15.tikz}} &
    $\eq\limits_{\S \A6, \S \A17}$ &
    \scalebox{\scalefig}{\input{./figures/SEpW-lemma-proof/det-mat_16.tikz}} &
    {} &
    {} &
    {} &
    {} \\[1em]
\end{derivationchain}
    \textbf{Base case:} $\pmm{x & 0\\0 & 1}$\\
    \begin{derivationchain}
    \scalebox{\scalefig}{\input{./figures/SEpW-lemma-proof/det-mat_17.tikz}} &
    $\eq\limits_{\D2, \D3, \D1, \S \A10}$ &
    \scalebox{\scalefig}{\input{./figures/SEpW-lemma-proof/det-mat_18.tikz}} &
    $\eq\limits_{\S \A6, \S \A12}$ &
    \scalebox{\scalefig}{\input{./figures/SEpW-lemma-proof/det-mat_19.tikz}} &
    $\eq\limits_{\S \A22}$ &
    \scalebox{\scalefig}{\input{./figures/SEpW-lemma-proof/det-mat_19-2.tikz}} \\[1em]
\end{derivationchain}
     \textbf{Induction step:} $M\oplus M'$\\
    \begin{derivationchain}
    \scalebox{\scalefig}{\input{./figures/SEpW-lemma-proof/det-mat_20.tikz}} &
    $\eq\limits_{\L6}$ &
    \scalebox{\scalefig}{\input{./figures/SEpW-lemma-proof/det-mat_21.tikz}} &
    $\eq\limits_{\L17, \Exp}$ &
    \scalebox{\scalefig}{\input{./figures/SEpW-lemma-proof/det-mat_22.tikz}} &
    $\eq\limits_{IH}$ &
    \scalebox{\scalefig}{\input{./figures/SEpW-lemma-proof/det-mat_23.tikz}} \\[1em]
    \scalebox{\scalefig}{\input{./figures/SEpW-lemma-proof/det-mat_23.tikz}} &
    $\eq\limits_{\L17,\L16}$ &
    \scalebox{\scalefig}{\input{./figures/SEpW-lemma-proof/det-mat_24.tikz}} &
    {} &
    {} &
    {} &
    {} \\[1em]
\end{derivationchain}
     \textbf{Induction step:} $M' M $\\
    \begin{derivationchain}
    \scalebox{\scalefig}{\input{./figures/SEpW-lemma-proof/det-mat_25.tikz}} &
    $\eq\limits_{\L3}$ &
    \scalebox{\scalefig}{\input{./figures/SEpW-lemma-proof/det-mat_26.tikz}} &
    $\eq\limits_{IH}$ &
    \scalebox{\scalefig}{\input{./figures/SEpW-lemma-proof/det-mat_27.tikz}} &
    $\eq\limits_{\L16}$ &
    \scalebox{\scalefig}{\input{./figures/SEpW-lemma-proof/det-mat_28.tikz}} \\[1em]
\end{derivationchain}
    \end{proof}
\begin{proof}[Proof of \L20]
        \phantomsection\label{proof:SEpW-triangle-zero}
            \begin{derivationchain}
    \scalebox{\scalefig}{\begin{tikzpicture}
	\begin{pgfonlayer}{nodelayer}
		\node [style=rw] (13) at (-0.75, 0) {};
		\node [style=none] (22) at (-0.75, 0.75) {$0$};
		\node [style=modenb] (24) at (0, 0.25) {$n$};
		\node [style=none] (25) at (0.75, 0) {};
	\end{pgfonlayer}
	\begin{pgfonlayer}{edgelayer}
		\draw [style=thick] (25.center) to (13);
	\end{pgfonlayer}
\end{tikzpicture}
} &
    $\eq\limits_{\D4}$ &
    \scalebox{\scalefig}{\input{./figures/SEpW-lemma-proof/triangle-zero_2.tikz}} &
    $\eq\limits_{\L5}$ &
    \scalebox{\scalefig}{\input{./figures/SEpW-lemma-proof/triangle-zero_3.tikz}} &
    $\eq\limits_{\S \A19}$ &
    \scalebox{\scalefig}{\input{./figures/SEpW-lemma-proof/triangle-zero_4.tikz}} \\[1em]
    \scalebox{\scalefig}{\input{./figures/SEpW-lemma-proof/triangle-zero_4.tikz}} &
    $\eq\limits_{\S \A6}$ &
    \scalebox{\scalefig}{\begin{tikzpicture}
	\begin{pgfonlayer}{nodelayer}
		\node [style=modenb] (24) at (0, 0.25) {$n$};
		\node [style=none] (25) at (0.75, 0) {};
		\node [style=rw] (27) at (-0.75, 0) {};
	\end{pgfonlayer}
	\begin{pgfonlayer}{edgelayer}
		\draw [style=thick] (25.center) to (27);
	\end{pgfonlayer}
\end{tikzpicture}
} &
    {} &
    {} &
    {} &
    {} \\[1em]
\end{derivationchain}
    \end{proof}
\textbf{Auxiliary Darboux decomposition for \L21.}
\begin{lemma}[Darboux]\label{lemma:Darboux}
Let $A \in \mathbb{C}^{2n \times 2n}$ be an invertible antisymmetric matrix,
i.e.\ $A^T=-A$. Then there exists an invertible matrix
$P \in \mathbb{C}^{2n \times 2n}$ such that
\[
    A = P^T J P,
    \qquad
    J =
    \begin{pmatrix}
        0 & I_n \\
        -I_n & 0
    \end{pmatrix}.
\]
Moreover,
\[
    \det(P)=(-1)^{\frac{n(n-1)}{2}}Pf(A),
\]
and
\[
    A^{-1}=P^{-1}(-J)P^{-T}.
\]
\end{lemma}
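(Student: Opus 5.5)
The plan is the classical symplectic Gram--Schmidt argument, followed by two short algebraic checks for the determinant and the inverse. Consider the bilinear form $\omega(u,v):=u^TAv$ on $\mathbb{C}^{2n}$. It is alternating because $A^T=-A$ gives $\omega(u,u)=0$, and it is non-degenerate because $A$ is invertible.

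\textbf{Step 1: a Darboux basis.} I will build a basis $e_1,\ldots,e_n,f_1,\ldots,f_n$ with
\[
\omega(e_i,f_j)=\delta_{ij},\qquad \omega(e_i,e_j)=\omega(f_i,f_j)=0,
\]
by induction on $n$. Pick any $e_1\neq 0$. Non-degeneracy gives some $f_1$ with $\omega(e_1,f_1)\neq 0$, and rescaling makes this value $1$. Let $W$ be the $\omega$-orthogonal complement of $\mathrm{span}(e_1,f_1)$. Then
\[
v-\omega(v,f_1)e_1+\omega(v,e_1)f_1\in W
\]
for every $v$, so $\mathbb{C}^{2n}=\mathrm{span}(e_1,f_1)\oplus W$. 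The restriction of $\omega$ to $W$ is again alternating and non-degenerate, so the induction applies to $W$.

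Let $Q$ be the matrix with columns $e_1,\ldots,e_n,f_1,\ldots,f_n$. The relations above say exactly that $Q^TAQ=J$. Setting $P:=Q^{-1}$, we get $A=P^TJP$.

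\textbf{Step 2: the determinant.} I will use the standard identity $\Pf(B^TAB)=\det(B)\Pf(A)$, which the paper also reads off diagrammatically as \L14. With $B=P$ it gives $\Pf(A)=\det(P)\Pf(J)$. It then remains to compute $\Pf(J)$. The only perfect matching with non-zero weight pairs $i$ with $n+i$, each pair carrying weight $J_{i,n+i}=1$. The sign of this matching is the sign of the permutation $(1,n{+}1,2,n{+}2,\ldots,n,2n)$. Its inversions are the pairs $(n+i,j)$ with $i<j$, which number $n(n-1)/2$. Hence $\Pf(J)=(-1)^{n(n-1)/2}$. Since this sign is its own inverse, $\det(P)=(-1)^{n(n-1)/2}\Pf(A)$.

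\textbf{Step 3: the inverse.} From $J^2=-I$ we get $J^{-1}=-J$. Therefore
\[
A^{-1}=P^{-1}J^{-1}P^{-T}=P^{-1}(-J)P^{-T}.
\]

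The only delicate point is the sign bookkeeping in Step 2. It depends on the ordering and normalisation convention for the Pfaffian, which must match the one used for $\Pf_x$ in the paper. I would double-check it on the case $n=2$ directly: there $\Pf(J)=-1=(-1)^{1}$.
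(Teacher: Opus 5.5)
Your proposal is correct and follows essentially the same route as the paper's proof. Both build a Darboux basis by splitting off $\mathrm{span}(e_1,f_1)$ and recursing on its $\omega$-orthogonal complement, then obtain $\mathrm{Pf}(A)=\det(P)\,\mathrm{Pf}(J)$ from the congruence identity (which the paper deduces from L14) with $\mathrm{Pf}(J)=(-1)^{n(n-1)/2}$, and get the inverse from $J^{-1}=-J$. You also spell out two points the paper only asserts: the explicit projection onto $W$ and the inversion count for $\mathrm{Pf}(J)$. Both are correct.
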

\begin{proof}
Consider the non-degenerate antisymmetric bilinear form
\[
    \omega(x,y)=x^TAy
\]
on $\mathbb{C}^{2n}$. Choose $e_1\neq0$. By non-degeneracy, there exists
$f_1$ such that $\omega(e_1,f_1)=1$. The restriction of $\omega$ to
$\operatorname{span}(e_1,f_1)$ is non-degenerate, hence
\[
    \mathbb{C}^{2n}
    =
    \operatorname{span}(e_1,f_1)
    \oplus
    \operatorname{span}(e_1,f_1)^{\perp_\omega},
\]
and the restriction of $\omega$ to the orthogonal complement is again
non-degenerate. Iterating yields a basis
\[
    (e_1,\ldots,e_n,f_1,\ldots,f_n)
\]
such that
\[
    \omega(e_i,e_j)=\omega(f_i,f_j)=0,
    \qquad
    \omega(e_i,f_j)=\delta_{ij}.
\]
Therefore, for a suitable $P\in GL_{2n}(\mathbb{C})$,
\[
    A=P^TJP.
\]
Using
\[
    Pf(Q^TBQ)=\det(Q)Pf(B)
\]
(which can be deduced from \L14), 
and
\[
    Pf(J)=(-1)^{\frac{n(n-1)}{2}},
\]
we obtain
\[
    Pf(A)
    =
    \det(P)Pf(J)
    =
    (-1)^{\frac{n(n-1)}{2}}\det(P),
\]
hence
\[
    \det(P)
    =
    (-1)^{\frac{n(n-1)}{2}}Pf(A).
\]
Finally, since $J^2=-I_{2n}$, we have $J^{-1}=-J$, and therefore
\[
    A^{-1}
    =(P^TJP)^{-1}
    =P^{-1}J^{-1}P^{-T}
    =P^{-1}(-J)P^{-T}.
\]
\end{proof}
\begin{proof}[Proof of \L21] \phantomsection\label{proof:SEpW-invert-triangle}
        Let $A \in \mathbb{C}^{2n \times 2n}$ be an invertible antisymmetric matrix.
        Using \cref{lemma:Darboux}, we get
        \[
            A = P^T J P,
            \qquad
            J =
            \begin{pmatrix}
                0 & I_n \\
                -I_n & 0
            \end{pmatrix}.
        \]
        For some $P \in \mathbb{C}^{2n \times 2n}$ invertible. Then:\\
        \smallskip
            \begin{derivationchain}
    \scalebox{\scalefig}{\input{./figures/SEpW-lemma-proof/invert-triangle_1.tikz}} &
    $\eq\limits_{\L14}$ &
    \scalebox{\scalefig}{\input{./figures/SEpW-lemma-proof/invert-triangle_2.tikz}} &
    $\eq\limits_{\L13}$ &
    \scalebox{\scalefig}{\input{./figures/SEpW-lemma-proof/invert-triangle_3.tikz}} &
    $\eq\limits_{\L8, \L20}$ &
    \scalebox{\scalefig}{\input{./figures/SEpW-lemma-proof/invert-triangle_4.tikz}} \\[1em]
    \scalebox{\scalefig}{\input{./figures/SEpW-lemma-proof/invert-triangle_4.tikz}} &
    $\eq\limits_{\S \A6}$ &
    \scalebox{\scalefig}{\input{./figures/SEpW-lemma-proof/invert-triangle_5.tikz}} &
    $\eq\limits_{\L8, \L3}$ &
    \scalebox{\scalefig}{\input{./figures/SEpW-lemma-proof/invert-triangle_6.tikz}} &
    $\eq\limits_{\L2, \L3, \L8}$ &
    \scalebox{\scalefig}{\input{./figures/SEpW-lemma-proof/invert-triangle_7.tikz}} \\[1em]
    \scalebox{\scalefig}{\input{./figures/SEpW-lemma-proof/invert-triangle_7.tikz}} &
    $\eq\limits_{\L18, \L12}$ &
    \scalebox{\scalefig}{\input{./figures/SEpW-lemma-proof/invert-triangle_8.tikz}} &
    $\eq\limits_{\Dist}$ &
    \scalebox{\scalefig}{\input{./figures/SEpW-lemma-proof/invert-triangle_9.tikz}} &
    $\eq\limits_{\S \A3}$ &
    \scalebox{\scalefig}{\input{./figures/SEpW-lemma-proof/invert-triangle_10.tikz}} \\[1em]
    \scalebox{\scalefig}{\input{./figures/SEpW-lemma-proof/invert-triangle_10.tikz}} &
    $\eq\limits_{\S \A1}$ &
    \scalebox{\scalefig}{\input{./figures/SEpW-lemma-proof/invert-triangle_11.tikz}} &
    {} &
    {} &
    {} &
    {} \\[1em]
\end{derivationchain}
            \begin{derivationchain}
    \scalebox{\scalefig}{\input{./figures/SEpW-lemma-proof/invert-triangle_11.tikz}} &
    $\eq\limits_{\S \A13}$ &
    \scalebox{\scalefig}{\input{./figures/SEpW-lemma-proof/invert-triangle_12.tikz}} &
    $\eq\limits_{\S \A7, \S \A10}$ &
    \scalebox{\scalefig}{\input{./figures/SEpW-lemma-proof/invert-triangle_13.tikz}} &
    $\eq\limits_{\S \A6    }$ &
    \scalebox{\scalefig}{\input{./figures/SEpW-lemma-proof/invert-triangle_14.tikz}} \\[1em]
    \scalebox{\scalefig}{\input{./figures/SEpW-lemma-proof/invert-triangle_14.tikz}} &
    $\eq\limits_{\Dist}$ &
    \scalebox{\scalefig}{\input{./figures/SEpW-lemma-proof/invert-triangle_15.tikz}} &
    $\eq\limits_{\S \A13}$ &
    \scalebox{\scalefig}{\input{./figures/SEpW-lemma-proof/invert-triangle_16.tikz}} &
    $\eq\limits_{\S \A6, \L20, \L13}$ &
    \scalebox{\scalefig}{\input{./figures/SEpW-lemma-proof/invert-triangle_17.tikz}} \\[1em]
    \scalebox{\scalefig}{\input{./figures/SEpW-lemma-proof/invert-triangle_17.tikz}} &
    $\eq\limits_{\L14}$ &
    \scalebox{\scalefig}{\input{./figures/SEpW-lemma-proof/invert-triangle_18.tikz}} &
    {} &
    {} &
    {} &
    {} \\[1em]
\end{derivationchain}
    \end{proof}
\begin{proof}[Proof of \L19]
    \phantomsection\label{proof:SEpW-op-gen}
        \begin{derivationchain}
    \scalebox{\scalefig}{\input{./figures/SEpW-lemma-proof/op-gen_1.tikz}} &
    $\eq\limits_{\Dist}$ &
    \scalebox{\scalefig}{\input{./figures/SEpW-lemma-proof/op-gen_2.tikz}} &
    $\eq$ &
    \scalebox{\scalefig}{\input{./figures/SEpW-lemma-proof/op-gen_3.tikz}} &
    $\eq\limits_{\S \A23}$ &
    \scalebox{\scalefig}{\input{./figures/SEpW-lemma-proof/op-gen_4.tikz}} \\[1em]
    \scalebox{\scalefig}{\input{./figures/SEpW-lemma-proof/op-gen_4.tikz}} &
    $\eq\limits_{IH}$ &
    \scalebox{\scalefig}{\input{./figures/SEpW-lemma-proof/op-gen_5.tikz}} &
    $\eq$ &
    \scalebox{\scalefig}{\input{./figures/SEpW-lemma-proof/op-gen_6.tikz}} &
    $\eq\limits_{\Dist}$ &
    \scalebox{\scalefig}{\input{./figures/SEpW-lemma-proof/op-gen_7.tikz}} \\[1em]
\end{derivationchain}
    \end{proof}
\begin{proof}[Proof of \L22]
        \phantomsection\label{proof:SEpW-triangle-cup}
            \begin{derivationchain}
    \scalebox{\scalefig}{\input{./figures/SEpW-lemma-proof/triangle-cup_1.tikz}} &
    $\eq\limits_{\D4}$ &
    \scalebox{\scalefig}{\input{./figures/SEpW-lemma-proof/triangle-cup_2.tikz}} &
    $\eq\limits_{\S \A16, \S \A18}$ &
    \scalebox{\scalefig}{\input{./figures/SEpW-lemma-proof/triangle-cup_3.tikz}} &
    $\eq\limits_{\S \A3}$ &
    \scalebox{\scalefig}{\input{./figures/SEpW-lemma-proof/triangle-cup_4.tikz}} \\[1em]
    \scalebox{\scalefig}{\input{./figures/SEpW-lemma-proof/triangle-cup_4.tikz}} &
    $\eq\limits_{\S \A18}$ &
    \scalebox{\scalefig}{\input{./figures/SEpW-lemma-proof/triangle-cup_5.tikz}} &
    $\eq\limits_{\D4}$ &
    \scalebox{\scalefig}{\begin{tikzpicture}
	\begin{pgfonlayer}{nodelayer}
		\node [style=none] (3) at (1, 0) {};
		\node [style=modenb] (6) at (1, 0.25) {$n$};
		\node [style=none] (8) at (-0.25, 0.75) {$Z$};
		\node [style=none] (17) at (1.5, 0) {};
		\node [style=rw] (21) at (-0.25, 0) {};
	\end{pgfonlayer}
	\begin{pgfonlayer}{edgelayer}
		\draw [style=thick] (17.center) to (21);
	\end{pgfonlayer}
\end{tikzpicture}
} &
    {} &
    {} \\[1em]
\end{derivationchain}
    \end{proof}
\textbf{Auxiliary decomposition for \L23.}
\begin{lemma}\label{lemma:antisymmetric-kernel-decomposition}
Let $Z\in\mathbb{C}^{n\times n}$ be antisymmetric. Then there exist
$k\leq n$, an invertible antisymmetric matrix
$A\in\mathbb{C}^{(n-k)\times(n-k)}$, and an invertible matrix
$P\in GL_n(\mathbb{C})$ such that
\[
    Z=P^T(A\oplus0_k)P,
\]
where $0_k$ denotes the $k\times k$ zero matrix. In particular, $n-k$ is even.
\end{lemma}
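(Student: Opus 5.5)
The plan is to view $Z$ as the Gram matrix of the antisymmetric bilinear form $\omega(x,y)=x^TZy$ on $\mathbb{C}^n$, split off its radical, and apply the Darboux construction of \cref{lemma:Darboux} to the non-degenerate part.

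\textbf{Step 1: the radical.} Let
\[
    K:=\ker Z=\{x\mid \omega(y,x)=0\ \forall y\},
\]
which is the radical of $\omega$, and set $k:=\dim K$. Choose any complement $W$ with $\mathbb{C}^n=W\oplus K$, so that $\dim W=n-k$.

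\textbf{Step 2: non-degeneracy on $W$.} The restriction $\omega|_W$ is non-degenerate. Indeed, suppose $w\in W$ is $\omega$-orthogonal to all of $W$. It is also orthogonal to $K$, since $K$ is the radical. Hence $w$ lies in the radical, so $w\in K\cap W=\{0\}$.

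\textbf{Step 3: assembling the change of basis.} Pick a basis $w_1,\dots,w_{n-k}$ of $W$ and a basis $u_1,\dots,u_k$ of $K$. Let $B$ be the invertible matrix whose columns are $w_1,\dots,w_{n-k},u_1,\dots,u_k$. Then
\[
    B^TZB=A\oplus 0_k,
\]
where $A=(w_i^TZw_j)_{i,j}$ is antisymmetric and invertible by Step 2. The off-diagonal blocks vanish because $Zu_j=0$, and $u_i^TZ=-(Zu_i)^T=0$. Setting $P:=B^{-1}$ gives $Z=P^T(A\oplus0_k)P$, as required.

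\textbf{Step 4: parity.} An antisymmetric matrix of odd size $m$ satisfies
\[
    \det A=\det A^T=(-1)^m\det A=-\det A,
\]
so $\det A=0$. Since $A$ is invertible, $n-k$ must be even. Equivalently, \cref{lemma:Darboux} already produces its symplectic basis pairwise, which forces even dimension.

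\textbf{Expected difficulty.} None of these steps should be a genuine obstacle; this is standard linear algebra. The only point needing care is Step 2: non-degeneracy on $W$ must use the fact that $K$ is exactly the full radical, not just some isotropic subspace.

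I will also check the degenerate cases. When $k=n$ (i.e.\ $Z=0$) the matrix $A$ is empty. When $k=0$ we take $P=I_n$ and $A=Z$.
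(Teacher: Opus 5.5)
Your proof is correct and follows essentially the same route as the paper: take the radical $K=\ker Z$ of $\omega(x,y)=x^TZy$, show that $\omega$ is non-degenerate on a complement of $K$, and change basis to obtain $A\oplus 0_k$. Your determinant argument for the parity of $n-k$ supplies a step the paper only asserts. The Darboux lemma you invoke in your opening plan is never actually used in Steps 1--3, and the paper's proof does not use it either.
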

\begin{proof}
Consider the antisymmetric bilinear form
\[
    \omega(x,y)=x^TZy
\]
on $\mathbb{C}^n$, and let
\[
    K=\ker Z
     =\{x\in\mathbb{C}^n\mid
       \omega(x,y)=0\text{ for every }y\in\mathbb{C}^n\}.
\]
Set $k=\dim K$, and choose a vector-space complement $V$ of $K$, so that
\[
    \mathbb{C}^n=V\oplus K.
\]
The restriction $\omega|_V$ is non-degenerate. Indeed, if $x\in V$ satisfies
$\omega(x,v)=0$ for every $v\in V$, then also $\omega(x,k)=0$ for every
$k\in K$. Hence $\omega(x,y)=0$ for every $y\in\mathbb{C}^n$, so $x\in K$.
Since $V\cap K=\{0\}$, we obtain $x=0$.
Therefore the matrix $A$ of $\omega|_V$ in any basis of $V$ is antisymmetric
and invertible. Extending this basis by a basis of $K$, the matrix of $\omega$
is $A\oplus0_k$. Hence, after the corresponding change of basis, there exists
$P\in GL_n(\mathbb{C})$ such that
\[
    Z=P^T(A\oplus0_k)P.
\]
Since an invertible antisymmetric matrix has even size, $n-k$ is even.
\end{proof}
\begin{proof}[Proof of \L23]
    \phantomsection\label{proof:SEpW-triangle-simp}
        \begin{derivationchain}
    \scalebox{\scalefig}{\begin{tikzpicture}
	\begin{pgfonlayer}{nodelayer}
		\node [style=rw] (1) at (-1, 0) {};
		\node [style=none] (2) at (1, 0) {};
		\node [style=none] (3) at (-1, 1) {$Z$};
		\node [style=modenb] (4) at (0.5, 0.25) {$n$};
	\end{pgfonlayer}
	\begin{pgfonlayer}{edgelayer}
		\draw [style=thick] (2.center) to (1);
	\end{pgfonlayer}
\end{tikzpicture}
} &
    $\eq\limits_{\L13}$ &
    \scalebox{\scalefig}{\input{./figures/SEpW-lemma-proof/triangle-simp_2.tikz}} &
    $\eq\limits_{\L20,\L5}$ &
    \scalebox{\scalefig}{\input{./figures/SEpW-lemma-proof/triangle-simp_3.tikz}} &
    $\eq\limits_{\S \A6}$ &
    \scalebox{\scalefig}{\input{./figures/SEpW-lemma-proof/triangle-simp_4.tikz}} \\[1em]
\end{derivationchain}
    \end{proof}
\subsubsection{Derived equations in $\S\aW$}
\begin{proof}[Proof of \L25]
                \phantomsection\label{proof:SaW-scaled-antinat}
    The base case $n=1$ is exactly \A29.
                We proceed by induction on $n\in\N^*$.\\[1em]
                \begin{derivationchain}
    \scalebox{\scalefig}{\input{./figures/SaW-lemma-proof/scaled-antinat_1.tikz}} &
    $\eq\limits_{\Dist}$ &
    \scalebox{\scalefig}{\input{./figures/SaW-lemma-proof/scaled-antinat_2.tikz}} &
    $\eq\limits_{IH}$ &
    \scalebox{\scalefig}{\input{./figures/SaW-lemma-proof/scaled-antinat_3.tikz}} &
    $\eq\limits_{\S \A9, \Dist}$ &
    \scalebox{\scalefig}{\input{./figures/SaW-lemma-proof/scaled-antinat_4.tikz}} \\[1em]
\end{derivationchain}
            \end{proof}
\begin{proof}[Proof of \L26]
                \phantomsection\label{proof:SaW-dot-cup}
                We proceed by induction on $n\in\N^*$.\\[1em]
                    \begin{derivationchain}
    \scalebox{\scalefig}{\input{./figures/SaW-lemma-proof/dot-cup_1.tikz}} &
    $\eq\limits_{\Dist}$ &
    \scalebox{\scalefig}{\input{./figures/SaW-lemma-proof/dot-cup_2.tikz}} &
    $\eq$ &
    \scalebox{\scalefig}{\input{./figures/SaW-lemma-proof/dot-cup_2bis.tikz}} &
    $\eq\limits_{\L25}$ &
    \scalebox{\scalefig}{\input{./figures/SaW-lemma-proof/dot-cup_3.tikz}} \\[1em]
    \scalebox{\scalefig}{\input{./figures/SaW-lemma-proof/dot-cup_3.tikz}} &
    $\eq\limits_{\L25}$ &
    \scalebox{\scalefig}{\input{./figures/SaW-lemma-proof/dot-cup_4.tikz}} &
    $\eq\limits_{\S \A25}$ &
    \scalebox{\scalefig}{\input{./figures/SaW-lemma-proof/dot-cup_5.tikz}} &
    $\eq\limits_{IH}$ &
    \scalebox{\scalefig}{\input{./figures/SaW-lemma-proof/dot-cup_6.tikz}} \\[1em]
    \scalebox{\scalefig}{\input{./figures/SaW-lemma-proof/dot-cup_6.tikz}} &
    $\eq\limits_{\S \A9}$ &
    \scalebox{\scalefig}{\input{./figures/SaW-lemma-proof/dot-cup_7.tikz}} &
    $\eq\limits_{\Dist}$ &
    \scalebox{\scalefig}{\input{./figures/SaW-lemma-proof/dot-cup_8.tikz}} &
    {} &
    {} \\[1em]
\end{derivationchain}
            \end{proof}
\begin{proof}[Proof of \L24]
                \phantomsection\label{proof:SaW-det-mat}
                    \textbf{Base case:} $\pmm{0&1\\1&0}$\\
                    \begin{derivationchain}
    \scalebox{\scalefig}{\input{./figures/SaW-lemma-proof/det-mat_1.tikz}} &
    $\eq\limits_{\D2,\D3}$ &
    \scalebox{\scalefig}{\input{./figures/SaW-lemma-proof/det-mat_2.tikz}} &
    $\eq\limits_{\A10}$ &
    \scalebox{\scalefig}{\input{./figures/SaW-lemma-proof/det-mat_3.tikz}} &
    $\eq\limits_{\A6}$ &
    \scalebox{\scalefig}{\input{./figures/SaW-lemma-proof/det-mat_4.tikz}} \\[1em]
    \scalebox{\scalefig}{\input{./figures/SaW-lemma-proof/det-mat_4.tikz}} &
    $\eq\limits_{\Dist}$ &
    \scalebox{\scalefig}{\input{./figures/SaW-lemma-proof/det-mat_5.tikz}} &
    $\eq\limits_{\A29}$ &
    \scalebox{\scalefig}{\input{./figures/SaW-lemma-proof/det-mat_6.tikz}} &
    $\eq\limits_{\A26}$ &
    \scalebox{\scalefig}{\input{./figures/SaW-lemma-proof/det-mat_7.tikz}} \\[1em]
    \scalebox{\scalefig}{\input{./figures/SaW-lemma-proof/det-mat_7.tikz}} &
    $\eq\limits_{\Dist}$ &
    \scalebox{\scalefig}{\input{./figures/SaW-lemma-proof/det-mat_8.tikz}} &
    {} &
    {} &
    {} &
    {} \\[1em]
\end{derivationchain}
                    \textbf{Base case:} $\pmm{1&0\\\lambda&1}$\\
                    \begin{derivationchain}
    \scalebox{\scalefig}{\input{./figures/SaW-lemma-proof/det-mat_9.tikz}} &
    $\eq\limits_{\D2,\D3,\A10}$ &
    \scalebox{\scalefig}{\input{./figures/SaW-lemma-proof/det-mat_10.tikz}} &
    $\eq\limits_{\A6}$ &
    \scalebox{\scalefig}{\input{./figures/SaW-lemma-proof/det-mat_11.tikz}} &
    $\eq\limits_{\Dist}$ &
    \scalebox{\scalefig}{\input{./figures/SaW-lemma-proof/det-mat_12.tikz}} \\[1em]
    \scalebox{\scalefig}{\input{./figures/SaW-lemma-proof/det-mat_12.tikz}} &
    $\eq\limits_{\A28}$ &
    \scalebox{\scalefig}{\input{./figures/SaW-lemma-proof/det-mat_13.tikz}} &
    $\eq\limits_{\A5, \A6}$ &
    \scalebox{\scalefig}{\input{./figures/SaW-lemma-proof/det-mat_14.tikz}} &
    $\eq\limits_{\Dist}$ &
    \scalebox{\scalefig}{\begin{tikzpicture}
	\begin{pgfonlayer}{nodelayer}
		\node [style=none] (10) at (0.75, 0) {};
		\node [style=none] (12) at (2.75, 0) {};
		\node [style=modenb] (21) at (2.5, 0.25) {$2$};
		\node [style=dot] (24) at (0.75, 0) {};
	\end{pgfonlayer}
	\begin{pgfonlayer}{edgelayer}
		\draw [style=thick] (10.center) to (12.center);
	\end{pgfonlayer}
\end{tikzpicture}
} \\[1em]
    \scalebox{\scalefig}{} &
    $\eq\limits_{\S\A12, \S\A26, \Dist, \A26}$ &
    \scalebox{\scalefig}{\input{./figures/SaW-lemma-proof/det-mat_16.tikz}} &
    {} &
    {} &
    {} &
    {} \\[1em]
\end{derivationchain}
                    \textbf{Base case:} $\pmm{x}$\\
                    \begin{derivationchain}
    \scalebox{\scalefig}{\input{./figures/SaW-lemma-proof/det-mat_17.tikz}} &
    $\eq\limits_{\D1}$ &
    \scalebox{\scalefig}{\begin{tikzpicture}
	\begin{pgfonlayer}{nodelayer}
		\node [style=none] (10) at (-1, 0) {};
		\node [style=none] (12) at (1.5, 0) {};
		\node [style=modenb] (21) at (1.25, 0.25) {$1$};
		\node [style=dot] (24) at (-1, 0) {};
		\node [style=white phase dot] (25) at (0.25, 0) {$x$};
	\end{pgfonlayer}
	\begin{pgfonlayer}{edgelayer}
		\draw (24) to (12.center);
	\end{pgfonlayer}
\end{tikzpicture}
} &
    $\eq\limits_{\A26}$ &
    \scalebox{\scalefig}{\input{./figures/SaW-lemma-proof/det-mat_19.tikz}} &
    {} &
    {} \\[1em]
\end{derivationchain}
                    \textbf{Induction step:} $M\oplus M'$\\
                    \begin{derivationchain}
    \scalebox{\scalefig}{\input{./figures/SaW-lemma-proof/det-mat_20.tikz}} &
    $\eq\limits_{\L6}$ &
    \scalebox{\scalefig}{\input{./figures/SaW-lemma-proof/det-mat_21.tikz}} &
    $\eq\limits_{\Dist}$ &
    \scalebox{\scalefig}{\input{./figures/SaW-lemma-proof/det-mat_22.tikz}} &
    $\eq\limits_{IH}$ &
    \scalebox{\scalefig}{\input{./figures/SaW-lemma-proof/det-mat_23.tikz}} \\[1em]
    \scalebox{\scalefig}{\input{./figures/SaW-lemma-proof/det-mat_23.tikz}} &
    $\eq\limits_{\A26, \A9, \Dist}$ &
    \scalebox{\scalefig}{\input{./figures/SaW-lemma-proof/det-mat_24.tikz}} &
    {} &
    {} &
    {} &
    {} \\[1em]
\end{derivationchain}
                    \textbf{Induction step:} $M' M $\\
                    \begin{derivationchain}
    \scalebox{\scalefig}{\input{./figures/SaW-lemma-proof/det-mat_25.tikz}} &
    $\eq\limits_{\L3}$ &
    \scalebox{\scalefig}{\input{./figures/SaW-lemma-proof/det-mat_26.tikz}} &
    $\eq\limits_{IH}$ &
    \scalebox{\scalefig}{\input{./figures/SaW-lemma-proof/det-mat_27.tikz}} &
    $\eq\limits_{IH, \A26, \A9}$ &
    \scalebox{\scalefig}{\input{./figures/SaW-lemma-proof/det-mat_28.tikz}} \\[1em]
\end{derivationchain}
            \end{proof}
\begin{proof}[Proof of \L27]
                \phantomsection\label{proof:SaW-op}
                    \begin{derivationchain}
    \scalebox{\scalefig}{\input{./figures/SaW-lemma-proof/op_1.tikz}} &
    $\eq\limits_{\Dist}$ &
    \scalebox{\scalefig}{\input{./figures/SaW-lemma-proof/op_2.tikz}} &
    $\eq\limits_{\L25}$ &
    \scalebox{\scalefig}{\input{./figures/SaW-lemma-proof/op_3.tikz}} &
    $\eq\limits_{\A28,IH}$ &
    \scalebox{\scalefig}{\input{./figures/SaW-lemma-proof/op_4.tikz}} \\[1em]
    \scalebox{\scalefig}{\input{./figures/SaW-lemma-proof/op_4.tikz}} &
    $\eq\limits_{\S \A5}$ &
    \scalebox{\scalefig}{\input{./figures/SaW-lemma-proof/op_5.tikz}} &
    {} &
    {} &
    {} &
    {} \\[1em]
\end{derivationchain}
            \end{proof}
\begin{proof}[Proof of \L28]
                \phantomsection\label{proof:SaW-zero-erase}
                    \begin{derivationchain}
    \scalebox{\scalefig}{\input{./figures/SaW-lemma-proof/zero-erase_1.tikz}} &
    $\eq\limits_{\Dist}$ &
    \scalebox{\scalefig}{\input{./figures/SaW-lemma-proof/zero-erase_2.tikz}} &
    $\eq\limits_{\A27}$ &
    \scalebox{\scalefig}{\input{./figures/SaW-lemma-proof/zero-erase_3.tikz}} &
    $\eq$ &
    \scalebox{\scalefig}{\input{./figures/SaW-lemma-proof/zero-erase_4.tikz}} \\[1em]
    \scalebox{\scalefig}{\input{./figures/SaW-lemma-proof/zero-erase_4.tikz}} &
    $\eq\limits_{IH}$ &
    \scalebox{\scalefig}{\input{./figures/SaW-lemma-proof/zero-erase_5.tikz}} &
    $\eq\limits_{\Dist}$ &
    \scalebox{\scalefig}{\input{./figures/SaW-lemma-proof/zero-erase_6.tikz}} &
    {} &
    {} \\[1em]
\end{derivationchain}
            \end{proof}
\subsubsection{Derived equations in $\S\FpW$}
\begin{proof}[Proof of \L29]
                \phantomsection\label{proof:SFpW-antiflex}
                    \begin{derivationchain}
    \scalebox{\scalefig}{\input{./figures/SFpW-lemma-proof/antiflex_1.tikz}} &
    $\eq\limits_{\Dist}$ &
    \scalebox{\scalefig}{\input{./figures/SFpW-lemma-proof/antiflex_2.tikz}} &
    $\eq\limits_{IH}$ &
    \scalebox{\scalefig}{\input{./figures/SFpW-lemma-proof/antiflex_3.tikz}} &
    $\eq\limits_{\A34}$ &
    \scalebox{\scalefig}{\input{./figures/SFpW-lemma-proof/antiflex_4.tikz}} \\[1em]
    \scalebox{\scalefig}{\input{./figures/SFpW-lemma-proof/antiflex_4.tikz}} &
    $\eq\limits_{\A6}$ &
    \scalebox{\scalefig}{\input{./figures/SFpW-lemma-proof/antiflex_4-1.tikz}} &
    $\eq\limits_{\A34}$ &
    \scalebox{\scalefig}{\input{./figures/SFpW-lemma-proof/antiflex_4-2.tikz}} &
    $\eq\limits_{\A36}$ &
    \scalebox{\scalefig}{\input{./figures/SFpW-lemma-proof/antiflex_4-3.tikz}} \\[1em]
    \scalebox{\scalefig}{\input{./figures/SFpW-lemma-proof/antiflex_4-3.tikz}} &
    $\eq\limits_{\L25}$ &
    \scalebox{\scalefig}{\input{./figures/SFpW-lemma-proof/antiflex_4-4.tikz}} &
    $\eq\limits_{\A36,\A34,\A6}$ &
    \scalebox{\scalefig}{\input{./figures/SFpW-lemma-proof/antiflex_5.tikz}} &
    $\eq\limits_{\Dist}$ &
    \scalebox{\scalefig}{\input{./figures/SFpW-lemma-proof/antiflex_6.tikz}} \\[1em]
    \scalebox{\scalefig}{\input{./figures/SFpW-lemma-proof/antiflex_6.tikz}} &
    $\eq\limits_{\L6}$ &
    \scalebox{\scalefig}{\input{./figures/SFpW-lemma-proof/antiflex_7.tikz}} &
    $\eq$ &
    \scalebox{\scalefig}{\input{./figures/SFpW-lemma-proof/antiflex_8.tikz}} &
    {} &
    {} \\[1em]
\end{derivationchain}
            \end{proof}
\begin{proof}[Proof of \L30]
                \phantomsection\label{proof:SFpW-unconnect}
                    \begin{derivationchain}
    \scalebox{\scalefig}{\input{./figures/SFpW-lemma-proof/unconnect_1.tikz}} &
    $\eq\limits_{\L17}$ &
    \scalebox{\scalefig}{\input{./figures/SFpW-lemma-proof/unconnect_2.tikz}} &
    $\eq\limits_{\A35}$ &
    \scalebox{\scalefig}{\input{./figures/SFpW-lemma-proof/unconnect_2-1.tikz}} &
    $\eq\limits_{\A32}$ &
    \scalebox{\scalefig}{\input{./figures/SFpW-lemma-proof/unconnect_2-2.tikz}} \\[1em]
    \scalebox{\scalefig}{\input{./figures/SFpW-lemma-proof/unconnect_2-2.tikz}} &
    $\eq\limits_{\A34}$ &
    \scalebox{\scalefig}{\input{./figures/SFpW-lemma-proof/unconnect_2-3.tikz}} &
    $\eq\limits_{\A13,\A18}$ &
    \scalebox{\scalefig}{\input{./figures/SFpW-lemma-proof/unconnect_2-4.tikz}} &
    $\eq\limits_{\A7,\A10,\A19}$ &
    \scalebox{\scalefig}{\input{./figures/SFpW-lemma-proof/unconnect_2-5.tikz}} \\[1em]
    \scalebox{\scalefig}{\input{./figures/SFpW-lemma-proof/unconnect_2-5.tikz}} &
    $\eq\limits_{\A36}$ &
    \scalebox{\scalefig}{\input{./figures/SFpW-lemma-proof/unconnect_3.tikz}} &
    $\eq\limits_{IH}$ &
    \scalebox{\scalefig}{\input{./figures/SFpW-lemma-proof/unconnect_4.tikz}} &
    $\eq\limits_{\L16}$ &
    \scalebox{\scalefig}{\input{./figures/SFpW-lemma-proof/unconnect_5.tikz}} \\[1em]
    \scalebox{\scalefig}{\input{./figures/SFpW-lemma-proof/unconnect_5.tikz}} &
    $\eq\limits_{\Dist}$ &
    \scalebox{\scalefig}{\input{./figures/SFpW-lemma-proof/unconnect_6.tikz}} &
    {} &
    {} &
    {} &
    {} \\[1em]
\end{derivationchain}
            \end{proof}
\begin{proof}[Proof of \L31]
                \phantomsection\label{proof:SFpW-w-2-to-1}
                    \begin{derivationchain}
    \scalebox{\scalefig}{\input{./figures/SFpW-lemma-proof/w-2-to-1_1.tikz}} &
    $\eq\limits_{\S \A36}$ &
    \scalebox{\scalefig}{\input{./figures/SFpW-lemma-proof/w-2-to-1_2.tikz}} &
    $\eq\limits_{\L29}$ &
    \scalebox{\scalefig}{\input{./figures/SFpW-lemma-proof/w-2-to-1_3.tikz}} &
    $\eq\limits_{\S \A6}$ &
    \scalebox{\scalefig}{\begin{tikzpicture}
	\begin{pgfonlayer}{nodelayer}
		\node [style=none] (1) at (0, 0) {};
		\node [style=none] (14) at (3, 0) {};
		\node [style=modenb] (17) at (2.75, 0.25) {$n$};
		\node [style=dot] (22) at (1.5, 0) {};
	\end{pgfonlayer}
	\begin{pgfonlayer}{edgelayer}
		\draw [style=thick] (14.center) to (1.center);
	\end{pgfonlayer}
\end{tikzpicture}
} \\[1em]
\end{derivationchain}
            \end{proof}
\begin{proof}[Proof of \L32]
                \phantomsection\label{proof:SFpW-mat-w-interact}
                    \begin{derivationchain}
    \scalebox{\scalefig}{\input{./figures/SFpW-lemma-proof/mat-w-interact_1.tikz}} &
    $\eq\limits_{\L31}$ &
    \scalebox{\scalefig}{\input{./figures/SFpW-lemma-proof/mat-w-interact_2.tikz}} &
    $\eq\limits_{\L12}$ &
    \scalebox{\scalefig}{\input{./figures/SFpW-lemma-proof/mat-w-interact_3.tikz}} &
    $\eq\limits_{\L8, \L3}$ &
    \scalebox{\scalefig}{\input{./figures/SFpW-lemma-proof/mat-w-interact_4.tikz}} \\[1em]
    \scalebox{\scalefig}{\input{./figures/SFpW-lemma-proof/mat-w-interact_4.tikz}} &
    $\eq\limits_{\L2}$ &
    \scalebox{\scalefig}{\input{./figures/SFpW-lemma-proof/mat-w-interact_5.tikz}} &
    $\eq\limits_{\L24}$ &
    \scalebox{\scalefig}{\input{./figures/SFpW-lemma-proof/mat-w-interact_6.tikz}} &
    $\eq\limits_{\L8,\L3}$ &
    \scalebox{\scalefig}{\input{./figures/SFpW-lemma-proof/mat-w-interact_7.tikz}} \\[1em]
    \scalebox{\scalefig}{\input{./figures/SFpW-lemma-proof/mat-w-interact_7.tikz}} &
    $\eq\limits_{\L2,\L12, \L24}$ &
    \scalebox{\scalefig}{\input{./figures/SFpW-lemma-proof/mat-w-interact_8.tikz}} &
    $\eq\limits_{\L24}$ &
    \scalebox{\scalefig}{\input{./figures/SFpW-lemma-proof/mat-w-interact_9.tikz}} &
    $\eq\limits_{\L2}$ &
    \scalebox{\scalefig}{\input{./figures/SFpW-lemma-proof/mat-w-interact_10.tikz}} \\[1em]
    \scalebox{\scalefig}{\input{./figures/SFpW-lemma-proof/mat-w-interact_10.tikz}} &
    $\eq\limits_{\L31}$ &
    \scalebox{\scalefig}{\input{./figures/SFpW-lemma-proof/mat-w-interact_11.tikz}} &
    {} &
    {} &
    {} &
    {} \\[1em]
\end{derivationchain}
            \end{proof}
\begin{proof}[Proof of \L33]
                \phantomsection\label{proof:SFpW-scaled-antinat}
                    \begin{derivationchain}
    \scalebox{\scalefig}{\input{./figures/SFpW-lemma-proof/scaled-antinat_1.tikz}} &
    $\eq\limits_{\L31}$ &
    \scalebox{\scalefig}{\input{./figures/SFpW-lemma-proof/scaled-antinat_2.tikz}} &
    $\eq$ &
    \scalebox{\scalefig}{\input{./figures/SFpW-lemma-proof/scaled-antinat_3.tikz}} &
    $\eq\limits_{\L25}$ &
    \scalebox{\scalefig}{\input{./figures/SFpW-lemma-proof/scaled-antinat_4.tikz}} \\[1em]
    \scalebox{\scalefig}{\input{./figures/SFpW-lemma-proof/scaled-antinat_4.tikz}} &
    $\eq\limits_{\L31}$ &
    \scalebox{\scalefig}{\input{./figures/SFpW-lemma-proof/scaled-antinat_5.tikz}} &
    {} &
    {} &
    {} &
    {} \\[1em]
\end{derivationchain}
            \end{proof}
\begin{proof}[Proof of \L34]
                \phantomsection\label{proof:SFpW-co-antiflex}
                    \begin{derivationchain}
    \scalebox{\scalefig}{\input{./figures/SFpW-lemma-proof/co-antiflex_1.tikz}} &
    $\eq\limits_{\S \A3}$ &
    \scalebox{\scalefig}{\input{./figures/SFpW-lemma-proof/co-antiflex_1-1.tikz}} &
    $\eq\limits_{\L29}$ &
    \scalebox{\scalefig}{\input{./figures/SFpW-lemma-proof/co-antiflex_2.tikz}} &
    $\eq\limits_{\L33}$ &
    \scalebox{\scalefig}{\input{./figures/SFpW-lemma-proof/co-antiflex_3.tikz}} \\[1em]
    \scalebox{\scalefig}{\input{./figures/SFpW-lemma-proof/co-antiflex_3.tikz}} &
    $\eq$ &
    \scalebox{\scalefig}{\input{./figures/SFpW-lemma-proof/co-antiflex_4.tikz}} &
    $\eq\limits_{\S \A13, \S \A3}$ &
    \scalebox{\scalefig}{\input{./figures/SFpW-lemma-proof/co-antiflex_5.tikz}} &
    $\eq\limits_{\L7, \L3}$ &
    \scalebox{\scalefig}{\input{./figures/SFpW-lemma-proof/co-antiflex_6.tikz}} \\[1em]
\end{derivationchain}
            \end{proof}
\begin{proof}[Proof of \L35]
                \phantomsection\label{proof:SFpW-triangle-invert}
                    \begin{derivationchain}
    \scalebox{\scalefig}{\input{./figures/SFpW-lemma-proof/triangle-invert_1.tikz}} &
    $\eq\limits_{\S \A17, \L16}$ &
    \scalebox{\scalefig}{\input{./figures/SFpW-lemma-proof/triangle-invert_2.tikz}} &
    $\eq\limits_{\L21}$ &
    \scalebox{\scalefig}{\input{./figures/SFpW-lemma-proof/triangle-invert_3.tikz}} &
    $\eq\limits_{\L30}$ &
    \scalebox{\scalefig}{\input{./figures/SFpW-lemma-proof/triangle-invert_4.tikz}} \\[1em]
    \scalebox{\scalefig}{\input{./figures/SFpW-lemma-proof/triangle-invert_4.tikz}} &
    $\eq\limits_{\S \A36}$ &
    \scalebox{\scalefig}{\input{./figures/SFpW-lemma-proof/triangle-invert_5.tikz}} &
    $\eq\limits_{\L34}$ &
    \scalebox{\scalefig}{\input{./figures/SFpW-lemma-proof/triangle-invert_6.tikz}} &
    $\eq\limits_{\L3,\L8,\S \A6, \S \A18}$ &
    \scalebox{\scalefig}{\input{./figures/SFpW-lemma-proof/triangle-invert_7.tikz}} \\[1em]
\end{derivationchain}
            \end{proof}
\begin{proof}[Proof of \L36]
                \phantomsection\label{proof:SFpW-dot-cup}
                    \begin{derivationchain}
    \scalebox{\scalefig}{\input{./figures/SFpW-lemma-proof/dot-cup_1.tikz}} &
    $\eq\limits_{\S\A35}$ &
    \scalebox{\scalefig}{\input{./figures/SFpW-lemma-proof/dot-cup_2.tikz}} &
    $\eq\limits_{\L31}$ &
    \scalebox{\scalefig}{\input{./figures/SFpW-lemma-proof/dot-cup_3.tikz}} &
    $\eq\limits_{\L12, \S\A18}$ &
    \scalebox{\scalefig}{\input{./figures/SFpW-lemma-proof/dot-cup_4.tikz}} \\[1em]
    \scalebox{\scalefig}{\input{./figures/SFpW-lemma-proof/dot-cup_4.tikz}} &
    $\eq\limits_{\L31}$ &
    \scalebox{\scalefig}{\input{./figures/SFpW-lemma-proof/dot-cup_5.tikz}} &
    $\eq\limits_{\L2}$ &
    \scalebox{\scalefig}{\input{./figures/SFpW-lemma-proof/dot-cup_6.tikz}} &
    $\eq\limits_{\Dist}$ &
    \scalebox{\scalefig}{\input{./figures/SFpW-lemma-proof/dot-cup_7.tikz}} \\[1em]
\end{derivationchain}
            \end{proof}

\clearpage
\onecolumn

\section{Normal-form rewrites}
\begin{definition}[Reduced column echelon form]
    Let $M\in\C^{m\times n}$. We say that $M$ is in
    \emph{reduced column echelon form} (RCEF) if $M^T$ is in
    reduced row echelon form.

    Equivalently:
    \begin{enumerate}
        \item all zero columns of $M$, if any, occur after all non-zero
        columns;
        \item the first non-zero entry of each non-zero column is equal
        to $1$. We call this entry the \emph{pivot} of the column, and
        its row the corresponding \emph{pivot row};
        \item the pivot of each non-zero column lies strictly below the
        pivot of the preceding non-zero column;
        \item each pivot is the only non-zero entry in its row.
    \end{enumerate}

    Thus, if $M$ has rank $r$, its $r$ non-zero columns have pivot rows
    \[
        1\leq j_1<\cdots<j_r\leq m.
    \]

    For example,
    \[
        M=
        \begin{bsmallmatrix}
            1 & 0 & 0 \\
            a & 0 & 0 \\
            0 & 1 & 0 \\
            b & c & 0 \\
            0 & 0 & 1
        \end{bsmallmatrix}
    \]
    is in RCEF. Its pivot rows are
    \[
        (j_1,j_2,j_3)=(1,3,5).
    \]
    Indeed,
    \[
        M^T=
        \begin{bsmallmatrix}
            1 & a & 0 & b & 0 \\
            0 & 0 & 1 & c & 0 \\
            0 & 0 & 0 & 0 & 1
        \end{bsmallmatrix}
    \]
    is in reduced row echelon form.
\end{definition}

\begin{lemma}\label{lemma:dec-rcef}
    Let $M\in\C^{n\times r}$ be a matrix in RCEF with
    \[
        \operatorname{rank}(M)=r.
    \]
    Let
    \[
        1\leq j_1<\cdots<j_r\leq n
    \]
    be its pivot rows.

    Then there exist a permutation matrix
    $P\in\C^{n\times n}$ and a matrix
    $B\in\C^{(n-r)\times r}$ such that
    \[
        M
        =
        P
        \begin{bmatrix}
            I_r\\
            B
        \end{bmatrix}.
    \]
\end{lemma}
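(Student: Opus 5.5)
The plan is to read the decomposition directly off the pivot structure of $M$, with no induction needed. Since $M\in\C^{n\times r}$ has rank $r$, all $r$ columns are non-zero. By the RCEF conditions, column $c$ has its pivot (equal to $1$) in row $j_c$, and $j_1<\cdots<j_r$. The key observation is condition (4): each pivot is the only non-zero entry in its row. Hence row $j_c$ of $M$ is exactly the standard basis row vector $e_c^T\in\C^{1\times r}$, and the submatrix of $M$ formed by the rows $j_1,\dots,j_r$, in this order, is $I_r$.

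Next we choose a permutation $\sigma$ of $\interN{1}{n}$ with $\sigma(c)=j_c$ for $c\leq r$. We send $r+1,\dots,n$ increasingly onto the complement $\interN{1}{n}\setminus\{j_1,\dots,j_r\}$; the order does not matter, but we fix one to be definite. Let $Q$ be the permutation matrix with $(QM)_{c,\cdot}=M_{\sigma(c),\cdot}$, so that $QM$ lists first the pivot rows and then the remaining rows. By the previous step,
\[
    QM=\begin{bmatrix} I_r\\ B\end{bmatrix},
\]
where $B\in\C^{(n-r)\times r}$ collects the non-pivot rows of $M$ in their original relative order. Setting $P:=Q^{-1}=Q^T$, which is again a permutation matrix, gives $M=P\begin{bmatrix} I_r\\ B\end{bmatrix}$.

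The only point requiring care is the claim that the pivot rows are exactly the unit rows. The pivots are $1$ by condition (2), and the other entries in those rows vanish by condition (4). Rank $r$ ensures there is no zero column, so every column index $c\leq r$ has a pivot. The degenerate case $r=0$ is trivial: take $P=I_n$ and $B$ the empty matrix. I do not expect a real obstacle; the main thing is to fix the row-versus-permutation conventions consistently, since whether the matrix is written $P$ or $P^T$ is immaterial.
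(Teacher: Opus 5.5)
Your proposal is correct and follows essentially the same route as the paper's proof. Both use rank $r$ to place a pivot in every column, use conditions (2) and (4) to show that the pivot rows form $I_r$, and then move the pivot rows to the top with a permutation $Q$, setting $P:=Q^{-1}$.
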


\begin{proof}
    Since $M$ has rank $r$ and has exactly $r$ columns, every column
    of $M$ is non-zero and hence contains a pivot.

    Let
    \[
        j_1<\cdots<j_r
    \]
    be the pivot rows, and let
    \[
        k_1<\cdots<k_{n-r}
    \]
    be the increasing enumeration of
    \[
        \{1,\ldots,n\}\setminus\{j_1,\ldots,j_r\}.
    \]

    By definition of RCEF, the pivot in the $i$-th column is equal to
    $1$ and is the only non-zero entry of the $j_i$-th row. Therefore
    the submatrix formed by the pivot rows is
    \[
        M_{[j_1,\ldots,j_r]}=I_r.
    \]

    Let
    \[
        B:=M_{[k_1,\ldots,k_{n-r}]}
        \in\C^{(n-r)\times r}
    \]
    be the submatrix formed by the non-pivot rows.

    Let $Q$ be the permutation matrix which moves the pivot rows
    $j_1,\ldots,j_r$ to the first $r$ positions, while preserving their
    order, and moves the remaining rows
    $k_1,\ldots,k_{n-r}$ underneath them, again preserving their order.
    Then
    \[
        QM
        =
        \begin{bmatrix}
            M_{[j_1,\ldots,j_r]}\\
            M_{[k_1,\ldots,k_{n-r}]}
        \end{bmatrix}
        =
        \begin{bmatrix}
            I_r\\
            B
        \end{bmatrix}.
    \]

    Since $Q$ is a permutation matrix,
    \[
        Q^{-1}=Q^T
    \]
    is also a permutation matrix. Setting $P:=Q^{-1}$ gives
    \[
        M
        =
        P
        \begin{bmatrix}
            I_r\\
            B
        \end{bmatrix},
    \]
    as required.
\end{proof}

\subsection{Completeness of $\EaW$}

The generators used below can all be expressed as matrix arrows.  In
particular, the following rewrites reduce the non-structural generators to
matrix-arrow form.

\textbf{To normal form.}

\begin{derivationchain}
    \scalebox{\scalefig}{\input{./figures/aW-scaled-gen/coW-arrow_1.tikz}} &
    $\eq\limits_{\L8,D3}$ &
    \scalebox{\scalefig}{\input{./figures/aW-scaled-gen/coW-arrow_2.tikz}} &
    {} &
    {} &
    {} &
    {} \\[1em]
    \scalebox{\scalefig}{\input{./figures/aW-scaled-gen/W-arrow_1.tikz}} &
    $\eq\limits_{\L8,D2}$ &
    \scalebox{\scalefig}{\input{./figures/aW-scaled-gen/W-arrow_2.tikz}} &
    {} &
    {} &
    {} &
    {} \\[1em]
    \scalebox{\scalefig}{\input{./figures/aW-scaled-gen/fswap-arrow_1.tikz}} &
    $\eq\limits_{\L8,\S\A6,\L5,D2,D3}$ &
    \scalebox{\scalefig}{\input{./figures/aW-scaled-gen/fswap-arrow_2.tikz}} &
    {} &
    {} &
    {} &
    {} \\[1em]
    \scalebox{\scalefig}{\input{./figures/aW-scaled-gen/bin-Z-arrow_1.tikz}} &
    $\eq\limits_{L7}$ &
    \scalebox{\scalefig}{\input{./figures/aW-scaled-gen/bin-Z-arrow_2.tikz}} &
    {} &
    {} &
    {} &
    {} \\[1em]
\end{derivationchain}

Then \L6~ and \L3~ enable to write any diagram as a matrix arrow.

\textbf{Uniqueness of normal form.}
Let $M, M' \in \C^{m\times n}$. Denote $N(M)$ (resp. $N(M')$) the matrix arrow labelled by $M$ (resp. $M'$).
Suppose
\[
\interp{N(M)} = \interp{N(M')}
\]
then, note that
\[
\interp{N(M)}_{2^i, 2^j} = M_{i,j}
\qquad
\text{and}
\qquad
\interp{N(M')}_{2^i, 2^j} = M'_{i,j}
\]
Therefore $M=M'$.

\subsection{Completeness of $\aW$}

\textbf{To normal form.}
Due to \L26, we can rewrite any diagram $D:0\to \underline{n}$:
\[\scalebox{\scalefig}{\begin{tikzpicture}
	\begin{pgfonlayer}{nodelayer}
		\node [style={1_arrow_box}] (0) at (-1, 0) {$D$};
		\node [style=none] (1) at (0.75, 0) {};
		\node [style=modenb] (2) at (0.25, 0.25) {$n$};
	\end{pgfonlayer}
	\begin{pgfonlayer}{edgelayer}
		\draw [style=thick] (0) to (1.center);
	\end{pgfonlayer}
\end{tikzpicture}
} \quad\eq\limits_{\L26}\quad \scalebox{\scalefig}{\input{./figures/aW-proof/pre-pre-nf_2.tikz}}\]
Therefore, we can assume without loss of generality that $D$ contains at least $n$ generators 
$\scalebox{\scalefig}{}$ and $\scalebox{\scalefig}{}$, which we can isolate
and gather respectively to the left and right sides of the diagram using \A29 and \Dist, 
rewriting the remaining $\EaW$ part as a matrix arrow, which gives us as starting point the following diagram:

\[\scalebox{\scalefig}{} \quad\eq\limits_{}\quad \scalebox{\scalefig}{\input{./figures/aW-proof/pre-nf_2.tikz}}\]

with $l \geq n$ and $k \geq n$. Then we derive:

\[\scalebox{\scalefig}{\input{./figures/aW-proof/pre-nf_2.tikz}} \quad\eqann{\D2}{M_1 = \pmm{M_2\\M_3}}\quad \scalebox{\scalefig}{\input{./figures/aW-proof/pre-nf_4.tikz}}\]

\textbf{Case} $r = \mathrm{rank}(M_3) < l$:\\
\begin{derivationchain}
    \scalebox{\scalefig}{\input{./figures/aW-proof/pre-nf_2.tikz}} &
    $\eqann{\D2}{M_1 = \pmm{M_2\\M_3}}$ &
    \scalebox{\scalefig}{\input{./figures/aW-proof/pre-nf_4.tikz}} &
    $\eq\limits_{\L9}$ &
    \scalebox{\scalefig}{\input{./figures/aW-proof/pre-nf_5-case-0.tikz}} &
    $\eq\limits_{\L24, \S\A30}$ &
    \scalebox{\scalefig}{\input{./figures/aW-proof/pre-nf_6-case-0.tikz}} \\[1em]
    \scalebox{\scalefig}{\input{./figures/aW-proof/pre-nf_6-case-0.tikz}} &
    $\eq\limits_{\Dist}$ &
    \scalebox{\scalefig}{\input{./figures/aW-proof/pre-nf_7-case-0.tikz}} &
    $\eq\limits_{\S\A26, \S\A10, \S\A30}$ &
    \scalebox{\scalefig}{\input{./figures/aW-proof/pre-nf_8-case-0.tikz}} &
    $\eq\limits_{\L28, \S\A30, \S\A26}$ &
    \scalebox{\scalefig}{\input{./figures/aW-proof/pre-nf_9-case-0.tikz}} \\[1em]
    \scalebox{\scalefig}{\input{./figures/aW-proof/pre-nf_9-case-0.tikz}} &
    $\eq\limits_{\Dist, \L4, \S\A6}$ &
    \scalebox{\scalefig}{\input{./figures/aW-proof/pre-nf_10-case-0.tikz}} &
    $\eq\limits_{\L28}$ &
    \scalebox{\scalefig}{\input{./figures/aW-proof/pre-nf_11-case-0.tikz}} &
    $\eq\limits_{\S\A30, \S\A26}$ &
    \scalebox{\scalefig}{\input{./figures/aW-proof/pre-nf_12-case-0.tikz}} \\[1em]
\end{derivationchain}
Note that this is the normal form for the zero diagram of type $0 \to n$.\\

\textbf{Case} $\mathrm{rank}(M_3) \geq l$:\\
When $\mathrm{rank}(M_3) \geq l$, we automatically have $\mathrm{rank}M_3 = l$ as 
\[\mathrm{rank}(M_3) \leq \mathrm{min}(l,k)\]
We derive :\\
\begin{derivationchain}
    \scalebox{\scalefig}{\input{./figures/aW-proof/pre-nf_2.tikz}} &
    $\eqann{\D2}{M_1 = \pmm{M_2\\M_3}}$ &
    \scalebox{\scalefig}{\input{./figures/aW-proof/pre-nf_4.tikz}} &
    $\eq\limits_{\L9}$ &
    \scalebox{\scalefig}{\input{./figures/aW-proof/pre-nf_5.tikz}} &
    $\eq\limits_{\L8,\L3,\L2,\L24}$ &
    \scalebox{\scalefig}{\input{./figures/aW-proof/pre-nf_6.tikz}} \\[1em]
    \scalebox{\scalefig}{\input{./figures/aW-proof/pre-nf_6.tikz}} &
    $\eq\limits_{Dist,\L24}$ &
    \scalebox{\scalefig}{\input{./figures/aW-proof/pre-nf_7.tikz}} &
    $\eqann{\L3}{M_4 = M_2Q_1^{-1}}$ &
    \scalebox{\scalefig}{\input{./figures/aW-proof/pre-nf_8.tikz}} &
    $\eqann{Dist,\S\A6}{x_1 = \det P_1 \det Q_1}$ &
    \scalebox{\scalefig}{\input{./figures/aW-proof/pre-nf_11.tikz}} \\[1em]
    \scalebox{\scalefig}{\input{./figures/aW-proof/pre-nf_11.tikz}} &
    $\eq\limits_{\L25}$ &
    \scalebox{\scalefig}{\input{./figures/aW-proof/pre-nf_12.tikz}} &
    $\eq\limits_{L27}$ &
    \scalebox{\scalefig}{\input{./figures/aW-proof/pre-nf_13.tikz}} &
    $\eqann{\S\A26}{M_4 = \pmm{B_1 & M5}}$ &
    \scalebox{\scalefig}{\input{./figures/aW-proof/pre-nf_14.tikz}} \\[1em]
    \scalebox{\scalefig}{\input{./figures/aW-proof/pre-nf_14.tikz}} &
    $\eqann{\S\A30,\S\A26,\S\A9}{x_2 = (-1)^{l(k-l)}x_1}$ &
    \scalebox{\scalefig}{\input{./figures/aW-proof/pre-nf_15.tikz}} &
    {} &
    {} &
    {} &
    {} \\[1em]
\end{derivationchain}

\textbf{Case $k=l$.}
In the case $k=l$, we have $r_2 = 0$. By abuse of notation $M_5$ is then the 
empty matrix $ \pmm{~} \in \C^{n\times 0}$ and we denote:
\[
\scalebox{\scalefig}{\input{./figures/nf-theorem/aW-spec-case_1.tikz}} \quad := \quad \scalebox{\scalefig}{}
\]
By convention, this empty matrix is considered full rank and in RCEF. 
This yields the normal form of states $\lambda \ket{0^n}$ for $\lambda \in \C$.

\textbf{Case $k > l$.}
When $k > l$, we have $r_2 > 0$ and we obtain:
\[
    \scalebox{\scalefig}{\input{./figures/aW-proof/pre-nf_15.tikz}}.
\]
As before, if $\mathrm{rank}(M_5) \neq r_2$, we can apply \L9, \L24, \L28, and \S\A30~
to get back to the normal form for the zero diagram of type $0 \to n$. Otherwise,
we can assume that $\mathrm{rank}(M_5) = r_2 \leq n$. Then
Using \L24, $\S\A26$, and $\S\A9$, we can multiply $M_5$ on the right by a
sequence of elementary matrices so as to put it in Reduced Column Echelon
Form (RCEF), while multiplying $x_2$ by the determinant of these matrices.
This gives
\[
    \scalebox{\scalefig}{\input{./figures/aW-proof/pre-nf_17.tikz}},
\]
with $M_6$ full rank and in RCEF.

\medskip

\textbf{Uniqueness of normal form.}

Let
\[
    N=N(r,x,M)
    \qquad\text{and}\qquad
    N'=N(r',x',M')
\]
be two $\S\aW$ normal forms of the same type $0\to n$, where
\(
    M\in\C^{n\times r},
    M'\in\C^{n\times r'}
\) 
are in RCEF and have ranks $r$ and $r'$ respectively. Let
\(
    1\leq j_1<\cdots<j_r\leq n,
    1\leq j'_1<\cdots<j'_{r'}\leq n
\) 
be their respective pivot rows.

Assume that
\[
    \interp{N}=\interp{N'},
\quad \text{i.e.} \quad
            \interp{\scalebox{\scalefig}{\input{./figures/aW-proof/unique-nf_1.tikz}}}
            \eq
            \interp{\scalebox{\scalefig}{\input{./figures/aW-proof/unique-nf_2.tikz}}}.
\]

Due to \cref{lemma:dec-rcef}, there exists permutation matrices 
$P, P' \in \C^{n\times n}$ such that

\[
    P M
    =
    \begin{bmatrix}
        I_r\\
        B
    \end{bmatrix}
    \qquad 
    \text{and} 
    \qquad
    P'M'
    =
    \begin{bmatrix}
        I_{r'}\\
        B'
    \end{bmatrix}.
\]

For $s\leq n$, we write
\[
    E_s:=
    \begin{bmatrix}
        I_s & 0
    \end{bmatrix}
    \in\C^{s\times n}.
\]
Hence
\[
    E_r P M = I_r,
    \qquad
    E_{r'} P' M' = I_{r'}.
\]

We prove equality of the normal-form data in five steps. The first
step separates the zero case; in the remaining steps we may therefore
assume that the common interpretation is non-zero.

\medskip

\noindent
\(\boxed{\text{\itshape Step 1}:\ 
\interp{N}=\interp{N'}=0
\ \Longrightarrow\ x=x'=0}\)

First, let us show that a normal form has zero interpretation if and
only if its global scalar vanishes.

If $x=0$, then $L28$ rewrites $N$ to
\scalebox{\scalefig}{\input{./figures/aW-proof/zero-nf.tikz}},
the distinguished normal form of the zero $0\to n$ state. Therefore
\[
    x=0
    \quad\Longrightarrow\quad
    \interp{N}=0.
\]

Conversely, suppose that $x\neq0$. Diagrammatically, we get 

\begin{derivationchain}
    \scalebox{\scalefig}{\input{./figures/aW-nf-uniqueness-proof/step1_1.tikz}} &
    $\eqann{\L3, \D2}{M = P^-1\pmm{I_r\\B}}$ &
    \scalebox{\scalefig}{\input{./figures/aW-nf-uniqueness-proof/step1_2.tikz}} &
    $\eqann{\L8}{}$ &
    \scalebox{\scalefig}{\input{./figures/aW-nf-uniqueness-proof/step1_3.tikz}} &
\end{derivationchain}

By plugging the gadget
\begin{derivationchain}
    \scalebox{\scalefig}{\input{./figures/aW-nf-uniqueness-proof/gadget_1.tikz}} 
\end{derivationchain}
in N, we get:

\begin{derivationchain}
    \scalebox{\scalefig}{\input{./figures/aW-nf-uniqueness-proof/step1_4.tikz}} &
    $\eqann{\L3,\L8, \Dist}{}$ &
    \scalebox{\scalefig}{\input{./figures/aW-nf-uniqueness-proof/step1_5.tikz}} &
    $\eqann{\L4}{}$ &
    \scalebox{\scalefig}{\input{./figures/aW-nf-uniqueness-proof/step1_6.tikz}} &
    $\eqann{\S\A6}{}$ &
    \scalebox{\scalefig}{\input{./figures/aW-nf-uniqueness-proof/step1_8.tikz}} \\[1em]

    \scalebox{\scalefig}{\input{./figures/aW-nf-uniqueness-proof/step1_8.tikz}} &
    $\eqann{\S\A6, \S\A5}{}$ &
    \scalebox{\scalefig}{\input{./figures/aW-nf-uniqueness-proof/step1_9.tikz}} &
    $\eqann{\L26}{}$ &
    \scalebox{\scalefig}{\input{./figures/aW-nf-uniqueness-proof/step1_10.tikz}} &
    $\eqann{\S\A11}{}$ &
    \scalebox{\scalefig}{\begin{tikzpicture}
	\begin{pgfonlayer}{nodelayer}
		\node [style=dot] (0) at (-0.75, 0) {};
		\node [style=dot] (1) at (0.75, 0) {};
		\node [style=white phase dot] (2) at (0, 0) {$x$};
	\end{pgfonlayer}
	\begin{pgfonlayer}{edgelayer}
		\draw (0) to (1);
	\end{pgfonlayer}
\end{tikzpicture}
}
\end{derivationchain}

Using
\[
    \interp{\scalebox{\scalefig}{}}
    =x,
\]
we obtain a diagram of non-zero interpretation. By functoriality of
$\interp{\cdot}$, this is impossible if $\interp{N}=0$. Thus
\[
    \interp{N}=0
    \quad\Longrightarrow\quad
    x=0.
\]
Therefore
\[
    \interp{N}=0
    \quad\Longleftrightarrow\quad
    x=0.
\]
The same argument applies to $N'$. Hence, if
\[
    \interp{N}=\interp{N'}=0,
\]
then
\[
    x=x'=0,
\]
and both diagrams rewrite to the distinguished zero normal form.

In the rest of the proof we may therefore assume
\[
    \interp{N}=\interp{N'}\neq0,
    \qquad
    x\neq0,
    \qquad
    x'\neq0.
\]

\medskip

\noindent
\(\boxed{\text{\itshape Step 2}:\ r=r'}\)

First, note that 
\[P M' = I_n PM' = \pmm{E_r PM'\\
                        \pmm{0 & I_{n-r}}PM'}
                        \]
Let us denote:
\[
    \begin{cases}
        C := E_r PM',\\
        D := \pmm{0 & I_{n-r}}PM'.
    \end{cases}
\]

Then, by plugging the same gadget 
\begin{derivationchain}
    \scalebox{\scalefig}{\input{./figures/aW-nf-uniqueness-proof/gadget_1.tikz}} 
\end{derivationchain}

to N' this time, we get the derivation:
\begin{derivationchain}
    \scalebox{\scalefig}{\input{./figures/aW-nf-uniqueness-proof/step2_1.tikz}} &
    $\eqann{\L3}{}$ &
    \scalebox{\scalefig}{\input{./figures/aW-nf-uniqueness-proof/step2_2.tikz}} &
    $\eqann{\D2}{PM'= \pmm{C\\D}}$ &
    \scalebox{\scalefig}{\input{./figures/aW-nf-uniqueness-proof/step2_3.tikz}} &
    $\eqann{\L4}{}$ &
    \scalebox{\scalefig}{\input{./figures/aW-nf-uniqueness-proof/step2_4.tikz}} \\[1em]

    \scalebox{\scalefig}{\input{./figures/aW-nf-uniqueness-proof/step2_4.tikz}} &
     $\eqann{\S\A6}{}$ &
    \scalebox{\scalefig}{\input{./figures/aW-nf-uniqueness-proof/step2_5.tikz}} &
     $\eqann{\L9}{}$ &
    \scalebox{\scalefig}{\input{./figures/aW-nf-uniqueness-proof/step2_6.tikz}} &
     $\eqann{\L24}{}$ &
    \scalebox{\scalefig}{\input{./figures/aW-nf-uniqueness-proof/step2_7.tikz}}
\end{derivationchain}

The interpretation of the diagram obtained from $N$ is $x\neq0$.
Since $\interp{N}=\interp{N'}$, the corresponding diagram obtained
from $N'$ must also have non-zero interpretation. As
$\interp{\scalebox{\scalefig}{\begin{tikzpicture}
	\begin{pgfonlayer}{nodelayer}
		\node [style=none] (0) at (0.75, 0) {};
		\node [style=rw] (1) at (-0.75, 0) {};
		\node [style=dot] (2) at (0.75, 0) {};
		\node [style=modenb] (3) at (0, 0.25) {$k$};
	\end{pgfonlayer}
	\begin{pgfonlayer}{edgelayer}
		\draw [style=thick] (1) to (0.center);
	\end{pgfonlayer}
\end{tikzpicture}
}} = 0$ when $k > 0$, 
the above reduction therefore forces
\[
    r=\mathrm{rank}(C)=r'.
\]
Hence
\[
    r=r'.
\]

\medskip

\noindent
\(\boxed{\text{\itshape Step 3}:\ 
(j_1,\ldots,j_r)=(j'_1,\ldots,j'_r)}\)

Suppose that the pivot rows differ, and let $k$ be the smallest index
such that
\[
    j_k\neq j'_k.
\]
Without loss of generality, suppose
\[
    j_k<j'_k.
\]

Note that $C=E_rP_MM'$ is the submatrix of M' composed of the rows
$(j_1,\ldots,j_r)$ in this order, and therefore reads:

\[C = 
\begin{bsmallmatrix}
        1           & 0           & \cdots & 0           & 0      & \cdots & 0 \\
        c_{2,1}     & 1           & \cdots & 0           & \vdots & \cdots & 0 \\
        \vdots      & \vdots      & \ddots & \vdots      & \vdots & \vdots & \vdots \\
        c_{k-1,1}   & c_{k-1,2}   & \cdots & 1           & 0      & \cdots & 0 \\
        c_{k,1}     & c_{k,2}     & \cdots & c_{k,k-1}   & 0      & \cdots & 0 \\
        \vdots      & \vdots      & \vdots & \vdots      & \ddots &        & \vdots \\
        c_{r,1}     & c_{r,2}     & \cdots & c_{r,k-1}   & \cdots & \cdots & 1
        \end{bsmallmatrix}.
\]
Notice that the $k$-th row is a linear combination of the first 
$k-1$ rows, which is not possible as $\mathrm{rank}(C) = r$, as shown in step 2.

Therefore
\[
    (j_1,\ldots,j_r)
    =
    (j'_1,\ldots,j'_r).
\]
It directly follows that 
\[
P = P'
\]

\medskip

\noindent
\(\boxed{\text{\itshape Step 4}:\ x=x'}\)
As $P = P'$ and $r=r'$, we have:
\[
    PM
    =
    \begin{bmatrix}
        I_r\\
        B
    \end{bmatrix},
    \qquad
    PM'
    =
    \begin{bmatrix}
        I_r\\
        B'
    \end{bmatrix},
\]
and therefore, plugging the same gadget
\begin{derivationchain}
    \scalebox{\scalefig}{\input{./figures/aW-nf-uniqueness-proof/gadget_1.tikz}} 
\end{derivationchain}
to $N'$ gives.
\begin{derivationchain}
    \scalebox{\scalefig}{\input{./figures/aW-nf-uniqueness-proof/step4_1.tikz}} &
    $\eqann{}{}$ &
    \scalebox{\scalefig}{\input{./figures/aW-nf-uniqueness-proof/step4_2.tikz}} &
    $\eqann{\L3, \D2}{P'M' = \pmm{I_r\\B'}}$ &
    \scalebox{\scalefig}{\input{./figures/aW-nf-uniqueness-proof/step4_3.tikz}} &
    $\eqann{\substack{\L4,\L8\\ \S\A6, \S\A5\\\L26, S\A11 }}{}$ &
    \scalebox{\scalefig}{\begin{tikzpicture}
	\begin{pgfonlayer}{nodelayer}
		\node [style=dot] (0) at (-0.75, 0) {};
		\node [style=dot] (1) at (0.75, 0) {};
		\node [style=white phase dot] (2) at (0, 0) {$x'$};
	\end{pgfonlayer}
	\begin{pgfonlayer}{edgelayer}
		\draw (0) to (1);
	\end{pgfonlayer}
\end{tikzpicture}
} \\[1em]
\end{derivationchain}

Recall that the interpretation of the diagram obtained from plugging
this gadget to $N$ in step 1 was $x$. This yields:
\[
    x=x'.
\]

\medskip

\noindent
\(\boxed{\text{\itshape Step 5}:\ M=M'}\)

We now know that
\[
    r=r',
    \qquad
    P=P'
    \qquad
    x=x'\neq0.
\]
And
\[
    PM
    =
    \begin{bmatrix}
        I_r\\
        B
    \end{bmatrix},
    \qquad
    PM'
    =
    \begin{bmatrix}
        I_r\\
        B'
    \end{bmatrix}.
\]
It remains only to prove
\[
    B=B'.
\]

Let
\[
    1\leq i\leq r,
    \qquad
    1\leq q\leq n-r,
\]
and let $S_{i,r+q}\in\C^{n\times n}$ be the permutation matrix
exchanging rows $i$ and $r+q$.

Define
\[
    R_{i,q}
    :=
    E_rS_{i,r+q}P
    \in\C^{r\times n}.
\]
such that
\[
    R_{i,q}M
    =
    E_rS_{i,r+q}
    \begin{bmatrix}
        I_r\\
        B
    \end{bmatrix}.
\]
This is the identity matrix in which the $i$-th row has been replaced
by the $q$-th row of $B$. Its determinant can be computed as:

\[
        \det\!\left(R_{i,q}M\right)
        =
        \left|
        \begin{smallarray}{c c c|c|c c c}
                &           &           & 0       &           &       & \\
                & I_{i-1}   &           & \svdots &           & (0)   & \\
                &           &           & 0       &           &       & \\
        \hline
        B_{q,1}   & \cdots    & B_{q,i-1} & B_{q,i} & B_{q,i+1}& \cdots& B_{q,r}\\
        \hline
                &           &           & 0       &           &       & \\
                & (0)       &           & \svdots &           & I_{r-i}& \\
                &           &           & 0       &           &       &
        \end{smallarray}
        \right|
        =(-1)^{i+i}B_{q,i} = B_{q,i}.
        \]
Similarly,
\[
    \det(R_{i,q}M')=(-1)^{i+i}B'_{q,i} = B'_{q,i}.
\]

Since
\[
    \interp{N}=\interp{N'}
    \qquad\text{and}\qquad
    x=x'\neq0,
\]
we may plug the following second gadget to both $N$ and $N'$
\begin{derivationchain}
    \scalebox{\scalefig}{\input{./figures/aW-nf-uniqueness-proof/gadget_2.tikz}} 
\end{derivationchain}
We get
\begin{derivationchain}
    \scalebox{\scalefig}{\input{./figures/aW-nf-uniqueness-proof/step5_1.tikz}} &
    $\eqann{\L3,\D2}{}$ &
    \scalebox{\scalefig}{\input{./figures/aW-nf-uniqueness-proof/step5_2.tikz}} &
    $\eqann{\L4, \S\A6}{}$ &
    \scalebox{\scalefig}{\input{./figures/aW-nf-uniqueness-proof/step5_3.tikz}} &
    $\eqann{\substack{\L24\\ \S\A6, \S\A5\\\L26, S\A11 }}{}$ &
    \scalebox{\scalefig}{\input{./figures/aW-nf-uniqueness-proof/step5_4.tikz}} \\[1em]
\end{derivationchain}

The two sides therefore reduce respectively
to
\[
    x\det(R_{i,q}M)
    \qquad\text{and}\qquad
    x\det(R_{i,q}M').
\]
Since $x\neq0$,
\[
    \det(R_{i,q}M)
    =
    \det(R_{i,q}M'),
\]
and therefore
\[
    B_{q,i}=B'_{q,i}.
\]
Since $i$ and $q$ were arbitrary,
\[
    B=B'.
\]
Hence
\[
    PM=PM',
\]
and since $P$ is invertible,
\[
    M=M'.
\]

Therefore
\[
    (r,x,M)=(r',x',M'),
\]
so the two normal forms are identical.

\subsection{Completeness of $\FpW$}

\textbf{To normal form.}
Let $D:0\to n$ be a diagram of $\S\FpW$.  
Due to \L31, \S\A32~ and \L36, we can consider without loss of generality that $D$
only contains generators $\scalebox{\scalefig}{}$ and $\scalebox{\scalefig}{}$
on top of \EaW generators.
Indeed, we can rewrite the other generators like so:
\[
\boxeq{A36}
    {FpW-rules/1w-2w_1}
    {FpW-rules/1w-2w_2}
\]

\[
\boxeq{L31}
    {SFpW-lemma/w-2-to-1_1}
    {SFpW-lemma/w-2-to-1_2}
\]

\[\boxeq{L36}
    {SFpW-lemma/dot-cup_1}
    {SFpW-lemma/dot-cup_2}\]

Using \L25, we can slide  $\scalebox{\scalefig}{}$ to the left side of the diagram, while
naturality of $\scalebox{\scalefig}{}$ allows us to slide them to the top-right corner of the diagram.
We thus start with $D$ written as:
\[
\scalebox{\scalefig}{\input{./figures/FpW-proof/pre-nf_1.tikz}}
\]
And then derive :

\begin{derivationchain}
    \scalebox{\scalefig}{\input{./figures/FpW-proof/pre-nf_1.tikz}} &
    $\eqann{D3}{M_1 = \pmm{M_2\\M_3\\M_4}}$ &
    \scalebox{\scalefig}{\input{./figures/FpW-proof/pre-nf_2.tikz}} &
    $\eqann{\L22, \L15, \S\A36}{Z_1 = M_3^TM_2 - M_2^TM_3}$ &
    \scalebox{\scalefig}{\input{./figures/FpW-proof/pre-nf_3.tikz}} &
    $\eq\limits_{\L29}$ &
    \scalebox{\scalefig}{\input{./figures/FpW-proof/pre-nf_4.tikz}} \\[1em]
    \scalebox{\scalefig}{\input{./figures/FpW-proof/pre-nf_4.tikz}} &
    $\eqann{\L14,\L22,\S\A6}{Z_2 = Alt_nZ_1Alt_n}$ &
    \scalebox{\scalefig}{\input{./figures/FpW-proof/pre-nf_5.tikz}} &
    $\eqann{\L23}{Z_2 = P_2\pmm{A_2 & 0 \\ 0&0}P_2^T}$ &
    \scalebox{\scalefig}{\input{./figures/FpW-proof/pre-nf_6.tikz}} &
    $\eqann{\L32}{x_1 = \det P_2}$ &
    \scalebox{\scalefig}{\input{./figures/FpW-proof/pre-nf_7.tikz}} \\[1em]
    \scalebox{\scalefig}{\input{./figures/FpW-proof/pre-nf_7.tikz}} &
    $\eqann{\L3}{M_5 = P_2^{-1}M_4}$ &
    \scalebox{\scalefig}{\input{./figures/FpW-proof/pre-nf_8.tikz}} &
    $\eqann{Dist,\L35}{\substack{M_5 = \pmm{M_6\\M_7}\\ x_2 = \Pf(A_2)x_1}}$ &
    \scalebox{\scalefig}{\input{./figures/FpW-proof/pre-nf_9.tikz}} &
    $\eqann{\L14,\S\A36}{Z_3 = M_6A_2^{-1}M_6^T}$ &
    \scalebox{\scalefig}{\input{./figures/FpW-proof/pre-nf_10.tikz}} \\[1em]
\end{derivationchain}

Similarly to what was done for \S\aW, 
if $\mathrm{rank}(M_7) \neq r_2$, we can apply \L9, \L24, \L28, and \S\A30~
to get to the normal form for the zero diagram of type $0 \to n$, which is essentially the same as for 
\S\aW due to \L20~ and \S\A6. 
Otherwise, we can assume that $\mathrm{rank}(M_7) = r_2 \leq n$. Then
Using \L24, $\S\A26$, and $\S\A9$, we can multiply $M_7$ on the right by a
sequence of elementary matrices so as to put it in Reduced Column Echelon
Form (RCEF), while multiplying $x_2$ by the determinant of these matrices.
This gives $M_8$ full rank and in RCEF, and $x_2$ so that $D$ reads:
\[
\scalebox{\scalefig}{\input{./figures/FpW-proof/nf_1.tikz}}
\]
As $M_8$ is in RCEF, \cref{lemma:dec-rcef} gives $P_3$ a permutation matrix such that:
\[M_8 = P_3\pmm{I_r\\B}\]
We then derive:

\begin{derivationchain}
    \scalebox{\scalefig}{\input{./figures/FpW-proof/nf_1.tikz}} &
    $\eqann{\L4,\D3,\L2}{M_8 = P_3\pmm{I_r\\B}}$ &
    \scalebox{\scalefig}{\input{./figures/FpW-proof/nf_2.tikz}} &
    $\eqann{\L13,\Dist}{P_3^{-1}Z_3P_3^{-T} = \pmm{Z_4 & c \\ -C^T & Z_5}}$ &
    \scalebox{\scalefig}{\input{./figures/FpW-proof/nf_3.tikz}} &
    $\eqann{\S\A1, \S\A3, \S\A36}{}$ &
    \scalebox{\scalefig}{\input{./figures/FpW-proof/nf_4.tikz}} \\[1em]
    \scalebox{\scalefig}{\input{./figures/FpW-proof/nf_4.tikz}} &
    $\eqann{\L34}{}$ &
    \scalebox{\scalefig}{\input{./figures/FpW-proof/nf_5.tikz}} &
    $\eqann{\S\A6,\L3}{M_9 = (-1)^{n+1}BAlt_n}$ &
    \scalebox{\scalefig}{\input{./figures/FpW-proof/nf_6.tikz}} &
    $\eqann{\S\A35,\L29}{}$ &
    \scalebox{\scalefig}{\input{./figures/FpW-proof/nf_8.tikz}} \\[1em]
    \scalebox{\scalefig}{\input{./figures/FpW-proof/nf_8.tikz}} &
    $\eqann{\S\A35,\S\A16,\L14}{Z_6 = Alt_nZ_4Alt_n^T}$ &
    \scalebox{\scalefig}{\input{./figures/FpW-proof/nf_9.tikz}} &
    $\eqann{\S\A35}{}$ &
    \scalebox{\scalefig}{\input{./figures/FpW-proof/nf_10.tikz}} &
    $\eqann{\L29}{}$ &
    \scalebox{\scalefig}{\input{./figures/FpW-proof/nf_11.tikz}} \\[1em]
    \scalebox{\scalefig}{\input{./figures/FpW-proof/nf_11.tikz}} &
    $\eqann{\S\A35}{}$ &
    \scalebox{\scalefig}{\input{./figures/FpW-proof/nf_12.tikz}} &
    $\eqann{\S\A16}{}$ &
    \scalebox{\scalefig}{\input{./figures/FpW-proof/nf_13.tikz}} &
    $\eqann{\L2,\L2}{}$ &
    \scalebox{\scalefig}{\input{./figures/FpW-proof/nf_14.tikz}} \\[1em]
    \scalebox{\scalefig}{\input{./figures/FpW-proof/nf_14.tikz}} &
    $\eqann{\S\A1}{Z_7 = M_9Z_6M_9^T}$ &
    \scalebox{\scalefig}{\input{./figures/FpW-proof/nf_15.tikz}} &
    $\eqann{\L14,\L3,\L15}{Z_8 = C^TAlt_nM_9^T - M_9Alt_nC}$ &
    \scalebox{\scalefig}{\input{./figures/FpW-proof/nf_16.tikz}} &
    $\eqann{\S\A1,\L10,\L3}{M_9 = (-1)^{n+1}BAlt_n}$ &
    \scalebox{\scalefig}{\input{./figures/FpW-proof/nf_17.tikz}} \\[1em]
    \scalebox{\scalefig}{\input{./figures/FpW-proof/nf_17.tikz}} &
    $\eqann{\S\A6}{Z_9 = Z_7+Z_5+Z_8}$ &
    \scalebox{\scalefig}{\input{./figures/FpW-proof/nf_18.tikz}} &
    $\eqann{\L29}{}$ &
    \scalebox{\scalefig}{\input{./figures/FpW-proof/nf_19.tikz}} &
    $\eqann{\S\A6, \S\A3}{}$ &
    \scalebox{\scalefig}{\input{./figures/FpW-proof/nf_20.tikz}} \\[1em]
    \scalebox{\scalefig}{\input{./figures/FpW-proof/nf_20.tikz}} &
    $\eqann{\L29}{}$ &
    \scalebox{\scalefig}{\input{./figures/FpW-proof/nf_21.tikz}} &
    $\eqann{\L29,\S\A36}{\substack{M_8 = P_3\pmm{I_r\\B}\\Z_{10} = P_3\pmm{0 & 0\\ Z_9 & 0}P_3^T}}$ &
    \scalebox{\scalefig}{\input{./figures/FpW-proof/nf_22.tikz}} &
    {} &
    {} \\[1em]
\end{derivationchain}

Therefore we can rewrite $D$ in normal form, with $M_8$ full rank and RCEF and $Z_10$ zero on the columns and rows
corresponding to $M_8$'s pivots.


\textbf{Uniqueness of normal form.}

Let
\[
    N=N(r,x,Z,M)
    \qquad\text{and}\qquad
    N'=N(r',x',Z',M')
\]
be two $\S\FpW$ normal forms of the same type $0\to n$, where
\(
    M\in\C^{n\times r},
    M'\in\C^{n\times r'}
\) 
are in RCEF and have ranks $r$ and $r'$ respectively. Let
\(
    1\leq j_1<\cdots<j_r\leq n,
    1\leq j'_1<\cdots<j'_{r'}\leq n
\) 
be their respective pivot rows.

Assume that
\[
    \interp{N}=\interp{N'},
\quad \text{i.e.} \quad
            \interp{\scalebox{\scalefig}{\input{./figures/FpW-nf-uniqueness-proof/step1_1.tikz}}}
            \eq
            \interp{\scalebox{\scalefig}{\input{./figures/FpW-nf-uniqueness-proof/step1_1-prime.tikz}}}.
\]

Due to \cref{lemma:dec-rcef}, there exists permutation matrices 
$P, P' \in \C^{n\times n}$ such that

\[
    P M
    =
    \begin{bmatrix}
        I_r\\
        B
    \end{bmatrix}
    \qquad 
    \text{and} 
    \qquad
    P'M'
    =
    \begin{bmatrix}
        I_{r'}\\
        B'
    \end{bmatrix}.
\]

For $s\leq n$, we write
\[
    E_s:=
    \begin{bmatrix}
        I_s & 0
    \end{bmatrix}
    \in\C^{s\times n}.
\]
Hence
\[
    E_r P M = I_r,
    \qquad
    E_{r'} P' M' = I_{r'}.
\]

We also know that $Z$ (resp $Z'$) is zero on the rows and columns corresponding to pivots, i.e.
\[
 P Z P^T
    =
    \begin{bmatrix}
        0 & 0\\
        0 & K 
    \end{bmatrix}
    \qquad 
    \text{and} 
    \qquad
    P' Z P'^T
    =
    \begin{bmatrix}
        0 & 0\\
        0 & K' 
    \end{bmatrix}
\]
for $K,K'$ antisymmetric matrices of size respectively $n-r$ and $n-r'$.

We prove equality of the normal-form data in six steps. The first
step separates the zero case; in the remaining steps we may therefore
assume that the common interpretation is non-zero.
\medskip

\noindent
\(\boxed{\text{\itshape Step 1}:\ 
\interp{N}=\interp{N'}=0
\ \Longrightarrow\ x=x'=0}\)

First, let us show that a normal form has zero interpretation if and
only if its global scalar vanishes.

If $x=0$, then \L28, \S\A30, \S\A26~ rewrites $N$ to
\scalebox{\scalefig}{\input{./figures/aW-proof/zero-nf.tikz}},
the distinguished normal form of the zero $0\to n$ state. Therefore
\[
    x=0
    \quad\Longrightarrow\quad
    \interp{N}=0.
\]

Conversely, suppose that $x\neq0$. Diagrammatically, we get

\begin{derivationchain}
    \scalebox{\scalefig}{\input{./figures/FpW-nf-uniqueness-proof/step1_1.tikz}} &
    $\eqann{\L3, \D2, \L14}{\substack{PM = \pmm{I_r\\B}\\ PZP^T = \pmm{0&0\\0&K}}}$ &
    \scalebox{\scalefig}{\input{./figures/FpW-nf-uniqueness-proof/step1_2.tikz}} &
    $\eqann{\L2,\Dist,\S\A11,\S\A6}{}$ &
    \scalebox{\scalefig}{\input{./figures/FpW-nf-uniqueness-proof/step1_3.tikz}} &
\end{derivationchain}

By plugging the gadget
\begin{derivationchain}
    \scalebox{\scalefig}{\input{./figures/FpW-nf-uniqueness-proof/gadget_1.tikz}} 
\end{derivationchain}
in N, we get:

\begin{derivationchain}
    \scalebox{\scalefig}{\input{./figures/FpW-nf-uniqueness-proof/step1_4.tikz}} &
    $\eqann{\L3, \L8, \Exp}{}$ &
    \scalebox{\scalefig}{\input{./figures/FpW-nf-uniqueness-proof/step1_5.tikz}} &
    $\eqann{\S\A4}{}$ &
    \scalebox{\scalefig}{\input{./figures/FpW-nf-uniqueness-proof/step1_6.tikz}} &
    $\eqann{\L4, \L8, \S\A6}{}$ &
    \scalebox{\scalefig}{\input{./figures/FpW-nf-uniqueness-proof/step1_7.tikz}} \\[1em]

    \scalebox{\scalefig}{\input{./figures/FpW-nf-uniqueness-proof/step1_7.tikz}} &
    $\eqann{\substack{\S\A6, \S\A5, \\\L26,\S\A9, \S\A11}}{}$ &
    \scalebox{\scalefig}{\begin{tikzpicture}
	\begin{pgfonlayer}{nodelayer}
		\node [style=dot] (0) at (-0.75, 0) {};
		\node [style=dot] (1) at (0.75, 0) {};
		\node [style=white phase dot] (2) at (0, 0) {$x$};
	\end{pgfonlayer}
	\begin{pgfonlayer}{edgelayer}
		\draw (0) to (1);
	\end{pgfonlayer}
\end{tikzpicture}
} &
\end{derivationchain}

Using
\[
    \interp{\scalebox{\scalefig}{}}
    =x,
\]
we obtain a diagram of non-zero interpretation. By functoriality of
$\interp{\cdot}$, this is impossible if $\interp{N}=0$. Thus
\[
    \interp{N}=0
    \quad\Longrightarrow\quad
    x=0.
\]
Therefore
\[
    \interp{N}=0
    \quad\Longleftrightarrow\quad
    x=0.
\]
The same argument applies to $N'$. Hence, if
\[
    \interp{N}=\interp{N'}=0,
\]
then
\[
    x=x'=0,
\]
and both diagrams rewrite to the distinguished zero normal form.

In the rest of the proof we may therefore assume
\[
    \interp{N}=\interp{N'}\neq0,
    \qquad
    x\neq0,
    \qquad
    x'\neq0.
\]

\medskip

\noindent
\(\boxed{\text{\itshape Step 2}:\ r=r'}\)

Without loss of generality, we can assume $r \leq r'$. Note that 
\[P M' = I_n PM' = \pmm{E_r PM'\\
                        \pmm{0 & I_{n-r}}PM'}
                        \]
Let us denote:
\[
    \begin{cases}
        C := E_r PM',\\
        D := \pmm{0 & I_{n-r}}PM'.
    \end{cases}
\]

Then, by plugging the same gadget 
\begin{derivationchain}
    \scalebox{\scalefig}{\input{./figures/aW-nf-uniqueness-proof/gadget_1.tikz}} 
\end{derivationchain}

to N' this time, we get the derivation:
\begin{derivationchain}
    \scalebox{\scalefig}{\input{./figures/FpW-nf-uniqueness-proof/step2_1.tikz}} &
    $\eqann{\L2}{}$ &
    \scalebox{\scalefig}{\input{./figures/FpW-nf-uniqueness-proof/step2_2.tikz}} &
    $\eqann{\Dist}{}$ &
    \scalebox{\scalefig}{\input{./figures/FpW-nf-uniqueness-proof/step2_3.tikz}} &
    $\eqann{\S\A4}{C = E_rPM'}$ &
    \scalebox{\scalefig}{\input{./figures/FpW-nf-uniqueness-proof/step2_4.tikz}} \\[1em]

    \scalebox{\scalefig}{\input{./figures/FpW-nf-uniqueness-proof/step2_4.tikz}} &
     $\eqann{\L9}{}$ &
    \scalebox{\scalefig}{\input{./figures/FpW-nf-uniqueness-proof/step2_5.tikz}} &
     $\eqann{\L24, \Dist}{}$ &
    \scalebox{\scalefig}{\input{./figures/FpW-nf-uniqueness-proof/step2_6.tikz}} &
\end{derivationchain}

The interpretation of the diagram obtained from $N$ is $x\neq0$.
Since $\interp{N}=\interp{N'}$, the corresponding diagram obtained
from $N'$ must also have non-zero interpretation. As
$\interp{\scalebox{\scalefig}{}} = 0$ when $k > 0$, 
the above reduction therefore forces
\[
    r' \leq \mathrm{rank}(C)
\]
Besides, the assumption $r \leq r'$ directly gives $\mathrm{rank}(C) \leq r$, and therefore
\[
r' \leq \mathrm{rank}(C) \leq r \leq r'
\]
Hence,
\[
    r=r'.
\]

\medskip

\noindent
\(\boxed{\text{\itshape Step 3}:\ 
(j_1,\ldots,j_r)=(j'_1,\ldots,j'_r)}\)

Suppose that the pivot rows differ, and let $k$ be the smallest index
such that
\[
    j_k\neq j'_k.
\]
Without loss of generality, suppose
\[
    j_k<j'_k.
\]
Again, plug the gadget
\begin{derivationchain}
    \scalebox{\scalefig}{\input{./figures/FpW-nf-uniqueness-proof/gadget_1.tikz}} 
\end{derivationchain}
to $N'$ and derive
\begin{derivationchain}
    \scalebox{\scalefig}{\input{./figures/FpW-nf-uniqueness-proof/step3_1.tikz}} &
    $\eqann{\L2, \L3, \Dist, \L4}{\substack{PM' = \pmm{C\\D}\\ PZ'P^T = \pmm{W & *\\ * & *}}}$ &
    \scalebox{\scalefig}{\input{./figures/FpW-nf-uniqueness-proof/step3_2.tikz}} &
    $\eqann{\L27}{}$ &
    \scalebox{\scalefig}{\input{./figures/FpW-nf-uniqueness-proof/step3_3.tikz}} &
    $\eqann{\substack{\L11, \S\A6, \S\A5, \\\L26,\S\A9, \S\A11}}{}$ &
    \scalebox{\scalefig}{\input{./figures/FpW-nf-uniqueness-proof/step3_4.tikz}} \\[1em]
\end{derivationchain}
Which gives that $ x'detC = x\neq 0$.
Note that $C=E_rPM'$ is the submatrix of M' composed of the rows
$(j_1,\ldots,j_r)$ in this order, and therefore reads:

\[C = 
\begin{bsmallmatrix}
        1           & 0           & \cdots & 0           & 0      & \cdots & 0 \\
        c_{2,1}     & 1           & \cdots & 0           & \vdots & \cdots & 0 \\
        \vdots      & \vdots      & \ddots & \vdots      & \vdots & \vdots & \vdots \\
        c_{k-1,1}   & c_{k-1,2}   & \cdots & 1           & 0      & \cdots & 0 \\
        c_{k,1}     & c_{k,2}     & \cdots & c_{k,k-1}   & 0      & \cdots & 0 \\
        \vdots      & \vdots      & \vdots & \vdots      & \ddots &        & \vdots \\
        c_{r,1}     & c_{r,2}     & \cdots & c_{r,k-1}   & \cdots & \cdots & 1
        \end{bsmallmatrix}.
\]
Notice that the $k$-th row is a linear combination of the first 
$k-1$ rows, which is not possible as $detC \neq 0$.

Therefore
\[
    (j_1,\ldots,j_r)
    =
    (j'_1,\ldots,j'_r).
\]
It directly follows that 
\[
P = P'
\]

\medskip

\noindent
\(\boxed{\text{\itshape Step 4}:\ x=x'}\)
As $P = P'$ and $r=r'$, we have:
\[
    PM
    =
    \begin{bmatrix}
        I_r\\
        B
    \end{bmatrix},
    \qquad
    PM'
    =
    \begin{bmatrix}
        I_r\\
        B'
    \end{bmatrix},
\]
and therefore, plugging the same gadget
\begin{derivationchain}
    \scalebox{\scalefig}{\input{./figures/aW-nf-uniqueness-proof/gadget_1.tikz}} 
\end{derivationchain}
to $N'$ gives.
\begin{derivationchain}
    \scalebox{\scalefig}{\input{./figures/FpW-nf-uniqueness-proof/step4_1.tikz}} &
    $\eqann{}{P=P'}$ &
    \scalebox{\scalefig}{\input{./figures/FpW-nf-uniqueness-proof/step4_2.tikz}} &
    $\eqann{\L2, \L3}{\substack{P'M' = \pmm{I_r\\B'}\\ P'Z'P'^T = \pmm{0 & 0\\ 0 & K'}}}$ &
    \scalebox{\scalefig}{\input{./figures/FpW-nf-uniqueness-proof/step4_3.tikz}} &
    $\eqann{\S\A4}{}$ &
    \scalebox{\scalefig}{\input{./figures/FpW-nf-uniqueness-proof/step4_4.tikz}} \\[1em]

    \scalebox{\scalefig}{\input{./figures/FpW-nf-uniqueness-proof/step4_4.tikz}} &
    $\eqann{\substack{\L11, \S\A6, \S\A5, \\\L26,\S\A9, \S\A11}}{}$ &
    \scalebox{\scalefig}{\begin{tikzpicture}
	\begin{pgfonlayer}{nodelayer}
		\node [style=dot] (0) at (-0.75, 0) {};
		\node [style=dot] (1) at (0.75, 0) {};
		\node [style=white phase dot] (2) at (0, 0) {$x'$};
	\end{pgfonlayer}
	\begin{pgfonlayer}{edgelayer}
		\draw (0) to (1);
	\end{pgfonlayer}
\end{tikzpicture}
} &
\end{derivationchain}

Recall that the interpretation of the diagram obtained from plugging
this gadget to $N$ in step 1 was $x$. This yields:
\[
    x=x'.
\]

\medskip
\noindent\(\boxed{\text{\itshape Step 5}:\ Z=Z'}\)

We now know that 
\[
r = r' \qquad P = P'\qquad x = x'
\]
and that 
\[
P^T Z P
    =
    \begin{bmatrix}
        0 & 0\\
        0 & K 
    \end{bmatrix}
    \qquad 
    P^T Z' P
    =
    \begin{bmatrix}
        0 & 0\\
        0 & K' 
    \end{bmatrix}
\]

Remains to show that $K = K'$. Let $T_{k,l} \in \C^{(n-r)\times(n-r)}$ be the permutation matrix exchanging rows $k$
and $l$. Let $i < j \in \interN{1}{n-r}$.
Define
\[
L_{ij} := E_2 T_{1,i} T_{2,j} 
\]
such that 
\[
L_{i,j} K L_{i,j}^T = \begin{bmatrix}
        0 & K_{i,j}\\
        -K{i,j} & 0' 
    \end{bmatrix}
\]
Let us plug the following gadget to both $N$ and $N'$
\begin{derivationchain}
    \scalebox{\scalefig}{\input{./figures/FpW-nf-uniqueness-proof/gadget_2.tikz}} 
\end{derivationchain}
we can derive
\begin{derivationchain}
    \scalebox{\scalefig}{\input{./figures/FpW-nf-uniqueness-proof/step5_1.tikz}} &
    $\eqann{\L2, \L3}{\substack{PM = \pmm{I_r\\B}\\ PZP^T = \pmm{0 & 0\\ 0 & K}}}$ &
    \scalebox{\scalefig}{\input{./figures/FpW-nf-uniqueness-proof/step5_2.tikz}} &
    $\eqann{\L27}{}$ &
    \scalebox{\scalefig}{\input{./figures/FpW-nf-uniqueness-proof/step5_3.tikz}} &
    $\eqann{\substack{\L4, \S\A6, \S\A5, \\\L26,\S\A9, \S\A11}}{}$ &
    \scalebox{\scalefig}{\input{./figures/FpW-nf-uniqueness-proof/step5_4.tikz}} \\[1em]

    \scalebox{\scalefig}{\input{./figures/FpW-nf-uniqueness-proof/step5_4.tikz}} &
    $\eqann{\D2}{L_{ij} := E_2 T_{1,i} T_{2,j} }$ &
    \scalebox{\scalefig}{\input{./figures/FpW-nf-uniqueness-proof/step5_5.tikz}} &
    $\eqann{\S\A6, \L14}{L_{i,j} K L_{i,j}^T = \pmm{0 & K_{i,j}\\-K{i,j} & 0' }}$ &
    \scalebox{\scalefig}{\input{./figures/FpW-nf-uniqueness-proof/step5_6.tikz}} &
    $\eqann{\L13, \D1, \L20}{}$ &
    \scalebox{\scalefig}{\input{./figures/FpW-nf-uniqueness-proof/step5_7.tikz}} \\[1em]

    \scalebox{\scalefig}{\input{./figures/FpW-nf-uniqueness-proof/step5_7.tikz}} &
    $\eqann{\A6, \A37}{}$ &
    \scalebox{\scalefig}{\begin{tikzpicture}
	\begin{pgfonlayer}{nodelayer}
		\node [style=dot] (0) at (-1.5, 0) {};
		\node [style=dot] (1) at (1.5, 0) {};
		\node [style=white phase dot] (2) at (0, 0) {$xK_{i,j}$};
	\end{pgfonlayer}
	\begin{pgfonlayer}{edgelayer}
		\draw (0) to (1);
	\end{pgfonlayer}
\end{tikzpicture}
} &
\end{derivationchain}
And similarly,
\begin{derivationchain}
    \scalebox{\scalefig}{\input{./figures/FpW-nf-uniqueness-proof/step5_9.tikz}} &
    $\eqann{}{}$ &
    \scalebox{\scalefig}{\begin{tikzpicture}
	\begin{pgfonlayer}{nodelayer}
		\node [style=dot] (0) at (-1.5, 0) {};
		\node [style=dot] (1) at (1.5, 0) {};
		\node [style=white phase dot] (2) at (0, 0) {$x'K'_{i,j}$};
	\end{pgfonlayer}
	\begin{pgfonlayer}{edgelayer}
		\draw (0) to (1);
	\end{pgfonlayer}
\end{tikzpicture}
} &
\end{derivationchain}
Then $x= x'$ implies $K_{i,j} = K'_{i,j}$.
\medskip

\noindent
\(\boxed{\text{\itshape Step 6}:\ M=M'}\)

We now know that
\[
    r=r',
    \qquad
    P=P'
    \qquad
    x=x'\neq0
    \qquad
    Z=Z'
\]
And
\[
    PM
    =
    \begin{bmatrix}
        I_r\\
        B
    \end{bmatrix},
    \qquad
    PM'
    =
    \begin{bmatrix}
        I_r\\
        B'
    \end{bmatrix}.
\]
It remains only to prove
\[
    B=B'.
\]

Let
\[
    1\leq i\leq r,
    \qquad
    1\leq q\leq n-r,
\]
and let $S_{i,r+q}\in\C^{n\times n}$ be the permutation matrix
exchanging rows $i$ and $r+q$.

Define
\[
    R_{i,q}
    :=
    E_rS_{i,r+q}P
    \in\C^{r\times n}.
\]
such that
\[
    R_{i,q}M
    =
    E_rS_{i,r+q}
    \begin{bmatrix}
        I_r\\
        B
    \end{bmatrix}.
\]
This is the identity matrix in which the $i$-th row has been replaced
by the $q$-th row of $B$. Its determinant can be computed as:

\[
        \det\!\left(R_{i,q}M\right)
        =
        \left|
        \begin{smallarray}{c c c|c|c c c}
                &           &           & 0       &           &       & \\
                & I_{i-1}   &           & \svdots &           & (0)   & \\
                &           &           & 0       &           &       & \\
        \hline
        B_{q,1}   & \cdots    & B_{q,i-1} & B_{q,i} & B_{q,i+1}& \cdots& B_{q,r}\\
        \hline
                &           &           & 0       &           &       & \\
                & (0)       &           & \svdots &           & I_{r-i}& \\
                &           &           & 0       &           &       &
        \end{smallarray}
        \right|
        =(-1)^{i+i}B_{q,i} = B_{q,i}.
        \]
Similarly,
\[
    \det(R_{i,q}M')=(-1)^{i+i}B'_{q,i} = B'_{q,i}.
\]
Since
\[
    \interp{N}=\interp{N'}
    \qquad\text{and}\qquad
    x=x'\neq0,
\]
we may plug the following gadget to both $N$ and $N'$
\begin{derivationchain}
    \scalebox{\scalefig}{\input{./figures/FpW-nf-uniqueness-proof/gadget_3.tikz}} 
\end{derivationchain}
We get
\begin{derivationchain}
    \scalebox{\scalefig}{\input{./figures/FpW-nf-uniqueness-proof/step6_1.tikz}} &
    $\eqann{\S\A1}{}$ &
    \scalebox{\scalefig}{\input{./figures/FpW-nf-uniqueness-proof/step6_2.tikz}} &
    $\eqann{\L10, \L20,\S\A6}{}$ &
    \scalebox{\scalefig}{\input{./figures/FpW-nf-uniqueness-proof/step6_3.tikz}} &
    $\eqann{\D2,\L3, \L4}{S_{i,r+q}PM = \pmm{R_{i,q}M\\ *}}$ &
    \scalebox{\scalefig}{\input{./figures/FpW-nf-uniqueness-proof/step6_4.tikz}} \\[1em]

    \scalebox{\scalefig}{\input{./figures/FpW-nf-uniqueness-proof/step6_4.tikz}} &
    $\eqann{\S\A6}{}$ &
    \scalebox{\scalefig}{\input{./figures/FpW-nf-uniqueness-proof/step6_5.tikz}} &
    $\eqann{\substack{\L24\\ \S\A6, \S\A5, \\\L26,\S\A9, \S\A11}}{\det(R_{i,q}M) = B_{q,i}}$ &
    \scalebox{\scalefig}{\input{./figures/FpW-nf-uniqueness-proof/step6_6.tikz}}
\end{derivationchain}

The two sides respectively have interpretation
to
\[
    xB_{q,i}
    \qquad\text{and}\qquad
    xB'_{q,i}.
\]
Since $x\neq0$, $\interp{N}=\interp{N'}$ gives
\[
    B_{q,i}=B'_{q,i}.
\]
Since $i$ and $q$ were arbitrary,
\[
    B=B'.
\]
Hence
\[
    PM=PM',
\]
and since $P$ is invertible,
\[
    M=M'.
\]

Hence $(r,x,Z,M)=(r',x',Z',M')$, and the two normal forms are identical.

\subsection{Completeness of $\EpW$}

\textbf{To normal form.}
Let $D:0\to n$ be a diagram of $\S\EpW$.  Using \L17, we first slide every cup to the
top-left corner of the diagram and every cap to the top-right corner.  The
remainder is an $\EaW$-diagram and can therefore be rewritten as a matrix
arrow. We therefore start from $D$ written as:
\[
 \scalebox{\scalefig}{\input{./figures/EpW-proof/pre-nf_1.tikz}}
\]
Then, we derive:

\begin{derivationchain}
    \scalebox{\scalefig}{\input{./figures/EpW-proof/pre-nf_1.tikz}} &
    $\eqann{D2,\L15}{\substack{M_1 = \pmm{M_2 & M_3}\\ Z_1 = M_2M_3^T -M_3M_2^T}}$ &
    \scalebox{\scalefig}{\input{./figures/EpW-proof/pre-nf_2.tikz}} &
    $\eqann{\L23}{Z_1 = P_1\pmm{A_1 & 0 \\ 0 & 0}P_1^T}$ &
    \scalebox{\scalefig}{\input{./figures/EpW-proof/pre-nf_3.tikz}} &
    $\eqann{\L21}{}$ &
    \scalebox{\scalefig}{\input{./figures/EpW-proof/pre-nf_4.tikz}} \\[1em]
    \scalebox{\scalefig}{\input{./figures/EpW-proof/pre-nf_4.tikz}} &
    $\eqann{\L4,D3,D3,\L15}{\substack{P_1 = \pmm{M_6 & *\\M_5 & *\\M_4 & *}\\ Z_2 = M_5^TM_6 - M_6^TM_5}}$ &
    \scalebox{\scalefig}{\input{./figures/EpW-proof/pre-nf_5.tikz}} &
    $\eqann{\S\A16,\S\A5}{}$ &
    \scalebox{\scalefig}{\input{./figures/EpW-proof/pre-nf_6.tikz}} &
    $\eqann{\L22,\S\A1,\L10,\L22}{Z_3 = A_1^-1 + Z_2}$ &
    \scalebox{\scalefig}{\input{./figures/EpW-proof/pre-nf_7.tikz}} \\[1em]
    \scalebox{\scalefig}{\input{./figures/EpW-proof/pre-nf_7.tikz}} &
    $\eqann{\L23}{\substack{Z_3 = P_2\pmm{A_3 & 0 \\ 0 & 0}P_2^T}}$ &
    \scalebox{\scalefig}{\input{./figures/EpW-proof/pre-nf_8.tikz}} &
    $\eqann{\L2,\L3}{}$ &
    \scalebox{\scalefig}{\input{./figures/EpW-proof/pre-nf_9.tikz}} &
    $\eqann{\L18}{}$ &
    \scalebox{\scalefig}{\input{./figures/EpW-proof/pre-nf_10.tikz}} \\[1em]
    \scalebox{\scalefig}{\input{./figures/EpW-proof/pre-nf_10.tikz}} &
    $\eqann{\Dist}{x_2 = (x_1+1)\det P_2^T -1}$ &
    \scalebox{\scalefig}{\input{./figures/EpW-proof/pre-nf_11.tikz}} &
    $\eqann{\S\A6}{}$ &
    \scalebox{\scalefig}{\input{./figures/EpW-proof/pre-nf_12.tikz}} &
    $\eqann{\L17}{}$ &
    \scalebox{\scalefig}{\input{./figures/EpW-proof/pre-nf_13.tikz}} \\[1em]
    \scalebox{\scalefig}{\input{./figures/EpW-proof/pre-nf_13.tikz}} &
    $\eqann{\L21}{\substack{r_3 = r_1-r_2 \\ x_3 = (x_2+1)(-1)^{r_2(r_1-r_2)}}}$ &
    \scalebox{\scalefig}{\input{./figures/EpW-proof/pre-nf_14.tikz}} &
    $\eqann{D2}{\substack{M_4P_2^{-T} = \pmm{M_7 & M_8}\\ Z_4 = M_7A_3^{-1}M_7^T}}$ &
    \scalebox{\scalefig}{\input{./figures/EpW-proof/pre-nf_15.tikz}} &
    {} &
    {} \\[1em]
\end{derivationchain}

Similarly to what was done for \S\FpW, 
if $\mathrm{rank}(M_8) \neq 2r_3$, we can apply \L9, \L18, \S\A14, and \S\A13~
to get to one of the normal forms for the zero diagram of type $0 \to n$ (depending of the parity), 
which is essentially the same as for \S\FpW~ up to \L30. 
Otherwise, we can assume that $\mathrm{rank}(M_8) = 2r_3 \leq n$. Then
Using \L24, $\S\A26$, and $\S\A9$, we can multiply $M_8$ on the right by a
sequence of elementary matrices so as to put it in Reduced Column Echelon
Form (RCEF), while multiplying $x_3$ by the determinant of these matrices.
This gives $M_9$ full rank and in RCEF, and $x_4$ so that $D$ reads:
\[
\scalebox{\scalefig}{\input{./figures/EpW-proof/nf_1.tikz}} 
\]
As $M_9$ is in RCEF, \cref{lemma:dec-rcef} gives $P_3$ a permutation matrix such that:
\[M_9 = P_3\pmm{I_{2r_3}\\B}\]
We then derive:

\begin{derivationchain}
    \scalebox{\scalefig}{\input{./figures/EpW-proof/nf_1.tikz}} &
    $\eqann{\L8,\L2,\L3,D3}{M_9 = P_3\pmm{I_{2r_3}\\B}}$ &
    \scalebox{\scalefig}{\input{./figures/EpW-proof/nf_2.tikz}} &
    $\eqann{\L14}{Z_5 = P_3^{-1}Z_4P_3^{-T}}$ &
    \scalebox{\scalefig}{\input{./figures/EpW-proof/nf_3.tikz}} &
    $\eqann{\L13}{Z_5 = \pmm{Z_6 & C \\ -C^T & Z_7}}$ &
    \scalebox{\scalefig}{\input{./figures/EpW-proof/nf_4.tikz}} \\[1em]
    \scalebox{\scalefig}{\input{./figures/EpW-proof/nf_4.tikz}} &
    $\eqann{\S\A1}{}$ &
    \scalebox{\scalefig}{\input{./figures/EpW-proof/nf_5.tikz}} &
    $\eqann{\L19}{}$ &
    \scalebox{\scalefig}{\input{./figures/EpW-proof/nf_6.tikz}} &
    $\eqann{\L2,\L14,\S\A1,\L10}{Z_8 = BZ_6B^T +Z_7}$ &
    \scalebox{\scalefig}{\input{./figures/EpW-proof/nf_8.tikz}} \\[1em]
    \scalebox{\scalefig}{\input{./figures/EpW-proof/nf_8.tikz}} &
    $\eqann{\S\A3}{}$ &
    \scalebox{\scalefig}{\input{./figures/EpW-proof/nf_9.tikz}} &
    $\eqann{\S\A18}{}$ &
    \scalebox{\scalefig}{\input{./figures/EpW-proof/nf_10.tikz}} &
    $\eqann{\L12,\S\A13}{}$ &
    \scalebox{\scalefig}{\input{./figures/EpW-proof/nf_11.tikz}} \\[1em]
    \scalebox{\scalefig}{\input{./figures/EpW-proof/nf_11.tikz}} &
    $\eqann{\L19}{}$ &
    \scalebox{\scalefig}{\input{./figures/EpW-proof/nf_12.tikz}} &
    $\eqann{\L2,\L3}{}$ &
    \scalebox{\scalefig}{\input{./figures/EpW-proof/nf_13.tikz}} &
    $\eqann{\S\A1}{}$ &
    \scalebox{\scalefig}{\input{./figures/EpW-proof/nf_14.tikz}} \\[1em]
    \scalebox{\scalefig}{\input{./figures/EpW-proof/nf_14.tikz}} &
    $\eqann{\L15}{}$ &
    \scalebox{\scalefig}{\input{./figures/EpW-proof/nf_15.tikz}} &
    $\eqann{\S\A1,\L10}{Z_9 = Z_8 +C^TB^T - BC}$ &
    \scalebox{\scalefig}{\input{./figures/EpW-proof/nf_16.tikz}} &
    $\eqann{\S\A3,\S\A6}{}$ &
    \scalebox{\scalefig}{\input{./figures/EpW-proof/nf_17.tikz}} \\[1em]
    \scalebox{\scalefig}{\input{./figures/EpW-proof/nf_17.tikz}} &
    $\eqann{\L5}{}$ &
    \scalebox{\scalefig}{\input{./figures/EpW-proof/nf_18.tikz}} &
    $\eqann{\L13,\L2,\L3}{\substack{M_9 = P_3\pmm{I_{2r_3}\\B}\\ Z_{10} = P_3\pmm{Z_9 & 0 \\ 0 & 0}}}P_3^T$ &
    \scalebox{\scalefig}{\input{./figures/EpW-proof/nf_19.tikz}} &
    {} &
    {} \\[1em]
\end{derivationchain}


\textbf{Uniqueness of normal form.}

Let
\[
    N=N(r,x,Z,M)
    \qquad\text{and}\qquad
    N'=N(r',x',Z',M')
\]
be two $\S\EpW$ normal forms of the same type $0\to n$, where
\(
    M\in\C^{n\times 2r},
    M'\in\C^{n\times 2r'}
\) 
are in RCEF and have ranks $2r$ and $2r'$ respectively. Let
\(
    1\leq j_1<\cdots<j_{2r}\leq n,
    1\leq j'_1<\cdots<j'_{2r'}\leq n
\) 
be their respective pivot rows.

Assume that
\[
    \interp{N}=\interp{N'},
\quad \text{i.e.} \quad
            \interp{\scalebox{\scalefig}{\begin{tikzpicture}
	\begin{pgfonlayer}{nodelayer}
		\node [style=none] (4) at (1.75, -0.75) {};
		\node [style=rmat] (5) at (0.75, -0.75) {};
		\node [style=none] (6) at (0.75, 0) {$M$};
		\node [style=modenb] (7) at (1.5, -0.5) {$n$};
		\node [style=rw] (9) at (2.75, 0) {};
		\node [style=none] (11) at (3.75, 0) {};
		\node [style=rw] (12) at (1, 1) {};
		\node [style=none] (13) at (1.5, 1) {};
		\node [style=none] (14) at (1, 1.75) {$Z$};
		\node [style=none] (15) at (-1.5, -1.25) {};
		\node [style=none] (16) at (-1, -1) {};
		\node [style=none] (17) at (-1.25, -1.25) {};
		\node [style=fswap] (18) at (-1.75, -0.75) {};
		\node [style=lw] (19) at (-2.25, -1) {};
		\node [style=lw] (20) at (-2.25, -0.5) {};
		\node [style=none] (21) at (-1.5, -0.25) {};
		\node [style=none] (22) at (-1.75, -0.75) {};
		\node [style=none] (23) at (-1, -0.5) {};
		\node [style=none] (24) at (-1.25, -0.75) {};
		\node [style=none] (25) at (-1.25, -0.25) {};
		\node [style=none] (26) at (-0.5, -0.75) {};
		\node [style=modenb] (27) at (-1.5, 0) {$r$};
		\node [style=modenb] (28) at (-0.25, -0.5) {$2r$};
		\node [style=white phase dot] (162) at (-0.25, 1.5) {$x$};
		\node [style=none] (163) at (-0.75, 1.5) {};
		\node [style=none] (164) at (-0.75, 0.5) {};
		\node [style=none] (165) at (0.25, 1.5) {};
		\node [style=none] (166) at (0.25, 0.5) {};
	\end{pgfonlayer}
	\begin{pgfonlayer}{edgelayer}
		\draw [style=thick] (11.center) to (9);
		\draw [style=thick, in=0, out=-150, looseness=1.25] (9) to (4.center);
		\draw [style=thick, in=150, out=0] (13.center) to (9);
		\draw [style=thick] (13.center) to (12);
		\draw [style=thick, in=-120, out=0] (17.center) to (16.center);
		\draw [style=thick] (15.center) to (17.center);
		\draw [style=thick, in=180, out=30] (20) to (21.center);
		\draw [style=thick, in=120, out=0] (25.center) to (23.center);
		\draw [style=thick, in=-60, out=180, looseness=1.25] (26.center) to (23.center);
		\draw [style=thick] (21.center) to (25.center);
		\draw [style=thick, bend right=90, looseness=2.00] (20) to (19);
		\draw [style=thick, in=150, out=-30, looseness=1.25] (20) to (22.center);
		\draw [style=thick, in=-180, out=-30] (19) to (15.center);
		\draw [style=thick, in=-90, out=-30, looseness=1.25] (22.center) to (24.center);
		\draw [style=thick, bend left=90, looseness=2.00] (19) to (20);
		\draw [style=thick, in=-150, out=30, looseness=1.25] (19) to (22.center);
		\draw [style=thick, in=90, out=30, looseness=1.25] (22.center) to (24.center);
		\draw [style=thick, in=60, out=-180, looseness=1.25] (26.center) to (16.center);
		\draw [style=thick] (26.center) to (4.center);
		\draw [bend right=90, looseness=1.25] (163.center) to (164.center);
		\draw [bend left=90, looseness=1.25] (165.center) to (166.center);
		\draw (165.center) to (163.center);
		\draw (166.center) to (164.center);
	\end{pgfonlayer}
\end{tikzpicture}
}}
            \eq
            \interp{\scalebox{\scalefig}{\begin{tikzpicture}
	\begin{pgfonlayer}{nodelayer}
		\node [style=none] (4) at (1.75, -0.75) {};
		\node [style=rmat] (5) at (0.75, -0.75) {};
		\node [style=none] (6) at (0.75, 0) {$M'$};
		\node [style=modenb] (7) at (1.5, -0.5) {$n$};
		\node [style=rw] (9) at (2.75, 0) {};
		\node [style=none] (11) at (3.75, 0) {};
		\node [style=rw] (12) at (1, 1) {};
		\node [style=none] (13) at (1.5, 1) {};
		\node [style=none] (14) at (1, 1.75) {$Z'$};
		\node [style=none] (15) at (-1.5, -1.25) {};
		\node [style=none] (16) at (-1, -1) {};
		\node [style=none] (17) at (-1.25, -1.25) {};
		\node [style=fswap] (18) at (-1.75, -0.75) {};
		\node [style=lw] (19) at (-2.25, -1) {};
		\node [style=lw] (20) at (-2.25, -0.5) {};
		\node [style=none] (21) at (-1.5, -0.25) {};
		\node [style=none] (22) at (-1.75, -0.75) {};
		\node [style=none] (23) at (-1, -0.5) {};
		\node [style=none] (24) at (-1.25, -0.75) {};
		\node [style=none] (25) at (-1.25, -0.25) {};
		\node [style=none] (26) at (-0.5, -0.75) {};
		\node [style=modenb] (27) at (-1.5, 0) {$r'$};
		\node [style=modenb] (28) at (-0.25, -0.5) {$2r'$};
		\node [style=white phase dot] (162) at (-0.25, 1.5) {$x'$};
		\node [style=none] (163) at (-0.75, 1.5) {};
		\node [style=none] (164) at (-0.75, 0.5) {};
		\node [style=none] (165) at (0.25, 1.5) {};
		\node [style=none] (166) at (0.25, 0.5) {};
	\end{pgfonlayer}
	\begin{pgfonlayer}{edgelayer}
		\draw [style=thick] (11.center) to (9);
		\draw [style=thick, in=0, out=-150, looseness=1.25] (9) to (4.center);
		\draw [style=thick, in=150, out=0] (13.center) to (9);
		\draw [style=thick] (13.center) to (12);
		\draw [style=thick, in=-120, out=0] (17.center) to (16.center);
		\draw [style=thick] (15.center) to (17.center);
		\draw [style=thick, in=180, out=30] (20) to (21.center);
		\draw [style=thick, in=120, out=0] (25.center) to (23.center);
		\draw [style=thick, in=-60, out=180, looseness=1.25] (26.center) to (23.center);
		\draw [style=thick] (21.center) to (25.center);
		\draw [style=thick, bend right=90, looseness=2.00] (20) to (19);
		\draw [style=thick, in=150, out=-30, looseness=1.25] (20) to (22.center);
		\draw [style=thick, in=-180, out=-30] (19) to (15.center);
		\draw [style=thick, in=-90, out=-30, looseness=1.25] (22.center) to (24.center);
		\draw [style=thick, bend left=90, looseness=2.00] (19) to (20);
		\draw [style=thick, in=-150, out=30, looseness=1.25] (19) to (22.center);
		\draw [style=thick, in=90, out=30, looseness=1.25] (22.center) to (24.center);
		\draw [style=thick, in=60, out=-180, looseness=1.25] (26.center) to (16.center);
		\draw [style=thick] (26.center) to (4.center);
		\draw [bend right=90, looseness=1.25] (163.center) to (164.center);
		\draw [bend left=90, looseness=1.25] (165.center) to (166.center);
		\draw (165.center) to (163.center);
		\draw (166.center) to (164.center);
	\end{pgfonlayer}
\end{tikzpicture}
}}.
\]

Notice that using \L30, we get back a normal form for $\S\FpW$ which we already know is unique.
\[
N_{\EpW}(r,x,Z,M) \qquad \text{becomes} \qquad N_{\FpW}(2r,(-1)^{r(r-1)/2}(x+1), Z, M)
\]

\[
N_{\EpW}(r',x',Z',M') \qquad \text{becomes} \qquad N_{\FpW}(2r',(-1)^{r'(r'-1)/2}(x'+1), Z', M')
\]

We can directly deduce that 
\[
(r,x,Z,M) = (r',x',Z',M').
\]

\end{document}